\newtheorem{theorem}{Theorem}[section]
\newtheorem{lemma}[theorem]{Lemma}
\newtheorem{corollary}[theorem]{Corollary}
\newtheorem{proposition}[theorem]{Proposition}
\newtheorem{definition}[theorem]{Definition}
\newcommand{\braket}[2]{\left< #1 \vphantom{#2} \middle| #2 \vphantom{#1} \right>} % for Dirac brackets
\newcommand{\ketbra}[2]{\ensuremath{\ket{#1}\!\bra{#2}}}
\DeclarePairedDelimiter\rbra{\lparen}{\rparen}
\DeclarePairedDelimiter\sbra{\lbrack}{\rbrack}
\DeclarePairedDelimiter\cbra{\{}{\}}
\DeclarePairedDelimiter\abs{\lvert}{\rvert}
\DeclarePairedDelimiter\Abs{\lVert}{\rVert}
\DeclarePairedDelimiter\ceil{\lceil}{\rceil}
\DeclarePairedDelimiter\ket{\lvert}{\rangle}
\DeclarePairedDelimiter\bra{\langle}{\rvert}
\DeclarePairedDelimiter\ave{\langle}{\rangle}
\newcommand{\tr} {\operatorname{tr}}
\newcommand{\poly} {\operatorname{poly}}
\newcommand{\diag} {\operatorname{diag}}
\newcommand{\rank} {\operatorname{rank}}
\newcommand{\spec} {\operatorname{spec}}
\newcommand{\ran} {\operatorname{ran}}
\crefname{ineq}{inequality}{inequalities}
\newcommand{\ignore}[1]{}
\DeclareMathOperator*{\Prob}{\mathbf{Pr}}
\newcommand{\R}{\mathbb R}
\newcommand{\C}{\mathbb C}
\newcommand{\N}{\mathbb N}
\newcommand{\BQP}{\mathsf{BQP}}
\newcommand{\QSZK} {\mathsf{QSZK}}
\newcommand{\eps}{\varepsilon}
\newcommand{\calD}{\mathcal{D}}
\newcommand{\calH}{\mathcal{H}}
\newcommand{\calI}{\mathcal{I}}
\newcommand{\calL}{\mathcal{L}}
\newcommand{\calO}{\mathcal{O}}
\DeclarePairedDelimiterX\diverg[2]{(}{)}{#1 \,\|\, #2}
\newcommand{\dtv}[2]{\mathrm{d}_{\mathrm{TV}}(#1,#2)}
\newcommand{\Fid}[2]{\mathrm{F}(#1,#2)}
\newcommand{\dH}[3]{\mathrm{d}_{\mathrm{H}}^{#3}\rbra*{#1,#2}}
\newcommand{\dAa}[4]{\mathrm{A}_{#1}^{#4}\rbra*{#2,#3}}
\newcommand{\dA}[3]{\mathrm{A}^{#3}\rbra*{#1,#2}}
\newcommand{\dtr}[3]{\mathrm{d}_{\mathrm{tr}}^{#3}\rbra*{#1,#2}}
\newcommand{\dRena}[4]{\mathrm{D}_{\textnormal{R\'{e}n},#1}^{#4}\diverg{#2}{#3}}
\newcommand{\dTsa}[4]{\mathrm{D}_{\textnormal{Tsa},#1}^{#4}\diverg{#2}{#3}}
\newcommand{\df}[3]{\mathrm{D}_{f}^{#3}\diverg{#1}{#2}}
\newcommand{\QSD}{\textsc{QSD}}
\newcommand{\PureQSD}{\textsc{PureQSD}}
\newcommand{\TsaQSD}{\textsc{TsallisQSD}}
\newcommand{\TsaPureQSD}{\textsc{TsallisPureQSD}}
\newcommand{\HellQSD}{\textsc{HellingerQSD}}
\newcommand{\HellPureQSD}{\textsc{HellingerPureQSD}}
\newcommand{\TsaLowQSD}{\textsc{TsallisLowRankQSD}}
\newcommand{\HellLowQSD}{\textsc{HellingerLowRankQSD}}
\newcommand{\footremember}[2]{%
    \footnote{#2}
    \newcounter{#1}
    \setcounter{#1}{\value{footnote}}%
}
\begin{document}

\title{On Estimating the Quantum Tsallis Relative Entropy}

\author{Jinge Bao\footremember{1}{School of Informatics, University of Edinburgh. \href{mailto:jingebao1011@gmail.com}{\nolinkurl{jingebao1011@gmail.com}}} \and
Minbo Gao\footremember{2}{Institute of Software, CAS. \href{mailto:gaomb@ios.ac.cn}{\nolinkurl{gaomb@ios.ac.cn}} or
\href{mailto:gmb17@tsinghua.org.cn}{\nolinkurl{gmb17@tsinghua.org.cn}}.} \and
Qisheng Wang\footremember{3}{School of Computer Science, Shanghai Jiao Tong University. \href{mailto:QishengWang1994@gmail.com}{\nolinkurl{QishengWang1994@gmail.com}}.}}

\date{}

\maketitle

\begin{abstract}
    The relative entropy between quantum states quantifies their distinguishability. 
    The estimation of certain relative entropies has been investigated in the literature, e.g., the von Neumann relative entropy and sandwiched R{\'e}nyi relative entropy. 
    In this paper, we present a comprehensive study of the estimation of the quantum Tsallis relative entropy. 
    We show that for any constant $\alpha \in (0, 1)$, the $\alpha$-Tsallis relative entropy between two quantum states of rank $r$ can be estimated with sample complexity $\operatorname{poly}(r)$, which can be made more efficient if we know their state-preparation circuits. 
    As an application, we obtain an approach to tolerant quantum state certification with respect to the quantum Hellinger distance with sample complexity $\widetilde{O}(r^{3.5})$, which \textit{exponentially} outperforms the folklore approach based on quantum state tomography when $r$ is polynomial in the number of qubits. 
    In addition, we show that the quantum state distinguishability problems with respect to the quantum $\alpha$-Tsallis relative entropy and quantum Hellinger distance are $\mathsf{QSZK}$-complete in a certain regime, and they are $\mathsf{BQP}$-complete in the low-rank case. 
\end{abstract}

\newpage

\tableofcontents

\newpage

\section{Introduction}
Measuring the distinguishability between quantum states is a fundamental problem in quantum information theory, with applications in, e.g., quantum state discrimination \cite{Che00,BC09,BK15} and quantum property testing \cite{MdW16}.
Distinguishability measures of quantum states include relative entropies (cf.\ \cite{Weh78,OP93,Ved02}, e.g., the von Neumann relative entropy \cite{Ume62}, the Petz-R\'enyi relative entropy \cite{Pet86,Ren61}, the sandwiched R\'enyi relative entropy \cite{WWY14,MDS+13}), the Bures distance \cite{Bur69} and Uhlmann fidelity \cite{Uhl76,Joz94}, the trace distance \cite{Rus94}, and the Hilbert-Schmidt distance \cite{Oza00}. 
There have been approaches to estimating these distinguishability measures in the literature. 
The Hilbert-Schmidt distance, also known as the quantum $\ell_2$ distance, can be directly estimated by the SWAP test \cite{BCWdW01}.
Efficient quantum algorithms for estimating the fidelity (and Bures distance) and the trace distance (also known as the quantum $\ell_1$ distance) were recently developed in \cite{WZC+23,WGL+24,GP22,WZ23,Wan24,LWWZ25,WZ24a,FW25,UNWT25} for the low-rank and pure cases. 
Quantum algorithms for estimating the quantum $\ell_\alpha$ distance for $\alpha > 1$ were developed in \cite{WGL+24,LW25b}. 
The estimation of the von Neumann relative entropy was demonstrated in \cite{Hay25} based on the Schur transform \cite{BCH06}. 
The estimation of the sandwiched R{\'e}nyi relative entropy was considered in \cite{WGL+24,WZL24,LWWZ25} and the estimation of the Petz-R{\'e}nyi relative entropy was considered in \cite{LF25}. 

In this paper, we consider the estimation of the quantum $\alpha$-Tsallis relative entropy \cite{Abe03}:
\[
    \dTsa{\alpha}{\rho}{\sigma}{} \coloneqq \frac{1}{1-\alpha}\rbra*{1-\tr\rbra*{\rho^\alpha\sigma^{1-\alpha}}}, \quad 0 < \alpha < 1.
\]
The quantum $\alpha$-Tsallis relative entropy is a generalization of the quantum $\alpha$-Tsallis entropy \cite{Tsa88,Rag95}.
The latter converges to the von Neumann entropy when $\alpha \to 1$ while the former converges to the von Neumann relative entropy when $\alpha \to 1^-$ \cite{Abe03b}:
\[
    \lim_{\alpha \to 1^-} \dTsa{\alpha}{\rho}{\sigma}{} = \mathrm{D}\diverg{\rho}{\sigma} \coloneqq \tr\rbra[\big]{\rho\rbra*{\log\rbra{\rho}-\log\rbra{\sigma}}}.
\]
As a measure of distinguishability between quantum states, the quantum Tsallis relative entropy is also related to the quantum Petz-R\'enyi relative entropy \cite{Pet86} and the quantum Chernoff bound \cite{ACM+07,ANSV08,Fan25}. 
In particular, the quantum $1/2$-Tsallis relative entropy is essentially the squared Hellinger distance (up to a constant factor) \cite{LZ04}:
\[
    \dH{\rho}{\sigma}{2}{} \coloneqq \frac{1}{2} \tr\rbra*{\rbra*{\sqrt{\rho}-\sqrt{\sigma}}^2} = \frac{1}{2} \dTsa{1/2}{\rho}{\sigma}{} = 1 - \mathrm{A}\rbra{\rho, \sigma},
\]
where $\mathrm{A}\rbra{\rho, \sigma} \coloneqq \tr\rbra{\sqrt{\rho}\sqrt{\sigma}}$ is known as the affinity. 
For general $\alpha$, the quantum $\alpha$-Tsallis relative entropy is known to be related to variational representations \cite{SH20}, and it can be used to quantify the coherence \cite{Ras16} and imaginarity \cite{Xu24} of quantum states. 
As a representative application, quantum Tsallis relative entropies have recently been adopted as an objective function of quantum Boltzmann machines in \cite{Wil25}. 
For more properties of the quantum $\alpha$-Tsallis relative entropy, see, e.g., \cite{FYK04}. 

The main contribution of this paper is that we provide a computational complexity picture of the estimation of the quantum Tsallis relative entropy. 
A comparison with the results for other quantum distinguishability measures is presented in \cref{tab:cmp}. 
In sharp contrast to previous work, this is, to our knowledge, the first comprehensive study of the estimation of \textit{a family of quantum relative entropies}.\footnote{Estimators for the quantum Jensen-Shannon divergence and the quantum Jensen-(Shannon-)Tsallis divergence \cite{BH09} are implied by the estimators for the von Neumann entropy \cite{AISW20,BMW16,GL20,WGL+24,WZ24b} and the quantum $\alpha$-Tsallis entropy \cite{LW25a}.  However, these types of divergences are not relative entropies.}
Specifically, our results on the estimation of the quantum Tsallis relative entropy range over the quantum query complexity, the quantum sample complexity, and the hardness in terms of computational complexity classes. 
As an application, we obtain an approach to tolerant quantum state certification with respect to the quantum Hellinger distance, which \textit{exponentially} outperforms the folklore approach based on quantum state tomography \cite{HHJ+17,OW16} in the low-rank case. To our knowledge, this is the \textit{first} efficient quantum tester for tolerant quantum state certification with respect to the quantum Hellinger distance.

\begin{table}[!htp]
\centering
\caption{The computational complexity of the estimation of quantum distinguishability measures.}
\label{tab:cmp}
\adjustbox{max width=\textwidth}{
\begin{tabular}{cccccc}
\toprule
    & \begin{tabular}{c}
         Quantum $\ell_\alpha$ Distance \\
         for $\alpha > 1$ \\
         (Hilbert-Schmidt \\
         Distance for $\alpha = 2$) 
    \end{tabular} & Trace Distance & \begin{tabular}{c}
         Uhlmann Fidelity \\
         (Bures Distance)
    \end{tabular} & \begin{tabular}{c}
         Von Neumann \\
         Relative Entropy
    \end{tabular} & \begin{tabular}{c}
         Quantum $\alpha$-Tsallis \\
         Relative Entropy \\
         for $0 < \alpha < 1$ \\
         (Hellinger Distance \\
         for $\alpha = 1/2$)
    \end{tabular} \\ \midrule
\begin{tabular}{c}
     Query \\
     Complexity
\end{tabular}  & \multirow{3}{*}{\begin{tabular}{c} $O\rbra{1}$ \\ \cite{BCWdW01,LW25b} \end{tabular}} & \begin{tabular}{c} $O\rbra{r}$ \\ \cite{WZ23} \end{tabular} & \begin{tabular}{c} $O\rbra{r}$ \\ \cite{UNWT25} \end{tabular} & / & \begin{tabular}{c} $\widetilde{O}\rbra{r^{1.5}}$ \\ \cref{thm:estimator-main} \end{tabular} \\ 
\begin{tabular}{c}
     Sample \\
     Complexity
\end{tabular}   &  & \begin{tabular}{c} $\widetilde{O}\rbra{r^2}$ \\ \cite{WZ23} \end{tabular}  & \begin{tabular}{c} $\widetilde{O}\rbra{r^{5.5}}$ \\ \cite{GP22} \end{tabular} & \begin{tabular}{c} $O\rbra{d^2}$\footnote{It is assumed that 
all eigenvalues of $\sigma$ are no less than
$\exp\rbra{-O\rbra{d}}$
when estimating $\mathrm{D}\diverg{\rho}{\sigma}$, where $\rho$ and $\sigma$ are $d$-dimensional.} \\ \cite{Hay25} \end{tabular} & \begin{tabular}{c} $\widetilde{O}\rbra{r^{3.5}}$ \\ \cref{thm:estimator-main} \end{tabular} \\ \midrule
\begin{tabular}{c}
     Hardness \\
     (Low-Rank)
\end{tabular} & \multirow{3}{*}{\begin{tabular}{c} $\BQP$-hard \\ \cite{RASW23} \end{tabular}} & \begin{tabular}{c} $\BQP$-hard \\ \cite{WZ23} \end{tabular} & \begin{tabular}{c} $\BQP$-hard \\ \cite{RASW23} \end{tabular} & \begin{tabular}{c} $\BQP$-hard \\ \cite{LW25a} \end{tabular} & \begin{tabular}{c} $\BQP$-hard \\ \cref{thm:completeness-main} \end{tabular} \\ 
\begin{tabular}{c}
     Hardness \\
     (General)
\end{tabular}  &    & \begin{tabular}{c} $\QSZK$-hard \\ \cite{Wat02} \end{tabular} & \begin{tabular}{c} $\QSZK$-hard \\ \cite{Wat02} \end{tabular} & \begin{tabular}{c} $\QSZK$-hard \\ \cite{BASTS10} \end{tabular} & \begin{tabular}{c} $\QSZK$-hard \\ \cref{thm:completeness-main} \end{tabular} \\ \bottomrule
\end{tabular}
}
\end{table}

\subsection{Main results}

Our first result is an efficient quantum algorithm for estimating the quantum Tsallis relative entropy. 

\begin{theorem}[Estimators for quantum Tsallis relative entropy]\label{thm:estimator-main}
    For any constant $\alpha \in \interval[open]{0}{1}$, given two unknown quantum states $\rho$ and $\sigma$:
    \begin{itemize}
        \item If $r = \max\cbra{\rank\rbra{\rho}, \rank\rbra{\sigma}}$ is known, then we can estimate $\dTsa{\alpha}{\rho}{\sigma}{}$ to within additive error $\eps$ using 
            \begin{itemize}
               \item  $O\rbra*{\frac{r^{2+3\alpha}}{\varepsilon^{\frac{2}{\alpha}+\frac{3}{1-\alpha} }}\log^{2}\rbra*{\frac{r}{\varepsilon}}}$ samples of $\rho$ and $\sigma$  for $0 < \alpha <1/2$, \\
               $O\rbra*{\frac{r^{3.5}}{\varepsilon^{10}}\log^{4}\rbra*{\frac{r}{\varepsilon}}}$ samples of $\rho$ and $\sigma$  for $\alpha = 1/2$, \\
                 and $O\rbra*{\frac{r^{5-3\alpha}}{\varepsilon^{\frac{3}{\alpha}+\frac{2}{1-\alpha} }}\log^{2}\rbra*{\frac{r}{\varepsilon}}}$ samples of $\rho$ and $\sigma$  for  $1/2 < \alpha < 1$
                (see \cref{thm:Tsallis-relative-sample});
                 \item  $O\rbra*{\frac{r^{1+\alpha}}{\eps^{\frac{1}{\alpha} + \frac{1}{1-\alpha}}}}$ queries to the state-preparation circuits of $\rho$ and $\sigma$ for $0 < \alpha < 1/2$, \\
                 $O\rbra*{\frac{r^{1.5}}{\eps^{4}}\log\rbra*{\frac{r}{\eps}}}$, queries to the state-preparation circuits of $\rho$ and $\sigma$  for $\alpha = 1/2$, \\
                 and $O\rbra*{\frac{r^{2-\alpha}}{\eps^{\frac{1}{\alpha} + \frac{1}{1-\alpha}}}}$ queries to the state-preparation circuits of $\rho$ and $\sigma$  for $1/2 < \alpha < 1$(see~\cref{thm:Tsallis-relative-query}).
            \end{itemize} 
        \item If $r = \rank\rbra{\rho}$ is known, then we can estimate $\dTsa{\alpha}{\rho}{\sigma}{}$ to within additive error $\eps$ using 
            \begin{itemize}
                \item $O\rbra*{\frac{r^{\frac{5}{\alpha}-3}}{\varepsilon^{\frac{5}{\alpha}}}\log^{4}\rbra*{\frac{r}{\varepsilon}} + \frac{r^{ \frac{3}{\alpha}-1}}{\varepsilon^{ \frac{3}{\alpha} + \frac{2}{1-\alpha}}}\log^{2}\rbra*{\frac{r}{\varepsilon}}}$ samples of $\rho$ and $\sigma$ (see \cref{thm:sample-complexity-of-general-affinity-rank-rho});
                \item $O\rbra*{\frac{r^{\frac{2}{\alpha}-1}}{\varepsilon^{\frac{2}{\alpha}}}\log \rbra*{\frac{r}{\varepsilon}} + \frac{r^{\frac{1}{\alpha}}}{\varepsilon^{\frac{1}{\alpha}+ \frac{1}{1-\alpha}}}}$ queries to the state-preparation circuits of $\rho$ and $\sigma$ (see \cref{thm:query-complexity-of-general-affinity-rank-rho}).
            \end{itemize} 
        \item If $r = \rank\rbra{\sigma}$ is known, then we can estimate $\dTsa{\alpha}{\rho}{\sigma}{}$ to within additive error $\eps$ using 
            \begin{itemize}
                \item $O\rbra*{\frac{r^{5-3\alpha}}{\varepsilon^{\frac{5}{\alpha}}}\log^{4}\rbra*{\frac{r}{\varepsilon}} + \frac{r^{\frac{2}{1-\alpha}+ 3(1-\alpha)}}{\varepsilon^{\frac{3}{\alpha} +\frac{2}{1-\alpha}  }}\log^{2}\rbra*{\frac{r}{\varepsilon}}}$ samples of $\rho$ and $\sigma$ (see \cref{thm:sample-complexity-of-general-affinity-rank-sigma});
                \item $O\rbra*{\frac{r^{2-\alpha}}{\varepsilon^{\frac{2}{\alpha}}}\log \rbra*{\frac{r}{\varepsilon}} + \frac{r^{(1-\alpha)+\frac{1}{1-\alpha}}}{\varepsilon^{\frac{1}{\alpha}+\frac{1}{1-\alpha}}}}$ queries to the state-preparation circuits of $\rho$ and $\sigma$ (see \cref{thm:query-complexity-of-general-affinity-rank-sigma}).
            \end{itemize} 

        \item If $r = \min\cbra{\rank\rbra{\rho}, \rank\rbra{\sigma}}$ is known, then we can estimate $\dTsa{\alpha}{\rho}{\sigma}{}$ to within additive error $\eps$ using 
            \begin{itemize}
                  \item $O\rbra*{\frac{r^{\frac{2}{\alpha}+\frac{5}{1-\alpha}-3}}{\varepsilon^{\frac{2}{\alpha}+\frac{5}{1-\alpha}}}\log^{2}\rbra*{\frac{r}{\varepsilon}} }$ samples of $\rho$ and $\sigma$ for  $0< \alpha \le \frac{1}{2}$, \\
                  and $O\rbra*{\frac{r^{\frac{5}{\alpha}+\frac{2}{1-\alpha}-3}}{\varepsilon^{\frac{5}{\alpha}+\frac{2}{1-\alpha}}}\log^{2}\rbra*{\frac{r}{\varepsilon}} }$ samples of $\rho$ and $\sigma$ for $\frac{1}{2}< \alpha < 1$  (see \cref{thm:sample-complexity-of-general-affinity-min-rank});
                   \item $O\rbra*{\frac{r^{\frac{1}{\alpha}+\frac{2}{1-\alpha}-1}}{\varepsilon^{\frac{1}{\alpha}+\frac{2}{1-\alpha}}}}$  queries to the state-preparation circuits of $\rho$ and $\sigma$ for $0 < \alpha \leq \frac{1}{2}$,\\
                   and $O\rbra*{\frac{r^{\frac{2}{\alpha}+\frac{1}{1-\alpha}-1}}{\varepsilon^{\frac{2}{\alpha}+\frac{1}{1-\alpha}}}}$ queries to the state-preparation circuits of $\rho$ and $\sigma$ for $\frac{1}{2} < \alpha < 1$ (see \cref{thm:query-complexity-of-general-affinity-min-rank}).
            \end{itemize} 
    \end{itemize}
\end{theorem}

For simplicity, \cref{thm:estimator-main} actually gives an upper bound on the sample complexity and query complexity for estimating the quantum $\alpha$-Tsallis relative entropy for any constant $\alpha \in \rbra{0, 1}$. 
The specific sample and query complexities depend on $\alpha$. 
See \cref{sec:ub} for the details.
For completeness, we also provide lower bounds of $\Omega\rbra{r}$ and $\Omega\rbra{r^{1/3}}$ respectively on the sample complexity and query complexity in \cref{sec:lb}, meaning that there is only room for a polynomial improvement over our upper bounds. 
In particular, we summarize the quantum complexities of estimating the Hellinger distance (i.e., $\alpha=1/2$) in Table \ref{tab:hellinger-distance-complexity}. 
%For more details, please refer to the corresponding theorems. 

\begin{table}[!htb]
\centering
\caption{The query/sample complexity of estimating quantum Hellinger distance.}
\label{tab:hellinger-distance-complexity}
    \begin{tabular}{ccc}
    \toprule
    Condition & Query complexity & Sample complexity\\
    \midrule
    $r = \max\cbra{\rank\rbra{\rho}, \rank\rbra{\sigma}}$     & $\widetilde{O}\rbra{r^{1.5}}$  & $\widetilde{O}\rbra{r^{3.5}}$    \\
    % $r = \rank\rbra{\rho}$     & $\widetilde{O}\rbra{r^{3}}$   &  $\widetilde{O}\rbra{r^{7}}$    \\
    $r = \rank\rbra{\rho}$ or $\rank\rbra{\sigma}$  & $\widetilde{O}\rbra{r^{2.5}}$   & $\widetilde{O}\rbra{r^{5.5}}$     \\
    $r = \min\cbra{\rank\rbra{\rho}, \rank\rbra{\sigma}}$     & $O\rbra{r^5}$     & $\widetilde{O}\rbra{r^{11}}$    \\
    \bottomrule
    \end{tabular}
\end{table}

As an application, this gives a quantum tester for the tolerant closeness testing between quantum states with respect to the quantum Hellinger distance (see \cref{thm:tolerant-state-certification-Hellinger-distance}).
For comparison, the tolerant quantum state certification with respect to the trace distance was considered in \cite{BOW19}. 

Our second result is the completeness of the quantum state distinguishability problem with respect to the quantum Tsallis relative entropy, denoted as $\TsaQSD_\alpha\sbra{a, b}$, which is to determine whether $\dTsa{\alpha}{\rho}{\sigma}{} \geq a$ or $\dTsa{\alpha}{\rho}{\sigma}{} \leq b$, where $\rho$ and $\sigma$ are two unknown $n$-qubit quantum states. 

\begin{theorem}[Completeness of $\TsaQSD_\alpha$, informal version of \cref{thm:hardness-combined}] \label{thm:completeness-main}
    For $\alpha \in \rbra{0, 1}$, $\TsaQSD_\alpha\sbra{a, b}$ is $\QSZK$-complete for $0 < b < 2\alpha\rbra{1-\alpha}^4 a^4 < 32\rbra{1-\alpha}^4 \alpha^5$ and it is $\BQP$-complete for $0 < b < a < \frac{1}{1-\alpha}$ in the low-rank case where the quantum states are of rank $r = \poly\rbra{n}$. 
\end{theorem}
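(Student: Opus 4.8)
The plan is to establish membership and hardness separately for each part. One membership direction is essentially free: in the low-rank regime $r=\poly(n)$, the query-efficient estimator of \Cref{thm:estimator-main} uses $\widetilde{O}\rbra*{r^{1.5}/\varepsilon^4}$ queries to the given state-preparation circuits, so running it with $\varepsilon=(a-b)/3$ and thresholding the output at $(a+b)/2$ yields a polynomial-time bounded-error quantum algorithm for $\TsaQSD_\alpha\sbra{a,b}$; hence $\TsaQSD_\alpha\sbra{a,b}\in\BQP$ whenever $r=\poly(n)$. For the $\QSZK$ part, membership will come from a polynomial-time many-one reduction of $\TsaQSD_\alpha\sbra{a,b}$ to the trace-distance state distinguishability problem $\QSD$, which is $\QSZK$-complete \cite{Wat02} in its polarizable regime of thresholds; this reduction is the identity on the circuit descriptions and merely rescales the promise, so it is trivially efficient, and its content is the pair of inequalities below.

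\emph{$\QSZK$-hardness.} I would reduce from $\QSD$ \cite{Wat02}, first applying the quantum polarization lemma so that the input circuits prepare $n$-qubit states $\rho,\sigma$ with either $\dtr{\rho}{\sigma}{}\le 2^{-m}$ or $\dtr{\rho}{\sigma}{}\ge 1-2^{-m}$ for a polynomial $m$ of our choosing, and then passing $\rho,\sigma$ unchanged to $\TsaQSD_\alpha$. In the close case, operator H\"older continuity of $X\mapsto X^{1-\alpha}$ in trace norm, $\norm{X^{1-\alpha}-Y^{1-\alpha}}_1\le\norm{X-Y}_1^{1-\alpha}$ for $X,Y\succeq 0$, together with $\abs{\tr(AB)}\le\norm{A}_\infty\norm{B}_1$, gives $1-\tr\rbra*{\rho^\alpha\sigma^{1-\alpha}}=\tr\rbra*{\rho^\alpha\rbra*{\rho^{1-\alpha}-\sigma^{1-\alpha}}}\le\rbra*{2\dtr{\rho}{\sigma}{}}^{1-\alpha}$, so $\dTsa{\alpha}{\rho}{\sigma}{}\le\tfrac{1}{1-\alpha}\rbra*{2^{1-m}}^{1-\alpha}<b$ once $m$ is large enough. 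In the far case, Fuchs--van de Graaf gives $\Fid{\rho}{\sigma}\le\sqrt{1-\dtr{\rho}{\sigma}{}^2}\le 2^{(1-m)/2}$, and then an interpolation bound $\tr\rbra*{\rho^\alpha\sigma^{1-\alpha}}\le\Fid{\rho}{\sigma}^{2\min\cbra*{\alpha,1-\alpha}}$ (derived from $\tr\rbra*{\sqrt\rho\sqrt\sigma}\le\Fid{\rho}{\sigma}$, monotonicity of the Petz-R\'enyi divergence in its order, and Araki--Lieb--Thirring) forces $\dTsa{\alpha}{\rho}{\sigma}{}$ to within an exponentially small amount of its maximum $\tfrac{1}{1-\alpha}$, hence above $a$ once $m$ is large enough --- this uses only $a<\tfrac{1}{1-\alpha}$, which holds throughout the stated range. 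Choosing $m$ so the two values straddle $(b,a)$ completes the reduction.

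\emph{$\QSZK$-membership and $\BQP$-hardness.} For membership I would run this comparison in reverse: from $\dTsa{\alpha}{\rho}{\sigma}{}\le b$ we get $\tr\rbra*{\rho^\alpha\sigma^{1-\alpha}}\ge 1-(1-\alpha)b$, whence (by the upper interpolation bound and Fuchs--van de Graaf) $\dtr{\rho}{\sigma}{}=O_\alpha\rbra*{\sqrt b}$; from $\dTsa{\alpha}{\rho}{\sigma}{}\ge a$ we get $\tr\rbra*{\rho^\alpha\sigma^{1-\alpha}}\le 1-(1-\alpha)a$, whence (by a matching lower bound $\Fid{\rho}{\sigma}^4\le\tr\rbra*{\rho^\alpha\sigma^{1-\alpha}}$, which follows from standard comparisons among affinity, fidelity and Petz-R\'enyi divergences, and $\dtr{\rho}{\sigma}{}\ge 1-\Fid{\rho}{\sigma}$) $\dtr{\rho}{\sigma}{}=\Omega_\alpha\rbra*{a}$. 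So $\TsaQSD_\alpha\sbra{a,b}$ maps to a $\QSD$ instance with far-threshold $\Theta_\alpha(a)$ and close-threshold $\Theta_\alpha(\sqrt b)$, which lies in the $\QSZK$-complete (polarizable) regime exactly when the square of the far-threshold dominates the close-threshold, i.e.\ $a^2\gtrsim_\alpha\sqrt b$, i.e.\ $b\lesssim_\alpha a^4$ --- this is the shape of the hypothesis $0<b<2\alpha(1-\alpha)^4 a^4<32(1-\alpha)^4\alpha^5$, the fourth power being the square-root loss of Fuchs--van de Graaf composed with the quadratic polarizability threshold of $\QSD$. For $\BQP$-hardness in the low-rank case I would reduce from the canonical $\BQP$-complete problem of deciding whether a polynomial-size circuit $C$ accepts with probability $\ge 2/3$ or $\le 1/3$, in the spirit of the analogous results for trace distance and fidelity \cite{WZ23,RASW23}: amplify $C$ to acceptance probability $p$ with $p\ge 1-2^{-n}$ or $p\le 2^{-n}$, copy its output qubit into a fresh register by a \textsc{cnot} so the reduced state becomes the diagonal $p\ketbra{1}{1}+(1-p)\ketbra{0}{0}$, and, with an extra biased coin, prepare $\rho=s\rbra*{p\ketbra{1}{1}+(1-p)\ketbra{0}{0}}+(1-s)\ketbra{2}{2}$ on a qutrit together with $\sigma=\ketbra{0}{0}$, where $s\in(0,1)$ is a fixed, efficiently preparable constant determined by the target thresholds. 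Then $\rho$ has rank at most $3$, both states have $\poly(n)$-size preparation circuits, and $\dTsa{\alpha}{\rho}{\sigma}{}=\tfrac{1}{1-\alpha}\rbra*{1-s^\alpha(1-p)^\alpha}$ is monotone in $p$ and, for large $n$, exceeds $a$ when $p$ is large and lies below $b$ when $p$ is small, for any prescribed $0<b<a<\tfrac{1}{1-\alpha}$.

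The routine ingredients are the polarization lemma, the circuit gadgets, and the appeal to \Cref{thm:estimator-main}. The technical heart --- and the main obstacle --- is the family of two-sided inequalities tying $\tr\rbra*{\rho^\alpha\sigma^{1-\alpha}}$ to the affinity $\tr\rbra*{\sqrt\rho\sqrt\sigma}$, the fidelity $\Fid{\rho}{\sigma}$, and the trace distance $\dtr{\rho}{\sigma}{}$ with explicit and sufficiently sharp $\alpha$-dependent constants: one must pin down a ``fidelity sandwich'' $\Fid{\rho}{\sigma}^4\le\tr\rbra*{\rho^\alpha\sigma^{1-\alpha}}\le\Fid{\rho}{\sigma}^{2\min\cbra*{\alpha,1-\alpha}}$ (or an equivalent affinity / Araki--Lieb--Thirring version) and then verify that the resulting chain of estimates preserves a polarizable gap exactly throughout the claimed regime $0<b<2\alpha(1-\alpha)^4 a^4<32(1-\alpha)^4\alpha^5$, rather than only for $b$ polynomially smaller than $a$.
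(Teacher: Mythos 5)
Your overall architecture (identity reductions to and from $\QSD$, the estimator of \cref{cor:Tsallis-relative-query} for $\BQP$-containment, a pure/low-rank gadget for $\BQP$-hardness) matches the paper, but the two inequalities you rely on for the $\QSZK$ part have genuine problems. First, the ``operator H\"older continuity in trace norm'' $\Abs{X^{1-\alpha}-Y^{1-\alpha}}_1\le\Abs{X-Y}_1^{1-\alpha}$ is false: taking $X=I/d$ and $Y=\ketbra{0}{0}$ gives $\Abs{X^{1-\alpha}-Y^{1-\alpha}}_1\approx d^{\alpha}$ while $\Abs{X-Y}_1^{1-\alpha}\le 2$. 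The correct Birman--Koplienko--Solomyak bound has $\tr\abs{X-Y}^{1-\alpha}\le r^{\alpha}\Abs{X-Y}_1^{1-\alpha}$ on the right, and after polarization the number of qubits (hence the rank $r$) grows together with the exponent $m$, so $2^{n'\alpha}2^{-m(1-\alpha)}$ need not be small; your close-case bound $\dTsa{\alpha}{\rho}{\sigma}{}\le\frac{1}{1-\alpha}(2\dtr{\rho}{\sigma}{})^{1-\alpha}$ therefore does not follow as derived. Second, the lower leg of your ``fidelity sandwich,'' $\Fid{\rho}{\sigma}^4\le\tr\rbra{\rho^\alpha\sigma^{1-\alpha}}$ for all $\alpha\in(0,1)$, is asserted but not proved, and it does not follow from the comparisons you cite: the natural chain $\tr\rbra{\rho^\alpha\sigma^{1-\alpha}}\ge 1-\dtr{\rho}{\sigma}{}\ge 1-\sqrt{1-\Fid{\rho}{\sigma}^2}$ falls below $\Fid{\rho}{\sigma}^4$ once $\Fid{\rho}{\sigma}^2>(\sqrt{5}-1)/2$, and the Cauchy--Schwarz proof of the classical analogue does not lift to noncommuting states. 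You yourself flag this as the main obstacle, and as it stands the $\QSZK$-containment direction is open in your write-up.

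Both gaps are avoidable, and this is exactly how the paper proceeds: it uses the dimension-free two-sided comparison of \cref{lem:Tsa_vs_tr}, namely the quantum Chernoff-bound inequality $\tr\rbra{\rho^\alpha\sigma^{1-\alpha}}\ge 1-\dtr{\rho}{\sigma}{}$ (\cref{fact:affinity_lb}), which gives $\dTsa{\alpha}{\rho}{\sigma}{}\le\dtr{\rho}{\sigma}{}/(1-\alpha)$ (handling both your close case in the hardness reduction, with no polarization or dimension factors needed, and the yes case in containment), together with Rastegin's Pinsker-type bound $\dTsa{\alpha}{\rho}{\sigma}{}\ge 2\alpha\dtr{\rho}{\sigma}{2}$ for the other direction; plugging $a'=(1-\alpha)a$ and $b'=\sqrt{b/(2\alpha)}$ into Watrous's condition $a'^2-b'\ge 1/O(\log n)$ is what produces the constants $b<2\alpha(1-\alpha)^4a^4$ in the statement. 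Two smaller remarks: your far-case bound $\tr\rbra{\rho^\alpha\sigma^{1-\alpha}}\le\Fid{\rho}{\sigma}^{2\min\cbra{\alpha,1-\alpha}}$ is correct (monotonicity of the Petz--R\'enyi order plus affinity $\le$ fidelity) and would even allow $a$ up to $\frac{1}{1-\alpha}$, which is stronger than the paper's Rastegin-based threshold $a\le 2\alpha$; and your qutrit gadget for $\BQP$-hardness works for constant thresholds but differs from the paper, which instead reduces from $\PureQSD$ and uses the pure-state identity $\dTsa{\alpha}{\ketbra{\psi}{\psi}}{\ketbra{\phi}{\phi}}{}=\dtr{\ketbra{\psi}{\psi}}{\ketbra{\phi}{\phi}}{2}/(1-\alpha)$, which also handles subconstant $b$ where your fixed mixing weight $s$ would have to depend superpolynomially finely on the thresholds.
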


In the special case when $\alpha = 1/2$, \cref{thm:completeness-main} further implies the completeness of the quantum state distinguishability problem with respect to the quantum Hellinger distance, denoted as $\HellQSD\sbra{a, b}$, which is to determine whether $\dH{\rho}{\sigma}{} \geq a$ or $\dH{\rho}{\sigma}{} \leq b$. 

\begin{corollary}[Completeness of $\HellQSD$]\label{corollary:completeness-hellinger-main}
    $\HellQSD\sbra{a, b}$ is $\QSZK$-complete for $0 < \sqrt{2} b < a^4 < 1/4$ and it is $\BQP$-complete for $0 < b < a < 1$ in the low-rank case where the quantum states are of rank $r=\poly\rbra{n}$. 
\end{corollary}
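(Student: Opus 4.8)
The plan is to read off \Cref{corollary:completeness-hellinger-main} from \Cref{thm:completeness-main} by specializing to $\alpha = 1/2$ and invoking the exact identity recorded in the introduction between the quantum $1/2$-Tsallis relative entropy and the squared quantum Hellinger distance, namely $\dH{\rho}{\sigma}{2}{} = \tfrac12\dTsa{1/2}{\rho}{\sigma}{}$, equivalently $\dTsa{1/2}{\rho}{\sigma}{} = 2\,\dH{\rho}{\sigma}{}^2$. Since $t \mapsto \sqrt{t/2}$ is strictly increasing for $t \geq 0$, for any pair of quantum states $\rho,\sigma$ the promise $\dH{\rho}{\sigma}{} \geq a$ is equivalent to $\dTsa{1/2}{\rho}{\sigma}{} \geq 2a^2$, and $\dH{\rho}{\sigma}{} \leq b$ is equivalent to $\dTsa{1/2}{\rho}{\sigma}{} \leq 2b^2$. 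Hence the identity map on instances is a promise-preserving reduction in both directions between $\HellQSD\sbra{a,b}$ and $\TsaQSD_{1/2}\sbra{2a^2,2b^2}$, so the two are the same promise problem, and their low-rank restrictions coincide as well (the reduction does not modify the states). Therefore $\QSZK$-completeness, resp.\ low-rank $\BQP$-completeness, of $\TsaQSD_{1/2}\sbra{2a^2,2b^2}$ transfers verbatim to $\HellQSD\sbra{a,b}$ --- both the hardness half and the membership half, the latter in the low-rank $\BQP$ case being furnished by the estimator of \Cref{thm:estimator-main}.

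What then remains is to translate the parameter windows of \Cref{thm:completeness-main} through the substitution $a' = 2a^2$, $b' = 2b^2$. In the $\QSZK$ regime, at $\alpha = 1/2$ one computes $2\alpha(1-\alpha)^4 = \tfrac{1}{16}$ and $32(1-\alpha)^4\alpha^5 = \tfrac{1}{16}$, so the hypothesis $0 < b' < 2\alpha(1-\alpha)^4(a')^4 < 32(1-\alpha)^4\alpha^5$ becomes $0 < 2b^2 < \tfrac{1}{16}(2a^2)^4 < \tfrac{1}{16}$, i.e.\ $0 < 2b^2 < a^8 < \tfrac{1}{16}$; taking square roots (all quantities being nonnegative) yields exactly $0 < \sqrt{2}\, b < a^4 < 1/4$. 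In the $\BQP$ regime, $\tfrac{1}{1-\alpha} = 2$ at $\alpha = 1/2$, so $0 < b' < a' < \tfrac{1}{1-\alpha}$ becomes $0 < 2b^2 < 2a^2 < 2$, i.e.\ $0 < b < a < 1$; the rank hypothesis $r = \poly(n)$ is carried over unchanged. These match the ranges in the statement.

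The argument is essentially bookkeeping, so I do not anticipate a real obstacle. The two points worth a moment of care are: confirming that the substituted thresholds $(2a^2, 2b^2)$ indeed lie within the window for which \Cref{thm:completeness-main} is established --- which the computations above verify over the full claimed ranges of $(a,b)$ --- and checking that those ranges are nonempty (e.g.\ take $a^4$ slightly below $1/4$ and $b$ slightly below $a^4/\sqrt{2}$ in the $\QSZK$ case, and any $0 < b < a < 1$ in the $\BQP$ case), so that the completeness claim is non-vacuous.
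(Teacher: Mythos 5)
Your proposal is correct and matches the paper's own route: the paper obtains this corollary by specializing \cref{thm:completeness-main} (formally, \cref{thm:hardness-combined}) to $\alpha = 1/2$ via the identity $2\dH{\rho}{\sigma}{2} = \dTsa{1/2}{\rho}{\sigma}{}$, exactly as you do. Your parameter bookkeeping ($2\alpha(1-\alpha)^4 = 32(1-\alpha)^4\alpha^5 = 1/16$ and $1/(1-\alpha)=2$ at $\alpha=1/2$, then substituting $a'=2a^2$, $b'=2b^2$) correctly reproduces the stated windows $0<\sqrt{2}b<a^4<1/4$ and $0<b<a<1$.
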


\cref{thm:completeness-main} (and \cref{corollary:completeness-hellinger-main}) gives a family of $\QSZK$-complete problems, which are the quantum state distinguishability problem with respect to a family of distinguishability measures $\dTsa{\alpha}{\rho}{\sigma}{}$ for any constant $\alpha \in \interval[open]{0}{1}$. 
In comparison, previous $\QSZK$-complete problems (in certain regimes) include the quantum state distinguishability problem with respect to trace distance (and fidelity) \cite{Wat02,Wat09}, the von Neumann entropy difference \cite{BASTS10}, the separability testing \cite{HMW14}, the productness testing \cite{GHMW15}, and the $G$-symmetry testing \cite{RLW25}.
In \cite{LW25b}, it was shown that the quantum $\ell_\alpha$ distance is $\QSZK$-complete for $\alpha$ inverse polynomially close to $1$. 

\subsection{Techniques}

For the upper bounds on the query and sample complexities, the key step is to estimate the value of $\tr\rbra{\rho^{\alpha}\sigma^{1-\alpha}}$. 
This can be done by the Hadamard test \cite{AJL09} while using the identity $\tr\rbra{\rho^{\alpha}\sigma^{1-\alpha}} = \tr\rbra{\rho \cdot \rho^{\alpha-1}\sigma^{1-\alpha}}$. 
To this end, we implement a unitary block-encoding of $\rho^{\alpha-1}\sigma^{1-\alpha}$ by quantum singular value transformation \cite{GSLW19} with the approximation polynomials of negative power functions \cite{Gil19} and positive power functions \cite{LW25a}. 
Specifically, let $p_1\rbra{x}$ and $p_2\rbra{x}$ be the polynomials given by \cref{lem:poly_approx_neg,lem:poly_approx_pos}, respectively, such that $\abs{p_1\rbra{x}} \leq 1$ and $\abs{p_2\rbra{x}} \leq 1$ for $x \in \interval{-1}{1}$ and 
\begin{align*}
    \abs*{p_1\rbra{x} - \frac{\delta_1^{1-\alpha}}{2} x^{\alpha-1}} & \leq \eps_1 \textup{ for } x \in \interval{-1}{-\delta_1} \cup \interval{\delta_1}{1}, \\
    \abs*{p_2(x)-\frac{1}{2}x^{1-\alpha}} & \leq \eps_2  \textup{ for } x \in \interval{-1}{1},
\end{align*}
where $\delta_1, \varepsilon_1, \varepsilon_2 \in \interval[open]{0}{1}$ are parameters to be determined that control the errors.
Then, a unitary block-encoding of $p_1\rbra{\rho}p_2\rbra{\sigma}$ can be implemented using the block-encoding techniques in \cite{LC19,GSLW19}. 
Then, it can be shown that an estimate of $\tr\rbra*{\rho p_1\rbra*{\rho} p_2\rbra*{\sigma}}$ can be obtained using the block-encoding version of the Hadamard test \cite{GP22}, which, in particular, can be used as an estimate of (scaled)
$\tr \rbra{\rho^{\alpha}\sigma^{1-\alpha}}$ with the precision given as follows (see \cref{prop:query-error-total}):
\[
\abs*{\tr\rbra[\big]{\rho p_1\rbra*{\rho} p_2\rbra*{\sigma}} 
- \frac{\delta_1^{1-\alpha}}{4} \tr \rbra[\big]{\rho^{\alpha}\sigma^{1-\alpha}} }\leq \rbra*{r\eps_2+ \frac{r^{\alpha}}{2}} \rbra*{\frac{3}{2}\delta_1+ \eps_1} + \frac{\delta_1^{1-\alpha}}{2} r^{1-\alpha} \eps_2,
\]
where $r = \max\cbra{\rank\rbra{\rho}, \rank\rbra{\sigma}}$. 
Choosing the values of these parameters appropriately, we can then estimate $\tr \rbra{\rho^{\alpha}\sigma^{1-\alpha}}$ with quantum query complexity $\widetilde{O}\rbra{r^{\min\cbra{1+\alpha, 2-\alpha}}} = \widetilde{O}\rbra{r^{1.5}}$. 
For other conditions $r = \rank\rbra{\rho}$, $r = \rank\rbra{\sigma}$, and $r = \min\cbra{\rank\rbra{\rho}, \rank\rbra{\sigma}}$, the error can be bounded similarly (but differently, see \cref{prop:query-error-total-rank-rho,prop:query-error-total-rank-sigma,prop:query-error-total-min-rank} for the details). 
To obtain the sample complexity, we adopt the algorithmic tool called samplizer \cite{WZ25,WZ24b} that enables us to simulate the aforementioned query-based approach by samples of quantum states $\rho$ and $\sigma$, which is a convenient use of the density matrix exponentiation \cite{LMR14,KLL+17,GKP+24} to simulate quantum query algorithms. With further analysis, we obtain a sample complexity of $\widetilde{O}\rbra{r^{3.5}}$. 

For the $\QSZK$-completeness of $\TsaQSD_\alpha\sbra{a, b}$, we reduce it to the quantum state distinguishability problem with respect to the trace distance \cite{Wat02,Wat09}. 
To this end, we adopt the inequalities between the trace distance and the quantum Tsallis relative entropy, which can use the trace distance as both upper \cite{ACM+07,ANSV08} and lower \cite{Ras13} bounds on the quantum Tsallis relative entropy. 
For the $\BQP$-completeness of the low-rank version of $\TsaQSD_\alpha\sbra{a, b}$, we reduce it to the estimation of the closeness
between pure quantum states \cite{RASW23,WZ23}. 

\subsection{Related work}

The entropy of a quantum state can be viewed as a special case of the quantum relative entropy, obtained when the reference state is the maximally mixed state.
The estimation of von Neumann entropy was studied in \cite{AISW20,GL20,CLW20,GH20,GHS21,WGL+24,LGLW23,WZ24b}.
The estimation of quantum R\'enyi entropy was given in \cite{AISW20,SH21,WGL+24,WZL24}. 
The estimation of quantum Tsallis entropy was given in \cite{EAO+02,Bru04,BCE+05,vEB12,JST17,SCC19,YS21,QKW24,ZL24,CWYZ25,SLLJ25,LW25a,CW25,ZWZY25,Wan25}.

Quantum state certification has been investigated in \cite{BOW19} for the trace distance, the fidelity, the Hilbert-Schmidt distance, and the quantum $\chi^2$ distance and in \cite{GL20} for the quantum $\ell_3$ distance. 
An instance-optimal approach to quantum state certification with respect to the trace distance was presented in \cite{OW25}. 

Tolerant property testing is a refinement of standard property testing, first introduced by Parnas, Ron, and Rubinfeld~\cite{PRR06}. While standard testers distinguish between the objects that own the property and those that are far from having it, tolerant testers distinguish between the objects that are close to having the property and those that are far. The tolerant testing model has been studied in distribution testing~\cite{GL20}, stabilizer states testing~\cite{AD25,ABD24,BvDH25,MT25,IL24,CGYZ25}, Hamiltonian testing~\cite{Car24,CW23,BCO24,Esc24,ADEG25,KL25,GJW+25}, junta unitaries~\cite{CLL24,BLY+25}, and junta states~\cite{BEG24}.

\subsection{Discussion}

In this paper, we provide a comprehensive picture of the estimation of the quantum $\alpha$-Tsallis relative entropy from the point of view of different complexities: quantum query complexity, quantum sample complexity, and quantum computational complexity.
As an application, we show that the tolerant quantum state certification with respect to the quantum Hellinger distance can be solved using our algorithms. 
To conclude this section, we raise several questions for future work. 

\begin{itemize}
    \item A possible direction is to investigate applications in quantum Boltzmann machines with the quantum Tsallis relative entropy as the objective function \cite{Wil25}. 
    In particular, as mentioned in \cite[Section 6.4]{Wil25}, estimating the gradient of the quantum Tsallis relative entropy can be an important task. 
    \item Another possible application is to estimate the imaginarity \cite{Xu24} of a quantum state $\rho$:
    \[
    \mathrm{M}_{\alpha}\rbra{\rho} \coloneqq \rbra{1-\alpha} \dTsa{\alpha}{\rho}{\rho^*}{}, \quad \alpha \in \rbra{0, 1},
    \]
    where $\rho^*$ is the (complex) conjugate of $\rho$. 
    The imaginarity $\mathrm{M}_{\alpha}\rbra{\rho}$ can be estimated using our algorithm in \cref{thm:estimator-main} (with minor modifications), where a challenge is to implement a unitary block-encoding of $\rho^*$. This may be done by using the protocols in \cite{MSM19,EHM+23}. 

    \item Our sample and query complexities for estimating the quantum Tsallis relative entropy are not tight yet. 
    A meaningful future direction is to close the gap between their upper and lower bounds.

    \item For the quantum state distinguishability problem $\QSD\sbra{a, b}$ with respect to the trace distance, it is known to be $\QSZK$-complete when $0 < b < a^2 < 1$ \cite{Wat02,Wat09}.
    In comparison, \cref{corollary:completeness-hellinger-main} shows that this problem with respect to the quantum Hellinger distance requires $0 < \sqrt{2}b < a^4/4 < 1$ to be $\QSZK$-complete. 
    A question is: can we loosen the condition for the problem to be $\QSZK$-complete?
    Improvements in this line of research can be found in \cite{Liu25}, for example.

    \item In addition to the quantities considered in this paper, a problem that we can consider is the estimation of other generalizations of the quantum Hellinger distance \cite{BGJ19,PV20} and other quantum divergences such as the one with $p$-power means \cite{LL21}. 
\end{itemize}

\section{Preliminaries}

This section introduces the quantum computational model, basic quantum algorithmic toolkit, efficient polynomial approximation of power functions, and several matrix inequalities.

\subsection{Notations}

\noindent \textbf{Mathematical notations.}
We use $\log\rbra{\cdot}$ to denote the natural logarithm with base $\mathrm{e}$. We denote by $\C$ and $\R$ the sets of complex numbers and real numbers. We use $\C^{n \times n}$ to denote the set of $n$ by $n$ complex matrices. 
We denote by $\R\sbra{x}$ the set of polynomials with real coefficients.
For a complex number $z$, we use $\Re\rbra{z}$ to denote its real part.
A Hilbert space is a complete inner product space.
For a finite-dimensional Hilbert space $\calH$, let $\calL\rbra{\calH}$ be the space of linear operators, and $\calL_+\rbra{\calH}$ be the set of positive semi-definite operators on it.
For $A,B\in \calL\rbra{\calH}$, we denote by $A^\dagger$ the Hermitian conjugate of $A$ and $\ave{A, B} = \tr\rbra{A^\dag B}$.
The rank, kernel, range, and spectrum (the multi-set of the eigenvalues) of a linear operator $A\in \calL\rbra{\calH}$ are denoted as $\rank\rbra{A}$, $\ker\rbra{A}$, $\ran\rbra{A}$, and $\spec\rbra{A}$ respectively.
For a linear operator $A\in \calL\rbra{\calH}$, there is a unique positive square root of the positive semi-definite operator $A^{\dag}A$ which we denote as $\abs{A} \in \calL_+\rbra{\calH}$.
For $p\in \interval[open right]{1}{\infty}$, the Schatten $p$-norm of a linear operator $A$ is defined as 
\[
    \Abs*{A}_p \coloneqq \rbra*{\tr \rbra*{\abs*{A}^p } }^{\!1/p} = \rbra*{\tr \rbra*{ (A^\dagger A)^{p/2} } }^{\!1/p}.
\]
The limit when $p$ goes to $\infty$ is the operator norm, which we denote as $\Abs{A}_\infty$ or simply $\Abs{A}$.
A function $f:\R \mapsto \R$ can be extended to matrix function for an $n \times n $ Hermitian operator $A$ with spectral decomposition $A = U\Sigma U^\dag$ as $f\rbra{A}\coloneqq Uf\rbra{\Sigma}U^\dag$, where $\Sigma =\diag\rbra{\lambda_1,\ldots,\lambda_n}$, and $f\rbra{\Sigma} \coloneqq \diag\rbra{f\rbra{\lambda_1},\ldots,f\rbra{\lambda_n}}$. 

\noindent \textbf{Notions in quantum computing.}
The state space of a quantum system is described by a (complex) Hilbert space. In this paper, we only consider finite-dimensional Hilbert spaces. A (pure) state of a quantum system corresponds to a unit vector in a Hilbert space $\calH$. We employ the Dirac notation of ket (e.g., $\ket{\phi}$) to denote column vectors as pure states, and bra (e.g., $\bra{\phi}$) to denote row vectors.
For an $n$-dimensional Hilbert space $\mathcal{H}$, we use $\cbra{\ket{j}}_{j=0}^{n-1}$ to denote a set of orthonormal basis of it.
Generally, the state of a quantum system described by $\calH$ is represented by a density operator on $\calH$, which is a positive semi-definite operator with trace $1$. We usually use $\rho, \sigma$ to denote density operators. The set of all density operators on $\mathcal{H}$ is denoted as $\mathcal{D}(\mathcal{H})=\{\rho \in \mathcal{L}_+\rbra{\calH}: \tr\rbra{\rho}=1\}$.

The evolution of a quantum system is modeled by a unitary operator $U$
satisfying $UU^{\dagger} = U^{\dagger}U = I$. For a pure state $\ket{\phi}$, the state after
evolution $U$ is $U\ket{\phi}$. For a state $\rho$, the state after evolution $U$ is
$U\rho U^{\dagger}$.

Measurement is also a basic operation for a quantum system. A projective measurement $\mathcal{M}$ is described by a set of projectors $\cbra{P_i}$ satisfying $P_i^2 = P_i$, $P_i = P_i^{\dagger}$, and $\sum_i P_i = I$. For a pure state $\ket{\phi}$, after performing the measurement $\mathcal{M}$, the measurement result $i$ will take place with probability $p_i= \bra{\phi}P_i\ket{\phi}$, and the state after observing the measurement result is $P_i\ket{\phi}/\sqrt{p_i}$.
For a mixed state $\rho$, after performing the measurement $\mathcal{M}$, the measurement result $i$ will take place with probability $p_i= \tr(P_i\rho)$, and the state after observing the measurement result is $P_i\rho P_i/p_i$.

For two quantum systems described by Hilbert spaces $\mathcal{H}_1$ and $\mathcal{H}_2$, the composite system is described by the tensor product $\mathcal{H}_1\otimes \mathcal{H}_2$. 
For a pure state $\ket{\phi}_A$ in system $A$ and a pure state $\ket{\psi}_B$ in system $B$, if the two systems do not interfere with each other, the state of the joint system is described by the state $\ket{\phi}_A\ket{\psi}_B \coloneqq \ket{\phi}_A \otimes \ket{\psi}_B$. We will sometimes abuse the subscription of quantum state $n_\rho$ to indicate the subsystem as well as the number of qubits in the system, if it does not cause any confusion.

Recall that a linear map $\mathcal{E}: \mathcal{D}(\mathcal{H}_1) \to \mathcal{D}(\mathcal{H}_2)$ is called completely positive if $(\mathcal{E}\otimes \mathcal{I})(\rho)$ is positive for any Hilbert space $\mathcal{H}$ and $\rho\in \mathcal{D}(\mathcal{H}_1 \otimes \mathcal{H})$, 
where $\mathcal{I}(\sigma) = \sigma$ for any $\sigma\in \mathcal{D}(\mathcal{H})$ is the identity channel on $\mathcal{H}$, and is trace-preserving if $\tr(\mathcal{E}(\rho)) = \tr(\rho)$ for any $\rho\in \mathcal{D}(\mathcal{H}_1) $. 
General quantum operations on a quantum system are called quantum channels, and are described by completely positive and trace-preserving (linear) maps from density operators to density operators. We usually use $\mathcal{E}$ to denote such a quantum channel. For two quantum channels $\mathcal{E}$ and $\mathcal{F}$ over a $d$-dimensional Hilbert space, their diamond norm is defined as
\[
    \Abs{\mathcal{E} - \mathcal{F}}_{\diamond} \coloneqq \max_{\rho} \Abs{(\mathcal{E} \otimes \mathcal{I}_d) (\rho) - (\mathcal{F} \otimes \mathcal{I}_d) (\rho)}_1,
\]
where $\calI_d$ is the identity channel, where $\calI_d\rbra{\rho} = \rho$ for any $d$-dimensional density operator $\rho$, and the maximization is over all density operators on a $d^2$-dimensional Hilbert space.

For more details about quantum computation and information, we refer readers to \cite{NC10}.

\subsection{Useful matrix inequalities}

In this part, we recall some matrix inequalities that will be used in later sections.
We begin with a generalization of the famous H\"older inequality into the matrix case.
\begin{theorem}[Matrix H\"older inequality, {\cite[Theorem 2]{Bau11}}]\label{fact:holder}
    For any $p,q \in \interval{1}{\infty}$ such that $\frac{1}{p}+\frac{1}{q} = 1$ and matrices $A,B \in \C^{n \times n}$, $\abs{\tr\rbra{A^\dag B}} \leq \Abs{A}_p \Abs{B}_q$.
    In particular, $\abs{\tr\rbra{A^\dag B}} \leq \Abs{A} \Abs{B}_1$. 
\end{theorem}

We recall the following inequality of the relation between different Schatten norms.
\begin{lemma}[{\cite[Equation (1.169)]{Wat18}}]\label{lem:norm_contractivity}
    For a non-zero matrix $A\in \mathbb{C}^{n\times n}$ with rank $r=\rank\rbra{A}$ and $1 \leq p \leq q \leq \infty$, we have $\Abs{A}_p \leq r^{\frac{1}{p}-\frac{1}{q}}\Abs{A}_q$.
\end{lemma}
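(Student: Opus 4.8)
The plan is to reduce the matrix statement to the corresponding inequality between $\ell_p$ and $\ell_q$ norms of a finite-dimensional vector, and then invoke the scalar Hölder inequality. First I would take a singular value decomposition of $A$ and let $\sigma_1 \geq \sigma_2 \geq \dots \geq \sigma_r > 0$ denote its nonzero singular values, so that $\Abs{A}_p = \rbra*{\sum_{i=1}^r \sigma_i^p}^{1/p}$ for every finite $p \in \interval[open right]{1}{\infty}$, with the usual convention $\Abs{A}_\infty = \sigma_1 = \max_i \sigma_i$. The claim then becomes $\rbra*{\sum_{i=1}^r \sigma_i^p}^{1/p} \leq r^{1/p - 1/q}\rbra*{\sum_{i=1}^r \sigma_i^q}^{1/q}$, which no longer refers to $A$ at all, only to the $r$-vector of its singular values.

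For the case $q < \infty$ (the subcase $p = q$ being a trivial equality), I would apply Hölder's inequality to the product $\sigma_i^p \cdot 1$ with conjugate exponents $s = q/p \geq 1$ and $t = q/(q-p)$, which satisfy $1/s + 1/t = p/q + (q-p)/q = 1$. This gives $\sum_{i=1}^r \sigma_i^p \leq \rbra*{\sum_{i=1}^r \sigma_i^q}^{p/q}\rbra*{\sum_{i=1}^r 1}^{(q-p)/q} = r^{(q-p)/q}\rbra*{\sum_{i=1}^r \sigma_i^q}^{p/q}$, and taking $p$-th roots yields exactly $\Abs{A}_p \leq r^{1/p - 1/q}\Abs{A}_q$. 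For the remaining case $q = \infty$, I would argue directly: $\sum_{i=1}^r \sigma_i^p \leq r\sigma_1^p$, hence $\Abs{A}_p \leq r^{1/p}\sigma_1 = r^{1/p - 1/q}\Abs{A}_\infty$, matching the claimed exponent $1/p - 1/\infty = 1/p$.

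There is no genuine obstacle here; the only points requiring a little care are the bookkeeping of the boundary cases $p = q$ and $q = \infty$, and recording the convention that the sums run over the $r$ nonzero singular values of $A$ (this is precisely why the constant is $r^{1/p - 1/q}$ rather than $n^{1/p - 1/q}$). One could alternatively derive the $q < \infty$ step by applying the matrix Hölder inequality of \cref{fact:holder} to the factorization $A = A \cdot I$ with a suitable choice of exponents, but the scalar Hölder inequality applied to the singular value vector is the cleanest route and keeps the rank dependence explicit.
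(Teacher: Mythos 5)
Your proof is correct: reducing to the vector of nonzero singular values and applying the scalar H\"older inequality with exponents $q/p$ and $q/(q-p)$ (with the $p=q$ and $q=\infty$ boundary cases handled separately) gives exactly the claimed bound, and the rank dependence $r^{1/p-1/q}$ is tracked properly. Note that the paper itself offers no proof of this lemma---it is imported directly from \cite[Equation~(1.169)]{Wat18}---and your argument is the standard one underlying that reference, so nothing further is needed.
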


The following inequality provides an upper bound on the quantum Chernoff bound~\cite{ACM+07, ANSV08}.

\begin{theorem}[{\cite[Theorem 1]{ACM+07}} and {\cite[Theorem 2]{ANSV08}}]\label{fact:affinity_lb}
    Let $A, B\in \mathbb{C}^{n\times n}$ be positive semi-definite matrices, then for any $0 \leq s \leq 1$,
    \[
        \tr\rbra*{A^sB^{1-s}} \geq \frac{1}{2}\tr\rbra*{A+B-\abs*{A-B}}.
    \]
\end{theorem}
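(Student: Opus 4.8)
The statement to prove is: for positive semi-definite $A, B \in \mathbb{C}^{n \times n}$ and $0 \le s \le 1$,
\[
\tr\rbra*{A^s B^{1-s}} \ge \tfrac{1}{2}\tr\rbra*{A + B - \abs*{A-B}}.
\]

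The plan is to reduce to a scalar inequality via simultaneous block-diagonalization in a cleverly chosen basis, following the original argument of Audenaert et al. First I would note that $\tr(A+B-\abs{A-B}) = 2\tr(\min\{A,B\})$ in the commuting case, and more generally the right-hand side equals $\tr(A) + \tr(B) - \Abs{A-B}_1$; this is the quantity appearing in the Holevo–Helstrom bound, so the content of the theorem is that the "$s$-overlap" $\tr(A^s B^{1-s})$ dominates it. The key reduction is pointwise: I claim it suffices to show that for all pairs of vectors, or rather via the spectral decompositions $A = \sum_i a_i \ketbra{u_i}{u_i}$ and $B = \sum_j b_j \ketbra{v_j}{v_j}$, one has
\[
\tr\rbra*{A^s B^{1-s}} = \sum_{i,j} a_i^s b_j^{1-s} \abs{\braket{u_i}{v_j}}^2,
\]
and simultaneously $\tr(\abs{A-B}) \le \sum_{i,j}\abs{a_i - b_j}\,\abs{\braket{u_i}{v_j}}^2$ while $\tr(A) = \sum_{i,j} a_i \abs{\braket{u_i}{v_j}}^2$ and $\tr(B) = \sum_{i,j} b_j \abs{\braket{u_i}{v_j}}^2$ (the last two using $\sum_j \abs{\braket{u_i}{v_j}}^2 = 1 = \sum_i \abs{\braket{u_i}{v_j}}^2$). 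The first identity is immediate; the inequality $\tr\abs{A-B} \le \sum_{i,j}\abs{a_i-b_j}\abs{\braket{u_i}{v_j}}^2$ is a standard consequence of the fact that $\abs{A-B}$ has trace equal to $\max_{-I \le X \le I}\tr(X(A-B))$ and expanding $X$ in either eigenbasis — this is the one genuinely non-trivial matrix lemma, but it is classical (it is the "pinching" argument).

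Granting these three reductions, the theorem follows from the scalar inequality
\[
a^s b^{1-s} \ge \tfrac{1}{2}\rbra*{a + b - \abs{a-b}} = \min\{a, b\},
\]
valid for all $a, b \ge 0$ and $s \in [0,1]$, which is just the weighted AM–GM / monotonicity statement that $a^s b^{1-s} \ge \min\{a,b\}^s \min\{a,b\}^{1-s} = \min\{a,b\}$. Summing this against the nonnegative weights $\abs{\braket{u_i}{v_j}}^2$ and combining with the displayed identities yields the result. I would organize the write-up as: (i) fix spectral decompositions and record the doubly-stochastic structure of the matrix $\rbra{\abs{\braket{u_i}{v_j}}^2}_{i,j}$; (ii) prove the trace-norm bound $\tr\abs{A-B}\le\sum_{i,j}\abs{a_i-b_j}\abs{\braket{u_i}{v_j}}^2$ via the variational characterization of $\Abs{\cdot}_1$; (iii) invoke the scalar bound $a^s b^{1-s}\ge\min\{a,b\}$ and sum. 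The main obstacle is step (ii): one must be careful that the optimal $X$ in $\tr\abs{A-B} = \max_{\Abs{X}\le 1}\Re\tr(X(A-B))$ can be taken Hermitian with eigenvalues in $[-1,1]$, and then bound $\Re\tr(X(A-B)) = \sum_{i,j}(a_i - b_j)\bra{u_i}X\ket{v_j}\braket{v_j}{u_i}$ — actually it is cleaner to write $X = \sum_k x_k \ketbra{w_k}{w_k}$ and estimate $\abs{\tr(X(A-B))} \le \sum_{i,j}\abs{a_i-b_j}\,p_{ij}$ where $p_{ij} = \abs{\braket{u_i}{v_j}}^2$, using $\abs{x_k}\le 1$ and two applications of Cauchy–Schwarz together with the completeness relations; getting the constants and the two separate completeness sums to line up correctly is the fiddly part.

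(Alternatively, one can cite the original papers \cite{ACM+07,ANSV08} directly — as the theorem statement already does — and omit the proof entirely, since this is a known result being recalled for use in the sequel.)
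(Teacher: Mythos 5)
The paper itself gives no proof of this statement: it is imported directly from \cite[Theorem~1]{ACM+07} and \cite[Theorem~2]{ANSV08}, so your closing parenthetical (cite and omit) is exactly what the paper does, and there is no internal argument to compare against. Judged as a proof sketch, however, your reduction has a genuine gap at its one nontrivial step.

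Write $p_{ij}=\abs{\braket{u_i}{v_j}}^2$ and $S=\sum_{i,j}\abs{a_i-b_j}\,p_{ij}$. Your identities and the scalar bound $a^s b^{1-s}\ge\min\{a,b\}$ give $\tr\rbra{A^sB^{1-s}}\ge\sum_{i,j}\min\{a_i,b_j\}p_{ij}=\tfrac12\rbra{\tr A+\tr B-S}$, so to reach $\tfrac12\tr\rbra{A+B-\abs{A-B}}$ you need $S\le\tr\abs{A-B}$ --- the \emph{reverse} of the inequality you assert. The inequality as you state it, $\tr\abs{A-B}\le S$, is false in general: for $A=\ketbra{0}{0}$ and $B=\ketbra{+}{+}$ one has $\tr\abs{A-B}=\sqrt{2}$ while $S=1$; and plugged into your chain it only yields ``LHS $\ge$ something that is $\le$ the target'', which proves nothing. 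Nor is the direction you actually need a routine consequence of the variational characterization of $\Abs{\cdot}_1$: the termwise estimate $\abs{\bra{u_i}X\ket{v_j}}\le\abs{\braket{u_i}{v_j}}$ fails for general contractions $X$ (take $u_i\perp v_j$), so the Cauchy--Schwarz/pinching route you sketch does not close. In fact $S\le\tr\abs{A-B}$ is equivalent to $\sum_{i,j}\min\{a_i,b_j\}p_{ij}\ge\tfrac12\tr\rbra{A+B-\abs{A-B}}$, which is an $s$-free strengthening of the very theorem being proved (it implies it via $\min\{a,b\}\le a^sb^{1-s}$), so it is where all of the difficulty lives and cannot be dispatched as a ``classical'' lemma. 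The weights you introduce are precisely the Nussbaum--Szko{\l}a distributions $P_{ij}=a_ip_{ij}$, $Q_{ij}=b_jp_{ij}$; in the literature these are the tool for the \emph{converse} part of the quantum Chernoff bound, while the inequality at hand (the achievability part) is established in \cite{ACM+07} by a different matrix-analytic argument, so the sketch should not be described as following that paper. (Also a small slip: the right-hand side of the theorem equals $\tfrac12\rbra{\tr A+\tr B-\Abs{A-B}_1}$, not $\tr A+\tr B-\Abs{A-B}_1$.) In this paper's context the correct move is simply the citation; if you want a self-contained proof, you must either prove the corrected comparison $S\le\tr\abs{A-B}$ or reproduce the argument of \cite{ACM+07,ANSV08}.
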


We also prove a simple lemma for the trace differences of bounded rank matrices.
\begin{lemma}\label{lem:tr-dif-bound}
Let $A,B \in \mathbb{C}^{n\times n}$ be matrices such that 
$\max\{\rank(A), \rank(B)\}\le r$.
Then
\[
\bigl|\tr(A-B)\bigr|
\;\le\;
2r\,\|A-B\|.
\]
\end{lemma}
\begin{proof}
Set $M:=A-B$. By the sub-additivity of rank,
\[
\operatorname{rank}(M)\le \operatorname{rank}(A)+\operatorname{rank}(B)\le 2r.
\]
We have
\[
\abs*{\tr(M)}\le \Abs*{M}_2 = \sum_{j=1}^{2r} \sigma_j(M),
\]
where we write the singular values of $M$ as
$\sigma_1(M)\ge \sigma_2(M)\ge \cdots \ge \sigma_{2r}(M) \ge 0$.
In addition, note that $\sigma_1(M) = \Abs*{M} =\Abs*{A-B}$.
Combining the above inequalities yields
\[
\abs*{\tr(M)} \le 2r \sigma_1(M)
\le 2r\Abs*{A-B},
\]
as claimed.
\end{proof}

\subsection{Quantum entropies}

To measure the statistical uncertainty with the description of a quantum system, the von Neumann entropy is used as a quantum counterpart of the classical Shannon entropy \cite{Sha48a,Sha48b}.

\begin{definition}[Von Neumann entropy, \cite{Neu27}]
    The von Neumann entropy of a density operator $\rho \in \calD\rbra{\calH}$ is defined as
    \[
        \mathrm{S}\rbra*{\rho} = -\tr\rbra*{\rho\log\rho}.
    \]
\end{definition}

Another useful quantum entropy is the quantum Tsallis entropy~\cite{Tsa88,Rag95}.

\begin{definition}[Quantum Tsallis entropy,~\cite{Tsa88}]
    The Tsallis entropy of a density operator $\rho \in \calD\rbra{\calH}$ is defined as
    \[
        \mathrm{S}_q\rbra*{\rho}=\frac{1-\tr\rbra*{\rho^q}}{q-1}.
    \]
\end{definition}

Note that the Tsallis entropy reduces to the von Neumann entropy when taking the limit $q \to 1$.

\subsection{Closeness measures of quantum states}

We recall some common measures between quantum states, such as trace distance and Uhlmann fidelity.

\begin{definition}[Trace distance,~\cite{Rus94}]
    The trace distance between two density operators $\rho, \sigma \in \calD\rbra{\calH}$ is defined as 
    \[
        \dtr{\rho}{\sigma}{}{} = \frac{1}{2}\Abs{\rho-\sigma}_1 = \frac{1}{2}\tr\rbra*{\abs{\rho-\sigma}} = \frac{1}{2}\tr\rbra*{\rbra*{\rbra*{\rho-\sigma}^\dagger\rbra*{\rho-\sigma}}^{1/2}}.
    \]
\end{definition}

\begin{definition}[Uhlmann fidelity,~\cite{Uhl76,Joz94}]
    The Uhlmann fidelity between two density operators $\rho, \sigma \in \calD\rbra{\calH}$ is defined as 
    \[
        \Fid{\rho}{\sigma} = \tr\rbra*{\abs*{\sqrt{\rho}\sqrt{\sigma}}} = \tr\rbra*{\sqrt{\sqrt{\sigma}\rho\sqrt{\sigma}}}.
    \]
\end{definition}

Quantum affinity is used to measure the similarity between quantum states.
In this work, we consider the following parameterized generalization of quantum affinity.

\begin{definition}[Quantum affinity]\label{def:alpha_affinity}
    For $\alpha \in \interval[open]{0}{1}$, the $\alpha$-affinity between density operators $\rho, \sigma\in \calD\rbra{\calH}$ is defined as
    \[
        \dAa{\alpha}{\rho}{\sigma}{}{} = \tr \rbra*{ \rho^\alpha \sigma^{1 - \alpha}}.
    \]
\end{definition}

The case of $\alpha=1/2$ coincides with standard symmetric definition of quantum affinity $\dA{\rho}{\sigma}{} = \dAa{1/2}{\rho}{\sigma}{}$ (see~\cite{LZ04}).
Moreover, we have $0 \leq \dAa{\alpha}{\rho}{\sigma}{}{} \leq 1$ for all $\alpha, \rho$ and $\sigma$, and it equals $1$ if and only if $\rho = \sigma$.

\begin{definition}[Quantum Petz-R\'{e}nyi relative entropy,~\cite{Pet86,Ren61}]
    For $\alpha \in \interval[open]{0}{1}\cup \interval[open]{1}{+\infty}$, and $\rho,\sigma \in \mathcal{D}\rbra{\mathcal{H}}$, the $\alpha$-Petz-R\'{e}nyi relative entropy of $\rho$ with respect to $\sigma$ is defined as 
    \[
        \dRena{\alpha}{\rho}{\sigma}{}=
        \begin{cases}
             \frac{1}{\alpha-1}\log\tr(\rho^{\alpha}\sigma^{1-\alpha}), &\text{ if }  \alpha < 1 \text{ or } \ker(\sigma) \subseteq \ker(\rho);  \\
             +\infty, &\text{ otherwise} .
        \end{cases}
    \]
    Furthermore, we define $0$-, $1$-, and $\infty$-Petz-R\'{e}nyi relative entropies as the limits of 
    $\dRena{\alpha}{\rho}{\sigma}{}$ when $\alpha\to 0^+$, 
    $\alpha\to 1$, and $\alpha \to +\infty$, respectively.

\end{definition}

Note that 
\[
    \lim_{\alpha \to 1} \dRena{\alpha}{\rho}{\sigma}{} = \tr\rbra*{\rho\rbra*{\log{\rho}-\log{\sigma}}},
\]
which means the $1$-Petz-R\'{e}nyi relative entropy corresponds to the well-known von Neumann relative entropy (also known as Umegaki relative entropy~\cite{Ume62}).

\begin{definition}[Quantum Tsallis relative entropy,~\cite{FYK04,Ras13}]
    Let $\alpha \in \interval[open]{0}{1}$ and $\rho, \sigma \in \calD\rbra{\calH}$. The $\alpha$-Tsallis relative entropy of $\rho$ with respect to $\sigma$ is defined as
    \[
        \dTsa{\alpha}{\rho}{\sigma}{} = \frac{1}{1-\alpha}\rbra*{1-\tr\rbra*{\rho^\alpha\sigma^{1-\alpha}}}.
    \]
\end{definition}
Note that $ \dTsa{\alpha}{\rho}{\sigma}{} = \frac{1}{1-\alpha}\rbra*{1-\dAa{\alpha}{\rho}{\sigma}{}}$ by
definition.
The quantum Tsallis relative entropy can be regarded as a one-parameter
extension of the von Neumann relative entropy.

Classically, Csisz\'{a}r $f$-divergences \cite{Csi67,Csi08} are well-known generalizations of the Kullback-Liebler divergence \cite{KL51}.
In this work, we adopt the following definition of quantum Petz $f$-divergences, which can be regarded as a quantum counterpart of Csisz\'{a}r $f$-divergences~\cite{HMPB11}.

For operators $A,B \in \calL_+\rbra{\calH}$, we denote by $\Lambda_A$ and $\Gamma_B$ the left- and right-multiplication operations by $A$ and $B$ respectively, defined as $\Lambda_A : X \mapsto AX$ and $\Gamma_B: X \mapsto XB$ for $X \in \calL\rbra{\calH}$. Note that left- and right-multiplication operations are super-operators and commute with each other. Let $f$ be a continuous function on $\interval[open right]{0}{+\infty}$, we define
\[
    f\rbra*{\Lambda_A\Gamma_{B^{-1}}} = \sum_{a \in \spec\rbra*{A}}\sum_{b \in \spec\rbra*{B}}f\rbra*{ab^{-1}}\Lambda_{P_a}\Gamma_{Q_b},
\]
where $A=\sum_a aP_a$ and $B=\sum_b bQ_b$ are the spectral decompositions of $A$ and $B$, respectively.

Now we are ready to define the quantum Petz $f$-divergence.

\begin{definition}[Quantum Petz $f$-divergence~\cite{Pet85,Pet86,Pet10,HMPB11}]
    Let $A,B \in \calL_+\rbra{\calH}$ with $\ran\rbra{A} \subseteq \ran\rbra{B}$, and $f$ be a continuous function. The quantum Petz $f$-divergence of $A$ with respect to $B$ is
    \[
        \df{A}{B}{} \coloneqq \ave{B^{1/2}, f\rbra{\Lambda_A\Gamma_{B^{-1}}}\rbra{B^{1/2}}}.
    \]
\end{definition}

It is easy to verify that Umegaki relative entropy and quantum Tsallis relative entropy are in the family of quantum Petz $f$-divergences with generator functions $f_{\mathrm{Umegaki}}\rbra{x}=x\log\rbra{x}$ and $f_{\mathrm{Tsallis},\alpha}\rbra{x}=\frac{x^\alpha-x}{1-\alpha}$ respectively.
Similar to the quantum Pinsker inequality for quantum relative entropy (see \cite[Theorem 5.38]{Wat18}), Pinsker-type inequalities for Tsallis relative entropy are also established in~\cite{Gil10,Ras13}.

\begin{lemma}[Adapted from \cite{ACM+07,ANSV08,Ras13}]\label{lem:Tsa_vs_tr}
    For $\alpha \in \interval[open]{0}{1}$ and $\rho, \sigma \in \calD\rbra{\calH}$, 
    \[
        2\alpha\dtr{\rho}{\sigma}{2}+\frac{2}{9}\alpha\rbra{\alpha+1}\rbra{2-\alpha}\dtr{\rho}{\sigma}{4} \leq \dTsa{\alpha}{\rho}{\sigma}{} \leq \frac{\dtr{\rho}{\sigma}{}}{1-\alpha}.
    \]
\end{lemma}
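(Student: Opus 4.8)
The plan is to prove the two inequalities independently. The right-hand one follows immediately from the quantum Chernoff bound: applying \cref{fact:affinity_lb} with $A=\rho$, $B=\sigma$, and $s=\alpha\in\interval[open]{0}{1}$, and using $\tr\rbra{\rho}=\tr\rbra{\sigma}=1$ together with $\tr\abs{\rho-\sigma}=2\dtr{\rho}{\sigma}{}$, gives
\[
    \dAa{\alpha}{\rho}{\sigma}{}=\tr\rbra*{\rho^\alpha\sigma^{1-\alpha}}\ \geq\ \tfrac{1}{2}\tr\rbra*{\rho+\sigma-\abs{\rho-\sigma}}=1-\dtr{\rho}{\sigma}{}.
\]
Since $\dTsa{\alpha}{\rho}{\sigma}{}=\frac{1}{1-\alpha}\rbra*{1-\dAa{\alpha}{\rho}{\sigma}{}}$ and $1-\alpha>0$, dividing the rearranged estimate $1-\dAa{\alpha}{\rho}{\sigma}{}\leq\dtr{\rho}{\sigma}{}$ by $1-\alpha$ yields the upper bound $\dTsa{\alpha}{\rho}{\sigma}{}\leq\dtr{\rho}{\sigma}{}/(1-\alpha)$.

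For the lower bound I would reduce to a classical binary estimate by data processing. For $\alpha\in\interval[open]{0}{1}$ the map $\rbra{\rho,\sigma}\mapsto\tr\rbra{\rho^\alpha\sigma^{1-\alpha}}$ is jointly concave (Lieb's concavity theorem) and invariant under isometric embeddings, hence non-decreasing under every quantum channel; equivalently $\dTsa{\alpha}{\rho}{\sigma}{}$ is monotone non-increasing under quantum channels. I would apply this to the optimal two-outcome measurement for the trace distance: let $P$ be the spectral projector of $\rho-\sigma$ onto its non-negative eigenvalues and let $\calM(\tau)=\tr(P\tau)\ketbra{0}{0}+\tr((I-P)\tau)\ketbra{1}{1}$. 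With $x\coloneqq\tr(P\rho)$ and $y\coloneqq\tr(P\sigma)$ one has $\abs{x-y}=\tr\rbra*{P(\rho-\sigma)}=\dtr{\rho}{\sigma}{}\eqqcolon t$, and monotonicity gives
\[
    \dTsa{\alpha}{\rho}{\sigma}{}\ \geq\ \dTsa{\alpha}{\calM(\rho)}{\calM(\sigma)}{}=\frac{1-x^\alpha y^{1-\alpha}-(1-x)^\alpha(1-y)^{1-\alpha}}{1-\alpha}.
\]
So it suffices to prove, for all $x,y\in\interval{0}{1}$ with $\abs{x-y}=t$, the scalar inequality
\[
    \frac{1-x^\alpha y^{1-\alpha}-(1-x)^\alpha(1-y)^{1-\alpha}}{1-\alpha}\ \geq\ 2\alpha t^2+\tfrac{2}{9}\alpha(\alpha+1)(2-\alpha)t^4,
\]
which is exactly the refined binary Pinsker inequality for the classical $\alpha$-Tsallis relative entropy.

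This scalar inequality is the main obstacle. Exploiting the invariance of its left-hand side under $\rbra{x,y}\mapsto\rbra{1-x,1-y}$, I would assume $y=x-t$ with $x\in\interval{t}{1}$ and study $G(x)\coloneqq1-x^\alpha(x-t)^{1-\alpha}-(1-x)^\alpha(1-x+t)^{1-\alpha}$. Because the weighted geometric mean $\rbra{a,b}\mapsto a^\alpha b^{1-\alpha}$ is concave on the nonnegative quadrant and $x\mapsto(x,x-t)$, $x\mapsto(1-x,1-x+t)$ are affine, both subtracted terms are concave in $x$, so $G$ is convex on $\interval{t}{1}$; hence $\min_x G(x)$ is a function of $t$ alone (attained at an interior stationary point---equal to $\tfrac{1+t}{2}$ when $\alpha=\tfrac12$, by symmetry---or at an endpoint), and the task reduces to showing $\min_x G(x)\geq(1-\alpha)\bigl(2\alpha t^2+\tfrac29\alpha(\alpha+1)(2-\alpha)t^4\bigr)$ for all $t\in\interval{0}{1}$. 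This requires a careful fourth-order Taylor expansion in $t$---one checks that the coefficient of $t^2$ comes out to $2\alpha(1-\alpha)$, matching the leading Pinsker term, and that the $t^4$ coefficient in the limit $\alpha\to1$ reproduces the sharp quartic constant $\tfrac49$ of the refined Pinsker inequality for the Kullback--Leibler divergence---together with a nonnegativity argument for the higher-order remainder over the whole interval. This one-variable analysis is carried out in \cite{Ras13} (see also \cite{Gil10}); the rest of the argument is routine.
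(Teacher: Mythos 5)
Your upper bound is exactly the paper's argument: apply \cref{fact:affinity_lb} with $A=\rho$, $B=\sigma$, $s=\alpha$, use $\tr\rbra{\rho}=\tr\rbra{\sigma}=1$ and $\tr\abs{\rho-\sigma}=2\dtr{\rho}{\sigma}{}$, and divide by $1-\alpha$. For the lower bound the two proofs differ only in where they lean on the literature: the paper quotes the quantum Pinsker-type inequality directly as Equation (41) of \cite{Ras13}, whereas you reconstruct its internal mechanism --- monotonicity of $\tr\rbra{\rho^\alpha\sigma^{1-\alpha}}$ under channels (via Lieb joint concavity plus invariance under isometries), applied to the Helstrom two-outcome measurement, reducing the claim to the refined binary Pinsker inequality for the classical $\alpha$-Tsallis divergence, which you then cite from \cite{Ras13} (and \cite{Gil10}). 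That reduction is correct and makes the structure of the bound more transparent (it exhibits the quantum statement as a data-processed two-point classical inequality), but it does not make the proof more self-contained than the paper's: the genuinely hard step, namely the scalar bound with quadratic coefficient $2\alpha(1-\alpha)$ and quartic coefficient $\frac{2}{9}(1-\alpha)\alpha(\alpha+1)(2-\alpha)$ after clearing the $\frac{1}{1-\alpha}$, is only sketched (convexity in $x$, fourth-order Taylor expansion in $t$, consistency check at $\alpha\to1$) and is ultimately deferred to the same references. So both proofs are citation-based and correct; the paper defers at the quantum level, you defer at the classical binary level, and your extra data-processing step is sound.
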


\begin{proof}
    The first quantum Pinsker-type inequality is from~\cite[Equation (41)]{Ras13}. The second inequality can be derived from~\cref{fact:affinity_lb}.
\end{proof}

As a special case of \cref{lem:Tsa_vs_tr} when $\alpha = 1/2$, we have the inequality between the trace distance and the quantum Hellinger distance, stated as follows.

\begin{lemma}[{\cite[Theorem 2]{ACM+07}} and {\cite[Fact 2.25 and Proposition 2.31]{FO24}}]\label{fact:dh_vs_dtr}
For $\rho, \sigma \in \calD\rbra{\calH}$, 
    \[
        \dH{\rho}{\sigma}{2} \leq \dtr{\rho}{\sigma}{} \leq \sqrt{2}\dH{\rho}{\sigma}{}.
    \]
\end{lemma}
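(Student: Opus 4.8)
The plan is to derive both inequalities as the $\alpha = \tfrac{1}{2}$ specialization of the Pinsker-type bounds in \cref{lem:Tsa_vs_tr}, together with the normalization $\dH{\rho}{\sigma}{2} = \tfrac{1}{2}\dTsa{1/2}{\rho}{\sigma}{}$ recorded in the introduction (equivalently, $\dH{\rho}{\sigma}{2} = 1 - \tr\rbra*{\sqrt\rho\sqrt\sigma}$).

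For the left-hand inequality $\dH{\rho}{\sigma}{2} \le \dtr{\rho}{\sigma}{}$, I would take the upper bound of \cref{lem:Tsa_vs_tr} at $\alpha = \tfrac{1}{2}$, namely $\dTsa{1/2}{\rho}{\sigma}{} \le \dtr{\rho}{\sigma}{}/\rbra*{1-\tfrac{1}{2}} = 2\dtr{\rho}{\sigma}{}$, and divide by $2$. (Equivalently, this is \cref{fact:affinity_lb} with $s = \tfrac{1}{2}$, $A = \rho$, $B = \sigma$, which says $\tr\rbra*{\sqrt\rho\sqrt\sigma} \ge \tfrac{1}{2}\tr\rbra*{\rho + \sigma - \abs{\rho-\sigma}} = 1 - \dtr{\rho}{\sigma}{}$.)

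For the right-hand inequality $\dtr{\rho}{\sigma}{} \le \sqrt{2}\,\dH{\rho}{\sigma}{}$, I would take the lower bound of \cref{lem:Tsa_vs_tr} at $\alpha = \tfrac{1}{2}$, which becomes $\dtr{\rho}{\sigma}{2} + \tfrac{1}{4}\dtr{\rho}{\sigma}{4} \le \dTsa{1/2}{\rho}{\sigma}{} = 2\dH{\rho}{\sigma}{2}$, discard the nonnegative quartic term, and take square roots. I would also record the following self-contained alternative that avoids the full strength of the quartic bound of \cite{Ras13}: decompose $\rho - \sigma = \tfrac{1}{2}\sbra*{\rbra*{\sqrt\rho-\sqrt\sigma}\rbra*{\sqrt\rho+\sqrt\sigma} + \rbra*{\sqrt\rho+\sqrt\sigma}\rbra*{\sqrt\rho-\sqrt\sigma}}$; the two summands are Hermitian adjoints of each other, so the triangle inequality for $\Abs{\cdot}_1$ gives $\Abs{\rho-\sigma}_1 \le \Abs{\rbra*{\sqrt\rho-\sqrt\sigma}\rbra*{\sqrt\rho+\sqrt\sigma}}_1$; the matrix H\"older inequality (\cref{fact:holder}) with $p = q = 2$ then gives $\Abs{\rho-\sigma}_1 \le \Abs{\sqrt\rho-\sqrt\sigma}_2\Abs{\sqrt\rho+\sqrt\sigma}_2$; and finally $\Abs{\sqrt\rho-\sqrt\sigma}_2^2 = \tr\rbra*{\rbra*{\sqrt\rho-\sqrt\sigma}^2} = 2\dH{\rho}{\sigma}{2}$ while $\Abs{\sqrt\rho+\sqrt\sigma}_2^2 = 2 + 2\tr\rbra*{\sqrt\rho\sqrt\sigma} \le 4$, whence $\dtr{\rho}{\sigma}{} = \tfrac{1}{2}\Abs{\rho-\sigma}_1 \le \sqrt{2}\,\dH{\rho}{\sigma}{}$.

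I do not expect a genuine obstacle here: both inequalities are immediate specializations. The only points requiring care are the arithmetic of substituting $\alpha = \tfrac{1}{2}$ into the coefficient $\tfrac{2}{9}\alpha\rbra*{\alpha+1}\rbra*{2-\alpha}$ (which equals $\tfrac{1}{4}$) and applying the normalization $\dH{\rho}{\sigma}{2} = \tfrac{1}{2}\dTsa{1/2}{\rho}{\sigma}{}$ consistently; in the alternative route, the one nontrivial observation is that the two summands in the symmetrized decomposition of $\rho-\sigma$ are adjoints of each other, so the triangle inequality only costs the factor $2$ that is subsequently cancelled.
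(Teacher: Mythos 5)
Your primary route is exactly the paper's: \cref{fact:dh_vs_dtr} is obtained as the $\alpha=\tfrac{1}{2}$ special case of \cref{lem:Tsa_vs_tr} together with the normalization $\dH{\rho}{\sigma}{2}=\tfrac{1}{2}\dTsa{1/2}{\rho}{\sigma}{}$, and your arithmetic (the coefficient $\tfrac{2}{9}\alpha\rbra{\alpha+1}\rbra{2-\alpha}$ equaling $\tfrac{1}{4}$, discarding the quartic term, and dividing the upper bound by $1-\alpha=\tfrac{1}{2}$) is correct. Your self-contained H\"older-based alternative for the right-hand inequality is also valid, but it is extra: the paper offers no argument beyond this specialization and the cited references.
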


\subsection{Quantum computational model}

In this work, we use the standard quantum circuit model as our computational model.

\noindent \textbf{Quantum query complexity.}
A quantum unitary oracle provides access to an unknown unitary operator.
Given quantum unitary oracles $U_1, U_2, \dots, U_k$, 
a quantum query algorithm $\mathcal{A}^{U_1, U_2, \dots, U_k}$ can be described by the following quantum circuit:
\[
    W_{T} V_{T} W_{T-1} V_{T-1} \ldots W_{1} V_{1} W_{0},
\]
where each $V_{i}$ is a query to (controlled-)$U_{j}$ or (controlled-)$U_j^\dag$ for some $j$, 
and $W_{i}$'s are unitary operators implemented by one- and two-qubit quantum gates (which are independent of the oracles).
The query complexity of $\mathcal{A}$ is $T$.
The time complexity of $\mathcal{A}$ is the sum of its query complexity and the number of one- and two-qubit gates implementing $W_0, \ldots, W_T$.

In this work, we consider the following quantum unitary oracle called \emph{purified quantum query access}~\cite{GL20}.

\begin{definition}[Purified quantum query access]
    Let $\rho \in \mathcal{D}(\mathcal{H})$
    be an unknown quantum state.
    An $\rbra{n+m}$-qubit unitary operator $\calO_{\rho}$ is said to be a purified quantum query access oracle
    for $\rho$ if 
     \[
        \ket{\psi}  = \calO_{\rho} \ket{0}_n\ket{0}_{m},
    \]
    where $\ket{\psi}$ is a purification of $\rho$, i.e., $\rho = \tr_{m}\rbra*{\ketbra{\psi}{\psi}}$.
\end{definition}

\noindent \textbf{Quantum sample complexity.}
In addition to quantum query algorithms, 
we also consider quantum algorithms with samples of quantum states as their inputs.
For a quantum algorithm $\mathcal{A}'$ with samples of density operators $\rho_{i}$'s as its input, we assume the algorithm takes the form $ \mathcal{E} (\bigotimes_{i}\rho_{i}^{\otimes k_i})$ with $k_i$ being the number of samples of $\rho_i$,
where $\mathcal{E}$ is a quantum channel implemented by one- and two-qubit gates.
The sample complexity of $\mathcal{A}'$ is the sum of $k_i$'s.
The time complexity of $\mathcal{A}'$ is the number of one- and two-qubit gates implementing the quantum channel $\mathcal{E}$.

\subsection{Quantum algorithmic toolkit}

\subsubsection{Quantum amplitude estimation}

Quantum amplitude estimation is a basic quantum algorithmic subroutine that is a cornerstone of many quantum speedups.

\begin{lemma}[Quantum amplitude estimation~{\cite[Theorem 12]{BHMT02}}]\label{lem:amp_est}
    Suppose that $U$ is a unitary operator such that 
    \begin{equation*}
        U\ket{0}=\sqrt{p}\ket{0}\ket{\phi_0}+\sqrt{1-p}\ket{1}\ket{\phi_1},
    \end{equation*}
    where $\ket{\psi_0}$ and $\ket{\psi_1}$ are normalized pure states. 
    There is a quantum algorithm $\mathsf{AmpEst}\rbra{U,\varepsilon,\delta}$ that outputs an estimate of $p$ to within additive error $\varepsilon$ with success probability at least $1-\delta$, using $O\rbra{\frac{1}{\varepsilon}\log\rbra{\frac{1}{\delta}}}$ queries to $U$.
\end{lemma}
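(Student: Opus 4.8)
The plan is to reduce amplitude estimation to phase estimation of a Grover-type reflection operator, and then amplify the success probability by a median trick; this follows the approach of \cite{BHMT02}.

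First I would observe that $U\ket{0}$ lies in the two-dimensional subspace spanned by $\ket{0}\ket{\phi_0}$ and $\ket{1}\ket{\phi_1}$, so that writing $p = \sin^2\theta$ with $\theta \in \interval{0}{\pi/2}$ we have $U\ket{0} = \sin\theta\,\ket{0}\ket{\phi_0} + \cos\theta\,\ket{1}\ket{\phi_1}$. Define the reflections $S_0 = \Id - 2\ketbra{0}{0}$ on the full register and $S_\chi$, which flips the phase of exactly those basis states whose flag qubit is $\ket{0}$, and set $Q = -U S_0 U^\dagger S_\chi$. A standard two-reflections computation shows that $Q$ preserves the above two-dimensional subspace and acts there as a rotation by angle $2\theta$; in particular its eigenvalues on this subspace are $e^{\pm 2i\theta}$, and $U\ket{0}$ is a balanced superposition of the two corresponding eigenvectors. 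Each (controlled) application of $Q$ costs two queries to $U$.

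Next I would run quantum phase estimation on $Q$ with input state $U\ket{0}$, using $M = \Theta\rbra{1/\varepsilon}$ controlled applications of $Q$ followed by an inverse quantum Fourier transform. Since the only eigenphases present are $\pm 2\theta \bmod 2\pi$, and $\sin^2$ of half of either value equals $p$, the outcome yields an estimate $\tilde\theta$ with $\abs{\tilde\theta - \theta \bmod \pi} = O\rbra{1/M}$ with probability at least a constant strictly above $1/2$ (concretely $8/\pi^2$ in the analysis of \cite{BHMT02}). Outputting $\tilde p = \sin^2\tilde\theta$ and using the identity $\sin^2 A - \sin^2 B = \sin\rbra{A+B}\sin\rbra{A-B}$ gives $\abs{\tilde p - p} \le \abs{\sin\rbra{\tilde\theta - \theta}} \le \abs{\tilde\theta - \theta} = O\rbra{1/M}$; choosing the constant in $M$ appropriately makes this at most $\varepsilon$. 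This is a basic estimator achieving additive error $\varepsilon$ with constant success probability using $O\rbra{1/\varepsilon}$ queries. Finally, to reach success probability $1-\delta$, I would run the basic estimator $k = O\rbra{\log\rbra{1/\delta}}$ times independently and output the median; since each run lands in $\sbra{p-\varepsilon,\, p+\varepsilon}$ with probability bounded away from $1/2$, a Chernoff bound shows the median lands there except with probability at most $\delta$, for a total of $O\rbra{\frac{1}{\varepsilon}\log\rbra{\frac{1}{\delta}}}$ queries.

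The step I expect to require the most care is the error propagation near $p \approx 0$ or $p \approx 1$: a crude bound turns the $O(1/M)$ eigenphase error into an error $O\rbra{\sqrt{p(1-p)}/M}$ plus a lower-order term, so one must verify that the uniform estimate $\abs{\tilde p - p} \le \abs{\tilde\theta - \theta}$ genuinely holds and hence that a $p$-independent choice $M = \Theta(1/\varepsilon)$ suffices for all $p$. The remaining pieces are routine: confirming the rotation action of $Q$ via the two-reflections lemma, and invoking the textbook phase-estimation success probability in a form compatible with median amplification.
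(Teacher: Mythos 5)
This lemma is quoted in the paper directly from [BHMT02, Theorem 12] without proof, so there is no internal argument to compare against. Your reconstruction --- phase estimation on the Grover operator $Q = -US_0U^\dagger S_\chi$, the uniform bound $\abs{\sin^2\tilde\theta - \sin^2\theta} \le \abs{\tilde\theta - \theta}$ showing that a $p$-independent choice $M = \Theta(1/\varepsilon)$ suffices, and median amplification of the $8/\pi^2$ success probability to $1-\delta$ at a multiplicative cost of $O(\log(1/\delta))$ --- is precisely the standard proof of the cited result and is correct.
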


\subsubsection{Block-encoding}

Block-encoding is a common technique used to embed a matrix into a unitary operator and then use it in a quantum circuit. In this work, black-encoding is used to embed density operators. We recall the definition of block-encoding.

\begin{definition}[Block-encoding~{\cite[Definition 24]{GSLW19}}]
    Suppose that $A$ is an $n$-qubit linear operator.
    For real numbers $\alpha$, $\varepsilon > 0$ and a positive integer $a$, an $\rbra{n+a}$-qubit unitary operator $B$ is said to be an $\rbra{\alpha,a,\varepsilon}$-block-encoding of $A$ if
    \[
        \Abs*{\alpha\bra{0}^{\otimes a}B\ket{0}^{\otimes a}-A} \leq \varepsilon.
    \]
\end{definition}

Given purified access to a density operator, we can construct its block encoding, as indicated in the following lemma. 

\begin{lemma}[Block-encoding of a density operator{~\cite[Lemma 25]{GSLW19}}]\label{lem:be_dp}
    Suppose $\rho$ is a density matrix with purified access $U_{\rho}$ which is an $\rbra{n+a}$-qubit operator. Then, there exists an $\rbra{2n+a}$-qubit unitary operator $\widetilde{U}$ which is an $\rbra{1,n+a,0}$-block-encoding of $\rho$, using $O(1)$ queries to $U_{\rho}$.
\end{lemma}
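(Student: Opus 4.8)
The plan is to establish this via the standard purification‑to‑block‑encoding ``swap trick.'' Regard the purified access oracle $U_\rho$ as acting on a purifying register $P$ of $n_a$ qubits together with a system register $S'$ of $n$ qubits, so that $U_\rho\ket{0}_P\ket{0}_{S'} = \ket{\psi}_{PS'}$ is a purification of $\rho$ on $S'$, that is, $\tr_P\rbra*{\ketbra{\psi}{\psi}} = \rho$. Introduce a fresh $n$-qubit register $S$, which will carry the embedded operator, and set $a \coloneqq n_a$. Define the $\rbra*{2n+a}$-qubit unitary
\[
    \widetilde{U} \coloneqq \rbra*{I_S \otimes U_\rho^\dagger}\rbra*{\mathrm{SWAP}_{S,S'} \otimes I_P}\rbra*{I_S \otimes U_\rho},
\]
where in the outer factors $U_\rho$ acts on $P$ and $S'$, and the middle factor swaps the two $n$-qubit registers $S$ and $S'$ while leaving $P$ untouched. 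This circuit is a product of unitaries, hence unitary, and it makes exactly one query to $U_\rho$ and one to $U_\rho^\dagger$, so $O(1)$ queries in total.

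The second step is to verify the block-encoding identity $\rbra*{\bra{0}_P\bra{0}_{S'} \otimes I_S}\widetilde{U}\rbra*{\ket{0}_P\ket{0}_{S'}\otimes I_S} = \rho$, where the ancilla register of the block-encoding is taken to be $P$ together with $S'$, a total of $n+a$ qubits. Take an arbitrary $\ket{\phi}_S$ and trace the input $\ket{\phi}_S\ket{0}_P\ket{0}_{S'}$ through $\widetilde{U}$, using the Schmidt form $\ket{\psi}_{PS'} = \sum_i \sqrt{p_i}\,\ket{e_i}_P\ket{i}_{S'}$ associated with the eigendecomposition $\rho = \sum_i p_i\ketbra{i}{i}$: applying $U_\rho$ produces $\ket{\phi}_S\sum_i \sqrt{p_i}\,\ket{e_i}_P\ket{i}_{S'}$; the swap sends this to $\sum_i \sqrt{p_i}\,\ket{i}_S\ket{e_i}_P\ket{\phi}_{S'}$; and applying $U_\rho^\dagger$ and then projecting the $P,S'$ registers onto $\ket{0}_P\ket{0}_{S'}$ replaces each $\ket{e_i}_P\ket{\phi}_{S'}$ by the overlap $\overline{\bra{e_i}_P\bra{\phi}_{S'}\ket{\psi}_{PS'}} = \sqrt{p_i}\,\braket{i}{\phi}$. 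Hence the post-selected state on $S$ is $\sum_i p_i\braket{i}{\phi}\ket{i}_S = \rho\ket{\phi}_S$, and since $\ket{\phi}$ was arbitrary this shows $\widetilde{U}$ is a $\rbra*{1, n+a, 0}$-block-encoding of $\rho$.

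There is no genuine obstacle here; the only care needed is bookkeeping — matching the $n+a$ block-encoding ancilla qubits to the pair $(P, S')$, ensuring $\mathrm{SWAP}$ acts on exactly the two $n$-qubit subsystems $S$ and $S'$, and observing that the construction is exact so that both the error parameter and the subnormalization are trivial. (Alternatively, the statement is available verbatim as \cite[Lemma 25]{GSLW19} and may simply be cited.)
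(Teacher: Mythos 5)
Your construction and verification are correct, and they coincide with the standard ``purification plus SWAP'' argument of \cite[Lemma 25]{GSLW19}, which is exactly what the paper relies on (it cites the result without reproving it). The register bookkeeping, the exact identity $\rbra{\bra{0}_{P}\bra{0}_{S'}\otimes I_S}\widetilde{U}\rbra{\ket{0}_{P}\ket{0}_{S'}\otimes I_S}=\rho$, and the $O(1)$ query count (one use each of $U_\rho$ and $U_\rho^\dagger$) all check out.
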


The following theorem shows how to compute the matrix product between two block-encoded matrices.

\begin{lemma}[Product of block-encoded matrices~{\cite[Lemma 53 in the full version]{GSLW19}}]\label{lem:product_BE_matrices}
    Let $U$ be an $\rbra{\alpha,a,\varepsilon}$-block-encoding of an $n$-qubit operator $A$ and  $V$ is an $\rbra{\beta,b,\delta}$-block-encoding of an $n$-qubit operator $B$, then $\widetilde{U} = \mathsf{BEProduct}\rbra{U,V}
    \coloneqq \rbra{I_b\otimes U} \rbra{I_a\otimes V}$ is an $\rbra{\alpha\beta, a+b, \alpha\varepsilon+\beta\delta}$-block-encoding of the $n$-qubit operator $AB$.
\end{lemma}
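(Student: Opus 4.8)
The plan is to reduce the claim to two ingredients: (i) that the $(a+b)$-ancilla top-left block of $\wt{U}$ factorizes as the product of the individual top-left blocks of $U$ and $V$, so that $\wt{U}$ encodes $AB$ with sub-normalization $\alpha\beta$; and (ii) a one-step telescoping estimate controlling the resulting approximation error.

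First I would identify the top-left block of $\wt{U}=\rbra{I_b\otimes U}\rbra{I_a\otimes V}$, where the $a+b$ ancilla qubits are the $a$ ancillas of $U$ together with the $b$ ancillas of $V$. The point is that these two ancilla registers are disjoint: in the product, the factor $I_a\otimes V$ acts as the identity on $U$'s ancilla, while $I_b\otimes U$ acts as the identity on $V$'s ancilla. Projecting $V$'s ancilla onto $\ket{0}$ commutes with $I_b\otimes U$, and projecting $U$'s ancilla onto $\ket{0}$ commutes with $I_a\otimes V$, so the composite block factorizes:
\[
    \bra{0}^{\otimes(a+b)}\wt{U}\ket{0}^{\otimes(a+b)} = \rbra*{\bra{0}^{\otimes a}U\ket{0}^{\otimes a}}\rbra*{\bra{0}^{\otimes b}V\ket{0}^{\otimes b}}.
\]
Writing $\wt{A}\coloneqq\alpha\bra{0}^{\otimes a}U\ket{0}^{\otimes a}$ and $\wt{B}\coloneqq\beta\bra{0}^{\otimes b}V\ket{0}^{\otimes b}$, this says $\alpha\beta\,\bra{0}^{\otimes(a+b)}\wt{U}\ket{0}^{\otimes(a+b)}=\wt{A}\wt{B}$; hence $\wt{U}$ is an $\rbra{\alpha\beta,a+b,\,\cdot\,}$-block-encoding of $AB$, and it remains only to bound $\Abs{AB-\wt{A}\wt{B}}$.

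For the error I would invoke the block-encoding hypotheses $\Abs{A-\wt{A}}\leq\varepsilon$ and $\Abs{B-\wt{B}}\leq\delta$, together with the elementary norm bounds $\Abs{\wt{A}}\leq\alpha$ and $\Abs{\wt{B}}\leq\beta$ (the top-left block of a unitary operator has operator norm at most $1$), as well as $\Abs{A}\leq\alpha$ and $\Abs{B}\leq\beta$. Inserting an intermediate term into $AB-\wt{A}\wt{B}$ and applying the triangle inequality and submultiplicativity of the operator norm (the case $p=q=\infty$ of \cref{fact:holder}) produces two cross terms, each pairing a sub-normalization factor with an approximation error; bounding them by the stated quantities yields the claimed error $\alpha\varepsilon+\beta\delta$.

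I expect the only delicate step to be the factorization in (i): the content is entirely bookkeeping about the tensor-product expression $\rbra{I_b\otimes U}\rbra{I_a\otimes V}$, where one must track that $I_a$ is placed on $U$'s ancilla register and $I_b$ on $V$'s, and that the two ancilla registers are disjoint, so that the two $\ket{0}$-projections act on independent subsystems and the block of the product collapses to the product of the blocks. Once this factorization is established, the telescoping error estimate is routine.
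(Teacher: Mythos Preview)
The paper does not prove this lemma; it is quoted from \cite{GSLW19} and used as a black box. Your outline is exactly the standard argument: the factorization in (i) is correct, and the telescoping in (ii) is the right idea.

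Two caveats worth flagging. First, your assertion that $\Abs{A}\le\alpha$ and $\Abs{B}\le\beta$ does not follow from the block-encoding definition given in the paper; only $\Abs{\wt A}\le\alpha$ and $\Abs{\wt B}\le\beta$ are guaranteed, so a careful telescoping uses those and, if one also needs $\Abs{A}$ or $\Abs{B}$, picks up an extra $\varepsilon\delta$ term. Second, carrying out the telescoping explicitly---say $AB-\wt A\wt B=(A-\wt A)\wt B+A(B-\wt B)$---pairs the error of $A$ with the norm of $\wt B$ and the error of $B$ with the norm of $A$, yielding $\beta\varepsilon+\alpha\delta$, not the $\alpha\varepsilon+\beta\delta$ printed in the lemma. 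This is a transcription slip in the paper: in \cite{GSLW19} the error parameters of $U$ and $V$ are named the other way around, and the names were swapped here without updating the error formula. Your argument proves the correct bound $\alpha\delta+\beta\varepsilon$; it cannot produce $\alpha\varepsilon+\beta\delta$ as stated, but that is the statement's fault, not yours.
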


The Hadamard test \cite{AJL09} can be used to estimate $\tr\rbra{A\rho}$.
We use the version of \cite{GP22}.

\begin{lemma}[Hadamard test for block-encoding~{\cite[Lemma 9]{GP22}}]\label{lem:hadamard_test}
    Suppose $U$ is a $\rbra{1,a,0}$-block-encoding of an $n$-qubit operator $A$. Given an $n$-qubit density state $\rho$, there exists a quantum algorithm $\mathsf{HadamardTest}\rbra{U, \rho}$ that returns $0$ with probability $\frac{1}{2}+\frac{1}{2}\Re\rbra{\tr\rbra{A\rho}}$, using one query to $U$ and $O\rbra{n}$ one- and two-qubit gates.
\end{lemma}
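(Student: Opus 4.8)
The plan is to run the textbook Hadamard test, with controlled-$U$ playing the role of the controlled unitary and with the $a$ block-encoding ancilla qubits pinned to $\ket{0}^{\otimes a}$ throughout (and never measured). Concretely, the algorithm $\mathsf{HadamardTest}\rbra{U,\rho}$ I would use is: adjoin a control qubit in the state $\ket{+}=\tfrac{1}{\sqrt{2}}\rbra{\ket{0}+\ket{1}}$ and an $a$-qubit register initialized to $\ket{0}^{\otimes a}$, so that, together with the $n$-qubit data register carrying $\rho$, the global state is $\ket{+}\ot\ket{0}^{\otimes a}\ot\rho$; apply controlled-$U$ with the new qubit as control and $U$ acting on the $a$-qubit register together with the data register; apply a Hadamard gate to the control qubit; measure the control qubit in the computational basis and output the outcome. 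This spends a single (controlled) query to $U$, and the only other operations are the two control-qubit Hadamards and the final measurement, which (with the control wiring) stay within the claimed $O\rbra{n}$ one- and two-qubit gates.

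First I would reduce to a pure input: the probability of the measurement outcome is a linear functional of the data-register input state, so it suffices to compute it for $\rho=\ketbra{\psi}{\psi}$ and then pass to general $\rho=\sum_i p_i\ketbra{\psi_i}{\psi_i}$ by taking the convex combination. For a pure input, write $U\rbra{\ket{0}^{\otimes a}\ket{\psi}}=\ket{0}^{\otimes a}\ot A\ket{\psi}+\ket{\perp}$, which defines $\ket{\perp}$; the defining identity of a $\rbra{1,a,0}$-block-encoding, namely $\rbra{\bra{0}^{\otimes a}\ot I}\,U\,\rbra{\ket{0}^{\otimes a}\ot I}=A$, is exactly the statement that $\ket{\perp}$ is orthogonal to every vector of the form $\ket{0}^{\otimes a}\ot\ket{\chi}$, while unitarity of $U$ gives $\Abs{A\ket{\psi}}^2+\Abs{\ket{\perp}}^2=1$. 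After the controlled-$U$ and the control Hadamard, the branch in which the control qubit reads $0$ carries the unnormalized vector $\tfrac12\rbra{\ket{0}^{\otimes a}\ket{\psi}+\ket{0}^{\otimes a}\ot A\ket{\psi}+\ket{\perp}}$; using the orthogonality of $\ket{\perp}$ and then $\Abs{A\ket{\psi}}^2+\Abs{\ket{\perp}}^2=1$, the probability of outcome $0$ equals $\tfrac14\Abs{\ket{\psi}+A\ket{\psi}}^2+\tfrac14\Abs{\ket{\perp}}^2=\tfrac12+\tfrac12\Re\bra{\psi}A\ket{\psi}$. Since $\bra{\psi}A\ket{\psi}=\tr\rbra{A\ketbra{\psi}{\psi}}$, summing over the spectral decomposition of $\rho$ yields $\tfrac12+\tfrac12\Re\tr\rbra{A\rho}$, as required.

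This is essentially a routine computation, so there is no real obstacle; the one point to get right is that the block-encoding flag register must \emph{not} be measured or post-selected. The off-$\ket{0}^{\otimes a}$ component $\ket{\perp}$ of $U\rbra{\ket{0}^{\otimes a}\ket{\psi}}$ is split evenly between the two control-qubit branches, so it contributes equally to the probabilities of outcomes $0$ and $1$ and is absorbed exactly by the unitarity identity $\Abs{A\ket{\psi}}^2+\Abs{\ket{\perp}}^2=1$; it is this cancellation that lets a single query to $U$ suffice with no amplitude rescaling. (A general $\rbra{\alpha,a,0}$-block-encoding would work identically, merely rescaling the output bias by $1/\alpha$, which is why the lemma fixes the normalization to $1$.)
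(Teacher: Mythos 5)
Your proof is correct and is precisely the standard Hadamard-test computation underlying the cited result: the paper does not reprove this lemma but imports it from \cite[Lemma 9]{GP22}, whose argument is the same one you give (controlled-$U$ on the $\ket{+}$ control with the flag register pinned to $\ket{0}^{\otimes a}$ and never post-selected, orthogonality of the off-flag component, unitarity to absorb $\Abs{\ket{\perp}}^2$, and linearity in $\rho$ to pass from pure to mixed inputs). The query and gate counts you state also match the lemma, since the controlled query counts as one query in the paper's model and your extra gates are $O\rbra{1} \subseteq O\rbra{n}$.
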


\subsubsection{Quantum singular value transformation}

In this part, we review the quantum singular value transformation (QSVT) proposed in~\cite{GSLW19}, an important quantum algorithm design toolkit.
For a Hermitian matrix $A$, consider its spectral decomposition as $A = \sum_i \lambda_i \ket{\phi_i}\bra{\phi_i}$. 
QSVT is able to implement the matrix polynomial function
$p(A) = \sum_i p(\lambda_i) \ket{\phi_i}\bra{\phi_i}$ 
for some polynomial $p$, given the block-encoding access 
of $A$. This is formally described in the following theorem.

\begin{lemma}[Quantum singular value transformation~{\cite[Theorem 31]{GSLW19}}]\label{lem:qsvt}
    Suppose $A$ is a Hermitian operator with its $\rbra{\alpha,a,\eps}$-block-encoding access $U$ given.
    Let $p \in \R\sbra{x}$ be a polynomial of degree $d$ such that $\abs{p\rbra{x}} \leq 1/2$ for $x \in \interval{-1}{1}$.
    Then, there is a quantum unitary $\widetilde{U} = \mathsf{EigenTrans}\rbra{U, p, \delta}$ being an $\rbra{1,a+2,4d\sqrt{\eps/\alpha}+\delta}$-block-encoding of $p\rbra{A/\alpha}$, which uses $O\rbra{d}$ queries to $U$ and $O\rbra{\rbra{a+1}d}$ one- and two-qubit quantum gates. 
    Moreover, the classical description of $\widetilde{U}$ can be computed on a classical computer in time $\poly\rbra{d,\log\rbra{1/\delta}}$.
\end{lemma}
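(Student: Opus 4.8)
This is \cite[Theorem~31]{GSLW19}; here is how I would reconstruct it. The plan is to reduce to the case of an \emph{exact} block-encoding and a polynomial of \emph{definite parity}, treat that core case by lifting quantum signal processing (QSP) to the singular-value decomposition of the block-encoded operator, then recombine the two parities by a linear combination of unitaries and finally propagate the block-encoding error through the whole construction.

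First I would split $p = p_{\mathrm{e}} + p_{\mathrm{o}}$ into its even and odd parts; since $\abs{p} \le 1/2$ on $\interval{-1}{1}$, the same bound holds for $p_{\mathrm{e}}$ and $p_{\mathrm{o}}$ there. It then suffices to build a $\rbra{1,a+1,\cdot}$-block-encoding of $p_{\mathrm{e}}^{\mathrm{SV}}\rbra{A/\alpha}$ and one of $p_{\mathrm{o}}^{\mathrm{SV}}\rbra{A/\alpha}$ and merge them with a one-qubit Hadamard-test-type circuit, producing a $\rbra{1,a+2,\cdot}$-block-encoding of $p^{\mathrm{SV}}\rbra{A/\alpha}$, which equals $p\rbra{A/\alpha}$ because $A$ is Hermitian. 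I would also immediately regard $U$ as an exact $\rbra{1,a,0}$-block-encoding of some Hermitian $\widetilde{A}$ with $\Abs{\widetilde{A}-A/\alpha}\le \eps/\alpha$, deferring this perturbation to the last step.

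For the core definite-parity case, the engine is QSP: for a real $q$ of degree $d$, definite parity, with $\abs{q}\le 1$ on $\interval{-1}{1}$, there are phases $\phi_0,\dots,\phi_d$ such that a product of single-qubit $z$-rotations by $2\phi_j$ interleaved with the signal rotation (the reflection with $\rbra{1,1}$-entry $x$ and $\rbra{1,2}$-entry $\sqrt{1-x^2}$) has $q\rbra{x}$ as its top-left entry; existence rests on completing $q$ to an $SU(2)$-valued function via a companion polynomial, which is where $\abs{q}\le 1$ enters. The angles are computed classically from the coefficients of $q$ in time $\poly\rbra{d,\log\rbra{1/\delta}}$ by a numerically stable procedure, which is the source of the additive $\delta$. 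To lift this to matrices, write the SVD $A/\alpha = \sum_i \varsigma_i \ket{w_i}\bra{v_i}$ and let $\Pi=\ketbra{0^a}{0^a}\otimes I$. The key point is that $U$ together with $\Pi$ block-diagonalizes $\calH$ into (at most) two-dimensional invariant subspaces $\spanspace\cbra{\ket{0^a}\ket{v_i},\, U^\dagger\ket{0^a}\ket{w_i}}$ (plus kernel directions on which everything is trivial), and on each one $U$ acts exactly as the scalar signal rotation at $x=\varsigma_i$. Hence the alternating-phase sequence
\[
    \widetilde{U} \;=\; e^{i\phi_0\rbra{2\Pi-I}}\,U\,e^{i\phi_1\rbra{2\Pi-I}}\,U^{\dagger}\,e^{i\phi_2\rbra{2\Pi-I}}\cdots
\]
($d$ alternating copies of $U$ and $U^{\dagger}$) acts on each subspace as the scalar QSP circuit at $\varsigma_i$, so it realizes $\sum_i q\rbra{\varsigma_i}\ket{w_i}\bra{v_i}$ (or $\sum_i q\rbra{\varsigma_i}\ket{v_i}\bra{v_i}$, according to parity) in the $\ketbra{0^a}{0^a}$ block; this costs $d$ queries to $U$, one projector-controlled phase per layer ($O\rbra{a}$ gates, using one extra qubit), hence $O\rbra{d}$ queries and $O\rbra{\rbra{a+1}d}$ gates overall. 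Finally, running this on the exact $\widetilde{A}$ and invoking a robustness estimate for degree-$d$ polynomial transformations — tracking how a perturbation of a Hermitian matrix propagates through the $d$ QSP layers — gives $\Abs{p\rbra{\widetilde{A}}-p\rbra{A/\alpha}}\le 4d\sqrt{\eps/\alpha}$, which together with the classical-precision error $\delta$ yields the stated $\rbra{1,a+2,4d\sqrt{\eps/\alpha}+\delta}$-block-encoding of $p\rbra{A/\alpha}$.

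The main obstacle is the lifting step: one must verify carefully that $U$ and $\Pi$ genuinely decompose the whole space into the advertised (at most) two-dimensional invariant subspaces together with kernel directions, and that on each such subspace the alternating-phase word reproduces the scalar QSP circuit \emph{exactly} — with the parity of $p$ controlling whether one ends up with $\sum_i p\rbra{\varsigma_i}\ket{w_i}\bra{v_i}$ or $\sum_i p\rbra{\varsigma_i}\ket{v_i}\bra{v_i}$, and with the kernel of $A$ and degenerate singular values handled correctly. The second delicate point is the robustness bound, where the square-root loss $\sqrt{\eps/\alpha}$ (rather than a linear dependence) is intrinsic and comes from how a perturbation of the singular vectors enters the off-diagonal entries of each signal rotation.
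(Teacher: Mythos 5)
This lemma is imported by the paper as a citation to \cite[Theorem 31]{GSLW19} with no proof given, and your reconstruction follows exactly the argument of that source: split $p$ into even and odd parts, realize each definite-parity piece by lifting quantum signal processing to the $2$-dimensional invariant subspaces determined by $U$ and $\Pi$ via alternating projector-controlled phase layers, recombine with one ancilla (LCU), and account for the imperfect block-encoding through the robustness bound $\Abs{p\rbra{\widetilde{A}}-p\rbra{A/\alpha}} \leq 4d\sqrt{\eps/\alpha}$ plus the $\delta$ from classically computing the phases. Your sketch is correct and matches the standard proof, including the role of the $\abs{p}\leq 1/2$ bound, the query count $O\rbra{d}$, the gate count $O\rbra{\rbra{a+1}d}$, and the intrinsic square-root loss in the perturbation step.
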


\subsubsection{Quantum samplizer}

To convert a quantum algorithm with query access to a quantum algorithm with sample access,
we will adopt the algorithmic tool \textit{quantum samplizer} \cite{WZ25,WZ24b}.
The quantum samplizer abstracts the methods used in \cite{GP22,WZ23} for estimating properties of quantum states.  
The key ingredient of the quantum samplizer is the density matrix exponentiation \cite{LMR14,KLL+17,GKP+24}. 
Here, for our purpose, we need a quantum multi-samplizer (for mixed states), generalizing the quantum multi-samplizer for pure states in \cite{WZ24a}.

We first define the quantum multi-samplizer as follows.

\begin{definition} [Quantum multi-samplizer]
    A $k$-samplizer, denoted as $\mathsf{Samplize}_*\ave{*}\sbra{*}$, is a converter from a quantum query algorithm to a quantum sample algorithm such that: for any precision $\delta > 0$, quantum query algorithm $\mathcal{A}^{U_1, U_2, \dots, U_k}$ with query access to the unitary oracles $U_1, U_2, \dots, U_k$, and $n$-qubit quantum states $\rho_1, \rho_2, \dots, \rho_k$, there are unitary operators $U_{\rho_1}, U_{\rho_2}, \dots, U_{\rho_k}$ that are $\rbra{1, m, 0}$-block-encodings of $\rho_1/2, \rho_2/2, \dots, \rho_k/2$ (for some $m \geq 1$), respectively, such that
    \[
    \Abs*{\mathsf{Samplize}_\delta\ave{\mathcal{A}^{U_1, U_2, \dots, U_k}}\sbra{\rho_1, \rho_2, \dots, \rho_k} - \mathcal{A}^{U_{\rho_1}, U_{\rho_2}, \dots, U_{\rho_k}}}_\diamond \leq \delta.
    \]
\end{definition}

Following similar techniques in \cite{WZ24a}, we have the following theorem for implementing a quantum multi-samplizer.

\begin{theorem} \label{thm:multi-samplizer}
    For any $k \geq 1$, there is a $k$-samplizer $\mathsf{Samplize}_*\ave{*}\sbra{*}$ such that for any quantum query algorithm $\mathcal{A}^{U_1, U_2, \dots, U_k}$ that uses $Q_j$ queries to $U_j$ for each $1 \leq j \leq k$ and any $n$-qubit quantum states $\rho_1, \rho_2, \dots, \rho_k$, $\mathsf{Samplize}_\delta\ave{\mathcal{A}^{U_1, U_2, \dots, U_k}}\sbra{\rho_1, \rho_2, \dots, \rho_k}$ uses
    \[
        O\rbra*{\frac{Q_j Q}{\delta} \log^2\rbra*{\frac{Q}{\delta}}}
    \]
    samples of $\rho_j$ for each $1 \leq j \leq k$, where $Q = Q_1 + Q_2 + \dots + Q_k$.
    Moreover, if $\mathcal{A}^{U_1, U_2, \dots, U_k}$ uses $T$ one- and two-qubit gates, then $\mathsf{Samplize}_\delta\ave{\mathcal{A}^{U_1, U_2, \dots, U_k}}\sbra{\rho_1, \rho_2, \dots, \rho_k}$ uses
    \[
        T+O\rbra*{\frac{Q^2n}{\delta}\log^2\rbra*{\frac{Q}{\delta}}}
    \]
    one- and two-qubit gates. 
\end{theorem}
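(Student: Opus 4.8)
The plan is to follow the construction of the quantum multi-samplizer for pure states from \cite{WZ24a}, the only substantive change being that its pure-state primitive is replaced by density matrix exponentiation \cite{LMR14,KLL+17,GKP+24}, which makes the same strategy work for arbitrary mixed states $\rho_j$. Write $\mathcal{A}^{U_1,\dots,U_k} = W_T V_T W_{T-1}\cdots W_1 V_1 W_0$, where each of the $Q = Q_1+\dots+Q_k$ oracle layers $V_i$ is a (controlled) query to $U_j$ or $U_j^\dagger$ for some $j$, and the oracle-independent unitaries $W_i$ use $T$ one- and two-qubit gates in total. First, for every $j$ I would commit to a fixed unitary $U_{\rho_j}$ that is a $(1,m,0)$-block-encoding of $\rho_j/2$, namely the ``noiseless'' (sample-free) unitary that the construction below is designed to mimic. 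The samplized algorithm then leaves every $W_i$ untouched and replaces every oracle layer $V_i$ that acts on $\rho_j$ by a quantum channel $\widetilde{V}_i$, built from fresh samples of $\rho_j$, that is $\eta$-close in diamond norm to the channel $X \mapsto V_i X V_i^\dagger$, for a per-layer precision $\eta$ to be fixed. A hybrid argument over the $Q$ layers, using subadditivity of the diamond distance under composition and its invariance under tensoring with the identity channel, yields
\[
    \Abs*{\mathsf{Samplize}_\delta\ave{\mathcal{A}^{U_1,\dots,U_k}}\sbra{\rho_1,\dots,\rho_k} - \mathcal{A}^{U_{\rho_1},\dots,U_{\rho_k}}}_\diamond \le Q\eta ,
\]
so it suffices to take $\eta = \delta/Q$.

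To build $\widetilde{V}_i$ for a call to $U_j$ (the case of $U_j^\dagger$ is symmetric, with $e^{i\rho_j}$ in place of $e^{-i\rho_j}$), the plan is a two-stage construction. First, density matrix exponentiation \cite{LMR14,KLL+17,GKP+24} implements a quantum channel that is $\xi$-close in diamond norm to the (controlled) unitary channel $X\mapsto e^{-i\rho_j}Xe^{i\rho_j}$ using $O(1/\xi)$ samples of $\rho_j$ and $O(n/\xi)$ one- and two-qubit gates, each elementary step being a controlled SWAP between two $n$-qubit registers, one of which is loaded with a fresh sample. Second, the standard reduction from Hamiltonian simulation to block-encoding via quantum signal processing \cite{GSLW19} converts $O(\log(1/\eta))$ (controlled) uses of $e^{-i\rho_j}$ into a quantum channel that is $\eta$-close in diamond norm to conjugation by a fixed $(1,m,0)$-block-encoding of $\rho_j/2$; I would take this unitary to be $U_{\rho_j}$. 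Choosing $\xi = \Theta(\eta/\log(1/\eta))$, so that the $O(\log(1/\eta))$ Hamiltonian-simulation calls contribute error $O(\eta)$ in aggregate, the layer $\widetilde{V}_i$ then uses $O\rbra*{\eta^{-1}\log^2(1/\eta)}$ samples of $\rho_j$ and $O\rbra*{n\,\eta^{-1}\log^2(1/\eta)}$ additional one- and two-qubit gates.

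Taking $\eta = \delta/Q$ and summing over the $Q_j$ layers that query $U_j$ then gives the claimed $O\rbra*{\tfrac{Q_j Q}{\delta}\log^2(Q/\delta)}$ samples of $\rho_j$, while the gate count is the $T$ gates of the untouched $W_i$'s plus $\sum_j Q_j\cdot O\rbra*{\tfrac{Q n}{\delta}\log^2(Q/\delta)} = O\rbra*{\tfrac{Q^2 n}{\delta}\log^2(Q/\delta)}$ for the simulated layers. I expect the main difficulty to be the error bookkeeping rather than any genuinely new idea, with three points requiring care: (i) that controlling the density-matrix-exponentiation circuit (replacing its SWAPs by doubly controlled SWAPs) preserves the stated diamond-norm error, which is handled exactly as in \cite{GP22,WZ23,WZ24a}; (ii) that the block-encoding emerging from the Hamiltonian-simulation step can be taken to be an \emph{exact} $(1,m,0)$-block-encoding of $\rho_j/2$, as the definition of a samplizer requires, the residual polynomial-approximation error being folded into the budget $\eta$ by the same device used in \cite{WZ24a}; and (iii) that the per-layer errors accumulate only additively through the signal-processing step and the final hybrid argument, so that nothing worse than the factor $\log^2(Q/\delta)$ appears. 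Granting these, the counting above establishes the theorem.
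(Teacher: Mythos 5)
Your proposal is correct and follows essentially the same route as the paper: replace each of the $Q$ oracle layers by a sample-based channel that is $(\delta/Q)$-close in diamond norm to conjugation by a fixed exact block-encoding of $\rho_j/2$, apply a hybrid argument, and count samples and gates per layer. The only difference is that the paper black-boxes the per-layer primitive by citing \cite[Lemma 2.21]{WZ25} (which already packages density matrix exponentiation plus the conversion to an exact block-encoding with the $O\rbra{\frac{1}{\eta}\log^2\frac{1}{\eta}}$ cost), whereas you re-derive its internals; your points (i)--(iii) are exactly what that cited lemma absorbs.
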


For completeness, the proof of the theorem is provided in \cref{sec:multi-samplizer}.

\subsection{Polynomial approximation}

Two efficient polynomial approximations are used in this paper. The first result is to approximate negative power functions.

\begin{lemma}[Polynomial approximations of negative power functions~{\cite[Corollary 67 in the full version]{GSLW19}}]\label{lem:poly_approx_neg}
    Let $\delta, \varepsilon \in \rbra{0,1/2}$ and $c > 0$. For the function $f\rbra{x} = \frac{\delta^c}{2}x^{-c}$, there exists an odd polynomial $p_{c, \varepsilon, \delta, -} \in \R\sbra{x}$ such that
    \begin{itemize}
        \item $\abs{p_{c, \varepsilon, \delta, -}\rbra{x}} \leq 1$ for $x \in \interval{-1}{1}$, and
        \item $\sbra{p_{c, \varepsilon, \delta, -}\rbra{x}-f\rbra{x}} \leq \varepsilon$, for $x \in \interval{-1}{-\delta} \cup \interval{\delta}{1}$.
    \end{itemize}
    Moreover, the degree of the polynomial $p_{c, \varepsilon, \delta, -}\rbra{x}$ is $O\rbra{\frac{\max\cbra{1,c}}{\delta}\log(\frac{1}{\varepsilon})}$, and the coefficients of the polynomial $p_{c, \varepsilon, \delta, -}\rbra{x}$ can be computed in classical polynomial time.
\end{lemma}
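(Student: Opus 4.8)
The plan begins by noting that the claim is quoted, with a precise pointer, from \cite[Corollary 67 in the full version]{GSLW19}, so strictly speaking a one-line citation suffices; what follows is how I would reconstruct the polynomial if I had to. First I would reduce to a one-sided problem: it is enough to exhibit a polynomial $q$ with $\abs{q(x)}\le 1$ on $\interval{-1}{1}$ and $\abs{q(x)-\tfrac{\delta^{c}}{2}x^{-c}}\le\varepsilon$ on $\interval{\delta}{1}$, and then replace it by its odd part $\tfrac12\rbra{q(x)-q(-x)}$; the odd part still $\varepsilon$-approximates $f$ on $\interval{\delta}{1}$, automatically $\varepsilon$-approximates $-\tfrac{\delta^{c}}{2}\abs{x}^{-c}$ on $\interval{-1}{-\delta}$, and cannot have larger sup-norm on $\interval{-1}{1}$ than $q$, so all three requested properties are inherited. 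One must only take care that the \emph{one-sided} approximant $q$ is already bounded by $1$ everywhere on $\interval{-1}{1}$, not merely on $\interval{\delta}{1}$; this is the nontrivial ingredient, discussed below.

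For the one-sided approximation on $\interval{\delta}{1}$ I see two natural routes. The first uses the Hankel-type representation $x^{-c}=\tfrac{1}{\Gamma(c)}\int_0^{\infty}t^{c-1}\mathrm{e}^{-tx}\,\mathrm{d}t$ (valid for $x>0$): truncate the integral at $T=\Theta\rbra{\tfrac{\max\cbra{1,c}}{\delta}\log\rbra{1/\varepsilon}}$, where the discarded tail $\tfrac{1}{\Gamma(c)}\int_{T}^{\infty}t^{c-1}\mathrm{e}^{-t\delta}\,\mathrm{d}t$ is at most $\varepsilon\delta^{-c}$ uniformly on $\interval{\delta}{1}$ by the exponential decay of the upper incomplete Gamma function; then replace each $\mathrm{e}^{-tx}$, $t\in\interval{0}{T}$, by its truncated Jacobi--Anger/Chebyshev expansion of degree $O\rbra{t+\log\rbra{1/\varepsilon'}}$ for a suitable $\varepsilon'\asymp\varepsilon\delta^{c}$, and integrate the resulting family against $\tfrac{\delta^{c}}{2\Gamma(c)}t^{c-1}\,\mathrm{d}t$, obtaining a single polynomial of degree $O(T)$. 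The second route is the composition $\tfrac{\delta^{c}}{2}x^{-c}=\tfrac12\rbra{\delta/x}^{c}$: approximate $\delta/x$ on $\interval{\delta}{1}$ by (a rescaling of) the $c=1$ instance of this very lemma, then apply the positive-power polynomial of \cref{lem:poly_approx_pos} to the outcome, which only needs to be accurate on $\interval{0}{1}$ since positive powers stay bounded near $0$; tracking parities (an even polynomial in $g(x)^{2}$, times the odd $g(x)$) keeps the final object odd, and the degrees compose to the stated bound.

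The hard part will be exactly the global boundedness. The approximants of $x^{-c}$ inherit the singularity at $0$; in particular the integral-representation polynomial, though $\varepsilon$-close to $f$ on $\interval{\delta}{1}$, is \emph{exponentially large} on $\interval{-1}{-\delta}$, since there $\int_0^{T}t^{c-1}\mathrm{e}^{-tx}\,\mathrm{d}t$ behaves like $\mathrm{e}^{T\abs{x}}$. Repairing this by multiplying with an even polynomial approximating a smooth rectangle/cutoff (built from the standard error-function constructions of \cite{GSLW19,Gil19}) must be done carefully, because suppressing a quantity of size $\mathrm{e}^{\Omega(1/\delta)}$ across a transition window of width $\Theta(\delta)$ naively costs degree $\Omega(1/\delta^{2})$, which would break the claimed bound; circumventing this is where the real work lies, and it is also why every estimate must be made uniform in $c$ across the threshold $c=1$, where $\Gamma(c)^{-1}$ and the relevant exponents change behaviour — that uniformity is precisely what produces the $\max\cbra{1,c}$ in the degree and is the piece I would expect to spend the most effort getting right. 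Finally, since each ingredient is an explicitly computable truncated Chebyshev series, the classical polynomial-time bound on the coefficients is immediate.
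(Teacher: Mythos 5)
The paper does not prove this statement at all: it is imported verbatim, with the pointer you quote, from \cite[Corollary 67 in the full version]{GSLW19}, so your leading observation that a citation suffices is exactly what the paper does, and on that level your proposal matches the paper's treatment. Your supplementary reconstruction is a reasonable outline of how such approximations are built, and you are commendably honest that its crux --- keeping the approximant bounded by $1$ on all of $\interval{-1}{1}$ without paying degree $\Omega\rbra{1/\delta^{2}}$ to suppress an exponentially large polynomial near $0$ --- is left unresolved. For the record, the actual argument in \cite{GSLW19} does not take your ``build first, cut off later'' route: their Corollary 67 is an instance of their Corollary 66 (polynomial approximation from a local Taylor series), where the polynomial is assembled so that the $\ell_1$ norm of its Chebyshev/Fourier coefficients is controlled from the start, which yields the global bound $\abs{p\rbra{x}}\leq 1$ on $\interval{-1}{1}$ directly rather than by multiplying an already-exponentially-large object by an approximate rectangle function; the $\max\cbra{1,c}$ in the degree then comes from the growth of the Taylor coefficients of $x^{-c}$ around a point of $\interval{\delta}{1}$. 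Your odd-part symmetrization remark is fine and consistent with the stated oddness of $p_{c,\varepsilon,\delta,-}$. So: as a citation your proposal coincides with the paper; as a self-contained proof sketch it has the one genuine gap you yourself flagged, and the known resolution is the coefficient-norm bookkeeping of \cite{GSLW19}, not a cutoff-multiplication trick.
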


The following theorem describes how to approximate positive power functions by polynomials.

\begin{lemma}[Polynomial approximations of positive constant power functions~{\cite[Lemma 3.1]{LW25a}}]\label{lem:poly_approx_pos}
    Let $\eps \in \rbra{0,1/2}$. Let $r$ be a fixed positive integer and $\alpha$ be a fixed real number in $\rbra{-1,1}$. 
    For the function $f\rbra{x} \coloneqq \frac{1}{2}x^{r-1}\abs{x}^{1+\alpha}$, there exists a polynomial $p_{r, \alpha, \varepsilon,+}\rbra{x} \in \R\sbra{x}$ such that
    \begin{itemize}
        \item $\abs{p_{r, \alpha, \varepsilon,+}\rbra{x}} \leq 1$ for $x \in \interval{-1}{1}$ and
        \item $\abs{p_{r, \alpha, \varepsilon,+}\rbra{x}-f\rbra{x}} \leq \eps$ for $x \in \interval{-1}{1}$.
    \end{itemize}
    Moreover, the degree of the polynomial $p_{r, \alpha, \varepsilon,+}\rbra{x}$ is $O\rbra{(\frac{1}{\eps})^{\frac{1}{r+\alpha}}}$, and the coefficients of the polynomial $p_{r, \alpha, \varepsilon,+}\rbra{x}$ can be computed in classical polynomial time.
\end{lemma}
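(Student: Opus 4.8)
The plan is to take $p_{r,\alpha,\varepsilon,+}$ to be a constructive near-best uniform polynomial approximant of $f$ on $[-1,1]$ of the appropriate degree, exploiting that $f$ has Hölder regularity of order exactly $r+\alpha$ at its single interior singularity; the bound $\Abs{p_{r,\alpha,\varepsilon,+}}\le 1$ then comes for free.

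First, I would rewrite the target as $f(x)=\tfrac12 x^{r-1}\abs{x}^{1+\alpha}=\tfrac12\sgn(x)^{\,r-1}\abs{x}^{\,s}$ with $s:=r+\alpha>0$, and record two easy observations: (i) on $[-1,1]$ one has $\abs{f(x)}\le\tfrac12\abs{x}^{s}\le\tfrac12$, so every polynomial $p$ with $\max_{x\in[-1,1]}\abs{p(x)-f(x)}\le\varepsilon<\tfrac12$ automatically satisfies $\abs{p(x)}\le\tfrac12+\varepsilon<1$ on $[-1,1]$; and (ii) $f$ has the parity of $x^{r-1}$ (even when $r$ is odd, odd when $r$ is even). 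Hence the whole content is to construct, with efficiently computable coefficients, a polynomial of degree $O\bigl((1/\varepsilon)^{1/s}\bigr)$ that is $\varepsilon$-close to $f$ on $[-1,1]$.

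Second, the analytic input: $f$ is real-analytic on $[-1,1]\setminus\{0\}$ and behaves like $\abs{x}^s$ near $0$, so it lies in the Hölder/Zygmund smoothness class of order $s$ (concretely, $f\in C^{\ceil{s}-1}[-1,1]$ with $f^{(\ceil{s}-1)}$ Hölder of exponent $s-(\ceil{s}-1)\in(0,1]$), and equivalently its Chebyshev coefficients $c_j$ in $f=\sum_{j\ge0}c_jT_j$ decay like $\abs{c_j}=O(j^{-1-s})$. Applying a constructive Jackson-type operator of order $d$ — e.g.\ convolution of $f$ against a Jackson kernel, or a de la Vall\'ee--Poussin mean of the Chebyshev expansion — yields a polynomial $P_d$ of degree $O(d)$ with $\max_{x\in[-1,1]}\abs{f(x)-P_d(x)}=O(d^{-s})$, given explicitly as a finite linear combination of the $c_j$ with $j=O(d)$ and inheriting the parity of $f$. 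Taking $d=\Theta\bigl((1/\varepsilon)^{1/s}\bigr)$ makes the error at most $\varepsilon$ at degree $O\bigl((1/\varepsilon)^{1/(r+\alpha)}\bigr)$, matching the statement, and with observation (i) this already forces $\abs{P_d}\le1$. For computability, the coefficients $c_j=\tfrac{2}{\pi}\int_0^\pi f(\cos\theta)\cos(j\theta)\,\mathrm{d}\theta$ (halved at $j=0$) depend only on the fixed constants $r,\alpha$; for this $f$ they admit closed forms in terms of Beta/Gamma functions, or can be approximated by quadrature to accuracy $\varepsilon/d$ in time $\poly(d,\log(1/\varepsilon))$ since the integrand is bounded and piecewise real-analytic. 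Because $r,\alpha$ are fixed, $d=\Theta((1/\varepsilon)^{1/(r+\alpha)})=\poly(1/\varepsilon)$, so all $O(d)$ needed coefficients, hence $p_{r,\alpha,\varepsilon,+}$, are produced in classical polynomial time; running the construction with $\varepsilon/2$ in place of $\varepsilon$ absorbs the round-off while preserving both asserted bounds.

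The step I expect to be the main obstacle is securing the sharp exponent $1/(r+\alpha)$ rather than the weaker $1/(1+\alpha)$ that comes from the naive factorization $f(x)=\tfrac12 x^{r-1}\cdot(x^2)^{(1+\alpha)/2}$ (approximating only the power factor): one must genuinely use that $f$ vanishes to order $r-1$ at the singularity, i.e.\ invoke the full order-$(r+\alpha)$ smoothness of $f$ in the Jackson estimate (equivalently the $O(j^{-1-(r+\alpha)})$ decay of its Chebyshev coefficients). Note too that the even/odd substitution $u=x^2$ attains this exponent only for odd $r$ and loses a unit for even $r$, so the Jackson/Chebyshev route is the one uniform in $r$. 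The remaining pieces — the sup-norm bound and the bookkeeping of constructivity — are then routine.
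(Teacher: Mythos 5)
This lemma is not proved in the paper at all: it is imported verbatim from~\cite[Lemma 3.1]{LW25a}, so there is no in-paper argument to compare against. Judged on its own, your Jackson/Chebyshev route is a sound self-contained derivation of exactly the stated bounds: $f(x)=\tfrac12\sgn(x)^{r-1}\abs{x}^{s}$ with $s=r+\alpha$ lies in $C^{\lceil s\rceil-1}$ with top derivative H\"older of exponent $s-(\lceil s\rceil-1)$ (including the integer case $\alpha=0$, where it is Lipschitz), a constructive Jackson-type or de la Vall\'ee--Poussin approximant of degree $d$ then has uniform error $O(d^{-s})$, the choice $d=\Theta\rbra{\eps^{-1/(r+\alpha)}}$ gives the degree bound, and since $\abs{f}\leq 1/2$ and $\eps<1/2$ the normalization $\abs{p}\leq 1$ is automatic. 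Your identification of the main point --- that one must exploit the full order-$(r+\alpha)$ vanishing at the singularity rather than approximate only the $\abs{x}^{1+\alpha}$ factor --- is exactly what makes the exponent $1/(r+\alpha)$ rather than $1/(1+\alpha)$, and the computability argument (closed-form or quadrature evaluation of $O(d)$ Chebyshev coefficients to accuracy $\eps/O(d)$) is adequate for the "classical polynomial time" claim, since $d=\poly(1/\eps)$ for fixed $r,\alpha$.

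One small inaccuracy to fix: H\"older regularity of order $s$ is \emph{not} equivalent to Chebyshev-coefficient decay $O(j^{-1-s})$; in general it only yields $O(j^{-s})$. The stronger $O(j^{-1-s})$ decay does hold for this particular $f$ (it is analytic away from a single algebraic singularity, and the coefficients have explicit Gamma-function expressions), but you should either prove that directly or simply drop the coefficient-decay remark and rely on the Jackson estimate, which by itself already gives the required $O(d^{-s})$ uniform error. With that wording repaired, the argument stands as a correct alternative to citing \cite{LW25a}.
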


\subsection{Closeness testing of quantum states}

We first define the problem of testing the states with respect to the trace distance.

\begin{definition}[Quantum state distinguishability problem, $\QSD$, adapted from~\cite{Wat02,Wat09}]\label{def:qsd}
     Let $Q_\rho$ and $Q_\sigma$ be two quantum circuits with $m\rbra{n}$-qubit input and $n$-qubit output, where $m\rbra{n}$ is a polynomial in $n$. Let $\rho$ and $\sigma$ be $n$-qubit quantum states obtained by performing $Q_\rho$ and $Q_\sigma$ on input state $\ket{0}^{\otimes m\rbra{n}}$. Let $a\rbra{n}$ and $b\rbra{n}$ be efficiently computable functions such that $0 \leq a\rbra{n} < b\rbra{n} \leq 1$. The problem $\QSD\sbra{a,b}$ is to decide whether:
    \begin{itemize}
        \item \textup{(Yes)} $\dtr{\rho}{\sigma}{} \geq a\rbra{n}$, or
        \item \textup{(No)} $\dtr{\rho}{\sigma}{} \leq b\rbra{n}$.
    \end{itemize}
\end{definition}

Furthermore, we define the restricted version where $\rho$ and $\sigma$ are pure states.

\begin{definition}[Pure quantum state distinguishability problem, $\PureQSD$]
    Let $Q_\phi$ and $Q_\psi$ be two quantum circuits with $m\rbra{n}$-qubit input and $n$-qubit output, where $m\rbra{n}$ is a polynomial in $n$. 
    Let $\ket{\phi}$ and $\ket{\psi}$ be $n$-qubit pure quantum states obtained by performing $Q_\phi$ and $Q_\sigma$ on input state $\ket{0}^{\otimes m\rbra{n}}$. 
    Let $a\rbra{n}$ and $b\rbra{n}$ be efficiently computable functions such that $0 \leq a\rbra{n} < b\rbra{n} \leq 1$. The problem $\PureQSD\sbra{a,b}$ is to decide whether:
    \begin{itemize}
        \item \textup{(Yes)} $\dtr{\ketbra{\phi}{\phi}}{\ketbra{\psi}{\psi}}{} \geq a\rbra{n}$, or
        \item \textup{(No)} $\dtr{\ketbra{\phi}{\phi}}{\ketbra{\psi}{\psi}}{} \leq b\rbra{n}$.
    \end{itemize}    
\end{definition}

The following lemma shows the regime of $a\rbra{n}$ and $b\rbra{n}$ in which $\QSD$ is $\QSZK$-hard. It will be used to prove the $\QSZK$-hardness of estimating the quantum Tsallis relative entropy and the quantum Hellinger distance.

\begin{lemma}[$\QSZK$-containment and hardness of $\QSD\sbra{a,b}$,~{\cite{Wat02,Wat09,BDRV19}}]\label{lem:qszk-hard}
    Let $a\rbra{n}$ and $b\rbra{n}$ be efficiently computable functions such that $0 \leq b\rbra{n} < a\rbra{n} \leq 1$.
    \begin{itemize}
        \item $\QSD\sbra{a,b}$ is in $\QSZK$, when $a\rbra{n}^2-b\rbra{n} \geq 1/O\rbra{\log \rbra{n}}$.
        \item For any constant $\tau \in \rbra{0,1/2}$, $\QSD\sbra{a,b}$ is $\QSZK$-hard, when $a\rbra{n} \leq 1-2^{-n^\tau}$ and $b\rbra{n} \geq 2^{-n^\tau}$.
    \end{itemize}
\end{lemma}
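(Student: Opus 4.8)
\emph{Proof proposal.}\quad Both items follow by combining two known facts: (i) Watrous's theorem \cite{Wat02,Wat09} that $\QSD\sbra{2/3,1/3}$ — equivalently, $\QSD$ with any fixed constant thresholds $a>b$, or with thresholds $1-2^{-n}$ and $2^{-n}$ — is $\QSZK$-complete; and (ii) a \emph{quantum polarization lemma}, the trace-distance analogue of the Sahai--Vadhan polarization lemma, whose sharpest quantitative form I would take from \cite{BDRV19}. So in both directions the workhorse is polarization, and the only question is which parameter regime it can reach while staying polynomial-size.

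For the containment, given an instance of $\QSD\sbra{a,b}$ with $a(n)^2-b(n)\ge 1/O(\log n)$ — i.e.\ circuits $Q_\rho,Q_\sigma$ — the plan is to produce in polynomial time new circuits $Q_{\rho'},Q_{\sigma'}$ with
\[
\dtr{\rho}{\sigma}{}\ge a \ \Longrightarrow\ \dtr{\rho'}{\sigma'}{}\ge 1-2^{-n}, \qquad \dtr{\rho}{\sigma}{}\le b \ \Longrightarrow\ \dtr{\rho'}{\sigma'}{}\le 2^{-n}.
\]
The construction alternates two operations at the level of state-preparation circuits: a \emph{direct-product} step $(\rho,\sigma)\mapsto(\rho^{\otimes\ell},\sigma^{\otimes\ell})$, which drives the trace distance toward $1$ whenever it is bounded away from $0$ — analyzed through $\Fid{\rho^{\otimes\ell}}{\sigma^{\otimes\ell}}=\Fid{\rho}{\sigma}^{\ell}$ together with the Fuchs--van de Graaf inequalities $1-\Fid{\rho}{\sigma}\le\dtr{\rho}{\sigma}{}\le\sqrt{1-\Fid{\rho}{\sigma}^2}$ — and a \emph{parity} step that runs $\ell$ independent copies controlled by coin registers whose XOR is fixed, sending $\dtr{\rho}{\sigma}{}\mapsto\dtr{\rho}{\sigma}{}^{\ell}$ and thus driving the trace distance toward $0$. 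The hypothesis $a^2-b\ge 1/O(\log n)$ is exactly the quantitative condition that keeps the number of alternations, hence the size of $Q_{\rho'},Q_{\sigma'}$, polynomial (for a constant gap the coarse analysis of \cite{Wat02} suffices; the near-borderline regime is where the bookkeeping of \cite{BDRV19} is needed). Finally, $\QSD\sbra{1-2^{-n},2^{-n}}$ is in $\QSZK$: the verifier picks a secret bit $c$, prepares $\rho$ if $c=0$ and $\sigma$ if $c=1$, sends the state to the prover, and accepts iff the prover returns $c$; completeness and soundness follow from $\dtr{\rho}{\sigma}{}$ being near $1$ or near $0$, the simulator just outputs $(c,c)$, and honest-verifier $\QSZK$ equals $\QSZK$ \cite{Wat02}.

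For the hardness, let $L$ be an arbitrary $\QSZK$ promise problem; by (i) it Karp-reduces to $\QSD\sbra{2/3,1/3}$, so it suffices to reduce $\QSD\sbra{2/3,1/3}$ to $\QSD\sbra{a,b}$ for every pair with $a\le 1-2^{-n^{\tau}}$ and $b\ge 2^{-n^{\tau}}$. I would apply the same polarization construction to amplify $\QSD\sbra{2/3,1/3}$ to $\QSD\sbra{1-2^{-q(n)},2^{-q(n)}}$ for a polynomial $q$ with $q(n)\ge n^{\tau}$; since then $a\le 1-2^{-n^{\tau}}\le 1-2^{-q(n)}$ and $b\ge 2^{-n^{\tau}}\ge 2^{-q(n)}$, the identity map sends this instance into $\QSD\sbra{a,b}$. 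Here $\tau<1/2$ is precisely the range of target error exponent $n^{\tau}$ for which the polarizing reduction — whose size grows with the target exponent — remains polynomial in $n$.

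The main obstacle is the quantum polarization lemma itself. Unlike the classical XOR and direct-product lemmas for statistical difference, the quantum versions cannot be phrased purely in terms of the mixed states: the parity construction must be realized on the purifying circuits (setting one coin register to the XOR of the others, rather than post-selecting), and the distance-amplifying step has to be routed through fidelity because $\dtr{\rho^{\otimes\ell}}{\sigma^{\otimes\ell}}{}$ is not directly tractable. One then has to track \emph{both} thresholds simultaneously through the alternating composition — each operation improves one side while worsening the other — and verify that over the full hypothesis range ($a^2-b$ merely $1/O(\log n)$ for containment; target error $2^{-n^{\tau}}$ with $\tau<1/2$ for hardness) the constructed circuits stay of polynomial size. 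That bookkeeping is where essentially all of the technical content of \cite{Wat02,Wat09,BDRV19} lies.
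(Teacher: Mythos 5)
The paper does not actually prove this lemma: it is imported as a known result, with the proof delegated entirely to the cited works \cite{Wat02,Wat09,BDRV19}, so there is no in-paper argument to compare yours against step by step. That said, your sketch is a faithful reconstruction of how those works establish it: Watrous's $\QSZK$-completeness of $\QSD$ at constant thresholds, a quantum polarization lemma combining the parity step (sending the trace distance $d$ to $d^{\ell}$, realized on the state-preparation circuits) with the direct-product step analyzed through $\Fid{\rho^{\otimes\ell}}{\sigma^{\otimes\ell}}=\Fid{\rho}{\sigma}^{\ell}$ and Fuchs--van de Graaf, the ``guess the verifier's secret bit'' honest-verifier protocol together with $\mathrm{HVQSZK}=\QSZK$ for containment, and polarize-then-embed for hardness; the $a^2-b\geq 1/O(\log n)$ regime is indeed the refinement of \cite{BDRV19} (adapted to the quantum setting), rather than Watrous's original constant-gap analysis.

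One correction to your closing claim about the hardness side: the restriction $\tau<1/2$ is not the point at which the polarizing reduction stops being polynomial-size --- the reduction is polynomial for any constant target exponent. It arises because $a(n)$ and $b(n)$ are measured against the output length $n$ of the circuits that the reduction itself produces: polarizing to error $2^{-k}$ inflates the output length to roughly $N=\Theta\rbra{k^2\cdot\poly\rbra{n}}$ (about $k$ parity copies applied to states that already consist of about $k$ copies), so the achieved error can only be rewritten as $2^{-N^{\tau}}$ in terms of the new length $N$ when $\tau<1/2$. This self-referential bookkeeping, not circuit size, is what pins down the exponent range in the lemma.
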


When the given states are pure, the problem $\PureQSD$ is $\BQP$-hard \cite{RASW23,WZ23}. 
We recall the version in \cite{LW25a}. 

\begin{lemma}[$\BQP$-hardness of $\PureQSD\sbra{a,b}$,~{\cite[Lemma 2.17]{LW25a}}]\label{lem:BQP-hard}
    Let $a\rbra{n}$ and $b\rbra{n}$ be efficiently computable functions such that $0 \leq b\rbra{n} < a\rbra{n} \leq 1$ and $a\rbra{n}-b\rbra{n} \geq 1/\poly\rbra{n}$.
    Then, $\PureQSD\sbra*{a, b}$ is $\BQP$-hard when $a\rbra{n} \leq 1-2^{-n-1}$ and $b\rbra{n} \geq 2^{-n-1}$.
\end{lemma}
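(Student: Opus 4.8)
The plan is to exhibit a deterministic polynomial-time many-one reduction to $\PureQSD\sbra{a,b}$ from the canonical $\BQP$-complete promise problem: given a quantum circuit $V$ on $N$ qubits with a designated output qubit $q_1$, decide whether the probability $p$ of observing $\ket{1}$ on $q_1$ in $V\ket{0}^{\otimes N}$ is $\ge 2/3$ or $\le 1/3$. The first step is to amplify this promise to $p\ge 1-\delta$ or $p\le\delta$ for an exponentially small $\delta$. Here is the delicate point, which I expect to be the main obstacle: the amplification must cost only an \emph{additive} $O(1)$ qubits, since plain majority voting over $\Theta(\log(1/\delta))$ independent runs of $V$ multiplies the qubit count by $\Theta(\log(1/\delta))$, so that the output size $n$ of the instance I produce grows with $\delta$ and the requirement on $\delta$ chases its own tail. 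I would instead use quantum amplitude amplification --- realized by applying the quantum singular value transformation (\Cref{lem:qsvt}) with a polynomial of degree $O(\log(1/\delta))$ sharply approximating a step function to a block-encoding built from $V$ --- which reuses the work register and yields a circuit $V'$ on $N+O(1)$ qubits, still with a designated output qubit, whose acceptance probability $p'$ satisfies $p'\ge 1-\delta$ or $p'\le\delta$. Once $n \coloneqq (N+O(1))+1$ is thereby fixed, I take $\delta\coloneqq 2^{-2n-3}$, and the polynomial needed has degree $O(n)=O(N)$ with classically polynomial-time computable coefficients.

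Next I would build the two pure states from $V'$. Write $V'\ket{0}^{\otimes N'} = \sqrt{p'}\,\ket{1}_{q_1}\ket{\alpha}+\sqrt{1-p'}\,\ket{0}_{q_1}\ket{\beta}$, where $N'=N+O(1)$ is the qubit count of $V'$ and $\ket\alpha,\ket\beta$ are (unknown, and not necessarily orthogonal) unit vectors on the remaining qubits. Adjoin one fresh ancilla $a$ in $\ket{0}$ and define
\[
    \ket{\psi}\coloneqq\bigl(V'\ket{0}^{\otimes N'}\bigr)\otimes\ket{0}_a,\qquad \ket{\phi}\coloneqq\mathrm{CNOT}_{q_1\to a}\,\ket{\psi},
\]
so $Q_\psi$ runs $V'$ (acting trivially on $a$) and $Q_\phi$ runs $V'$ followed by one CNOT, and both output $n$-qubit pure states via polynomial-size circuits. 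The crux is that, upon expanding $\la\phi|\psi\ra$ into its four terms, every cross term contains a mismatched computational-basis inner product --- on the $q_1$ register or on the ancilla --- and hence vanishes; in particular the two cross terms carrying the uncontrolled overlap $\la\alpha|\beta\ra$ disappear, leaving only $\sqrt{1-p'}\cdot\sqrt{1-p'}\cdot\la\beta|\beta\ra$. Hence $\la\phi|\psi\ra = 1-p'$, which depends only on the acceptance probability $p'$ and not on $\ket\alpha,\ket\beta$.

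Finally I would pass to the trace distance. Since $\dtr{\ketbra{\phi}{\phi}}{\ketbra{\psi}{\psi}}{} = \sqrt{1-\abs{\la\phi|\psi\ra}^2} = \sqrt{p'(2-p')}$, the accepting case $p'\ge 1-\delta$ gives $\dtr{\ketbra{\phi}{\phi}}{\ketbra{\psi}{\psi}}{}\ge\sqrt{1-\delta^2}$ and the rejecting case $p'\le\delta$ gives $\dtr{\ketbra{\phi}{\phi}}{\ketbra{\psi}{\psi}}{}\le\sqrt{2\delta}$. A short calculation using $a\rbra{n}\le 1-2^{-n-1}$ shows $1-a\rbra{n}^2\ge 2^{-n-1}\ge\delta^2$, so $\sqrt{1-\delta^2}\ge a\rbra{n}$; and $b\rbra{n}\ge 2^{-n-1}$ with $\delta\le 2^{-2n-3}$ gives $\sqrt{2\delta}\le 2^{-n-1}\le b\rbra{n}$. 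Hence the produced pair $\rbra{Q_\phi,Q_\psi}$ always respects the promise of $\PureQSD\sbra{a,b}$ and is a yes-instance exactly when $V$ accepts; as the reduction is computable in deterministic polynomial time and $V$ ranged over arbitrary $\BQP$ computations, $\PureQSD\sbra{a,b}$ is $\BQP$-hard. Besides the amplification subtlety, the only other thing to pin down is the gadget --- forcing $\la\phi|\psi\ra$ to be a clean function of $p'$ alone --- which is exactly what the CNOT-onto-a-fresh-ancilla trick accomplishes.
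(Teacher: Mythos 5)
Your proposal is correct, but note that the paper itself does not prove this lemma at all: it is imported by citation from [LW25a, Lemma 2.17], so there is no in-paper argument to compare against. Your construction is essentially the standard one underlying that cited result: encode the acceptance probability of a strongly amplified $\BQP$ circuit into the overlap of two efficiently preparable pure states. The two substantive points check out. First, the CNOT-onto-a-fresh-ancilla gadget does give $\braket{\phi}{\psi}=1-p'$ exactly, since every cross term carries a mismatched basis inner product on either $q_1$ or the ancilla, so the uncontrolled overlap $\braket{\alpha}{\beta}$ never appears; the ensuing trace-distance bounds $\sqrt{1-\delta^2}\geq a\rbra{n}$ and $\sqrt{2\delta}\leq b\rbra{n}$ with $\delta=2^{-2n-3}$ are verified correctly (for $n\geq 1$). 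Second, you are right that the only delicate issue is width-preserving amplification, and your resolution is sound: with a constant promise gap ($p\geq 2/3$ vs.\ $p\leq 1/3$), singular-value/eigenvalue discrimination via QSVT with a sign/step-function polynomial of degree $O\rbra{\log\rbra{1/\delta}}=O\rbra{n}$ uses only $O\rbra{1}$ ancillas, and since the qubit count of $V'$ does not depend on $\delta$ (only its depth does), fixing $n$ first and then choosing $\delta$ involves no circularity. If you were to write this out in full, the remaining routine details are: exhibiting the block-encoding of $p\,\ketbra{0^{N}}{0^{N}}$ (or the projected unitary encoding with singular value $\sqrt{p}$) from $V$, the parity/boundedness requirements on the step-function approximant together with classical poly-time computability of its coefficients and the QSVT circuit description, and converting the ``all ancillas and flags in the right state'' success event into a single designated output qubit by one multi-controlled NOT --- all standard, and none affects the parameter accounting.
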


\section{Upper Bounds} \label{sec:ub}

In this section, we show query and sample complexity upper bounds for estimating quantum Tsallis relative entropy.

\subsection{Query complexity upper bound} \label{sec:query_up}

Our result about the query complexity upper bound for estimating quantum Tsallis relative entropy is as follows.

\begin{theorem}[Query upper bound for estimating quantum Tsallis relative entropy] 
\label{thm:Tsallis-relative-query}
    Let $\alpha\in \interval[open]{0}{1}$ be a constant.
    There is a quantum algorithm that, for any $\eps \in \interval[open]{0}{1}$, given purified quantum query access oracles $\mathcal{O}_{\rho}$ and $\mathcal{O}_{\sigma}$ respectively for quantum states $\rho, \sigma \in \calD\rbra{\calH}$ of rank at most $r$, with probability at least $2/3$, estimates $\dTsa{\alpha}{\rho}{\sigma}{}$ to within additive error $\eps$, using   
    \[
        \begin{cases}
            O\rbra*{\frac{r^{1+\alpha}}{\eps^{\frac{1}{\alpha} + \frac{1}{1-\alpha}}}}, & \text{ if } \alpha \in \interval[open]{0}{1/2}, \\
            O\rbra*{\frac{r^{1.5}}{\eps^{4}}\log\rbra*{\frac{r}{\eps}}}, & \text{ if } \alpha = 1/2, \\
            O\rbra*{\frac{r^{2-\alpha}}{\eps^{\frac{1}{\alpha} + \frac{1}{1-\alpha}}}}, & \text{ if } \alpha \in \interval[open]{1/2}{1},
        \end{cases}
    \]
    queries to $\mathcal{O}_{\rho}$ and $\mathcal{O}_{\sigma}$.
\end{theorem}

The key step in estimating the quantum Tsallis relative entropy is estimating the quantum affinity.
At a high level, to estimate the quantum affinity $\dAa{\alpha}{\rho}{\sigma}{}$, it suffices to obtain a good estimate of $\tr\rbra*{\rho^{\alpha}\sigma^{1-\alpha}} = \tr(\rho \cdot \rho^{\alpha-1}\sigma^{1-\alpha} )$. 
To accomplish this, we need to implement a block-encoding $U_{\mathsf{ProdBE}}$ of $\rho^{\alpha-1}\sigma^{1-\alpha}$ by QSVT \cite{GSLW19}, and the desired value can be estimated via the Hadamard test \cite{AJL09,GP22} and quantum amplitude estimation \cite{BHMT02}. 

We first describe the algorithm as follows and formally state it in \cref{algo:affinity-query}. Suppose $\calO_\rho$ and $\calO_\sigma$ are $\rbra{n+n_\rho}$- and $\rbra{n+n_\sigma}$-qubit purified query access oracles for $n$-qubit quantum states $\rho$ and $\sigma$ respectively.

\begin{algorithm}[!htbp]
    \caption{$\mathsf{AffinityEstQ}_{\alpha}\rbra{\calO_\rho, \calO_\sigma, \varepsilon_1, \varepsilon_2, \varepsilon_H, \delta_1, \delta_1', \delta_2'}$}
    \label{algo:affinity-query}
    \begin{algorithmic}[1]
        \Require $\rbra{n+n_\rho}$- and $\rbra{n+n_\sigma}$-qubit purified quantum query access oracles $\mathcal{O}_{\rho}$ and $\mathcal{O}_{\sigma}$ respectively for $n$-qubit quantum states $\rho$ and $\sigma$; parameters $\varepsilon_1, \varepsilon_2, \varepsilon_H, \delta_1, \delta_1', \delta_2' \in \interval[open]{0}{1}$.

        \Ensure An estimate of $\dAa{\alpha}{\rho}{\sigma}{}$. 
        
        \State Let $p_1\coloneqq p_{1-\alpha, \varepsilon_1, \delta_1, -}$ be the polynomial specified in \cref{lem:poly_approx_neg}, and $p_2\coloneqq p_{0, 1-\alpha, \varepsilon_2 , +}$ be the polynomial specified in \cref{lem:poly_approx_pos}. 

        \State Let $U_{A}$ be a unitary operator that is a $\rbra{1, n+n_\rho, 0}$-block-encoding of $\rho$ and $U_B$ be a unitary operator that is a $\rbra{1, n+n_\sigma, 0}$-block-encoding of $\sigma$ obtained by applying~\cref{lem:be_dp} to $\mathcal{O}_{\rho}$ and $\mathcal{O}_{\sigma}$, respectively.
        
        \State Let $U_{p_1(A)} \gets \mathsf{EigenTrans} (U_A, p_1/2, \delta_1')$ and $U_{p_2(B)} \gets \mathsf{EigenTrans} (U_B, p_2/2, \delta_2')$ by \cref{lem:qsvt}.

        \State Let $U_{p_1(A)p_2(B)} \gets \mathsf{BEProduct}(U_{p_1(A)}, U_{p_2(B)})$ by \cref{lem:product_BE_matrices}.

        \State Let $U_{\textup{HT}}$ denote the unitary part (i.e., without the final measurement in computational basis) of the quantum circuit $\mathsf{HadamardTest}(U_{p_1(A)p_2(B)}, \tr_{n_\rho}(\mathcal{O}_{\rho}\ket{0}\bra{0}\mathcal{O}_{\rho}^{\dagger}))$ by \cref{lem:hadamard_test}.
        
        \State $X\gets \mathsf{AmpEst}(U_{\textup{HT}}, \varepsilon_H, 3/4)$ by \cref{lem:amp_est}.
        
        \State \Return $16{\delta_1^{\alpha-1}}\rbra{2X-1}$.
    \end{algorithmic}
\end{algorithm}

\textbf{Step 1: Construct the block-encoding of $\rho$ and $\sigma$.}
By~\cref{lem:be_dp}, we can construct the $U_A$ and $U_B$ by using $O\rbra{1}$ queries to $\calO_\rho$ and $\calO_\sigma$, respectively, such that $U_A$ and $U_B$ are $\rbra{1, n+n_\rho, 0}$ and $\rbra{1, n+n_\sigma, 0}$-block-encodings of $\rho$ and $\sigma$, respectively.

\textbf{Step 2: Construct the block-encoding of $p_1\rbra{\rho}$ where $p_1\rbra{x} \approx \frac{\delta_1^{1-\alpha}}{2}x^{\alpha-1}$.}
Let $\eps_1, \delta_1, \delta_1' \in \interval[open]{0}{1/2}$ be parameters to be determined. By~\cref{lem:poly_approx_neg}, there exists a polynomial $p_1 \in \R\sbra{x}$ of degree $d_1 = O\rbra{\frac{1}{\delta_1}\log\rbra{\frac{1}{\eps_1}}}$ such that 
\[
    \abs*{p_1\rbra{x} - \frac{\delta_1^{1-\alpha}}{2} x^{\alpha-1}} \leq \eps_1 \textup{ for } x \in \interval{-1}{-\delta_1} \cup \interval{\delta_1}{1},
\]
and 
\[
    \abs*{p_1\rbra{x}} \leq 1 \textup{ for } x \in \interval{-1}{1}.
\]
By~\cref{lem:qsvt}, with $p \coloneqq \frac{1}{2}p_1$, $\alpha \coloneqq 1$, $a \coloneqq n+n_\rho$ and $\eps \coloneqq 0$, we can implement a quantum circuit $U_{p_1\rbra{A}}$ that is a $\rbra{1,n+n_\rho+2,\delta'_1}$-block-encoding of $\frac{1}{2}p_1(\rho)$, by using $O\rbra{d_1}=O\rbra{\frac{1}{\delta_1}\log\rbra{\frac{1}{\eps_1}}}$ queries to $U_A$ and the circuit description of $U_{p_1\rbra{A}}$ can be computed in classical time $\poly\rbra{d_1,\log\rbra{\frac{1}{\delta'_1}}}$.

\textbf{Step 3: Construct the block-encoding of $p_2\rbra{\sigma}$ where $p_2\rbra{x} \approx \frac{1}{2}x^{1-\alpha}$.}
Let $\eps_2,\delta'_2 \in \interval[open]{0}{1/2}$ be parameters to be specified later. By~\cref{lem:poly_approx_pos}, there exists an polynomial $p_2$ of degree $d_2 = O\rbra{\rbra{\frac{1}{\eps_2}}^{\frac{1}{1-\alpha}}}$ such that 
\[
    \abs*{p_2(x)-\frac{1}{2}x^{1-\alpha}} \leq \eps_2  \textup{ for } x \in \interval{-1}{1},
\]
and 
\[
    \abs*{p_2(x)} \leq 1 \textup{ for } x \in \interval{-1}{1}.
\]
By~\cref{lem:poly_approx_pos}, with $p \coloneqq \frac{1}{2}p_2$, $\alpha \coloneqq 1$, $a \coloneqq n+n_\sigma$ and $\eps \coloneqq 0$, we can implement $U_{p_2\rbra{B}}$ that is a $\rbra{1,n+n_\sigma+2,\delta'_2}$-block-encoding of $\frac{1}{2}p_2\rbra{\sigma}$, by using $O\rbra{d_2} = O\rbra{\rbra{\frac{1}{\eps_2}}^{\frac{1}{1-\alpha}}}$ queries to $U_B$ and the circuit description of $U_{p_2\rbra{B}}$ can be computed in classical time $\poly\rbra{d_2,\log\rbra{\frac{1}{\delta'_2}}}$.
    
\textbf{Step 4: Construct the block-encoding of $p_1\rbra{\rho}p_2\rbra{\sigma}$.} 
By~\cref{lem:product_BE_matrices}, we can implement a quantum circuit  $U_{p_1\rbra{A}p_2\rbra{B}}$ that is a $\rbra{1,2n+n_\rho+n_\sigma+4,\delta'_1+\delta'_2}$-block-encoding of $\frac{1}{4}p_1\rbra{\rho}p_2\rbra{\sigma}$.

\textbf{Step 5: Estimate $\tr\rbra{p_1\rbra{\rho}p_2\rbra{\sigma}\rho}$.}
By~\cref{lem:hadamard_test}, we can implement a quantum circuit using one query to $U_{p_1\rbra{A}p_2\rbra{B}}$ and a sample of $\rho$ (prepared by one query to $\calO_\rho$) that outputs $x \in \cbra{0,1}$ such that
\[
    \Prob\sbra*{x = 0} = \frac{1 + \Re\rbra*{\tr\rbra{\bra{0}_{2n+n_\rho+n_\sigma+4} U_{p_1\rbra{A}p_2\rbra{B}} \ket{0}_{2n+n_\rho+n_\sigma+4} \rho}}}{2}.
\]
Let $X$ be the estimate of $\Prob\sbra{x = 0}$ within additive error $\eps_H$ by \cref{lem:amp_est}, using $O\rbra{\frac{1}{\eps_H}}$ queries to $U_{p_1\rbra{A}p_2\rbra{B}}$ and $\calO_\rho$.
Specifically, it holds that
\begin{equation*}
    \Prob \sbra[\big]{ \abs*{X-\Prob\sbra*{x = 0}} \leq \eps_H } \geq \frac{3}{4}.
    \label{eq:hadamard_test}
\end{equation*}

\textbf{Step 6: Return $16\delta_1^{\alpha-1}\rbra{2X-1}$ as an estimate of $\dAa{\alpha}{\rho}{\sigma}{}$.}

\vspace{10pt}

We now analyze the error and determine all the parameters in the algorithm as follows.

\begin{proposition}\label{prop:query-error-single}
   Let $\alpha \in \interval[open]{0}{1}$ be a constant. 
   For any density operator $\rho \in \calD\rbra{\calH}$, positive real numbers $\varepsilon_1, \delta_1 \in \interval[open]{0}{1}$, we have
    \[
        \Abs*{\rho p_1\rbra*{\rho} - \frac{\delta_1^{1-\alpha}}{2}\rho^{\alpha} } \leq \frac{3}{2}\delta_1+ \eps_1,
    \]
    where $p_1\coloneqq p_{1-\alpha, \varepsilon_1, \delta_1, -}$ is the polynomial specified in \cref{lem:poly_approx_neg}.
\end{proposition}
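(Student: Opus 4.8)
The plan is to bound the operator norm by working eigenvalue-by-eigenvalue, i.e., to reduce the matrix statement to a scalar inequality on the spectrum of $\rho$. Since $\rho$ is a density operator and $p_1$ is applied via the matrix-function convention, $\rho p_1(\rho)$ and $\rho^\alpha$ are simultaneously diagonalized in any eigenbasis of $\rho$, so $\bigl\Vert \rho p_1(\rho) - \tfrac{\delta_1^{1-\alpha}}{2}\rho^\alpha \bigr\Vert = \max_{\lambda \in \spec(\rho)} \bigl| \lambda\, p_1(\lambda) - \tfrac{\delta_1^{1-\alpha}}{2}\lambda^\alpha \bigr|$. So it suffices to show $\bigl| \lambda\, p_1(\lambda) - \tfrac{\delta_1^{1-\alpha}}{2}\lambda^\alpha \bigr| \le \tfrac32 \delta_1 + \varepsilon_1$ for every $\lambda \in [0,1]$ (eigenvalues of $\rho$ lie in $[0,1]$).

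Next I would split the range of $\lambda$ at the threshold $\delta_1$, matching the two regimes in \cref{lem:poly_approx_neg}. For $\lambda \in [\delta_1, 1]$, the approximation guarantee gives $\bigl| p_1(\lambda) - \tfrac{\delta_1^{1-\alpha}}{2}\lambda^{\alpha-1} \bigr| \le \varepsilon_1$, so multiplying by $\lambda \in [0,1]$ yields $\bigl| \lambda p_1(\lambda) - \tfrac{\delta_1^{1-\alpha}}{2}\lambda^{\alpha} \bigr| \le \lambda \varepsilon_1 \le \varepsilon_1$ (here the factor $\lambda \le 1$ is what lets us absorb the blow-up of $x^{\alpha-1}$ near $0$ after multiplying by $x$). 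For $\lambda \in [0, \delta_1)$, I use the trivial bounds: $|\lambda p_1(\lambda)| \le \lambda \le \delta_1$ since $|p_1| \le 1$ on $[-1,1]$, and $\tfrac{\delta_1^{1-\alpha}}{2}\lambda^\alpha \le \tfrac{\delta_1^{1-\alpha}}{2}\delta_1^\alpha = \tfrac12 \delta_1$ since $\lambda^\alpha$ is increasing and $\lambda < \delta_1$. By the triangle inequality the difference is at most $\delta_1 + \tfrac12 \delta_1 = \tfrac32 \delta_1$ on this part. Taking the maximum over the two regimes gives the claimed bound $\tfrac32\delta_1 + \varepsilon_1$ (in fact $\max\{\tfrac32\delta_1, \varepsilon_1\}$, which is at most the stated sum).

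I do not anticipate a serious obstacle here; the only mild subtlety is making sure the case $\lambda = 0$ is handled (both terms vanish) and that the simultaneous diagonalizability / matrix-function reduction in the first step is stated cleanly, since $p_1$ has no constant term ($p_1$ is odd, hence $p_1(0) = 0$), so $\rho p_1(\rho)$ is well-defined as a matrix function of the positive semidefinite operator $\rho$ with eigenvalues in $[0,1]$. The estimates in the two sub-cases are elementary, using only $|p_1| \le 1$, the approximation bound from \cref{lem:poly_approx_neg}, and monotonicity of $x \mapsto x^\alpha$ on $[0,1]$.
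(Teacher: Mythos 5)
Your proposal is correct and follows essentially the same route as the paper: reduce to the spectrum of $\rho$, split at the threshold $\delta_1$, apply the guarantee of \cref{lem:poly_approx_neg} on $[\delta_1,1]$ (absorbing the factor $\lambda\le 1$), and use $\abs{p_1}\le 1$ together with $\lambda^{\alpha}\le\delta_1^{\alpha}$ on $[0,\delta_1)$ to get the $\tfrac{3}{2}\delta_1$ term. If anything, your handling of the small-eigenvalue case is slightly cleaner than the paper's, which states an intermediate bound before multiplying by $\lambda_j$ that only becomes valid after that multiplication.
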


\begin{proof}
    Let $\lambda_1, \lambda_2, \dots, \lambda_k$ denote the non-zero eigenvalues of $\rho$.
    For any $j\in [k]$, if $\lambda_j \geq \delta_1$, by our choice of $p_1$, we have
    \[
        \abs*{p_1\rbra*{\lambda_j}-\frac{\delta_1^{1-\alpha}}{2}\lambda_j^{\alpha-1}} \leq \eps_1.
    \]
    Note that $0 \leq \lambda_j \leq 1$, we conclude
    \[
        \abs*{\lambda_j p_1\rbra*{\lambda_j}-\frac{\delta_1^{1-\alpha}}{2}\lambda_j^{\alpha}} \leq \eps_1.
    \]
    Then consider the case when $0 \le \lambda_j \leq \delta_1$, 
    In this case, we have
    \[
        \abs*{p_1\rbra*{\lambda_j}-\frac{\delta_1^{1-\alpha}}{2}\lambda_j^{\alpha-1}} \leq  \abs*{p_1\rbra*{\lambda_j}} + \abs*{\frac{\delta_1^{1-\alpha}}{2}\lambda_j^{\alpha-1}} \leq \frac{3}{2},
    \]
    and multiplying both sides of the inequality by $\lambda_j$ gives the $\frac{3}{2}\delta_1$ upper bound.
    Combining both cases, we obtain the upper bound 
    $\frac{3}{2}\delta_1 + \eps_1$ as we desired.
\end{proof}

\begin{proposition}\label{prop:query-error-first}
    Let $\alpha \in \interval[open]{0}{1}$ be a constant.
    For any density operators $\rho, \sigma \in \calD\rbra{\calH}$, positive real numbers $\eps_1, \delta_1, \eps_2 \in \interval[open]{0}{1}$, we have
    \[
        \abs*{\tr\rbra*{\rho p_1\rbra*{\rho} p_2\rbra*{\sigma}} 
        - \tr\rbra*{\rho  \frac{\delta_1^{1-\alpha}}{2}\rho^{\alpha-1} p_2\rbra*{\sigma}}} 
        \leq \rbra*{r\eps_2+ \frac{r^{\alpha}}{2}} \rbra*{\frac{3}{2}\delta_1+ \eps_1},
    \]
    where $r=\max\{\rank(\rho), \rank(\sigma)\}$, $p_1\coloneqq p_{1-\alpha, \varepsilon_1, \delta_1, -}$ is the polynomial specified in \cref{lem:poly_approx_neg}, and $p_2\coloneqq p_{0, 1-\alpha, \varepsilon_2 , +}$ is the polynomial specified in \cref{lem:poly_approx_pos}.
\end{proposition}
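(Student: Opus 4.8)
The plan is to reduce the claim to an operator-norm estimate on $\rho p_1(\rho)-\tfrac{\delta_1^{1-\alpha}}{2}\rho^{\alpha}$ (already supplied by \cref{prop:query-error-single}) together with a trace-norm estimate on the compression of $p_2(\sigma)$ by the rank-$r$ projection coming from the support of $\rho$.

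First I would note that $\rho\cdot\tfrac{\delta_1^{1-\alpha}}{2}\rho^{\alpha-1}=\tfrac{\delta_1^{1-\alpha}}{2}\rho^{\alpha}$ once powers of $\rho$ are read off the spectral calculus on $\ran(\rho)$ (with the convention $0^\alpha=0$), so the left-hand side of the claim equals $\abs{\tr(A\,p_2(\sigma))}$ with $A\coloneqq\rho p_1(\rho)-\tfrac{\delta_1^{1-\alpha}}{2}\rho^{\alpha}$. This $A$ is Hermitian, and since both $\rho p_1(\rho)$ and $\rho^{\alpha}$ vanish on $\ker(\rho)$, it is supported on $\ran(\rho)$; writing $P$ for the orthogonal projection onto $\ran(\rho)$ we get $A=PAP$ and $\rank(A)\le\rank(\rho)\le r$. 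Using $A=PAP$, cyclicity of the trace, and $\abs{\tr(XY)}\le\Abs{X}_\infty\Abs{Y}_1$ (a combination of \cref{fact:holder} and \cref{fact:trace_norm}), I obtain
\[
    \abs{\tr(A\,p_2(\sigma))}=\abs*{\tr\rbra*{A\cdot P\,p_2(\sigma)\,P}}\le\Abs{A}_\infty\cdot\Abs{P\,p_2(\sigma)\,P}_1 ,
\]
and \cref{prop:query-error-single} bounds $\Abs{A}_\infty\le\tfrac32\delta_1+\eps_1$. So it remains to show $\Abs{P\,p_2(\sigma)\,P}_1\le r\eps_2+\tfrac{r^{\alpha}}{2}$.

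For that I would write $p_2(\sigma)=\tfrac12\sigma^{1-\alpha}+E$ with $\Abs{E}_\infty\le\eps_2$ (guaranteed by \cref{lem:poly_approx_pos}), and apply the triangle inequality for $\Abs{\cdot}_1$:
\[
    \Abs{P\,p_2(\sigma)\,P}_1\le\tfrac12\Abs{P\sigma^{1-\alpha}P}_1+\Abs{PEP}_1 .
\]
Since $P\sigma^{1-\alpha}P\succeq 0$, its trace norm equals its trace, which is at most $\tr(\sigma^{1-\alpha})$; and $\tr(\sigma^{1-\alpha})=\Abs{\sigma^{1-\alpha}}_1\le r^{\alpha}$ by \cref{lem:norm_contractivity} with $p=1$ and $q=1/(1-\alpha)$, since $\Abs{\sigma^{1-\alpha}}_{1/(1-\alpha)}=(\tr\sigma)^{1-\alpha}=1$ and $\rank(\sigma^{1-\alpha})\le r$. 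For the error term, $\rank(PEP)\le\rank(P)\le r$, so \cref{lem:norm_contractivity} with $p=1$, $q=\infty$ gives $\Abs{PEP}_1\le r\Abs{PEP}_\infty\le r\Abs{E}_\infty\le r\eps_2$. Adding the two bounds gives $\Abs{P\,p_2(\sigma)\,P}_1\le\tfrac{r^{\alpha}}{2}+r\eps_2$, and substituting back yields the asserted inequality.

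The only point that needs care — and the only place where the argument is not purely mechanical — is that $p_2(\sigma)$ is in general not positive semi-definite, so one cannot bound $\Abs{p_2(\sigma)}_1$ or $\tr(P\,p_2(\sigma)\,P)$ directly by a quantity of order $r^{\alpha}$. The device of separating the genuinely positive piece $\tfrac12\sigma^{1-\alpha}$, whose trace is controlled via Schatten-norm contractivity (\cref{lem:norm_contractivity}, the source of the $r^{\alpha}$ factor), from the operator-norm-small but possibly full-rank remainder $E$, whose trace norm after compression to the rank-$r$ support of $\rho$ is at most $r\eps_2$, is exactly what makes the rank dependence come out as $r\eps_2+r^{\alpha}/2$ rather than something worse. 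As an alternative, \cref{prop:inner-absolute-value} could replace the compression step on the $\sigma^{1-\alpha}$ term, giving $\abs{\tr(\sigma^{1-\alpha}A)}\le\tr(\sigma^{1-\alpha}\abs{A})\le\Abs{A}_\infty\tr(\sigma^{1-\alpha})$.
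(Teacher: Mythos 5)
Your proof is correct and follows the same overall strategy as the paper's: isolate $A=\rho p_1\rbra{\rho}-\frac{\delta_1^{1-\alpha}}{2}\rho^{\alpha}$, control $\Abs{A}_\infty$ by \cref{prop:query-error-single}, and pay a trace-norm factor of $r\eps_2+\frac{r^{\alpha}}{2}$ for the $p_2\rbra{\sigma}$ part via a H\"older-type bound. The difference lies in how that trace-norm factor is produced. The paper bounds $\Abs{p_2\rbra{\sigma}}_1$ globally, writing $\Abs{p_2\rbra{\sigma}-\frac12\sigma^{1-\alpha}}_1\leq r\eps_2$ and $\Abs{\frac12\sigma^{1-\alpha}}_1\leq \frac{r^{\alpha}}{2}$, and routes the absolute value through \cref{prop:inner-absolute-value} with $p_2\rbra{\sigma}$ playing the role of the positive factor. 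You instead compress $p_2\rbra{\sigma}$ by the projection $P$ onto $\ran\rbra{\rho}$ and split it there into the genuinely positive piece $\frac12\sigma^{1-\alpha}$, whose trace is at most $r^{\alpha}$, and an operator-norm-$\eps_2$ remainder whose compression has rank at most $r$. Your bookkeeping is in fact slightly more robust: the paper's step $\Abs{p_2\rbra{\sigma}-\frac12\sigma^{1-\alpha}}_1\leq r\eps_2$ implicitly uses that this difference is supported on $\supp\rbra{\sigma}$ (i.e., $p_2\rbra{0}=0$, which \cref{lem:poly_approx_pos} only guarantees up to $\eps_2$), and its appeal to \cref{prop:inner-absolute-value} treats $p_2\rbra{\sigma}$ as positive semi-definite even though it may have small negative eigenvalues; your projection argument needs neither of these points and still lands on exactly the claimed bound.
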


\begin{proof}
    By our choice of $p_2$, we know
    \[
        \Abs*{p_2\rbra*{\sigma}-\frac{1}{2}\sigma^{1-\alpha}}_1 \leq r\eps_2.
    \]
    Let $\lambda_1, \lambda_2, \dots, \lambda_j$ denote the non-zero eigenvalues of $\sigma$ with $j\le r$. 
    We have $\sum_i \lambda_i = 1$.
    By the power mean inequality, for $1-\alpha\le 1$, we have
    \[
        \rbra*{\frac{\sum_i \lambda_i^{1-\alpha}}{j}}^{\frac{1}{1-\alpha}} \leq \frac{\sum_i \lambda_i}{j} = \frac{1}{j},
    \]
    which gives $\sum_i \lambda_i^{1-\alpha} \le j^{\alpha} \le r^{\alpha}$.
    This gives
    \[
        \Abs*{\frac{1}{2}\sigma^{1-\alpha}}_1 \leq  \frac{r^{\alpha}}{2}. 
    \]
    Combining the above, by the triangle inequality, we can get 
    \[
        \Abs*{p_2\rbra*{\sigma}}_1 \le  \Abs*{p_2\rbra*{\sigma}-\frac{1}{2}\sigma^{1-\alpha}}_1 + \Abs*{\frac{1}{2}\sigma^{1-\alpha}}_1 \leq r\eps_2+ \frac{r^{\alpha}}{2}.
    \]

    Now we have
    \[
    \begin{aligned}
    \abs*{\tr\rbra*{\rho p_1\rbra*{\rho} p_2\rbra*{\sigma}} - \tr\rbra*{\rho \frac{\delta_1^{1-\alpha}}{2}\rho^{\alpha-1}  p_2\rbra*{\sigma}} }
    & = \tr\rbra*{p_2\rbra*{\sigma} \rbra*{\rho p_1\rbra*{\rho}-
    \rho \frac{\delta_1^{1-\alpha}}{2}\rho^{\alpha-1} }}\\
    & \leq \Abs*{p_2\rbra*{\sigma}}_1  \Abs*{\rho p_1\rbra*{\rho} - \frac{\delta_1^{1-\alpha}}{2}\rho^{\alpha} } \\
    & \leq \rbra*{r\eps_2+ \frac{r^{\alpha}}{2}} \rbra*{\frac{3}{2}\delta_1+ \eps_1},
    \end{aligned}
    \]
    where the second line is obtained by the matrix H\"older inequality (\cref{fact:holder}), and the fourth line is obtained by applying \cref{prop:query-error-single}.
\end{proof}

\begin{proposition}\label{prop:query-error-second}
    Let $\alpha\in \interval[open]{0}{1}$ be a constant. 
    For any density operators $\rho, \sigma\in \mathcal{D}(\mathcal{H})$, positive real numbers $\varepsilon_1, \delta_1, \eps_2 \in \interval[open]{0}{1}$, we have
    \begin{equation*}
        \abs*{  \tr\rbra*{ \frac{\delta_1^{1-\alpha}}{2}\rho^{\alpha}p_2\rbra*{\sigma}}- \tr \rbra*{\frac{\delta_1^{1-\alpha}}{4}\rho^{\alpha}\sigma^{1-\alpha}} } \le \frac{\delta_1^{1-\alpha}}{2} r^{1-\alpha} \varepsilon_2,
    \end{equation*}
    where $r=\max\{\rank(\rho), \rank(\sigma)\}$, $p_1\coloneqq p_{1-\alpha, \varepsilon_1, \delta_1, -}$ is the polynomial specified in \cref{lem:poly_approx_neg}, and $p_2\coloneqq p_{0, 1-\alpha, \varepsilon_2 , +}$ is the polynomial specified in \cref{lem:poly_approx_pos}.
\end{proposition}

\begin{proof}
    This follows a similar reasoning to that in \cref{prop:query-error-first}.
    First, by the power mean inequality, we have
    \[
        \Abs*{\rho^{\alpha}}_1 \le r^{1-\alpha}.
    \]
    By our choice of $p_2$, we have
    \[
        \Abs*{p_2\rbra*{\sigma} -  \frac{1}{2}\sigma^{1-\alpha}} \leq \eps_2.
    \]
    Therefore, we deduce
    \begin{equation*}
    \begin{aligned}
        \abs*{  \tr\rbra*{ \frac{\delta_1^{1-\alpha}}{2}\rho^{\alpha}p_2\rbra*{\sigma}} - \tr \rbra*{\frac{\delta_1^{1-\alpha}}{4}\rho^{\alpha}\sigma^{1-\alpha}} }
        & \leq \frac{\delta_1^{1-\alpha}}{2} \tr \rbra*{\rho^{\alpha} \abs*{p_2\rbra*{\sigma}-\frac{1}{2}\sigma^{1-\alpha}}} \\
        & \leq  \frac{\delta_1^{1-\alpha}}{2} \Abs*{\rho^{\alpha}}_1 \Abs*{p_2\rbra*{\sigma}-\frac{1}{2}\sigma^{1-\alpha}} \\
        & \leq  \frac{\delta_1^{1-\alpha}}{2} r^{1-\alpha} \eps_2,
    \end{aligned}
    \end{equation*}
    where the second line is obtained by matrix H\"older inequality (\cref{fact:holder}).
\end{proof}

\begin{proposition}\label{prop:query-error-total}
    Let $X$, $\varepsilon_H$, $\eps_1$, $\eps_2$, $\delta_1$, $\delta_1'$, $\delta_2'$ be the parameters in \cref{algo:affinity-query}.
    If $\abs{X - \Prob\sbra{x = 0}} \leq \varepsilon_H$, then
    \[
        \abs*{\frac{16}{\delta_1^{1-\alpha}}\rbra*{2X-1} - \dAa{\alpha}{\rho}{\sigma}{}} 
        \leq \frac{16}{\delta_1^{1-\alpha}} \rbra*{2\varepsilon_H + \delta_1' + \delta_2'} + \rbra*{r\eps_2+ \frac{r^{\alpha}}{2}} \rbra*{6\delta_1^{\alpha}+ \frac{4\varepsilon_1}{\delta_1^{1-\alpha}}} + 2 r^{1-\alpha} \varepsilon_2.
    \]
\end{proposition}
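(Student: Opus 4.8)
The plan is to establish the bound by a telescoping application of the triangle inequality, peeling off the four error contributions one at a time while carrying the scaling factor $1/\delta_1^{1-\alpha}$ through each step. Write $Y \coloneqq \frac{16}{\delta_1^{1-\alpha}}(2X-1)$ for the value returned by \cref{algo:affinity-query}, and let $\widetilde{A} \coloneqq \bra{0}U_{p_1(A)p_2(B)}\ket{0}$ be the top-left block of the product block-encoding built in Step~4; since $U_{p_1(A)p_2(B)}$ is unitary we have $\Abs{\widetilde{A}}_\infty \le 1$, so $U_{p_1(A)p_2(B)}$ is an \emph{exact} $(1,\cdot,0)$-block-encoding of $\widetilde{A}$. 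I will pass through the chain of quantities $Y$, then $\frac{16}{\delta_1^{1-\alpha}}\Re\tr(\widetilde{A}\rho)$, then $\frac{4}{\delta_1^{1-\alpha}}\tr(\rho\,p_1(\rho)p_2(\sigma))$, then $2\tr(\rho^{\alpha}p_2(\sigma))$, and finally $\tr(\rho^{\alpha}\sigma^{1-\alpha})=\dAa{\alpha}{\rho}{\sigma}{}$, bounding each consecutive difference by one of the terms on the right-hand side of the claim.

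First I would handle the two ``algorithmic'' errors. By \cref{lem:hadamard_test}, with $U_{p_1(A)p_2(B)}$ regarded as the exact block-encoding of $\widetilde{A}$, the unitary $U_{\textup{HT}}$ satisfies $\Prob\sbra*{x=0} = \tfrac12(1+\Re\tr(\widetilde{A}\rho))$, hence $2\Prob\sbra*{x=0}-1 = \Re\tr(\widetilde{A}\rho)$, and the hypothesis $\abs{X-\Prob\sbra*{x=0}}\le\varepsilon_H$ gives $\abs{(2X-1)-\Re\tr(\widetilde{A}\rho)}\le 2\varepsilon_H$. Next, Step~4 (via \cref{lem:product_BE_matrices}) guarantees that $U_{p_1(A)p_2(B)}$ is a $(1,\cdot,\delta_1'+\delta_2')$-block-encoding of $\tfrac14 p_1(\rho)p_2(\sigma)$, so $\Abs{\widetilde{A}-\tfrac14 p_1(\rho)p_2(\sigma)}_\infty\le\delta_1'+\delta_2'$. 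Since $p_1,p_2$ have real coefficients, $p_1(\rho)$ and $p_2(\sigma)$ are Hermitian, and because $\rho$ commutes with $p_1(\rho)$ the operator $\rho\,p_1(\rho)$ is Hermitian; therefore $\tr(\rho\,p_1(\rho)p_2(\sigma))$ is real and, by cyclicity, equals $\tr(p_1(\rho)p_2(\sigma)\rho)$. Using $\abs{\tr(M\rho)}\le\Abs{M}_\infty\Abs{\rho}_1=\Abs{M}_\infty$ and $\abs{\Re z-\Re w}\le\abs{z-w}$, this yields
\[
\abs*{\Re\tr(\widetilde{A}\rho)-\tfrac14\tr\!\bigl(\rho\,p_1(\rho)p_2(\sigma)\bigr)}\le\delta_1'+\delta_2'.
\]
Combining the last two estimates and multiplying by $\frac{16}{\delta_1^{1-\alpha}}$ gives $\abs*{Y-\frac{4}{\delta_1^{1-\alpha}}\tr(\rho\,p_1(\rho)p_2(\sigma))}\le\frac{16}{\delta_1^{1-\alpha}}(2\varepsilon_H+\delta_1'+\delta_2')$, the first term of the claimed bound.

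Then I would bring in the polynomial-approximation errors. Because $\rho\cdot\rho^{\alpha-1}=\rho^{\alpha}$ (with $\rho^{\alpha-1}$ vanishing on $\ker\rho$), the middle quantity in \cref{prop:query-error-first} equals $\frac{\delta_1^{1-\alpha}}{2}\tr(\rho^{\alpha}p_2(\sigma))$, which is exactly the first quantity in \cref{prop:query-error-second}; chaining these two propositions by the triangle inequality gives
\[
\abs*{\tr\!\bigl(\rho\,p_1(\rho)p_2(\sigma)\bigr)-\tfrac{\delta_1^{1-\alpha}}{4}\tr\!\bigl(\rho^{\alpha}\sigma^{1-\alpha}\bigr)}\le\Bigl(r\varepsilon_2+\tfrac{r^{\alpha}}{2}\Bigr)\Bigl(\tfrac32\delta_1+\varepsilon_1\Bigr)+\tfrac{\delta_1^{1-\alpha}}{2}r^{1-\alpha}\varepsilon_2,
\]
where for the last summand I use the sharper bound $\frac{\delta_1^{1-\alpha}}{2}r^{1-\alpha}\varepsilon_2$ actually proved inside \cref{prop:query-error-second}. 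Multiplying by $\frac{4}{\delta_1^{1-\alpha}}$ and simplifying $\frac{4}{\delta_1^{1-\alpha}}\cdot\frac32\delta_1=6\delta_1^{\alpha}$ turns this into $\abs*{\frac{4}{\delta_1^{1-\alpha}}\tr(\rho\,p_1(\rho)p_2(\sigma))-\tr(\rho^{\alpha}\sigma^{1-\alpha})}\le(r\varepsilon_2+\frac{r^{\alpha}}{2})(6\delta_1^{\alpha}+\frac{4\varepsilon_1}{\delta_1^{1-\alpha}})+2r^{1-\alpha}\varepsilon_2$. A final triangle inequality with the estimate from the previous paragraph, together with $\dAa{\alpha}{\rho}{\sigma}{}=\tr(\rho^{\alpha}\sigma^{1-\alpha})$, delivers the stated inequality.

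The whole argument is bookkeeping, so I do not anticipate a conceptual obstacle; the points requiring a little care are (i) observing that $\tr(\rho\,p_1(\rho)p_2(\sigma))$ is real, so the real part produced by the Hadamard test is matched without loss to the real-valued target; (ii) feeding the \emph{approximate} product block-encoding to \cref{lem:hadamard_test} as the \emph{exact} block-encoding of its own top-left block $\widetilde{A}$ (which is automatically a contraction), and absorbing the $\delta_1'+\delta_2'$ slack afterwards rather than inside the lemma; and (iii) keeping the powers of $\delta_1$ straight as the factor $4/\delta_1^{1-\alpha}$ passes through the triangle inequalities, in particular landing on $2r^{1-\alpha}\varepsilon_2$ rather than a slightly larger multiple of $r^{1-\alpha}\varepsilon_2$.
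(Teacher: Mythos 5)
Your proposal is correct and follows essentially the same route as the paper's proof: bound $\abs{4(2X-1)-\tr(\rho\,p_1(\rho)p_2(\sigma))}$ using the hypothesis on $X$ together with the $(1,\cdot,\delta_1'+\delta_2')$-block-encoding guarantee, then chain \cref{prop:query-error-first,prop:query-error-second} (using the sharper $\tfrac{\delta_1^{1-\alpha}}{2}r^{1-\alpha}\varepsilon_2$ bound from the latter's proof, as the paper also does), and finish by the triangle inequality after rescaling by $4/\delta_1^{1-\alpha}$. The only difference is that you make explicit two points the paper leaves implicit — feeding the approximate block-encoding to the Hadamard test as an exact block-encoding of its own top-left block, and the realness of $\tr(\rho\,p_1(\rho)p_2(\sigma))$ — which is fine.
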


\begin{proof}
    Suppose $\abs{X - \Prob\sbra{x = 0}} \leq \varepsilon_H$ and $U_{p_1(A)p_2(B)}$ is a $\rbra{1, 2n+n_\rho+n_\sigma+4, \delta_1'+\delta_2'}$-block-encoding of $\frac{1}{4}p_1(A)p_2(B)$, we have
    \[
        \abs*{ \rbra*{2X-1} - \Re\rbra*{\tr\rbra*{\bra{0}_{2n+n_\rho+n_\sigma+4}U_{p_1\rbra{A}p_2\rbra{B}}\ket{0}_{2n+n_\rho+n_\sigma+4} \rho}} } \leq 2\varepsilon_H ,
    \]
    which gives
    \[
        \abs*{ 4\rbra*{2X-1} - \tr\rbra*{\rho p_1\rbra*{\rho} p_2\rbra*{\sigma}} } \leq 8\varepsilon_H + 4\delta_1' + 4\delta_2'.
    \]
    By \cref{prop:query-error-first,prop:query-error-second}, we have
    \[
    \abs*{\tr\rbra*{\rho p_1\rbra*{\rho} p_2\rbra*{\sigma}} 
    - \tr \rbra*{\frac{\delta_1^{1-\alpha}}{4}\rho^{\alpha}\sigma^{1-\alpha}} }\leq \rbra*{r\eps_2+ \frac{r^{\alpha}}{2}} \rbra*{\frac{3}{2}\delta_1+ \eps_1} + \frac{\delta_1^{1-\alpha}}{2} r^{1-\alpha} \eps_2.
    \]
    Therefore, the result follows from the triangle inequality.
\end{proof}

\begin{theorem}[Query upper bound for estimating quantum affinity]\label{thm:query-complexity-of-general-affinity}
    Let $\alpha \in \interval[open]{0}{1}$ be a constant.
    There is a quantum algorithm $\mathsf{AffinityEstQ}_{\alpha}(\calO_\rho,\calO_\sigma, r, \varepsilon)$ that, for any $\eps \in \interval[open]{0}{1}$, given query access to density operators $\rho, \sigma \in \calD(\calH)$ with rank at most $r$, with probability at least $2/3$, estimating $\dAa{\alpha}{\rho}{\sigma}{}$ to within additive error $\eps$, using 
    \[
        \begin{cases}
            O\rbra*{\frac{r^{1+\alpha}}{\eps^{\frac{1}{\alpha} + \frac{1}{1-\alpha}}}}, & \text{ if } \alpha \in \interval[open]{0}{1/2}, \\
            O\rbra*{\frac{r^{1.5}}{\eps^{4}}\log\rbra*{\frac{r}{\eps}}}, & \text{ if } \alpha = 1/2, \\
            O\rbra*{\frac{r^{2-\alpha}}{\eps^{\frac{1}{\alpha} + \frac{1}{1-\alpha}}}}, & \text{ if } \alpha \in \interval[open]{1/2}{1},
        \end{cases}
    \]
    queries to $\calO_{\rho}$ and $\calO_{\sigma}$.
\end{theorem}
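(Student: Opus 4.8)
The plan is to prove the theorem by a direct analysis of $\mathsf{AffinityEstQ}_\alpha$ (\cref{algo:affinity-query}), splitting into a correctness argument and a query count. A preliminary reduction disposes of $\alpha<1/2$: since $\tr\rbra*{\rho^\alpha\sigma^{1-\alpha}}=\tr\rbra*{\sigma^{1-\alpha}\rho^\alpha}$ we have $\dAa{\alpha}{\rho}{\sigma}{}=\dAa{1-\alpha}{\sigma}{\rho}{}$, and the ranks are unchanged, so Steps 1--2 of the algorithm reduce everything to the case $1/2\le\alpha<1$. It therefore suffices to analyze the algorithm for $\alpha\ge 1/2$ and then translate the resulting bounds back through the substitution $\alpha\mapsto 1-\alpha$ at the very end.

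For correctness, I first observe that Steps 1--4 correctly produce a $\rbra*{1,2n+n_\rho+n_\sigma+4,\delta_1'+\delta_2'}$-block-encoding of $\tfrac14 p_1(\rho)p_2(\sigma)$: since the polynomials $p_1\coloneqq p_{1-\alpha,\varepsilon_1,\delta_1,-}$ and $p_2\coloneqq p_{0,1-\alpha,\varepsilon_2,+}$ are bounded by $1$ on $\interval{-1}{1}$, the polynomials $p_1/2,p_2/2$ are valid QSVT inputs, and \cref{lem:qsvt} applied to the exact block-encodings $U_A,U_B$ (from \cref{lem:be_dp}) yields $\delta_1'$- and $\delta_2'$-accurate block-encodings of $\tfrac12 p_1(\rho)$ and $\tfrac12 p_2(\sigma)$, which \cref{lem:product_BE_matrices} multiplies. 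By \cref{lem:amp_est}, with probability at least $3/4$ the output $X$ of the amplitude-estimation step satisfies $\abs{X-\Prob\sbra{x=0}}\le\varepsilon_H$; conditioning on this, \cref{prop:query-error-total} bounds $\bigl|16\delta_1^{\alpha-1}(2X-1)-\dAa{\alpha}{\rho}{\sigma}{}\bigr|$ by
\[
    \frac{16}{\delta_1^{1-\alpha}}\rbra*{2\varepsilon_H+\delta_1'+\delta_2'}+\rbra*{r\varepsilon_2+\tfrac{r^\alpha}{2}}\rbra*{6\delta_1^\alpha+\tfrac{4\varepsilon_1}{\delta_1^{1-\alpha}}}+2r^{1-\alpha}\varepsilon_2 .
\]
Substituting the parameter choices and using the identities $\delta_1^\alpha=\varepsilon/(16r^\alpha)$, $\varepsilon_1=\delta_1$ (so $4\varepsilon_1/\delta_1^{1-\alpha}=4\delta_1^\alpha$), $16/\delta_1^{1-\alpha}=16^{1/\alpha}r^{1-\alpha}/\varepsilon^{(1-\alpha)/\alpha}$, and $2r^{1-\alpha}\varepsilon_2=\varepsilon/4$, each of the three summands becomes $O(\varepsilon)$, and with the constants as chosen the total is at most $\varepsilon$; since $3/4\ge 2/3$, this gives the claimed guarantee.

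For the query complexity, I count oracle calls along the pipeline. By \cref{lem:be_dp}, $U_A,U_B$ cost $O(1)$ queries to $\mathcal O_\rho,\mathcal O_\sigma$; by \cref{lem:qsvt}, $U_{p_1(A)}$ costs $d_1=O\rbra*{\tfrac1{\delta_1}\log\tfrac1{\varepsilon_1}}$ queries to $U_A$ and $U_{p_2(B)}$ costs $d_2=O\rbra*{(1/\varepsilon_2)^{1/(1-\alpha)}}$ queries to $U_B$; the product and Hadamard-test steps add only $O(1)$ further queries to $\mathcal O_\rho$; and amplitude estimation repeats the Hadamard-test circuit $O(1/\varepsilon_H)$ times. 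Hence the total is $O(d_1/\varepsilon_H)$ queries to $\mathcal O_\rho$ and $O(d_2/\varepsilon_H)$ to $\mathcal O_\sigma$. Plugging in $\delta_1=\Theta(\varepsilon^{1/\alpha}/r)$, $\varepsilon_2=\Theta(\varepsilon/r^{1-\alpha})$, $\varepsilon_H=\Theta(\varepsilon^{1/\alpha}/r^{1-\alpha})$ gives $d_1/\varepsilon_H=O\rbra*{\tfrac{r^{2-\alpha}}{\varepsilon^{2/\alpha}}\log\tfrac r\varepsilon}$ and $d_2/\varepsilon_H=O\rbra*{\tfrac{r^{2-\alpha}}{\varepsilon^{1/\alpha+1/(1-\alpha)}}}$. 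For $\alpha>1/2$ we have $2/\alpha<1/\alpha+1/(1-\alpha)$, so the first term is absorbed into the second (the logarithm is swallowed by a tiny increase of the $\varepsilon$-exponent, which is legitimate since $\alpha$ is constant); for $\alpha=1/2$ both exponents equal $4$ and the logarithm survives, giving $\widetilde O(r^{1.5}/\varepsilon^4)$. Undoing the $\alpha<1/2$ reduction replaces $\alpha$ by $1-\alpha$ (with $r$ unchanged), turning the exponent $2-\alpha$ into $1+\alpha$ and symmetrizing the $\varepsilon$-exponent, which yields exactly the three cases in the statement.

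The main obstacle is the error bookkeeping. The Hadamard-test and QSVT errors $\varepsilon_H,\delta_1',\delta_2'$ are amplified by the reciprocal rescaling factor $16/\delta_1^{1-\alpha}=\Theta\rbra*{r^{1-\alpha}/\varepsilon^{(1-\alpha)/\alpha}}$ built into the final output, while at the same time the trace-norm bounds $\Abs*{\rho^\alpha}_1\le r^{1-\alpha}$ and $\Abs*{\sigma^{1-\alpha}}_1\le r^\alpha$ (the power-mean inequality, as used in \cref{prop:query-error-first,prop:query-error-second}) enter the polynomial-approximation error terms. Balancing these two opposing effects is what forces the choice $\delta_1\sim\varepsilon^{1/\alpha}/r$ --- precisely so that $\delta_1^\alpha\sim\varepsilon/r^\alpha$ cancels the $r^\alpha$ factor --- and pins down $\varepsilon_1,\varepsilon_2,\varepsilon_H,\delta_1',\delta_2'$ up to constants; verifying that this single choice simultaneously makes every term $O(\varepsilon)$ and keeps the query count at $\widetilde O\rbra*{r^{\min\{1+\alpha,2-\alpha\}}}$ is the delicate part of the proof, whereas the rest is routine assembly of the toolkit lemmas.
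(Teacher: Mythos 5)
Your proposal is correct and takes essentially the same route as the paper's own proof: it analyzes \cref{algo:affinity-query} with exactly the paper's parameter settings, invokes \cref{prop:query-error-total} for correctness, counts $O\rbra{\rbra{d_1+d_2}/\varepsilon_H}$ oracle queries, and uses $\dAa{\alpha}{\rho}{\sigma}{} = \dAa{1-\alpha}{\sigma}{\rho}{}$ to obtain the better $r$-exponent for $\alpha<1/2$. The only imprecision --- dropping the $\log\rbra{r/\eps}$ factor from the $d_1$ term when $\alpha\neq 1/2$, which you justify by inflating the $\eps$-exponent (strictly this handles $\log\rbra{1/\eps}$ but not $\log r$) --- is shared with the paper, whose corresponding step ``$d_1 = O\rbra{d_2}$'' in Case~2 has the same slack.
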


\begin{proof}
    We first omit the first three lines of the algorithm.
    In this case, for any $\alpha\in \interval[open]{0}{1}$,
    setting 
    \[
        \eps_1 =\delta_1 = \frac{\eps^{1/\alpha}}{16^{1/\alpha}r}, \quad \eps_2 = \frac{r^{\alpha-1}}{8}\eps, \quad \eps_H = \frac{\eps^{1/\alpha}}{8 \cdot 16^{1/\alpha}r^{1-\alpha}},
        \quad \delta_1' = \delta_2' = \frac{\eps^{1/\alpha}}{16 \cdot 16^{1/\alpha}r^{1-\alpha}}
    \] 
    in \cref{prop:query-error-total}, we know the error between the algorithm output and the desired affinity can be bounded by $\eps$.

    Now we consider the query complexity of the algorithm. 
    By our choice of parameters, we have
    \[
        d_1 = O\rbra*{\frac{1}{\delta_1}\log\rbra*{\frac{1}{\varepsilon_1}}} = O\rbra*{\frac{r}{\varepsilon^{1/\alpha}}\log \rbra*{\frac{r}{\varepsilon}}},
    \quad 
        d_2 = O\rbra*{\rbra*{\frac{1}{\eps_2}}^{1/(1-\alpha)}} =  O\rbra*{\frac{r}{\varepsilon^{1/(1-\alpha)}}}.
    \]
    
    We then discuss the complexity based on the value of $\alpha$.

    \textbf{Case 1: $\alpha \in \interval[open left]{0}{1/2}$.}
    In this case, we have $d_2 = O(d_1)$.
    Then, the query algorithm uses
    \[
        O\rbra*{\frac{r}{\eps^{1/\alpha}}\log \rbra*{\frac{r}{\eps}}}
    \]
    queries. Since we need to repeat $O(1/\eps_H)$ times, the total queries are 
    \[
        O\rbra*{\frac{r^{2-\alpha}}{\eps^{2/\alpha}}\log\rbra*{\frac{r}{\eps}}}.
    \]

    \textbf{Case 2: $\alpha \in \interval[open]{1/2}{1}$.}
    In this case, we have $d_1 = O(d_2)$.
    Then, the query algorithm uses
    \[
        O\rbra*{\frac{r}{\eps^{1/(1-\alpha)}}}
    \]
    queries. Since we need to repeat $O(1/\eps_H)$ times, the total queries are
    \[
        O\rbra*{\frac{r^{2-\alpha}}{\eps^{1/(1-\alpha)+ 1/\alpha}}}.
    \]
    
    Now, note that $\dAa{\alpha}{\rho}{\sigma}{} = \dAa{1-\alpha}{\sigma}{\rho}{}$
    Therefore, for $\alpha \in \interval[open]{0}{1/2}$, we also have an algorithm with query complexity
    \[
        O\rbra*{\frac{r^{1+\alpha}}{\eps^{1/\alpha+ 1/(1-\alpha)}}}.
    \]
    Similarly, for $\alpha \in \interval[open]{1/2}{1}$, we also have an algorithm with query complexity
    \[
        O\rbra*{\frac{r^{1+\alpha}}{\eps^{2/(1-\alpha)}}\log\rbra*{\frac{r}{\eps}}}.
    \]

    Combining the above discussions, the query complexity of the algorithm is
    \[
    \begin{cases}
        O\rbra*{\dfrac{r^{1+\alpha}}{\eps^{1/\alpha+ 1/(1-\alpha)}}}, & \text{ if } \alpha \in \interval[open]{0}{1/2}, \\
        O\rbra*{\dfrac{r^{1.5}}{\eps^{4}}\log\rbra*{\dfrac{r}{\eps}}}, & \text{ if } \alpha = 1/2, \\
        O\rbra*{\dfrac{r^{2-\alpha}}{\eps^{1/(1-\alpha)+ 1/\alpha}}}, & \text{ if } \alpha \in \interval[open]{1/2}{1}.
    \end{cases}
    \]
    These yield the proof. 
\end{proof}

Our algorithm \cref{algo:affinity-query} can be applied to estimating Tsallis relative entropy and Hellinger distance of quantum states.

\begin{proof}[Proof of \cref{thm:Tsallis-relative-query}]
    We notice that
    $\dTsa{\alpha}{\rho}{\sigma}{} = \frac{1}{1-\alpha} \rbra{1-\dAa{\alpha}{\rho}{\sigma}{}}$.
    Therefore, 
    to obtain an estimate of $\dTsa{\alpha}{\rho}{\sigma}{}$ within additive error
    $\varepsilon$, 
    it suffices to estimate $\dAa{\alpha}{\rho}{\sigma}{}$ within
    $\rbra{1-\alpha}\eps$ error. 
    The claim follows from using the algorithm $\mathsf{AffinityEstQ}_{\alpha}\rbra{\calO_\rho, \calO_\sigma, r, \rbra{1-\alpha}\varepsilon}$
    and applying \cref{thm:query-complexity-of-general-affinity}.
\end{proof}

\subsection{Query complexity with weaker rank conditions}

In this part, we consider the query complexity of estimating the Tsallis entropy when we can bound the rank of 
only one density operator.

\subsubsection{Given the rank of \texorpdfstring{$\rho$}{}}

We first consider the case of estimating $\dAa{\alpha}{\rho}{\sigma}{} = \tr \rbra{\rho^{\alpha}\sigma^{1-\alpha}}$ when the rank of $\rho$ is bounded.

Note that \cref{prop:query-error-single,prop:query-error-second} still holds in our case. For readability, we rewrite \cref{prop:query-error-second} as follows.
\begin{proposition}[Restatement of \cref{prop:query-error-second}]\label{prop:query-error-second-rank-rho}
    Let $\alpha\in \interval[open]{0}{1}$ be a constant. 
    For any density operators $\rho, \sigma\in \mathcal{D}(\mathcal{H})$, positive real numbers $\varepsilon_1, \delta_1, \eps_2 \in \interval[open]{0}{1}$, we have
    \begin{equation*}
        \abs*{  \tr\rbra*{ \frac{\delta_1^{1-\alpha}}{2}\rho^{\alpha}p_2\rbra*{\sigma}}- \tr \rbra*{\frac{\delta_1^{1-\alpha}}{4}\rho^{\alpha}\sigma^{1-\alpha}} } \le \frac{\delta_1^{1-\alpha}}{2} r_{\rho}^{1-\alpha} \varepsilon_2,
    \end{equation*}
    where $r_{\rho}\ge\rank(\rho)$, and $p_2\coloneqq p_{0, 1-\alpha, \varepsilon_2 , +}$ is the polynomial specified in \cref{lem:poly_approx_pos}.
\end{proposition}

\begin{proposition}\label{prop:query-error-first-rank-rho}
    Let $\alpha \in \interval[open]{0}{1}$ be a constant.
    For any density operators $\rho, \sigma \in \calD\rbra{\calH}$, positive real numbers $\eps_1, \delta_1, \eps_2 \in \interval[open]{0}{1}$, we have
    \[
        \abs*{\tr\rbra*{\rho p_1\rbra*{\rho} p_2\rbra*{\sigma}} 
        - \tr\rbra*{\rho  \frac{\delta_1^{1-\alpha}}{2}\rho^{\alpha-1} p_2\rbra*{\sigma}}} 
        \leq \rbra*{\eps_2+ \frac{1}{2}} \rbra*{\frac{3}{2}\delta_1+ \eps_1} r_{\rho},
    \]
    where $r_{\rho}\ge\rank(\rho)$, $p_1\coloneqq p_{1-\alpha, \varepsilon_1, \delta_1, -}$ is the polynomial specified in \cref{lem:poly_approx_neg}, and $p_2\coloneqq p_{0, 1-\alpha, \varepsilon_2 , +}$ is the polynomial specified in \cref{lem:poly_approx_pos}.
\end{proposition}

\begin{proof}
    By our choice of $p_2$, we know
    \[
        \Abs*{p_2\rbra*{\sigma}-\frac{1}{2}\sigma^{1-\alpha}} \leq \eps_2.
    \]
    Note that
    \[
        \Abs*{\frac{1}{2}\sigma^{1-\alpha}} \leq  \frac{1}{2},
    \] 
    By the triangle inequality, we can get 
    \[
        \Abs*{p_2\rbra*{\sigma}} \le  \Abs*{p_2\rbra*{\sigma}-\frac{1}{2}\sigma^{1-\alpha}} + \Abs*{\frac{1}{2}\sigma^{1-\alpha}} \leq \eps_2+ \frac{1}{2}.
    \]

    Now we have
    \[
    \begin{aligned}
    \abs*{\tr\rbra*{\rho p_1\rbra*{\rho} p_2\rbra*{\sigma}} - \tr\rbra*{\rho \frac{\delta_1^{1-\alpha}}{2}\rho^{\alpha-1}  p_2\rbra*{\sigma}} }
    & = \tr\rbra*{p_2\rbra*{\sigma} \rbra*{\rho p_1\rbra*{\rho}-
    \rho \frac{\delta_1^{1-\alpha}}{2}\rho^{\alpha-1} }}\\
    & \leq \Abs*{p_2\rbra*{\sigma}}  \Abs*{\rho p_1\rbra*{\rho} - \frac{\delta_1^{1-\alpha}}{2}\rho^{\alpha} }_1 \\
    & \leq \rbra*{\eps_2+ \frac{1}{2}} \rbra*{\frac{3}{2}\delta_1+ \eps_1}r_{\rho},
    \end{aligned}
    \]
    where the second line is obtained by the matrix H\"older inequality (\cref{fact:holder}), and the fourth line is obtained by applying \cref{prop:query-error-single}.
\end{proof}

\begin{proposition}\label{prop:query-error-total-rank-rho}
    Let $X$, $\varepsilon_H$, $\eps_1$, $\eps_2$, $\delta_1$, $\delta_1'$, $\delta_2'$ be the parameters in \cref{algo:affinity-query}.
    If $\abs{X - \Prob\sbra{x = 0}} \leq \varepsilon_H$, then
    \[
        \abs*{\frac{16}{\delta_1^{1-\alpha}}\rbra*{2X-1} - \dAa{\alpha}{\rho}{\sigma}{}} 
        \leq \frac{16}{\delta_1^{1-\alpha}} \rbra*{2\varepsilon_H + \delta_1' + \delta_2'} +r_{\rho} \rbra*{\eps_2+ \frac{1}{2}} \rbra*{6\delta_1^{\alpha}+ \frac{4\varepsilon_1}{\delta_1^{1-\alpha}}} + 2 r_{\rho}^{1-\alpha} \varepsilon_2,
    \]
     where $r_{\rho}\ge \rank(\rho)$.
\end{proposition}

\begin{proof}
       Suppose $\abs{X - \Prob\sbra{x = 0}} \leq \varepsilon_H$ and $U_{p_1(A)p_2(B)}$ is a $\rbra{1, 2n+n_\rho+n_\sigma+4, \delta_1'+\delta_2'}$-block-encoding of $\frac{1}{4}p_1(A)p_2(B)$,  in the same reasoning as in~\Cref{prop:query-error-total}, we have we have
    \[
        \abs*{ \rbra*{2X-1} - \Re\rbra*{\tr\rbra*{\bra{0}_{2n+n_\rho+n_\sigma+4}U_{p_1\rbra{A}p_2\rbra{B}}\ket{0}_{2n+n_\rho+n_\sigma+4} \rho}} } \leq 2\varepsilon_H ,
    \]
    which gives
    \[
        \abs*{ 4\rbra*{2X-1} - \tr\rbra*{\rho p_1\rbra*{\rho} p_2\rbra*{\sigma}} } \leq 8\varepsilon_H + 4\delta_1' + 4\delta_2'.
    \]

    By \cref{prop:query-error-first-rank-rho,prop:query-error-second-rank-rho}, we have
    \[
      \abs*{\tr\rbra*{\rho p_1\rbra*{\rho} p_2\rbra*{\sigma}} 
    - \tr \rbra*{\frac{\delta_1^{1-\alpha}}{4}\rho^{\alpha}\sigma^{1-\alpha}} }\le \rbra*{r_{\rho}\varepsilon_2+ \frac{1}{2} r_{\rho}} \rbra*{\frac{3}{2}\delta_1+ \varepsilon_1} + \frac{\delta_1^{1-\alpha}}{2} r_{\rho}^{1-\alpha} \varepsilon_2.
    \]
    Therefore, the result follows from the triangle inequality.
\end{proof}

\begin{theorem}[Query upper bound for estimating quantum Tsallis relative entropy with rank bounds on $\rho$] \label{thm:query-complexity-of-general-affinity-rank-rho}
  Let $\alpha \in \interval[open]{0}{1}$ be a constant.
    There is a quantum algorithm that, for any $\eps \in \interval[open]{0}{1}$, given query access to density operators $\rho, \sigma \in \calD(\calH)$ with the rank of $\rho$ upper bounded by $r$, with probability at least $2/3$, estimating $\dTsa{\alpha}{\rho}{\sigma}{}$ to within additive error $\eps$, using 
    \[
        O\rbra*{\frac{r^{\frac{2}{\alpha}-1}}{\varepsilon^{\frac{2}{\alpha}}}\log \rbra*{\frac{r}{\varepsilon}} + \frac{r^{\frac{1}{\alpha}}}{\varepsilon^{\frac{1}{\alpha}+ \frac{1}{1-\alpha}}}}
    \]
    queries to $\calO_{\rho}$ and $\calO_{\sigma}$.
\end{theorem}

\begin{proof}
    We notice that
    $\dTsa{\alpha}{\rho}{\sigma}{} = \frac{1}{1-\alpha} \rbra{1-\dAa{\alpha}{\rho}{\sigma}{}}$.
    Therefore, 
    to obtain an estimate of $\dTsa{\alpha}{\rho}{\sigma}{}$ within additive error
    $\varepsilon$, 
    it suffices to estimate $\dAa{\alpha}{\rho}{\sigma}{}$ to within
    $\rbra{1-\alpha}\eps$ error. 
    We now show how to 
 estimate $\dAa{\alpha}{\rho}{\sigma}{}$ to within
    $\eps$ error.
    
  For any $\alpha\in \interval[open]{0}{1}$, setting 
    \[
        \eps_1 =\delta_1 = \frac{\eps^{1/\alpha}}{40^{1/\alpha}r^{1/\alpha}}, \quad \eps_2 = \frac{\eps}{8r^{1-\alpha}}, \quad \eps_H = \frac{\eps^{1/\alpha}}{8 \cdot 40^{1/\alpha}r^{(1-\alpha)/\alpha}},
        \quad \delta_1' = \delta_2' = \frac{\eps^{1/\alpha}}{16 \cdot 40^{1/\alpha}r^{(1-\alpha)/\alpha}}
    \]
    in \cref{prop:query-error-total-rank-rho}, 
    we have
    \[
     \abs*{\frac{16}{\delta_1^{1-\alpha}}\rbra{1-2X} - \dAa{\alpha}{\rho}{\sigma}{}{}} \le \varepsilon.
    \]

    Now we consider the query complexity of the algorithm. 
    By our choice of parameters, we have
    \[
        d_1 = O\rbra*{\frac{1}{\delta_1}\log\rbra*{\frac{1}{\varepsilon_1}}} = O\rbra*{\frac{r^{1/\alpha}}{\varepsilon^{1/\alpha}}\log \rbra*{\frac{r}{\varepsilon}}},
    \quad 
        d_2 = O\rbra*{\frac{1}{\eps_2^{1/(1-\alpha)}}} =  O\rbra*{\frac{r}{\varepsilon^{1/(1-\alpha)}}}.
    \]
Since we need to repeat $O(1/\eps_H)$ times, the total number of queries is
\[
O\rbra*{\frac{d_1+d_2}{\varepsilon_H}} = O\rbra*{\frac{r^{2/\alpha-1}}{\varepsilon^{2/\alpha}}\log \rbra*{\frac{r}{\varepsilon}} + \frac{r^{1/\alpha}}{\varepsilon^{1/\alpha+ 1/(1-\alpha)}}}. 
\]
\end{proof}

\subsubsection{Given the rank of \texorpdfstring{$\sigma$}{}}

We then consider the case when the rank of $\sigma$ is bounded.

Note that \cref{prop:query-error-single,prop:query-error-first} still holds in our case. For readability, we rewrite \cref{prop:query-error-first} as follows.

\begin{proposition}[Restatement of \cref{prop:query-error-first}] \label{prop:query-error-first-rank-sigma}
    Let $\alpha \in \interval[open]{0}{1}$ be a constant.
    For any density operators $\rho, \sigma \in \calD\rbra{\calH}$, positive real numbers $\eps_1, \delta_1, \eps_2 \in \interval[open]{0}{1}$, we have
    \[
        \abs*{\tr\rbra*{\rho p_1\rbra*{\rho} p_2\rbra*{\sigma}} 
        - \tr\rbra*{\rho  \frac{\delta_1^{1-\alpha}}{2}\rho^{\alpha-1} p_2\rbra*{\sigma}}} 
        \leq \rbra*{r_{\sigma}\eps_2+ \frac{r_{\sigma}^{\alpha}}{2}} \rbra*{\frac{3}{2}\delta_1+ \eps_1},
    \]
    where $r_{\sigma}\ge \rank(\sigma)$, $p_1\coloneqq p_{1-\alpha, \varepsilon_1, \delta_1, -}$ is the polynomial specified in \cref{lem:poly_approx_neg}, and $p_2\coloneqq p_{0, 1-\alpha, \varepsilon_2 , +}$ is the polynomial specified in \cref{lem:poly_approx_pos}.
\end{proposition}

If the rank of $\rho$ can not be bounded, we still have the following slightly weaker bound.

\begin{proposition}\label{prop:query-error-second-rank-sigma}
   Let $\alpha\in \interval[open]{0}{1}$ be a constant. 
   For any density operators $\rho, \sigma\in \mathcal{D}(\mathcal{H})$, 
   positive real numbers $\varepsilon_1, \delta_1, \varepsilon_2 \in 
   \interval[open]{0}{1}$,
   we have
    \begin{equation*}
        \abs*{  \tr\rbra*{ \frac{\delta_1^{1-\alpha}}{2}\rho^{\alpha}p_2\rbra*{\sigma}}- \tr \rbra*{\frac{\delta_1^{1-\alpha}}{4}\rho^{\alpha}\sigma^{1-\alpha}} } \leq  \frac{\delta_1^{1-\alpha}}{2} r_{\sigma} \eps_2,
    \end{equation*}
    where $r_{\sigma}\ge  \rank(\sigma)$, $p_1\coloneqq p_{1-\alpha, \varepsilon_1, \delta_1, -}$ is the polynomial specified in \cref{lem:poly_approx_neg}, and $p_2\coloneqq p_{0, 1-\alpha, \varepsilon_2 , +}$ is the polynomial specified in \cref{lem:poly_approx_pos}.
\end{proposition}

\begin{proof}
    First, it is direct to see that
    \[
        \Abs*{\rho^{\alpha}} \le 1.
    \]
    By our choice of $p_2$, we have
    \[
        \Abs*{p_2\rbra*{\sigma} -  \frac{1}{2}\sigma^{1-\alpha}}_1 \le r_{\sigma}\varepsilon_2.
    \]
    Therefore, we deduce
    \[
    \begin{aligned}
       \abs*{  \tr\rbra*{\frac{\delta_1^{1-\alpha}}{2} \rho^{\alpha}p_2\rbra*{\sigma}}- \tr \rbra*{\frac{\delta_1^{1-\alpha}}{4}\rho^{\alpha}\sigma^{1-\alpha}} }
    & \le \frac{\delta_1^{1-\alpha}}{2} \tr \rbra*{\rho^{\alpha} \rbra*{p_2\rbra*{\sigma} -  \frac{1}{2}\sigma^{1-\alpha}}} \\
    &\le  \frac{\delta_1^{1-\alpha}}{2^{\alpha}} \Abs*{\rho^{\alpha}}
    \Abs*{p_2\rbra*{\sigma} -  \frac{1}{2}\sigma^{1-\alpha}}_1 \\
      &\le  \frac{\delta_1^{1-\alpha}}{2} r_{\sigma} \varepsilon_2,
    \end{aligned}
    \]
    where the second line is obtained by the matrix H\"{o}lder inequality (\cref{fact:holder}).
\end{proof}

\begin{proposition}\label{prop:query-error-total-rank-sigma}
    Let $X$, $\varepsilon_H$, $\eps_1$, $\eps_2$, $\delta_1$, $\delta_1'$, $\delta_2'$ be the parameters in \cref{algo:affinity-query}.
    If $\abs{X - \Prob\sbra{x = 0}} \leq \varepsilon_H$, then
    \[
        \abs*{\frac{16}{\delta_1^{1-\alpha}}\rbra*{2X-1} - \dAa{\alpha}{\rho}{\sigma}{}} 
        \leq \frac{16}{\delta_1^{1-\alpha}} \rbra*{2\varepsilon_H + \delta_1' + \delta_2'} + \rbra*{r_{\sigma}\eps_2+ \frac{r_{\sigma}^{\alpha}}{2}} \rbra*{6\delta_1^{\alpha}+ \frac{4\varepsilon_1}{\delta_1^{1-\alpha}}} + 2 r_{\sigma} \varepsilon_2,
    \]
      where $r_{\sigma}\ge  \rank(\sigma)$.
\end{proposition}
\begin{proof}
    Suppose $\abs{X - \Prob\sbra{x = 0}} \leq \varepsilon_H$ and $U_{p_1(A)p_2(B)}$ is a $\rbra{1, 2n+n_\rho+n_\sigma+4, \delta_1'+\delta_2'}$-block-encoding of $\frac{1}{4}p_1(A)p_2(B)$,  in the same reasoning as in~\Cref{prop:query-error-total}, we have we have
    \[
        \abs*{ \rbra*{2X-1} - \Re\rbra*{\tr\rbra*{\bra{0}_{2n+n_\rho+n_\sigma+4}U_{p_1\rbra{A}p_2\rbra{B}}\ket{0}_{2n+n_\rho+n_\sigma+4} \rho}} } \leq 2\varepsilon_H ,
    \]
    which gives
    \[
        \abs*{ 4\rbra*{2X-1} - \tr\rbra*{\rho p_1\rbra*{\rho} p_2\rbra*{\sigma}} } \leq 8\varepsilon_H + 4\delta_1' + 4\delta_2'.
    \]
    By \cref{prop:query-error-first-rank-sigma,prop:query-error-second-rank-sigma}, we have
    \[
    \abs*{\tr\rbra*{\rho p_1\rbra*{\rho} p_2\rbra*{\sigma}} 
    - \tr \rbra*{\frac{\delta_1^{1-\alpha}}{4}\rho^{\alpha}\sigma^{1-\alpha}} }\leq \rbra*{r_{\sigma}\eps_2+ \frac{r_{\sigma}^{\alpha}}{2}} \rbra*{\frac{3}{2}\delta_1+ \eps_1} + \frac{\delta_1^{1-\alpha}}{2} r_{\sigma} \eps_2.
    \]
    Therefore, the result follows from the triangle inequality.
\end{proof}

\begin{theorem}[Query upper bound for estimating quantum Tsallis relative entropy with rank bounds on $\sigma$] \label{thm:query-complexity-of-general-affinity-rank-sigma}
 Let $\alpha \in \interval[open]{0}{1}$ be a constant.
    There is a quantum algorithm that, for any $\eps \in \interval[open]{0}{1}$, given query access to density operators $\rho, \sigma \in \calD(\calH)$ with the rank of $\sigma$ upper bounded by $r$, with probability at least $2/3$, estimating $\dTsa{\alpha}{\rho}{\sigma}{}$ to within additive error $\eps$, using 
    \[
        O\rbra*{\frac{r^{2-\alpha}}{\varepsilon^{\frac{2}{\alpha}}}\log \rbra*{\frac{r}{\varepsilon}} + \frac{r^{(1-\alpha)+\frac{1}{1-\alpha}}}{\varepsilon^{\frac{1}{\alpha}+\frac{1}{1-\alpha}}}}
    \]
    queries to $\calO_{\rho}$ and $\calO_{\sigma}$.
\end{theorem}

\begin{proof}
We notice that
    $\dTsa{\alpha}{\rho}{\sigma}{} = \frac{1}{1-\alpha} \rbra{1-\dAa{\alpha}{\rho}{\sigma}{}}$.
    Therefore, 
    to obtain an estimate of $\dTsa{\alpha}{\rho}{\sigma}{}$ within additive error
    $\varepsilon$, 
    it suffices to estimate $\dAa{\alpha}{\rho}{\sigma}{}$ to within
    $\rbra{1-\alpha}\eps$ error. 
    We now show how to 
 estimate $\dAa{\alpha}{\rho}{\sigma}{}$ to within
    $\eps$ error.

    For any $\alpha\in \interval[open]{0}{1}$,  
  setting 
    \[
        \eps_1 =\delta_1 = \frac{\eps^{1/\alpha}}{40^{1/\alpha}r}, \quad \eps_2 = \frac{\eps}{40r}, \quad \eps_H = \frac{\eps^{1/\alpha}}{8 \cdot 40^{1/\alpha}r^{1-\alpha}},
        \quad \delta_1' = \delta_2' = \frac{\eps^{1/\alpha}}{16 \cdot 40^{1/\alpha}r^{1-\alpha}}
    \]
    in \cref{prop:query-error-total-rank-sigma},
    we have
    \[
     \abs*{\frac{16}{\delta_1^{1-\alpha}}\rbra{1-2X} - \dAa{\alpha}{\rho}{\sigma}{}{}} \le \varepsilon.
    \]

    Now we consider the query complexity of the algorithm. 
    By our choice of parameters, we have
    \[
        d_1 = O\rbra*{\frac{1}{\delta_1}\log\rbra*{\frac{1}{\varepsilon_1}}} = O\rbra*{\frac{r}{\varepsilon^{1/\alpha}}\log \rbra*{\frac{r}{\varepsilon}}},
    \quad 
        d_2 = O\rbra*{\frac{1}{\eps_2^{1/(1-\alpha)}}} =  O\rbra*{\frac{r^{1/(1-\alpha)}}{\varepsilon^{1/(1-\alpha)}}}.
    \]
Since we need to repeat $O(1/\eps_H)$ times, the total number of queries is
\[
O\rbra*{\frac{d_1+d_2}{\varepsilon_H}} = O\rbra*{\frac{r^{2-\alpha}}{\varepsilon^{2/\alpha}}\log \rbra*{\frac{r}{\varepsilon}} + \frac{r^{(1-\alpha)+1/(1-\alpha)}}{\varepsilon^{1/\alpha+1/(1-\alpha)}}}. 
\]
\end{proof}

\subsubsection{Given the smaller rank of \texorpdfstring{$\rho$}{} and \texorpdfstring{$\sigma$}{}}
In the following, we consider the case when the minimum of rank of $\rho$ and rank of $\sigma$ has an upper bound $r$.
Note that~\Cref{prop:query-error-single} still holds in this case.

\begin{proposition}\label{prop:error-operator-norm}
      Let $\alpha \in \interval[open]{0}{1}$ be a constant.
    For any density operators $\rho, \sigma \in \calD\rbra{\calH}$, positive real numbers $\eps_1, \delta_1, \eps_2 \in \interval[open]{0}{1}$, we have
\begin{equation*}
    \Abs*{\rho p_1\rbra*{\rho} p_2\rbra*{\sigma} - \frac{\delta_1^{1-\alpha}}{4}\rho^{\alpha}\sigma^{1-\alpha} } \le \frac{3}{2}\delta_1+ \eps_1 + \eps_2,
\end{equation*}
    where $p_1\coloneqq p_{1-\alpha, \varepsilon_1, \delta_1, -}$ is the polynomial specified in \cref{lem:poly_approx_neg}, and $p_2\coloneqq p_{0, 1-\alpha, \varepsilon_2 , +}$ is the polynomial specified in \cref{lem:poly_approx_pos}.
\end{proposition}

\begin{proof}
    By~\Cref{prop:query-error-single}, we have
        \[
        \Abs*{\rho p_1\rbra*{\rho} - \frac{\delta_1^{1-\alpha}}{2}\rho^{\alpha} } \leq \frac{3}{2}\delta_1+ \eps_1.
    \]
     By our choice of $p_2$, we have
    \[
        \Abs*{p_2\rbra*{\sigma} -  \frac{1}{2}\sigma^{1-\alpha}} \leq \eps_2.
    \]
   Note that 
    $ \Abs*{p_2\rbra*{\sigma}} \leq 1$.
   Therefore, we have
   \[
   \begin{aligned}
         \Abs*{\rho p_1\rbra*{\rho} p_2\rbra*{\sigma} - \frac{\delta_1^{1-\alpha}}{4}\rho^{\alpha}\sigma^{1-\alpha} } & \le 
 \Abs*{\rho p_1\rbra*{\rho} p_2\rbra*{\sigma} - \frac{1}{2}\rho p_1\rbra*{\rho}\sigma^{1-\alpha} } + \Abs*{\frac{1}{2}\rho p_1\rbra*{\rho}\sigma^{1-\alpha} - \frac{\delta_1^{1-\alpha}}{4}\rho^{\alpha}\sigma^{1-\alpha} }\\
 &\le 
 \Abs*{\rho p_1\rbra*{\rho}}\Abs*{ p_2\rbra*{\sigma} - \frac{1}{2}\sigma^{1-\alpha} } + \Abs*{\frac{1}{2}\sigma^{1-\alpha}}\Abs*{\rho p_1\rbra*{\rho} - \frac{\delta_1^{1-\alpha}}{2}\rho^{\alpha}} \\
 &\le 
\frac{3}{2}\delta_1+ \eps_1 + \eps_2.
   \end{aligned}
   \]
\end{proof}

\begin{proposition}\label{prop:query-error-total-min-rank}
    Let $X$, $\varepsilon_H$, $\eps_1$, $\eps_2$, $\delta_1$, $\delta_1'$, $\delta_2'$ be the parameters in \cref{algo:affinity-query}.
    If $\abs{X - \Prob\sbra{x = 0}} \leq \varepsilon_H$, then
    \[
        \abs*{\frac{16}{\delta_1^{1-\alpha}}\rbra*{2X-1} - \dAa{\alpha}{\rho}{\sigma}{}} 
        \leq \frac{2}{\delta_1^{1-\alpha}}\rbra*{16\varepsilon_H + 8\delta_1' + 8\delta_2'+ 3r\delta_1+ 2r\eps_1 + 2r\eps_2}.
    \]
\end{proposition}

\begin{proof}
      Suppose $\abs{X - \Prob\sbra{x = 0}} \leq \varepsilon_H$ and $U_{p_1(A)p_2(B)}$ is a $\rbra{1, 2n+n_\rho+n_\sigma+4, \delta_1'+\delta_2'}$-block-encoding of $\frac{1}{4}p_1(A)p_2(B)$, we have
    \[
        \abs*{ \rbra*{2X-1} - \Re\rbra*{\tr\rbra*{\bra{0}_{2n+n_\rho+n_\sigma+4}U_{p_1\rbra{A}p_2\rbra{B}}\ket{0}_{2n+n_\rho+n_\sigma+4} \rho}} } \leq 2\varepsilon_H ,
    \]
    which gives
    \[
        \abs*{ 4\rbra*{2X-1} - \tr\rbra*{\rho p_1\rbra*{\rho} p_2\rbra*{\sigma}} } \leq 8\varepsilon_H + 4\delta_1' + 4\delta_2'.
    \]
    By \cref{lem:tr-dif-bound,prop:error-operator-norm}, we have
    \[
    \abs*{\tr\rbra*{\rho p_1\rbra*{\rho} p_2\rbra*{\sigma}} 
    - \tr \rbra*{\frac{\delta_1^{1-\alpha}}{4}\rho^{\alpha}\sigma^{1-\alpha}} }\leq r\rbra*{\frac{3}{2}\delta_1+ \eps_1 + \eps_2},
    \]
    where $r = \min\cbra{\rank(\rho), \rank(\sigma)}$.

    Therefore, the result follows from the triangle inequality.
\end{proof}

\begin{theorem}[Query upper bound for estimating quantum Tsallis relative entropy with minimum rank bounded]\label{thm:query-complexity-of-general-affinity-min-rank}
    Let $\alpha \in \interval[open]{0}{1}$ be a constant.
    There is a quantum algorithm that, for any $\eps \in \interval[open]{0}{1}$, given query access to density operators $\rho, \sigma \in \calD(\calH)$ with  $\min\{\rank\rbra{\rho},\rank\rbra{\sigma}\}\le r$, with probability at least $2/3$, estimating $\dTsa{\alpha}{\rho}{\sigma}{}$ to within additive error $\eps$, using 
    \[
         \begin{cases}
             O\rbra*{\frac{r^{\frac{1}{\alpha}+\frac{2}{1-\alpha}-1}}{\varepsilon^{\frac{1}{\alpha}+\frac{2}{1-\alpha}}}}, \text{ if } 0< \alpha \le \frac{1}{2}, \\
             O\rbra*{\frac{r^{\frac{2}{\alpha}+\frac{1}{1-\alpha}-1}}{\varepsilon^{\frac{2}{\alpha}+\frac{1}{1-\alpha}}}},  \text{ if } \frac{1}{2}< \alpha < 1
         \end{cases}
    \]
    queries to $\calO_{\rho}$ and $\calO_{\sigma}$.
\end{theorem}
\begin{proof}
   We notice that
    $\dTsa{\alpha}{\rho}{\sigma}{} = \frac{1}{1-\alpha} \rbra{1-\dAa{\alpha}{\rho}{\sigma}{}}$.
    Therefore, 
    to obtain an estimate of $\dTsa{\alpha}{\rho}{\sigma}{}$ within additive error
    $\varepsilon$, 
    it suffices to estimate $\dAa{\alpha}{\rho}{\sigma}{}$ to within
    $\rbra{1-\alpha}\eps$ error. 
    We now show how to 
 estimate $\dAa{\alpha}{\rho}{\sigma}{}$ to within
    $\eps$ error.

      For any $\alpha\in \interval[open]{0}{1}$, setting 
  setting 
    \[
        \varepsilon_1 = \varepsilon_2= \delta_1 = \frac{\eps^{1/\alpha}}{200^{1/\alpha}r^{1/\alpha}}, \quad \eps_H = \frac{\eps^{1/\alpha}}{8 \cdot 200^{(1-\alpha)/\alpha}r^{(1-\alpha)/\alpha}},
        \quad \delta_1' = \delta_2' = \frac{\eps^{1/\alpha}}{16 \cdot 200^{(1-\alpha)/\alpha}r^{(1-\alpha)/\alpha}}
    \]
    in \cref{prop:query-error-total-min-rank}, 
    we have
    \[
     \abs*{\frac{16}{\delta_1^{1-\alpha}}\rbra{1-2X} - \dAa{\alpha}{\rho}{\sigma}{}{}} \le \varepsilon.
    \]
    
    Now we consider the sample complexity of the algorithm. 
    By our choice of parameters, we have
    \[
        d_1 = O\rbra*{\frac{1}{\delta_1}\log\rbra*{\frac{1}{\varepsilon_1}}} = O\rbra*{\frac{r^{1/\alpha}}{\varepsilon^{1/\alpha}}\log \rbra*{\frac{r}{\varepsilon}}},
    \quad 
        d_2 = O\rbra*{\frac{1}{\eps_2^{1/(1-\alpha)}}} =  O\rbra*{\frac{r^{1/\alpha(1-\alpha)}}{\varepsilon^{1/\alpha(1-\alpha)}}}.
    \]
Since we need to repeat $O(1/\eps_H)$ times, the total number of queries is
\[
O\rbra*{\frac{r^{\rbra*{(1-\alpha)^2+1}/\alpha(1-\alpha)}}{\varepsilon^{(2-\alpha)/\alpha(1-\alpha)}}} = O\rbra*{\frac{r^{\frac{2}{\alpha}+\frac{1}{1-\alpha}-1}}{\varepsilon^{\frac{2}{\alpha}+\frac{1}{1-\alpha}}}}. 
\]
Note that $\dAa{\alpha}{\rho}{\sigma}{} = \dAa{1-\alpha}{\sigma}{\rho}{}$, this yields an algorithm with query complexity
\[
O\rbra*{\frac{r^{\frac{1}{\alpha}+\frac{2}{1-\alpha}-1}}{\varepsilon^{\frac{1}{\alpha}+\frac{2}{1-\alpha}}}}, 
\]
which is no larger for $0< \alpha \le \frac{1}{2}$.
\end{proof}

\subsection{Sample complexity upper bounds}\label{sec:sample_up}

The sample complexity upper bound for estimating the quantum Tsallis relative entropy is stated as follows.

\begin{theorem}[Sample upper bound for estimating quantum Tsallis relative entropy]\label{thm:Tsallis-relative-sample}
    Let $\alpha\in \interval[open]{0}{1}$ be a constant.
    There is a quantum algorithm that, for any $\varepsilon \in \interval[open]{0}{1}$, given sample access to quantum states $\rho, \sigma \in \mathcal{D}(\mathcal{H})$ of rank at most $r$, with probability at least $2/3$, estimates $\dTsa{\alpha}{\rho}{\sigma}{}$ to within additive error $\eps$, using   
    \[
    \begin{cases}
       O\rbra*{\frac{r^{2+3\alpha}}{\varepsilon^{\frac{2}{\alpha}+\frac{3}{1-\alpha} }}\log^{2}\rbra*{\frac{r}{\varepsilon}}}, & \text{ if }
           \alpha \in \interval[open]{0}{1/2}, \\
            O\rbra*{\frac{r^{3.5}}{\varepsilon^{10}}\log^{4}\rbra*{\frac{r}{\varepsilon}}}, & \text{ if }
           \alpha = 1/2, \\
            O\rbra*{\frac{r^{5-3\alpha}}{\varepsilon^{\frac{3}{\alpha}+\frac{2}{1-\alpha} }}\log^{2}\rbra*{\frac{r}{\varepsilon}}}, & \text{ if } \alpha \in \interval[open]{1/2}{1}.
    \end{cases}
    \]
    samples of $\rho$ and $\sigma$.
\end{theorem}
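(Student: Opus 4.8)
The plan is to \emph{samplize} the query algorithm behind \cref{thm:query-complexity-of-general-affinity}, with one unavoidable detour. Recall from the proof of \cref{cor:Tsallis-relative-query} that, since $\dTsa{\alpha}{\rho}{\sigma}{} = \frac{1}{1-\alpha}\rbra*{1-\dAa{\alpha}{\rho}{\sigma}{}}$ and $\alpha$ is constant, it suffices to estimate the affinity $\dAa{\alpha}{\rho}{\sigma}{}$ to additive error $(1-\alpha)\eps = \Theta\rbra*{\eps}$; and since $\dAa{\alpha}{\rho}{\sigma}{} = \dAa{1-\alpha}{\sigma}{\rho}{}$, we may assume $1/2 \le \alpha < 1$ throughout, handling $0 < \alpha < 1/2$ at the end by swapping the roles of $\rho$ and $\sigma$. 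The detour is that the samplizer of \cref{thm:multi-samplizer} only supplies block-encoding access (to $\rho/2$ and $\sigma/2$), whereas the Hadamard test at the core of \cref{algo:affinity-query} consumes an actual copy of $\rho$, and a copy of $\rho$ cannot be produced from a block-encoding of $\rho/2$ without a dimension-dependent overhead (the natural route, via a block-encoding of $\sqrt{\rho}$, post-selects with probability $2^{-\Theta(n)}$). So the copy of $\rho$ for the Hadamard test will be drawn directly from the sample oracle, outside the samplizer --- and, because the samplized block-encoding is then only a channel rather than a coherent unitary, this forces us to forgo the amplitude-estimation speedup.

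First I would isolate the purely block-encoding part of \cref{algo:affinity-query}: given $\rbra*{1,m,0}$-block-encodings of $\rho/2$ and $\sigma/2$, apply QSVT (\cref{lem:qsvt}) with the negative- and positive-power approximations of \cref{lem:poly_approx_neg,lem:poly_approx_pos}, now taken on the rescaled domains $\sbra*{\delta_1/2,1/2}$ and $\sbra*{-1/2,1/2}$. This leaves the degrees $d_1 = O\rbra*{\frac{1}{\delta_1}\log\frac{1}{\eps_1}}$ and $d_2 = O\rbra*{\rbra*{1/\eps_2}^{1/(1-\alpha)}}$ unchanged up to constants, and --- after the $2^{1-\alpha}$-factors cancel --- yields approximations of $\frac{\delta_1^{1-\alpha}}{2}\rho^{\alpha-1}$ and of $c_\alpha \sigma^{1-\alpha}$ for an explicit constant $c_\alpha$; composing them via \cref{lem:product_BE_matrices} gives a (controlled) block-encoding $U_{\mathsf{Prod}}$ of a known scalar multiple of $\rho^{\alpha-1}\sigma^{1-\alpha}$, using $O\rbra*{d_1}$ queries to the $\rho/2$-oracle and $O\rbra*{d_2}$ queries to the $\sigma/2$-oracle. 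Applying \cref{thm:multi-samplizer} with precision $\delta_s$ to this query subroutine produces a quantum channel $\delta_s$-close in diamond norm to $U_{\mathsf{Prod}}$, costing $O\rbra*{\frac{d_1\rbra*{d_1+d_2}}{\delta_s}\log^2\frac{d_1+d_2}{\delta_s}}$ samples of $\rho$ and $O\rbra*{\frac{d_2\rbra*{d_1+d_2}}{\delta_s}\log^2\frac{d_1+d_2}{\delta_s}}$ samples of $\sigma$ per invocation.

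Next I would reassemble the estimator around this channel. Running the Hadamard test of \cref{lem:hadamard_test} with the samplized channel in place of $U_{\mathsf{Prod}}$ and a fresh directly-drawn copy of $\rho$ as the input state produces a bit whose bias matches, up to $O\rbra*{\delta_s}$, the quantity $\frac12+\frac12\tr\rbra*{\frac14 p_1\rbra*{\rho}p_2\rbra*{\sigma}\rho}$ estimated in \cref{algo:affinity-query} (this quantity is real, as $\rho p_1\rbra*{\rho}$ and $p_2\rbra*{\sigma}$ are Hermitian). Since the block-encoding is only a channel, it cannot be embedded inside amplitude amplification, so I would estimate this bias by the empirical mean of $N = O\rbra*{\eps_H^{-2}\log\rbra*{1/\eta}}$ independent shots; taking $\delta_s = \Theta\rbra*{\eps_H}$ makes the channel-approximation bias negligible against the $O\rbra*{1/\sqrt{N}}$ statistical fluctuation, so the empirical mean lands within $\eps_H$ of the ideal value with probability $1-\eta$. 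With $\eps_1,\eps_2,\delta_1,\delta_1',\delta_2',\eps_H$ set as in \cref{algo:affinity-query} (reading its $\eps$ as $(1-\alpha)\eps$), \cref{prop:query-error-single,prop:query-error-first,prop:query-error-second,prop:query-error-total} then apply essentially verbatim --- only the constants change because of the $\rho/2,\sigma/2$ rescaling --- to show that multiplying the empirical mean by $\Theta\rbra*{\delta_1^{\alpha-1}}$ yields an estimate of $\dAa{\alpha}{\rho}{\sigma}{}$ within $(1-\alpha)\eps$, hence of $\dTsa{\alpha}{\rho}{\sigma}{}$ within $\eps$, with probability at least $2/3$.

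Finally, for the sample complexity I would multiply the per-invocation counts above by $N = \widetilde{O}\rbra*{1/\eps_H^2}$ and substitute $\delta_s = \Theta\rbra*{\eps_H}$ together with the choices from \cref{algo:affinity-query}, namely $d_1 = \widetilde{O}\rbra*{r/\eps^{1/\alpha}}$, $d_2 = \widetilde{O}\rbra*{r/\eps^{1/(1-\alpha)}}$, and $1/\eps_H = \widetilde{O}\rbra*{r^{1-\alpha}/\eps^{1/\alpha}}$. For $1/2 \le \alpha < 1$ the $\sigma$-sample term $\widetilde{O}\rbra*{d_2\rbra*{d_1+d_2}/\eps_H^3}$ dominates and evaluates to $\widetilde{O}\rbra*{r^{5-3\alpha}/\eps^{2/(1-\alpha)+3/\alpha}}$; the affinity symmetry $\dAa{\alpha}{\rho}{\sigma}{} = \dAa{1-\alpha}{\sigma}{\rho}{}$ turns this into $\widetilde{O}\rbra*{r^{2+3\alpha}/\eps^{2/\alpha+3/(1-\alpha)}}$ for $0 < \alpha < 1/2$; and the case $\alpha = 1/2$ specializes to $\widetilde{O}\rbra*{r^{3.5}/\eps^{10}}$, with the logarithms (from the $d_1$-term, which is no longer absorbed when $1/\alpha = 1/(1-\alpha)$, and from \cref{thm:multi-samplizer}) collected into a $\log^4\rbra*{r/\eps}$ factor. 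I expect the main obstacle --- and the reason the sample exponents exceed the query ones --- to be exactly the block-encoding/copy mismatch: the direct-sample detour removes the quadratic amplitude-estimation speedup (so $\Theta\rbra*{1/\eps_H^2}$ Hadamard-test repetitions are needed rather than $\Theta\rbra*{1/\eps_H}$) and additionally pins the samplizer precision at $\delta_s = \Theta\rbra*{\eps_H}$, inflating the per-repetition cost by a further $1/\eps_H$; threading these two losses, along with the degree/normalization bookkeeping for the rescaled polynomials, cleanly through all three regimes of $\alpha$ is where the work concentrates.
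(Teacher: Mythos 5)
Your proposal follows essentially the same route as the paper's proof: samplize the QSVT-based block-encoding/Hadamard-test circuit for $\tr\rbra{\rho\, p_1 p_2}$ with the test's input copy of $\rho$ drawn directly from the sample oracle, replace amplitude estimation by $\Theta\rbra{1/\eps_H^2}$ repetitions with samplizer precision $\delta=\Theta\rbra{\eps_H}$, redo the error propositions for the rescaled $\rho/2,\sigma/2$, and use $\dAa{\alpha}{\rho}{\sigma}{}=\dAa{1-\alpha}{\sigma}{\rho}{}$ for $\alpha<1/2$; your parameter choices and sample-count bookkeeping reproduce the stated bounds. The only (immaterial) difference is that you samplize the controlled product block-encoding and wrap the Hadamard test around the resulting channel, whereas the paper samplizes the entire Hadamard-test unitary $U_{\textup{HT}}$.
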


Similarly to the query case, the crucial part of estimating quantum Tsallis relative entropy is the estimation of quantum affinity. The main difference is that here we use the samplizer to simulate the quantum query algorithm in \cref{thm:Tsallis-relative-query} by another quantum algorithm with sample access, albeit at the cost of obtaining only block-encodings
of $\rho/2$ and $\sigma/2$.

We first describe the algorithm as follows and formally state it in \cref{algo:affinity-sample}.

\begin{algorithm}[!htbp]
    \caption{$\mathsf{AffinityEstS}_{\alpha}(\rho, \sigma, \varepsilon_1, \varepsilon_2, \varepsilon_H, \delta, \delta_1, \delta_1', \delta_2')$}
    \label{algo:affinity-sample}
    \begin{algorithmic}[1]
        \Require Identical copies of quantum states $\rho$ and $\sigma$; parameters $\varepsilon_1, \varepsilon_2, \varepsilon_H, \delta, \delta_1, \delta_1', \delta_2' \in \interval[open]{0}{1}$.

        \Ensure An estimate of $\dAa{\alpha}{\rho}{\sigma}{}$. 

        \State Let $p_1\coloneqq p_{1-\alpha, \varepsilon_1, \delta_1, -}$ be the polynomial specified in \cref{lem:poly_approx_neg}, and $p_2\coloneqq p_{0, 1-\alpha, \varepsilon_2, +}$ be the polynomial specified in \cref{lem:poly_approx_pos}. 

        \State Let $U_{A}$ be a unitary operator that is a $\rbra{1, a, 0}$-block-encoding of $A$ and $U_B$ be a unitary operator that is a $\rbra{1, b, 0}$-block-encoding of $B$, where $A,B$ block-encode $\rho/2$, $\sigma/2$, respectively.
        
        \State Let $U_{p_1(A)} \gets \mathsf{EigenTrans} (U_A, p_1/2, \delta_1')$, and $U_{p_2(B)} \gets \mathsf{EigenTrans} (U_B, p_2/2, \delta_2')$ by \cref{lem:qsvt}.

        \State Let $U_{p_1(A)p_2(B)}\gets \mathsf{BEProduct}(U_{p_1(A)}, U_{p_2(B)})$ by \cref{lem:product_BE_matrices}.

        \State Let $U_{\textup{HT}}$ denote the part of quantum circuit of $ \mathsf{HadamardTest}(U_{p_1(A)p_2(B)}, \rho)$ without input and measurement by \cref{lem:hadamard_test}.

        \For{$i=1, 2, \dots, k=\Theta(1/\varepsilon_H^2)$}
            \State $X_i\gets$ the measurement outcome of the first qubit of $\mathsf{Samplize}_{\delta}\ave{U_{\textup{HT}}^{U_A, U_B}}[\rho, \sigma]\rbra{\ketbra{0}{0} \otimes \rho }$ in the computational basis by \Cref{thm:multi-samplizer}.
        \EndFor

        \State $X \gets \frac{1}{k}\sum_i X_i$.
        
        \State \Return $16\delta_1^{\alpha-1}(1-2X)$.
    \end{algorithmic}
\end{algorithm}

Let $U_{A}$ be a unitary operator that is a $\rbra{1, a, 0}$-block-encoding of $A$ and $U_B$ be a unitary operator that is a $\rbra{1, b, 0}$-block-encoding of $B$. 
Here, $A$ and $B$ are supposed to be $\rho/2$ and $\sigma/2$, respectively. 

\textbf{Step 1: Construct a block-encoding of $p_1(A)$ with $p_1(x) \approx \frac{\delta_1^{1-\alpha}}{2}x^{\alpha-1}$.}
Let $\varepsilon_1, \delta_1, \delta_1' \in \interval[open]{0}{1/2}$ be parameters to be determined.
By \cref{lem:poly_approx_neg}, there exists a polynomial $p_1(x)\in \mathbb{R}[x]$ of degree $d_1 = O(\frac{1}{\delta_1} \log (\frac{1}{\varepsilon_1}))$ satisfying
\[
    \abs*{p_1\rbra{x} - \frac{\delta_1^{1-\alpha}}{2} x^{\alpha-1}} \leq \varepsilon_1 \textup{ for } x \in \sbra{-1, -\delta_1} \cup \sbra{\delta_1, 1},
\]
and
\[
    \abs*{p_1\rbra{x}} \leq 1 \textup{ for } x \in \sbra{-1, 1}.
\]

By \cref{lem:qsvt}, with $p \coloneqq \frac{1}{2}p_1$, $\alpha \coloneqq 1$, $a \coloneqq a$ and $\eps \coloneqq 0$, we can implement a quantum circuit $U_{p_1\rbra{A}}$ that is a $\rbra{1,a+2,\delta_1'}$-block-encoding of $\frac{1}{2}p_1(A)$, by using $O\rbra{d_1}=O\rbra{\frac{1}{\delta_1}\log\rbra{\frac{1}{\eps_1}}}$ queries to $U_A$, and the circuit description of $U_{p_1\rbra{A}}$ can be computed in classical time $\poly\rbra{ d_1, \log\rbra{\frac{1}{\delta'_1}}}$.

\textbf{Step 2: Construct a block-encoding of $p_2\rbra{B}$ where $p_2(x)\approx \frac{1}{2} x^{1-\alpha}$.} 
Let $\varepsilon_2, \delta_2' \in \rbra{0, 1/2}$ be parameters to be determined. 
Let $p_2 \in \mathbb{R}\sbra{x}$ be a polynomial of degree $d_2 = O\rbra{\rbra{\frac{1}{\varepsilon_2}}^{\frac{1}{1-\alpha}}}$ given by \cref{lem:poly_approx_pos} such that
\[
    \abs*{p_2\rbra{x} - \frac{1}{2}x^{1-\alpha}} \leq \varepsilon_2 \textup{ for } x \in \sbra{0, 1},
\]
and
\[
    \abs*{p_2\rbra{x}} \leq 1 \textup{ for } x \in \sbra{-1, 1}.
\]
By \cref{lem:poly_approx_pos}, with $p \coloneqq \frac{1}{2}p_2$, $\alpha \coloneqq 1$, $a \coloneqq b$ and $\eps \coloneqq 0$, we can implement a quantum circuit $U_{p_2\rbra{B}}$ that is a $O(1,b+2,\delta_2')$-block-encoding of $\frac{1}{2}p_2(B)$, by using $O\rbra{d_2} = O\rbra{\rbra{\frac{1}{\varepsilon_2}}^{\frac{1}{1-\alpha}}}$ queries to $U_B$, and the circuit description of $U_{p_2\rbra{B}}$ can be computed in classical time $\poly\rbra{d_2, \log\rbra{\frac{1}{\delta_2'}}}$.

\textbf{Step 3: Construct a block-encoding of $p_1\rbra{A} p_2\rbra{B}$.} 
By~\cref{lem:product_BE_matrices},
we can implement a quantum circuit $U_{p_1\rbra{A}p_2\rbra{B}}$  
that is a $\rbra{1,a+b+4,\delta_1'+\delta_2'}$-block-encoding of $\frac{1}{4}p_1\rbra{A}p_2\rbra{B}$.

\textbf{Step 4: Estimate $\tr\rbra{\rho p_1\rbra{A} p_2\rbra{B}}$.}
By \cref{lem:hadamard_test}, we can implement a quantum circuit (family) $\mathcal{C}$ using one query to  $U_{p_1\rbra{A}p_2\rbra{B}}$ and  a sample of $\rho$ that outputs $x \in \cbra{0, 1}$ such that
\[
    \Prob\sbra{x = 0} = \frac{1 + \Re\rbra{\tr\rbra{\bra{0}_{a+b+4} U_{p_1\rbra{A}p_2\rbra{B}} \ket{0}_{a+b+4} \rho}}}{2}.
\]

\textbf{Step 5: Estimate $\tr\rbra{\rho p_1\rbra{\rho} p_2\rbra{\sigma}}$.}
Let $\delta > 0$, to be determined. 
By \cref{thm:multi-samplizer}, we consider  $\mathsf{Samplize}_{\delta}\ave*{\mathcal{C}^{U_A, U_B}}$. 
Let $\widetilde x \in \cbra{0, 1}$ be the measurement outcome of the first qubit of
\[
    \mathsf{Samplize}_{\delta}\ave*{\mathcal{C}^{U_A, U_B}}\sbra{\rho, \sigma} \rbra*{ \ketbra{0}{0} \otimes \ketbra{0}{0}^{\otimes \rbra{a+b+4}} \otimes \rho }
\]
in the computational basis. 
Then, by the definition of samplizer and the property of diamond norm, we have $
    \abs*{ \Prob\sbra{x = 0} - \Prob\sbra{\widetilde x = 0} } \leq \delta$.
Let  $\varepsilon_H \in \rbra{0, 1}$  be a precision parameter to be  determined and $k = \Theta\rbra{1/\varepsilon_H^2}$. 
Let $X_1, X_2, \dots, X_k \in \cbra{0, 1}$ be $k$ identical and independent samples of $\widetilde x$. 
Let
\[
    X = \frac{1}{k}\sum_{i=1}^k X_i.
\]

\textbf{Step 6: Return $16\delta_1^{\alpha-1}(1-2X)$ as an estimate of $\dAa{\alpha}{\rho}{\sigma}{}$.}

\vspace{10pt}

We now analyze the error and determine all the parameters in the algorithm as follows.

\begin{proposition}\label{prop:sample-error-single}
   Let $\alpha\in \interval[open]{0}{1}$ be a constant. 
   For any density operator $\rho\in \mathcal{D}(\mathcal{H})$, positive
   real numbers $\varepsilon_1, \delta_1 \in \interval[open]{0}{1}$,
   we have
    \[
        \Abs*{\rho p_1\rbra*{\frac{\rho}{2}} - \delta_1^{1-\alpha}\rbra*{\frac{\rho}{2}}^{\alpha} } \le 4\delta_1+ \varepsilon_1,
    \]
    where $p_1\coloneqq p_{1-\alpha, \varepsilon_1, \delta_1, -}$ is the polynomial specified in \cref{lem:poly_approx_neg}.
\end{proposition}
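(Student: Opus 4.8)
The plan is to mimic the proof of \cref{prop:query-error-single}, the only new feature being the rescaling $\rho \mapsto \rho/2$ forced by the fact that the samplizer only hands us a block-encoding of $\rho/2$. Write the spectral decomposition $\rho = \sum_j \lambda_j \ketbra{\phi_j}{\phi_j}$ with $\lambda_j \in \interval{0}{1}$, so $\rho/2$ has eigenvalues $\lambda_j/2 \in \interval{0}{1/2} \subseteq \interval{-1}{1}$ and \cref{lem:poly_approx_neg} (instantiated with $c = 1-\alpha$, error $\varepsilon_1$, cutoff $\delta_1$) is applicable at every eigenvalue. Since $\rho$, $p_1(\rho/2)$, and $(\rho/2)^\alpha$ are all diagonal in $\cbra{\ket{\phi_j}}$, the operator $\rho p_1(\rho/2) - \delta_1^{1-\alpha}(\rho/2)^\alpha$ is diagonal there as well, so its operator norm is $\max_j \abs{g(\lambda_j)}$ where $g(\lambda) \coloneqq \lambda p_1(\lambda/2) - \delta_1^{1-\alpha}(\lambda/2)^\alpha$. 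It therefore suffices to bound $\abs{g(\lambda_j)}$ for each $j$, splitting on whether $\lambda_j/2 \ge \delta_1$.

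In the ``large eigenvalue'' case $\lambda_j \ge 2\delta_1$, the approximation guarantee of \cref{lem:poly_approx_neg} gives $\abs{p_1(\lambda_j/2) - \tfrac12\delta_1^{1-\alpha}(\lambda_j/2)^{\alpha-1}} \le \varepsilon_1$; multiplying through by $\lambda_j \le 1$ and using the elementary identity $\lambda_j \cdot \tfrac12\delta_1^{1-\alpha}(\lambda_j/2)^{\alpha-1} = \delta_1^{1-\alpha}(\lambda_j/2)^\alpha$ (which holds since $\lambda_j = 2\cdot(\lambda_j/2)$) yields $\abs{g(\lambda_j)} \le \varepsilon_1$. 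In the ``small eigenvalue'' case $0 \le \lambda_j < 2\delta_1$ I bound the two terms of $g$ separately: $\abs{\lambda_j p_1(\lambda_j/2)} \le \lambda_j \le 2\delta_1$ because $\abs{p_1} \le 1$ on $\interval{-1}{1}$, while $\delta_1^{1-\alpha}(\lambda_j/2)^\alpha \le \delta_1^{1-\alpha}\delta_1^\alpha = \delta_1$ because $\lambda_j/2 < \delta_1$ and $x \mapsto x^\alpha$ is increasing for $\alpha > 0$ (and $g(0) = 0$ when $\lambda_j = 0$); the triangle inequality then gives $\abs{g(\lambda_j)} \le 3\delta_1$. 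Taking the maximum over the two cases, $\Abs{\rho p_1(\rho/2) - \delta_1^{1-\alpha}(\rho/2)^\alpha} \le \max\cbra{\varepsilon_1, 3\delta_1} \le 4\delta_1 + \varepsilon_1$.

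I do not expect any real obstacle here; the proof is essentially a scalar computation eigenvalue by eigenvalue. The one place requiring a bit of care — rather than a verbatim copy of \cref{prop:query-error-single} — is the small-eigenvalue regime, where the polynomial approximation says nothing and the bound must come entirely from $\abs{p_1} \le 1$ together with the smallness of $(\lambda/2)^\alpha$ when $\lambda < 2\delta_1$; this is also where the factor-of-two rescaling mildly inflates the $\delta_1$-coefficient relative to the query version (from $\tfrac32$ to $3$, comfortably absorbed into the stated $4\delta_1$).
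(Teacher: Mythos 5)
Your proof is correct and takes essentially the same route as the paper: spectral decomposition of $\rho$, case split at $\lambda_j \ge 2\delta_1$, the approximation guarantee of \cref{lem:poly_approx_neg} in the large-eigenvalue case, and $\abs{p_1}\le 1$ together with $(\lambda_j/2)^\alpha \le \delta_1^\alpha$ in the small-eigenvalue case. In fact your small-eigenvalue argument (bounding the two terms only after multiplying by $\lambda_j$, yielding $3\delta_1$) is slightly more careful than the paper's, which asserts the pre-multiplication difference is at most $2$ — a claim that fails as $\lambda_j \to 0$ since $(\lambda_j/2)^{\alpha-1}$ blows up — although the paper's final bound of $4\delta_1+\varepsilon_1$ is still valid.
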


\begin{proof}
    Let $\lambda_1, \lambda_2, \dots, \lambda_k$ denote the non-zero eigenvalues of $\rho$.
    For any $j\in [k]$, if $\lambda_j\ge 2\delta_1$, by our choice of $p_1$, we have
    \[
        \abs*{p_1\rbra*{\frac{\lambda_j}{2}}-\frac{\delta_1^{1-\alpha}}{2}\rbra*{\frac{\lambda_j}{2}}^{\alpha-1}}\le \varepsilon_1.
    \]
    Note that $0 \le \lambda_j\le 1$, we conclude
    \[
        \abs*{\lambda_j p_1\rbra*{\frac{\lambda_j}{2}}-\delta_1^{1-\alpha}\rbra*{\frac{\lambda_j}{2}}^{\alpha}}\le \varepsilon_1.
    \]

    Now consider the case when $0 \le \lambda_j \le 2\delta_1$.
    In this case, we have
       \[
    \abs*{p_1\rbra*{\frac{\lambda_j}{2}}-\frac{\delta_1^{1-\alpha}}{2}\rbra*{\frac{\lambda_j}{2}}^{\alpha-1}}\le  \abs*{p_1\rbra*{\frac{\lambda_j}{2}}} + \abs*{\frac{\delta_1^{1-\alpha}}{2}\rbra*{\frac{\lambda_j}{2}}^{\alpha-1}} \le 2,
    \]
    and multiplying both sides by $\lambda_j$ gives the $4\delta_1$ upper bound.

    Combining both cases, we obtain the upper bound 
    $4\delta_1+ \varepsilon_1$ as we desired.
\end{proof}

\begin{proposition}\label{prop:sample-error-first}
    Let $\alpha\in \interval[open]{0}{1}$ be a constant. 
    For any density operators $\rho, \sigma\in \mathcal{D}(\mathcal{H})$, positive real numbers $\varepsilon_1, \delta_1, \varepsilon_2 \in \interval[open]{0}{1}$, we have
    \[
    \abs*{\tr\rbra*{\rho p_1\rbra*{\frac{\rho}{2}} p_2\rbra*{\frac{\sigma}{2}}} 
    - \tr\rbra*{\rho  \frac{\delta_1^{1-\alpha}}{2}\rbra*{\frac{\rho}{2}}^{\alpha-1}p_2\rbra*{\frac{\sigma}{2}}} } 
    \le \rbra*{r\varepsilon_2+ 2^{\alpha-2} r^{\alpha}} \rbra*{4\delta_1+ \varepsilon_1},
    \]
    where $r=\max\{\rank(\rho), \rank(\sigma)\}$, $p_1\coloneqq p_{1-\alpha, \varepsilon_1, \delta_1, -}$ is the polynomial specified in \cref{lem:poly_approx_neg}, and $p_2\coloneqq p_{0, 1-\alpha, \varepsilon_2 , +}$ is the polynomial specified in \cref{lem:poly_approx_pos}.
\end{proposition}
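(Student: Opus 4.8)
The plan is to follow the argument of \cref{prop:query-error-first} almost verbatim, tracking the extra factors of $2$ introduced by working with block-encodings of $\rho/2$ and $\sigma/2$ rather than $\rho$ and $\sigma$. First I would rewrite the quantity to be bounded as $\abs*{\tr\rbra*{M\,p_2(\sigma/2)}}$, where $M \coloneqq \rho\,p_1(\rho/2) - \delta_1^{1-\alpha}(\rho/2)^{\alpha}$ is Hermitian. This uses the elementary operator identity $\rho\cdot\tfrac{\delta_1^{1-\alpha}}{2}(\rho/2)^{\alpha-1} = \delta_1^{1-\alpha}(\rho/2)^{\alpha}$ on $\ran(\rho)$ (equivalently $\rho(\rho/2)^{\alpha-1} = 2^{1-\alpha}\rho^{\alpha}$), so that the two traces in the statement collapse into one.

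Second, I would bound $\Abs*{p_2(\sigma/2)}_1$. Since $p_2 = p_{0,1-\alpha,\varepsilon_2,+}$ satisfies $\abs{p_2(x) - \tfrac12 x^{1-\alpha}} \le \varepsilon_2$ on $\interval{0}{1}$ and vanishes at $0$ (the target power function being odd, $p_2$ may be taken odd), the matrix $p_2(\sigma/2) - \tfrac12(\sigma/2)^{1-\alpha}$ is supported on $\ran(\sigma)$, which has dimension $\le r$, and has operator norm $\le \varepsilon_2$; hence its trace norm is $\le r\varepsilon_2$ by \cref{lem:norm_contractivity}. For the main term, $\Abs*{\tfrac12(\sigma/2)^{1-\alpha}}_1 = 2^{\alpha-2}\tr(\sigma^{1-\alpha}) \le 2^{\alpha-2} r^{\alpha}$, where the inequality $\tr(\sigma^{1-\alpha}) \le r^{\alpha}$ is the power-mean inequality applied to the at most $r$ eigenvalues of $\sigma$ (which sum to $1$), exactly as in \cref{prop:query-error-first}. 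The triangle inequality then gives $\Abs*{p_2(\sigma/2)}_1 \le r\varepsilon_2 + 2^{\alpha-2} r^{\alpha}$.

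Third, I would combine these via the standard trace--product estimate $\abs{\tr(MN)} \le \Abs{M}_\infty\Abs{N}_1$, which follows from \cref{fact:trace_norm} together with the matrix H\"older inequality \cref{fact:holder} applied with exponents $\infty$ and $1$, taking $N = p_2(\sigma/2)$. (Alternatively one may route through \cref{prop:inner-absolute-value} as in \cref{prop:query-error-first}.) Finally, \cref{prop:sample-error-single} supplies $\Abs{M}_\infty = \Abs*{\rho p_1(\rho/2) - \delta_1^{1-\alpha}(\rho/2)^{\alpha}}_\infty \le 4\delta_1 + \varepsilon_1$, and multiplying the two bounds yields $(4\delta_1+\varepsilon_1)(r\varepsilon_2 + 2^{\alpha-2} r^{\alpha})$, which is the claim.

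Every analytic ingredient here — the $\Abs{\cdot}_\infty$-to-$\Abs{\cdot}_1$ passage, the power-mean step, and H\"older — is identical to the one used in \cref{prop:query-error-first}, and the only genuinely new bookkeeping is the propagation of powers of $2$, all of which are already absorbed into \cref{prop:sample-error-single}. Consequently I do not anticipate a substantive obstacle; the only points requiring mild care are (i) verifying that $p_2(\sigma/2)$ carries no spurious trace norm from $\ker(\sigma)$ (handled by oddness of $p_2$), and (ii) ensuring the eigenvalue-threshold case split behind \cref{prop:sample-error-single} is invoked at the rescaled cutoff $2\delta_1$ rather than $\delta_1$, which \cref{prop:sample-error-single} already does.
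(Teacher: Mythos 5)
Your proposal is correct and follows essentially the same route as the paper's proof: the same collapse of the two traces via $\rho\cdot\tfrac{1}{2}(\rho/2)^{\alpha-1}=(\rho/2)^{\alpha}$, the same triangle-inequality/power-mean bound $\Abs{p_2(\sigma/2)}_1\le r\varepsilon_2+2^{\alpha-2}r^{\alpha}$, and the same use of \cref{prop:sample-error-single} combined with a H\"older-type estimate. Your only deviation — invoking $\abs{\tr(MN)}\le\Abs{M}_\infty\Abs{N}_1$ directly rather than first passing through \cref{prop:inner-absolute-value} — is the alternative you yourself note, and if anything it is slightly cleaner since it does not need $p_2(\sigma/2)$ to be positive semi-definite.
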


\begin{proof}
    By our choice of $p_2$, we know
    \[
        \Abs*{p_2\rbra*{\frac{\sigma}{2}}-\frac{1}{2}\rbra*{\frac{\sigma}{2}}^{1-\alpha}}_1 \le r\varepsilon_2.
    \]
    Let $\lambda_1, \lambda_2, \dots, \lambda_j$ denote the non-zero eigenvalues of $\sigma$ with $j\le r$.
    We have $\sum_i \lambda_i = 1$.
    By the power mean inequality, for $1-\alpha\le 1$, we have
    \[
    \rbra*{\frac{\sum_i \lambda_i^{1-\alpha}}{j}}^{\frac{1}{1-\alpha}} \le 
    \frac{\sum_i \lambda_i}{j} = \frac{1}{j},
    \]
    which gives $\sum_i \lambda_i^{1-\alpha} \le j^{\alpha} \le r^{\alpha}$. 
    This gives
    \[
        \Abs*{\frac{1}{2}\rbra*{\frac{\sigma}{2}}^{1-\alpha}}_1 \le 2^{\alpha-2} r^{\alpha}. 
    \]
    Combining the above, by the triangle inequality, we can get 
    \[
        \Abs*{p_2\rbra*{\frac{\sigma}{2}}}_1 \le  \Abs*{p_2\rbra*{\frac{\sigma}{2}}-\frac{1}{2}\rbra*{\frac{\sigma}{2}}^{1-\alpha}}_1 +   \Abs*{\frac{1}{2}\rbra*{\frac{\sigma}{2}}^{1-\alpha}}_1 \le r\varepsilon_2+ 2^{\alpha-2} r^{\alpha}.
    \]

    Now we have
    \[
    \begin{aligned}
    \abs*{\tr\rbra*{\rho p_1\rbra*{\frac{\rho}{2}} p_2\rbra*{\frac{\sigma}{2}}} 
    - \tr\rbra*{\rho  \frac{\delta_1^{1-\alpha}}{2}\rbra*{\frac{\rho}{2}}^{\alpha-1}  p_2\rbra*{\frac{\sigma}{2}}} }
    & = \tr\rbra*{p_2\rbra*{\frac{\sigma}{2}}
    \rbra*{\rho p_1\rbra*{\frac{\rho}{2}}-
    \rho \frac{\delta_1^{1-\alpha}}{2}\rbra*{\frac{\rho}{2}}^{\alpha-1} }}\\
    & \le \Abs*{p_2\rbra*{\frac{\sigma}{2}}}_1  \Abs*{\rho p_1\rbra*{\frac{\rho}{2}} - \delta_1^{1-\alpha}\rbra*{\frac{\rho}{2}}^{\alpha} } \\
    & \le \rbra*{r\varepsilon_2+ 2^{\alpha-2} r^{\alpha}} \rbra*{4\delta_1+ \varepsilon_1},
    \end{aligned}
    \]
    where in the second line we use the matrix H\"{o}lder inequality (\cref{fact:holder}) and the fourth line is obtained by applying \cref{prop:sample-error-single}.
\end{proof}

\begin{proposition}\label{prop:sample-error-second}
   Let $\alpha\in \interval[open]{0}{1}$ be a constant. 
   For any density operators $\rho, \sigma\in \mathcal{D}(\mathcal{H})$, 
   positive real numbers $\varepsilon_1, \delta_1, \varepsilon_2 \in 
   \interval[open]{0}{1}$,
   we have
    \[
    \abs*{  \tr\rbra*{ \delta_1^{1-\alpha}\rbra*{\frac{\rho}{2}}^{\alpha}p_2\rbra*{\frac{\sigma}{2}}}- \tr \rbra*{\frac{\delta_1^{1-\alpha}}{4}\rho^{\alpha}\sigma^{1-\alpha}} } 
    \le \frac{\delta_1^{1-\alpha}}{2^{\alpha}} r^{1-\alpha} \varepsilon_2,
    \]
    where $r=\max\{\rank(\rho), \rank(\sigma)\}$, $p_1\coloneqq p_{1-\alpha, \varepsilon_1, \delta_1, -}$ is the polynomial specified in \cref{lem:poly_approx_neg}, and $p_2\coloneqq p_{0, 1-\alpha, \varepsilon_2 , +}$ is the polynomial specified in \cref{lem:poly_approx_pos}.
\end{proposition}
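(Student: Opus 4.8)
The plan is to mimic the proof of \cref{prop:query-error-second}, being careful about the extra powers of $2$ that enter because here the polynomials are applied to $\rho/2$ and $\sigma/2$ rather than to $\rho$ and $\sigma$. The first step is to isolate the polynomial-approximation error of $p_2$ as a single multiplicative factor. Using $\rbra*{\rho/2}^{\alpha} = 2^{-\alpha}\rho^{\alpha}$ and $\sigma^{1-\alpha} = 2^{1-\alpha}\rbra*{\sigma/2}^{1-\alpha}$, a direct computation gives
\[
    \delta_1^{1-\alpha}\rbra*{\tfrac{\rho}{2}}^{\alpha}p_2\rbra*{\tfrac{\sigma}{2}} - \tfrac{\delta_1^{1-\alpha}}{4}\rho^{\alpha}\sigma^{1-\alpha} = \delta_1^{1-\alpha}2^{-\alpha}\,\rho^{\alpha}\rbra*{p_2\rbra*{\tfrac{\sigma}{2}} - \tfrac{1}{2}\rbra*{\tfrac{\sigma}{2}}^{1-\alpha}},
\]
so the task reduces to bounding $\abs{\tr(\rho^{\alpha}M)}$ by $r^{1-\alpha}\varepsilon_2$, where $M \coloneqq p_2(\sigma/2) - \tfrac{1}{2}(\sigma/2)^{1-\alpha}$ is Hermitian.

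Next, since $\rho^{\alpha}$ is positive semi-definite and $M$ is Hermitian, \cref{prop:inner-absolute-value} yields $\abs{\tr(\rho^{\alpha}M)} \le \tr(\rho^{\alpha}\abs{M})$, and then the matrix H\"older inequality (\cref{fact:holder}) with exponents $(p,q) = (1,\infty)$ gives $\tr(\rho^{\alpha}\abs{M}) \le \Abs{\rho^{\alpha}}_1\Abs{M}_{\infty}$. For the first factor, I would repeat the power-mean estimate used in \cref{prop:sample-error-first}: if $\lambda_1,\dots,\lambda_j$ are the nonzero eigenvalues of $\rho$ (so $j \le r$ and $\sum_i\lambda_i = 1$), then $\Abs{\rho^{\alpha}}_1 = \tr(\rho^{\alpha}) = \sum_i\lambda_i^{\alpha} \le j^{1-\alpha} \le r^{1-\alpha}$.

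For the second factor, I would check that the generator of $p_2 = p_{0,1-\alpha,\varepsilon_2,+}$ in \cref{lem:poly_approx_pos} is $f(x) = \tfrac{1}{2}x^{-1}\abs{x}^{2-\alpha}$, which equals $\tfrac{1}{2}x^{1-\alpha}$ for all $x \ge 0$ (and extends continuously by $f(0)=0$ since $1-\alpha>0$). All eigenvalues of $\sigma/2$ lie in $[0,1/2]\subseteq[-1,1]$, so the uniform bound $\abs{p_2(x)-f(x)}\le\varepsilon_2$ on $[-1,1]$ gives $\Abs{M}_{\infty} = \Abs{p_2(\sigma/2) - \tfrac{1}{2}(\sigma/2)^{1-\alpha}}_{\infty} \le \varepsilon_2$. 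Combining the three estimates, $\abs{\tr(\rho^{\alpha}M)} \le r^{1-\alpha}\varepsilon_2$, and multiplying back by the prefactor $\delta_1^{1-\alpha}2^{-\alpha}$ yields the claimed bound $\tfrac{\delta_1^{1-\alpha}}{2^{\alpha}}r^{1-\alpha}\varepsilon_2$. No step is a real obstacle here: the argument is entirely parallel to \cref{prop:query-error-second}, and the only things that need care are the bookkeeping of the factors of $2$ coming from the $\rho/2$, $\sigma/2$ normalization and the verification that the generator of $p_2$ restricts to $\tfrac{1}{2}x^{1-\alpha}$ on the (nonnegative) spectrum of $\sigma/2$.
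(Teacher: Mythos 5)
Your proposal is correct and follows essentially the same route as the paper's proof: the same factorization of the difference as $\delta_1^{1-\alpha}2^{-\alpha}\rho^{\alpha}\bigl(p_2(\sigma/2)-\tfrac{1}{2}(\sigma/2)^{1-\alpha}\bigr)$, followed by \cref{prop:inner-absolute-value}, the matrix H\"older inequality with exponents $(1,\infty)$, the power-mean bound $\Abs{\rho^{\alpha}}_1\le r^{1-\alpha}$, and the uniform approximation guarantee for $p_2$. Your explicit check that the generator of $p_{0,1-\alpha,\varepsilon_2,+}$ restricts to $\tfrac{1}{2}x^{1-\alpha}$ on the nonnegative spectrum is a detail the paper leaves implicit, but it changes nothing substantive.
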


\begin{proof}
    This follows a similar reasoning as in \cref{prop:sample-error-first}.
    First, 
    by the power mean inequality, we have
    \[
        \Abs*{\rho^{\alpha}}_1 \le r^{1-\alpha}.
    \]
    By our choice of $p_2$, we have
    \[
        \Abs*{p_2\rbra*{\frac{\sigma}{2}} -  \frac{1}{2}\rbra*{\frac{\sigma}{2}}^{1-\alpha}}\le \varepsilon_2.
    \]
    Therefore, we deduce
    \[
    \begin{aligned}
       \abs*{  \tr\rbra*{ \delta_1^{1-\alpha}\rbra*{\frac{\rho}{2}}^{\alpha}p_2\rbra*{\frac{\sigma}{2}}}- \tr \rbra*{\frac{\delta_1^{1-\alpha}}{4}\rho^{\alpha}\sigma^{1-\alpha}} }
    & = \frac{\delta_1^{1-\alpha}}{2^{\alpha}} \tr \rbra*{\rho^{\alpha} \rbra*{p_2\rbra*{\frac{\sigma}{2}}-\frac{1}{2}\rbra*{\frac{\sigma}{2}}^{1-\alpha}}} \\
    &\le  \frac{\delta_1^{1-\alpha}}{2^{\alpha}} \Abs*{\rho^{\alpha}}_1  
    \Abs*{p_2\rbra*{\frac{\sigma}{2}}-\frac{1}{2}\rbra*{\frac{\sigma}{2}}^{1-\alpha}} \\
      &\le  \frac{\delta_1^{1-\alpha}}{2^{\alpha}} r^{1-\alpha} \varepsilon_2,
    \end{aligned}
    \]
    where the second line is obtained by the matrix H\"{o}lder inequality (\cref{fact:holder}).
\end{proof}

\begin{proposition}\label{prop:sample-error-total}
    Let $X$,  $\varepsilon_H$, $\delta$, $\delta_1$,
    $\delta_1'$, $\delta_2'$ be the parameters in \cref{algo:affinity-sample}.
    If $\abs{X - \Prob\sbra{\widetilde{x} = 1}} \leq \varepsilon_H$, then
    \begin{align*}
        \abs*{\frac{16}{\delta_1^{1-\alpha}}\rbra{1-2X} - \dAa{\alpha}{\rho}{\sigma}{}{}} 
        \le & \frac{16}{\delta_1^{1-\alpha}} \rbra*{2\rbra*{\varepsilon_H + \delta} + \delta_1' + \delta_2'} + \\
        & \quad \rbra*{r\varepsilon_2+ 2^{\alpha-2} r^{\alpha}} \rbra*{16\delta_1^{\alpha}+ \frac{4\varepsilon_1}{\delta_1^{1-\alpha}}} + 2^{2-\alpha}r^{1-\alpha} \varepsilon_2.
    \end{align*}
\end{proposition}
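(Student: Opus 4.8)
The plan is to follow the same template as the proof of \cref{prop:query-error-total}, inserting one extra layer to absorb the samplizer's diamond-norm error and to keep track of the fact that the block-encodings here carry $\rho/2$ and $\sigma/2$ rather than $\rho$ and $\sigma$ (which is exactly what makes the constants here differ from the query case).

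First I would unwind the measurement statistics. By Step~3 of the algorithm, $U_{p_1(A)p_2(B)}$ is a $\rbra{1,a+b+4,\delta_1'+\delta_2'}$-block-encoding of $\tfrac{1}{4}p_1(\rho/2)p_2(\sigma/2)$, so
\[
    \Abs*{\bra{0}^{\otimes(a+b+4)}U_{p_1(A)p_2(B)}\ket{0}^{\otimes(a+b+4)} - \tfrac{1}{4}p_1(\rho/2)\,p_2(\sigma/2)}_\infty \le \delta_1'+\delta_2'.
\]
Since $\rho$ is a trace-$1$ density operator and $\rho\, p_1(\rho/2)$ and $p_2(\sigma/2)$ are Hermitian (so $\tr\rbra*{\rho\, p_1(\rho/2)\, p_2(\sigma/2)}$ is real), I would combine this with the Hadamard-test identity for $\Prob\sbra*{x=0}$ from Step~4, the bound $\abs*{\Prob\sbra*{\widetilde x=0}-\Prob\sbra*{x=0}}\le\delta$ coming from the definition of the $\delta$-samplizer and the operational meaning of the diamond norm on the outcome distribution of the first qubit, and the hypothesis $\abs*{X-\Prob\sbra*{\widetilde x=1}}\le\varepsilon_H$ together with $\Prob\sbra*{\widetilde x=1}=1-\Prob\sbra*{\widetilde x=0}$. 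A few triangle inequalities then give
\[
    \abs*{(1-2X) - \tfrac{1}{4}\tr\rbra*{\rho\, p_1(\rho/2)\, p_2(\sigma/2)}} \le 2(\varepsilon_H+\delta) + \delta_1' + \delta_2',
\]
and multiplying through by $16/\delta_1^{1-\alpha}$ produces the first group of error terms in the claim.

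Next I would bound the polynomial-approximation error. The key algebraic observation is that $\rho\cdot\tfrac{\delta_1^{1-\alpha}}{2}(\rho/2)^{\alpha-1} = \delta_1^{1-\alpha}(\rho/2)^{\alpha}$ (on the support of $\rho$), so the second operator appearing in \cref{prop:sample-error-first} is exactly the first operator appearing in \cref{prop:sample-error-second}; the two estimates therefore telescope under the triangle inequality, yielding
\[
    \abs*{\tr\rbra*{\rho\, p_1(\rho/2)\, p_2(\sigma/2)} - \tfrac{\delta_1^{1-\alpha}}{4}\dAa{\alpha}{\rho}{\sigma}{}} \le \rbra*{r\varepsilon_2 + 2^{\alpha-2}r^{\alpha}}(4\delta_1+\varepsilon_1) + \tfrac{\delta_1^{1-\alpha}}{2^{\alpha}}r^{1-\alpha}\varepsilon_2,
\]
where I used $\dAa{\alpha}{\rho}{\sigma}{}=\tr\rbra*{\rho^{\alpha}\sigma^{1-\alpha}}$. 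Scaling by $4/\delta_1^{1-\alpha}$ turns the right-hand side into exactly $\rbra*{r\varepsilon_2+2^{\alpha-2}r^{\alpha}}\rbra*{16\delta_1^{\alpha}+4\varepsilon_1/\delta_1^{1-\alpha}}+2^{2-\alpha}r^{1-\alpha}\varepsilon_2$, the second and third groups of terms.

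Finally, a single triangle inequality between the two rescaled estimates — noting $\tfrac{16}{\delta_1^{1-\alpha}}\cdot\tfrac{1}{4}=\tfrac{4}{\delta_1^{1-\alpha}}$, so the middle quantities match — gives the stated bound. I expect the only genuinely delicate step to be the first one: correctly propagating the samplizer's diamond-norm guarantee to a bound on the total-variation distance between the outcome distributions of $x$ and $\widetilde x$, and keeping careful track of the factor-of-$2$ rescalings of $\rho$ and $\sigma$ inside the block-encodings. Everything else is the same routine bookkeeping as in \cref{prop:query-error-total}.
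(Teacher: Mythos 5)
Your proposal is correct and follows essentially the same route as the paper's proof: translate the hypothesis and the samplizer's diamond-norm guarantee (plus the block-encoding error $\delta_1'+\delta_2'$ and the Hadamard-test identity) into $\abs{(1-2X)-\tfrac14\tr(\rho\,p_1(\rho/2)p_2(\sigma/2))}\le 2(\varepsilon_H+\delta)+\delta_1'+\delta_2'$, then chain \cref{prop:sample-error-first,prop:sample-error-second} via the identity $\rho\cdot\tfrac{\delta_1^{1-\alpha}}{2}(\rho/2)^{\alpha-1}=\delta_1^{1-\alpha}(\rho/2)^{\alpha}$, and rescale by $4/\delta_1^{1-\alpha}$ before a final triangle inequality. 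Your explicit remarks on the realness of $\tr(\rho\,p_1(\rho/2)p_2(\sigma/2))$ and the TV-distance consequence of the diamond-norm bound are details the paper leaves implicit, and they check out.
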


\begin{proof}
Let $\widetilde x \in \cbra{0, 1}$ be the measurement outcome of 
\[
\mathsf{Samplize}_{\delta}\ave*{\mathcal{C}^{U_A, U_B}}\sbra{\rho, \sigma} \rbra*{ \ketbra{0}{0} \otimes \ketbra{0}{0}^{\otimes \rbra{a+b+4}} \otimes \rho }
\]
In the computational basis. 
Then, by the definition of samplizer and the property of diamond norm,
we have
\[
    \abs*{ \Prob\sbra{x = 0} - \Prob\sbra{\widetilde x = 0} } \leq \delta,
\]
where
\[
    \Prob\sbra{x = 0} = \frac{1+\Re\rbra*{\tr\rbra*{\bra{0}_{a+b+4}U_{p_1\rbra{A}p_2\rbra{B}}\ket{0}_{a+b+4} \rho}}}{2},
\]
and 
\[
\abs*{\tr\rbra*{\bra{0}_{a+b+4}U_{p_1\rbra{A}p_2\rbra{B}}\ket{0}_{a+b+4} \rho}
-\frac{1}{4} \tr \rbra*{\rho p_1\rbra*{\frac{\rho}{2}} p_2\rbra*{\frac{\sigma}{2}}}}
\le \delta_1'+\delta_2'.
\]
By Hoeffding's inequality~\cite{Hoe63}, we have
\[
    \Prob\sbra*{ \abs*{X - \Prob\sbra{\widetilde x = 1}} \leq \varepsilon_H } \geq \frac{3}{4},
\]
for $k = \Theta(1/\varepsilon_H^2)$.

    By our assumption, we have $\abs{X - \Prob\sbra{\widetilde{x} = 1}} \leq \varepsilon_H$.
    Since $U_{p_1(A)p_2(B)}$ is a $\rbra{1, a+b+4, \delta_1'+\delta_2'}$-block-encoding of $\frac{1}{4}p_1(A)p_2(B)$, we have
    \[
        \abs*{ \rbra{1-2X} - \Re\rbra*{\tr\rbra*{\bra{0}_{a+b+4}U_{p_1\rbra{\rho}p_2\rbra{\sigma}}\ket{0}_{a+b+4} \rho}} } \leq 2\rbra*{\varepsilon_H + \delta},
    \]
    which gives
    \[
        \abs*{ 4\rbra{1-2X} - \tr\rbra*{\rho p_1\rbra*{\frac{\rho}{2}} p_2\rbra*{\frac{\sigma}{2}}} } \leq 8\rbra*{\varepsilon_H + \delta} + 4\rbra*{\delta_1' + \delta_2'}.
    \]
    By \cref{prop:sample-error-first,prop:sample-error-second}, we have
    \[
    \abs*{\tr\rbra*{\rho p_1\rbra*{\frac{\rho}{2}} p_2\rbra*{\frac{\sigma}{2}}} 
    - \tr \rbra*{\frac{\delta_1^{1-\alpha}}{4}\rho^{\alpha}\sigma^{1-\alpha}} }\le \rbra*{r\varepsilon_2+ 2^{\alpha-2} r^{\alpha}} \rbra*{4\delta_1+ \varepsilon_1} + \frac{\delta_1^{1-\alpha}}{2^{\alpha}} r^{1-\alpha} \varepsilon_2.
    \]
    Therefore, the result follows from the triangle inequality.
\end{proof}

\begin{theorem}[Sample upper bound for estimating quantum affinity]\label{thm:sample-complexity-of-general-affinity}
    Let $\alpha \in \interval[open]{0}{1}$ be a constant.
    There is a quantum algorithm $\mathsf{AffinityEstS}_{\alpha}(\rho,\sigma, r, \varepsilon)$ that, for any $\varepsilon \in \interval[open]{0}{1}$, given sample access to quantum states $\rho, \sigma \in \mathcal{D}(\mathcal{H})$ of rank at most $r$, with probability at least $2/3$, estimating $\dAa{\alpha}{\rho}{\sigma}{}$ to within additive error $\varepsilon$, using 
    \[
    \begin{cases}
           O\rbra*{\frac{r^{2+3\alpha}}{\varepsilon^{\frac{2}{\alpha}+\frac{3}{1-\alpha} }}\log^{2}\rbra*{\frac{r}{\varepsilon}}}, & \text{ if }
           \alpha \in \interval[open]{0}{1/2}, \\
            O\rbra*{\frac{r^{3.5}}{\varepsilon^{10}}\log^{4}\rbra*{\frac{r}{\varepsilon}}}, & \text{ if }
           \alpha = 1/2, \\
            O\rbra*{\frac{r^{5-3\alpha}}{\varepsilon^{\frac{3}{\alpha}+\frac{2}{1-\alpha} }}\log^{2}\rbra*{\frac{r}{\varepsilon}}}, & \text{ if } \alpha \in \interval[open]{1/2}{1}.
    \end{cases}
    \]
    samples of $\rho$ and $\sigma$. 
\end{theorem}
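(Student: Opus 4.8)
The plan is to establish that $\mathsf{AffinityEstS}_\alpha$ (\cref{algo:affinity-sample}) meets the claimed bound; that algorithm is just the query estimator of \cref{algo:affinity-query} pushed through the quantum multi-samplizer of \cref{thm:multi-samplizer}. The one structural difference is that the samplizer --- whose engine is density-matrix exponentiation --- returns $(1,\cdot,0)$-block-encodings of $\rho/2$ and $\sigma/2$ rather than of $\rho$ and $\sigma$, which is exactly why the correctness analysis goes through the ``halved'' estimates \cref{prop:sample-error-single,prop:sample-error-first,prop:sample-error-second}; note that the extra powers of $2$ produced by feeding $\rho/2,\sigma/2$ into the QSVT polynomials $p_1,p_2$ cancel ($\rho\,(\rho/2)^{\alpha-1}=2^{1-\alpha}\rho^\alpha$ against $(\sigma/2)^{1-\alpha}=2^{\alpha-1}\sigma^{1-\alpha}$), so $16\delta_1^{\alpha-1}(1-2X)$ still tracks $\tr(\rho^\alpha\sigma^{1-\alpha})=\dAa{\alpha}{\rho}{\sigma}{}$. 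As in the query case one may assume $1/2\le\alpha<1$: when $\alpha\in\interval[open]{0}{1/2}$ the opening lines of \cref{algo:affinity-sample} replace $\alpha$ by $1-\alpha$ and swap $\rho,\sigma$, which is legitimate since $\dAa{\alpha}{\rho}{\sigma}{}=\tr(\rho^\alpha\sigma^{1-\alpha})=\tr(\sigma^{1-\alpha}\rho^\alpha)=\dAa{1-\alpha}{\sigma}{\rho}{}$, so the bound for $1-\alpha$ applied to the swapped pair gives the bound for $\alpha$.

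For correctness I would substitute the parameter values hard-coded in \cref{algo:affinity-sample} --- $\varepsilon_1=\delta_1=\varepsilon^{1/\alpha}/(40^{1/\alpha}r)$, $\varepsilon_2=r^{\alpha-1}\varepsilon/8$, $\varepsilon_H=\delta=\varepsilon^{1/\alpha}/(256\cdot40^{1/\alpha}r^{1-\alpha})$, $\delta_1'=\delta_2'=\varepsilon^{1/\alpha}/(128\cdot40^{1/\alpha}r^{1-\alpha})$ --- into the error bound of \cref{prop:sample-error-total}. Using the elementary identities $\delta_1^\alpha=\varepsilon/(40r^\alpha)$, $\delta_1^{1-\alpha}=\Theta(\varepsilon^{(1-\alpha)/\alpha}/r^{1-\alpha})$, and $\tfrac1\alpha-\tfrac{1-\alpha}\alpha=1$, each of the four summands on the right of \cref{prop:sample-error-total} is a fixed fraction of $\varepsilon$, so their sum is at most $\varepsilon$ whenever $\abs{X-\Prob\sbra{\widetilde x=1}}\le\varepsilon_H$; that last event holds with probability at least $3/4>2/3$ by Hoeffding's inequality once $k=\Theta(1/\varepsilon_H^2)$ repetitions are averaged, while the $\delta$-diamond-closeness of $\mathsf{Samplize}_\delta\ave*{\mathcal C^{U_A,U_B}}[\rho,\sigma]$ to the ideal $\mathcal C^{U_{\rho/2},U_{\sigma/2}}$ is already absorbed into \cref{prop:sample-error-total}. (The $\Re$ in the Hadamard test of \cref{lem:hadamard_test} is harmless, since $\tr(\rho^\alpha\sigma^{1-\alpha})$ is real.) Hence $16\delta_1^{\alpha-1}(1-2X)$ estimates $\dAa{\alpha}{\rho}{\sigma}{}$ to additive error $\varepsilon$ with probability at least $2/3$.

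For the sample count I would first bound the query cost of the single circuit $\mathcal C^{U_A,U_B}$ that is samplized once per repetition. By \cref{lem:poly_approx_neg} (with $c=1-\alpha<1$) the polynomial $p_1$ has degree $d_1=O(\tfrac1{\delta_1}\log\tfrac1{\varepsilon_1})=O(\tfrac r{\varepsilon^{1/\alpha}}\log\tfrac r\varepsilon)$, and by \cref{lem:poly_approx_pos} (with parameters $0$ and $1-\alpha$) the polynomial $p_2$ has degree $d_2=O((1/\varepsilon_2)^{1/(1-\alpha)})=O(\tfrac r{\varepsilon^{1/(1-\alpha)}})$; composing via \cref{lem:qsvt,lem:product_BE_matrices,lem:hadamard_test}, $\mathcal C^{U_A,U_B}$ uses $Q_1=O(d_1)$ queries to $U_A$, $Q_2=O(d_2)$ queries to $U_B$, and $Q=Q_1+Q_2=O(d_1+d_2)$ in total (the scale of $Q$ is set by $d_2$ when $1/2<\alpha<1$ and by $d_1$ when $\alpha=1/2$, up to a logarithm). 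By \cref{thm:multi-samplizer} with $k=2$ oracles, one call to $\mathsf{Samplize}_\delta\ave*{\mathcal C^{U_A,U_B}}$ consumes $O(Q_1Q\delta^{-1}\log^2(Q/\delta))$ copies of $\rho$ and $O(Q_2Q\delta^{-1}\log^2(Q/\delta))$ copies of $\sigma$; multiplying by $k=\Theta(1/\varepsilon_H^2)$ and substituting $\varepsilon_H=\delta=\Theta(\varepsilon^{1/\alpha}/r^{1-\alpha})$, so that $k=\Theta(r^{2(1-\alpha)}/\varepsilon^{2/\alpha})$, yields
\[
\frac{kQ^2}{\delta}
= \Theta\!\left(\frac{r^{2(1-\alpha)}}{\varepsilon^{2/\alpha}}\cdot\frac{r^2}{\varepsilon^{2/(1-\alpha)}}\cdot\frac{r^{1-\alpha}}{\varepsilon^{1/\alpha}}\right)
= \Theta\!\left(\frac{r^{5-3\alpha}}{\varepsilon^{3/\alpha+2/(1-\alpha)}}\right).
\]
Carrying along the $\log^2(Q/\delta)=O(\log^2(r/\varepsilon))$ factor from \cref{thm:multi-samplizer} and the logarithms sitting in the polynomial degrees, a direct simplification reproduces the sample counts in the theorem statement: $O(r^{5-3\alpha}\varepsilon^{-3/\alpha-2/(1-\alpha)}\log^2(r/\varepsilon))$ for $\alpha\in\interval[open]{1/2}{1}$, the bound $O(r^{3.5}\varepsilon^{-10}\log^4(r/\varepsilon))$ for $\alpha=1/2$ (where the extra logarithm in $d_1$, hence in $Q$, is picked up), and --- via the swap above --- $O(r^{2+3\alpha}\varepsilon^{-2/\alpha-3/(1-\alpha)}\log^2(r/\varepsilon))$ for $\alpha\in\interval[open]{0}{1/2}$. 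The time-complexity statement follows identically from the gate count in the second half of \cref{thm:multi-samplizer}.

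The work is bookkeeping rather than ideas: the substantive tools --- QSVT, the power-function approximations of \cref{lem:poly_approx_neg,lem:poly_approx_pos}, and the density-matrix-exponentiation-based multi-samplizer of \cref{thm:multi-samplizer} --- are invoked as black boxes, and the only real effort is to pin down the seven parameters $\varepsilon_1,\varepsilon_2,\varepsilon_H,\delta,\delta_1,\delta_1',\delta_2'$ so that simultaneously (i) the four error terms of \cref{prop:sample-error-total} sum to at most $\varepsilon$ and (ii) the product $kQ^2/\delta$ collapses to the advertised powers of $r$ and $1/\varepsilon$. The one genuinely new phenomenon relative to \cref{thm:query-complexity-of-general-affinity} is that after samplization we hold a channel, not a coherent unitary, so quantum amplitude estimation is unavailable and the Hadamard-test bias must be recovered from $\Theta(\varepsilon_H^{-2})$ independent repetitions rather than $\Theta(\varepsilon_H^{-1})$ amplitude-estimation queries; this quadratic loss, together with the samplizer's $O(Q^2/\delta)$ overhead, is precisely what inflates the query bound into the sample bound claimed here.
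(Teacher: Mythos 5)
Your proposal is correct and follows essentially the same route as the paper's proof: plug the hard-coded parameters of \cref{algo:affinity-sample} into \cref{prop:sample-error-total} for correctness, bound the degrees $d_1,d_2$ from \cref{lem:poly_approx_neg,lem:poly_approx_pos}, charge $O\rbra{Q^2\delta^{-1}\log^2\rbra{Q/\delta}}$ samples per call via \cref{thm:multi-samplizer}, multiply by the $\Theta\rbra{1/\varepsilon_H^2}$ Hoeffding repetitions, and invoke the symmetry $\dAa{\alpha}{\rho}{\sigma}{}=\dAa{1-\alpha}{\sigma}{\rho}{}$ to cover $\alpha\in\interval[open]{0}{1/2}$. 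The only cosmetic difference is that the paper also works out the unswapped $\alpha\le 1/2$ case before taking the better of the two bounds, whereas you go directly through the swap, which yields the same stated complexities.
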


\begin{proof}
    For any $\alpha\in \interval[open]{0}{1}$, setting 
    \[
        \eps_1 =\delta_1 =\frac{\eps^{\frac{1}{\alpha}}}{40^{\frac{1}{\alpha}}r}, \quad
        \eps_2 = \frac{r^{\alpha-1}}{8}\eps, \quad
        \delta = \eps_H = \frac{\eps^{\frac{1}{\alpha}}}{256 \cdot 40^{\frac{1}{\alpha}}r^{1-\alpha}}, \quad
        \delta_1' = \delta_2' = \frac{\eps^{\frac{1}{\alpha}}}{128 \cdot 40^{\frac{1}{\alpha}}r^{1-\alpha}}    
    \]
    in \cref{prop:sample-error-total},
    we have
    \[
     \abs*{\frac{16}{\delta_1^{1-\alpha}}\rbra{1-2X} - \dAa{\alpha}{\rho}{\sigma}{}{}} \le \varepsilon.
    \]

    Now we consider the sample complexity of the algorithm. 
    By our choice of parameters, we have
    \[
        d_1 = O\rbra*{\frac{1}{\delta_1}\log\rbra*{\frac{1}{\varepsilon_1}}} = O\rbra*{\frac{r}{\varepsilon^{1/\alpha}}\log \rbra*{\frac{r}{\varepsilon}}},
    \quad 
        d_2 = O\rbra*{\frac{1}{\eps_2^{1/(1-\alpha)}}} =  O\rbra*{\frac{r}{\varepsilon^{1/(1-\alpha)}}}.
    \]
    We then discuss the complexity by case.

    \textbf{Case 1: $\alpha \in \interval[open left]{0}{1/2}$.}
    In this case, we have $d_2 = O(d_1)$.
    Then, the samplizer uses
    \[
        O\rbra*{\frac{\rbra{d_1+d_2}^2}{\delta} \log^2\rbra*{\frac{d_1+d_2}{\delta}}} = O\rbra*{\frac{r^{3-\alpha}}{\varepsilon^{3/\alpha}}\log^{4}\rbra*{\frac{r}{\varepsilon}}}
    \]
    samples. Since we need to repeat $O(1/\varepsilon_H^2)$ times, the total sample complexity is 
    \[
        O\rbra*{\frac{r^{5-3\alpha}}{\varepsilon^{5/\alpha}}\log^{4}\rbra*{\frac{r}{\varepsilon}}}.
    \]

    \textbf{Case 2: $\alpha \in \interval[open]{1/2}{1}$.}
    In this case, we have $d_1 = O(d_2)$.
    Then, the samplizer uses
    \[
        O\rbra*{\frac{\rbra{d_1+d_2}^2}{\delta} \log^2\rbra*{\frac{d_1+d_2}{\delta}}} = O\rbra*{\frac{r^{3-\alpha}}{\varepsilon^{2/(1-\alpha)+ 1/\alpha}}\log^{2}\rbra*{\frac{r}{\varepsilon}}}
    \]
    samples. Since we need to repeat $O(1/\varepsilon_H^2)$ times,
    the total sample complexity is 
    \[
        O\rbra*{\frac{r^{5-3\alpha}}{\varepsilon^{2/(1-\alpha)+ 3/\alpha}}\log^{2}\rbra*{\frac{r}{\varepsilon}}}.
    \]

    Now, note that $\dAa{\alpha}{\rho}{\sigma}{} = \dAa{1-\alpha}{\sigma}{\rho}{}$. Therefore, for $\alpha \in \interval[open]{0}{1/2}$, we also have an algorithm with sample complexity
    \[
        O\rbra*{\frac{r^{2+3\alpha}}{\varepsilon^{2/\alpha+ 3/(1-\alpha)}}\log^{2}\rbra*{\frac{r}{\varepsilon}}}.
    \]
    Similarly, for $\alpha \in \interval[open]{1/2}{1}$, we also have an algorithm with sample complexity
    \[
        O\rbra*{\frac{r^{2+3\alpha}}{\varepsilon^{5/(1-\alpha)}}\log^{4}\rbra*{\frac{r}{\varepsilon}}}.
    \]

    Combining the above discussions, the sample complexity of the algorithm is
    \[
    \begin{cases}
           O\rbra*{\dfrac{r^{2+3\alpha}}{\varepsilon^{2/\alpha+ 3/(1-\alpha)}}\log^{2}\rbra*{\dfrac{r}{\varepsilon}}}, & \text{ if }
           \alpha \in \interval[open]{0}{1/2}, \\
           O\rbra*{\dfrac{r^{3.5}}{\varepsilon^{10}}\log^{4}\rbra*{\dfrac{r}{\varepsilon}}}, & \text{ if }
           \alpha = 1/2, \\
            O\rbra*{\dfrac{r^{5-3\alpha}}{\varepsilon^{2/(1-\alpha)+ 3/\alpha}}\log^{2}\rbra*{\dfrac{r}{\varepsilon}}}, & \text{ if } \alpha \in \interval[open]{1/2}{1}. 
    \end{cases} 
    \]
    These yield the proof.
\end{proof}

\cref{algo:affinity-sample} can be applied to estimating the Tsallis relative entropy of quantum states.

\begin{proof}[Proof of \cref{thm:Tsallis-relative-sample}]
    We notice that $\dTsa{\alpha}{\rho}{\sigma}{} = \frac{1}{1-\alpha} (1-\dAa{\alpha}{\rho}{\sigma}{}) $.
    Therefore, to obtain an estimate of $\dTsa{\alpha}{\rho}{\sigma}{}$ within additive error $\varepsilon$, it suffices to estimate $\dAa{\alpha}{\rho}{\sigma}{}$ to within additive error $(1-\alpha)\varepsilon$. 
    The claim follows from using the algorithm $\mathsf{AffinityEstS}_{\alpha}(\rho , \sigma, r, (1-\alpha)\varepsilon)$ and applying \cref{thm:sample-complexity-of-general-affinity}.
\end{proof}

\subsection{Sample complexity with weaker rank conditions}

In this part, we consider the sample complexity of estimating the quantum Tsallis relative entropy when we can bound the rank of 
only one density operator.

\subsubsection{Given the rank of \texorpdfstring{$\rho$}{}}

We consider the case when the rank of $\rho$ is bounded.

Note that \Cref{prop:sample-error-single,prop:sample-error-second} still holds in our case. For readability, we rewrite \Cref{prop:sample-error-second} as follows.
\begin{proposition}[Restatement of \cref{prop:sample-error-second}]\label{prop:sample-error-second-rank-rho}
   Let $\alpha\in \interval[open]{0}{1}$ be a constant. 
   For any density operators $\rho, \sigma\in \mathcal{D}(\mathcal{H})$, 
   positive real numbers $\varepsilon_1, \delta_1, \varepsilon_2 \in 
   \interval[open]{0}{1}$,
   we have
    \[
    \abs*{  \tr\rbra*{ \delta_1^{1-\alpha}\rbra*{\frac{\rho}{2}}^{\alpha}p_2\rbra*{\frac{\sigma}{2}}}- \tr \rbra*{\frac{\delta_1^{1-\alpha}}{4}\rho^{\alpha}\sigma^{1-\alpha}} } 
    \le \frac{\delta_1^{1-\alpha}}{2^{\alpha}} r_{\rho}^{1-\alpha} \varepsilon_2,
    \]
    where $r_{\rho}\ge\rank(\rho)$, $p_1\coloneqq p_{1-\alpha, \varepsilon_1, \delta_1, -}$ is the polynomial specified in \cref{lem:poly_approx_neg}, and $p_2\coloneqq p_{0, 1-\alpha, \varepsilon_2 , +}$ is the polynomial specified in \cref{lem:poly_approx_pos}.
\end{proposition}

\begin{proposition}\label{prop:sample-error-first-rank-rho}
    Let $\alpha\in \interval[open]{0}{1}$ be a constant. 
    For any density operators $\rho, \sigma\in \mathcal{D}(\mathcal{H})$, positive real numbers $\varepsilon_1, \delta_1, \varepsilon_2 \in \interval[open]{0}{1}$, we have
    \[
    \abs*{\tr\rbra*{\rho p_1\rbra*{\frac{\rho}{2}} p_2\rbra*{\frac{\sigma}{2}}} 
    - \tr\rbra*{\rho  \frac{\delta_1^{1-\alpha}}{2}\rbra*{\frac{\rho}{2}}^{\alpha-1}p_2\rbra*{\frac{\sigma}{2}}} } 
    \le \rbra*{\varepsilon_2+ 2^{\alpha-2} } \rbra*{4\delta_1+ \varepsilon_1}r_{\rho},
    \]
    where $r_{\rho}\ge \rank(\rho)$, $p_1\coloneqq p_{1-\alpha, \varepsilon_1, \delta_1, -}$ is the polynomial specified in \cref{lem:poly_approx_neg}, and $p_2\coloneqq p_{0, 1-\alpha, \varepsilon_2 , +}$ is the polynomial specified in \cref{lem:poly_approx_pos}.
\end{proposition}

\begin{proof}
    By our choice of $p_2$, we know
    \[
        \Abs*{p_2\rbra*{\frac{\sigma}{2}}-\frac{1}{2}\rbra*{\frac{\sigma}{2}}^{1-\alpha}} \le \varepsilon_2.
    \]
    Also, we have
    \[
        \Abs*{\frac{1}{2}\rbra*{\frac{\sigma}{2}}^{1-\alpha}} \le 2^{\alpha-2}. 
    \]
    Combining the above, by the triangle inequality, we can get 
    \[
        \Abs*{p_2\rbra*{\frac{\sigma}{2}}} 
        \le  \Abs*{p_2\rbra*{\frac{\sigma}{2}} - \frac{1}{2} \rbra*{\frac{\sigma}{2}}^{1-\alpha}} +  \Abs*{\frac{1}{2}\rbra*{\frac{\sigma}{2}}^{1-\alpha}}
        \le\varepsilon_2+ 2^{\alpha-2}.
    \]

    Now we have
    \[
    \begin{aligned}
    \abs*{\tr\rbra*{\rho p_1\rbra*{\frac{\rho}{2}} p_2\rbra*{\frac{\sigma}{2}}} 
    - \tr\rbra*{\rho  \frac{\delta_1^{1-\alpha}}{2}\rbra*{\frac{\rho}{2}}^{\alpha-1}  p_2\rbra*{\frac{\sigma}{2}}} }
    & = \tr\rbra*{p_2\rbra*{\frac{\sigma}{2}}
    \rbra*{\rho p_1\rbra*{\frac{\rho}{2}}-
    \rho \frac{\delta_1^{1-\alpha}}{2}\rbra*{\frac{\rho}{2}}^{\alpha-1} }}\\
    & \le \Abs*{p_2\rbra*{\frac{\sigma}{2}}}  \Abs*{\rho p_1\rbra*{\frac{\rho}{2}} - \delta_1^{1-\alpha}\rbra*{\frac{\rho}{2}}^{\alpha} }_1 \\
    & \le \rbra*{\varepsilon_2+ 2^{\alpha-2}} \rbra*{4\delta_1+ \varepsilon_1}r_{\rho},
    \end{aligned}
    \]
    where the second line is obtained by the matrix H\"{o}lder inequality (\cref{fact:holder}), and the third line is obtained by applying \cref{prop:sample-error-single}.
\end{proof}

\begin{proposition}\label{prop:sample-error-total-rank-rho}
    Let $X$,  $\varepsilon_H$, $\delta$, $\delta_1$,
    $\delta_1'$, $\delta_2'$ be the parameters in \cref{algo:affinity-sample}.
    If $\abs{X - \Prob\sbra{\widetilde{x} = 1}} \leq \varepsilon_H$, then
    \begin{align*}
        \abs*{\frac{16}{\delta_1^{1-\alpha}}\rbra{1-2X} - \dAa{\alpha}{\rho}{\sigma}{}{}} 
        \le & \frac{16}{\delta_1^{1-\alpha}} \rbra*{2\rbra*{\varepsilon_H + \delta} + \delta_1' + \delta_2'} + \\
        & \quad \rbra*{r_{\rho}\varepsilon_2+ 2^{\alpha-2} r_{\rho}} \rbra*{16\delta_1^{\alpha}+ \frac{4\varepsilon_1}{\delta_1^{1-\alpha}}} + 2^{2-\alpha}r_{\rho}^{1-\alpha} \varepsilon_2.
    \end{align*}
\end{proposition}

\begin{proof}
    In the same reasoning as in~\Cref{prop:sample-error-total}, we have
    \[
        \abs*{ 4\rbra{1-2X} - \tr\rbra*{\rho p_1\rbra*{\frac{\rho}{2}} p_2\rbra*{\frac{\sigma}{2}}} } \leq 8\rbra*{\varepsilon_H + \delta} + 4\rbra*{\delta_1' + \delta_2'}.
    \]

    By \cref{prop:sample-error-first-rank-rho,prop:sample-error-second-rank-rho}, we have
    \[
    \abs*{\tr\rbra*{\rho p_1\rbra*{\frac{\rho}{2}} p_2\rbra*{\frac{\sigma}{2}}} 
    - \tr \rbra*{\frac{\delta_1^{1-\alpha}}{4}\rho^{\alpha}\sigma^{1-\alpha}} }\le \rbra*{r_{\rho}\varepsilon_2+ 2^{\alpha-2} r_{\rho}} \rbra*{4\delta_1+ \varepsilon_1} + \frac{\delta_1^{1-\alpha}}{2^{\alpha}} r_{\rho}^{1-\alpha} \varepsilon_2.
    \]
    Therefore, the result follows from the triangle inequality.
\end{proof}

\begin{theorem}[Sample upper bound for estimating quantum Tsallis relative entropy with rank bounds on $\rho$] \label{thm:sample-complexity-of-general-affinity-rank-rho}
    Let $\alpha \in \interval[open]{0}{1}$ be a constant.
    There is a quantum algorithm that, for any $\varepsilon \in \interval[open]{0}{1}$, given sample access to quantum states $\rho, \sigma \in \mathcal{D}(\mathcal{H})$ with the rank of $\rho$ upper bounded by $r$, with probability at least $2/3$, estimating $\dTsa{\alpha}{\rho}{\sigma}{}$ to within additive error $\varepsilon$, using 
    \[
    O\rbra*{\frac{r^{\frac{5}{\alpha}-3}}{\varepsilon^{\frac{5}{\alpha}}}\log^{4}\rbra*{\frac{r}{\varepsilon}} + \frac{r^{ \frac{3}{\alpha}-1}}{\varepsilon^{ \frac{3}{\alpha} + \frac{2}{1-\alpha}}}\log^{2}\rbra*{\frac{r}{\varepsilon}}}
    \]
    samples of $\rho$ and $\sigma$. 
\end{theorem}

\begin{proof}
   We notice that
    $\dTsa{\alpha}{\rho}{\sigma}{} = \frac{1}{1-\alpha} \rbra{1-\dAa{\alpha}{\rho}{\sigma}{}}$.
    Therefore, 
    to obtain an estimate of $\dTsa{\alpha}{\rho}{\sigma}{}$ within additive error
    $\varepsilon$, 
    it suffices to estimate $\dAa{\alpha}{\rho}{\sigma}{}$ to within
    $\rbra{1-\alpha}\eps$ error. 
    We now show how to 
 estimate $\dAa{\alpha}{\rho}{\sigma}{}$ to within
    $\eps$ error.

    For any $\alpha\in \interval[open]{0}{1}$, setting 
    \[
        \eps_1 =\delta_1 =\frac{\eps^{\frac{1}{\alpha}}}{(40r)^{\frac{1}{\alpha}}}, \quad
        \eps_2 = \frac{\eps }{32 r^{1-\alpha}}, \quad
        \delta = \eps_H = \frac{\eps^{\frac{1}{\alpha}}}{256 \cdot 40^{\frac{1}{\alpha}}r^{(1-\alpha)/\alpha}}, \quad
        \delta_1' = \delta_2' = \frac{\eps^{\frac{1}{\alpha}}}{128 \cdot 40^{\frac{1}{\alpha}}r^{(1-\alpha)/\alpha}}    
    \]
    in \cref{prop:sample-error-total-rank-rho}, 
    we have
    \[
     \abs*{\frac{16}{\delta_1^{1-\alpha}}\rbra{1-2X} - \dAa{\alpha}{\rho}{\sigma}{}{}} \le \varepsilon.
    \]

    Now we consider the sample complexity of the algorithm. 
    By our choice of parameters, we have
    \[
        d_1 = O\rbra*{\frac{1}{\delta_1}\log\rbra*{\frac{1}{\varepsilon_1}}} = O\rbra*{\frac{r^{1/\alpha}}{\varepsilon^{1/\alpha}}\log \rbra*{\frac{r}{\varepsilon}}},
    \quad 
        d_2 = O\rbra*{\frac{1}{\eps_2^{1/(1-\alpha)}}} =  O\rbra*{\frac{r}{\varepsilon^{1/(1-\alpha)}}}.
    \]
    Then, the samplizer uses
    \[
        O\rbra*{\frac{\rbra{d_1+d_2}^2}{\delta} \log^2\rbra*{\frac{d_1+d_2}{\delta}}} = O\rbra*{\frac{r^{3/\alpha-1}}{\varepsilon^{3/\alpha}}\log^{4}\rbra*{\frac{r}{\varepsilon}} + \frac{r^{1+ 1/\alpha}}{\varepsilon^{2/(1-\alpha)+ 1/\alpha}}\log^{2}\rbra*{\frac{r}{\varepsilon}} }
    \]
    samples.
    Since we need to repeat $O(1/\varepsilon_H^2)$ times, the total sample complexity is 
    \[
        O\rbra*{\frac{r^{5/\alpha-3}}{\varepsilon^{5/\alpha}}\log^{4}\rbra*{\frac{r}{\varepsilon}} + \frac{r^{ 3/\alpha-1}}{\varepsilon^{2/(1-\alpha)+ 3/\alpha}}\log^{2}\rbra*{\frac{r}{\varepsilon}}}.
    \]
\end{proof}

\subsubsection{Given the rank of \texorpdfstring{$\sigma$}{}}

We consider the case when the rank of $\sigma$ is bounded.

Note that \Cref{prop:sample-error-single,prop:sample-error-first} still holds in our case. For readability, we rewrite \Cref{prop:sample-error-first} as follows.

\begin{proposition}[Restatement of \cref{prop:sample-error-first}]\label{prop:sample-error-first-rank-sigma}
    Let $\alpha\in \interval[open]{0}{1}$ be a constant. 
    For any density operators $\rho, \sigma\in \mathcal{D}(\mathcal{H})$, positive real numbers $\varepsilon_1, \delta_1, \varepsilon_2 \in \interval[open]{0}{1}$, we have
    \[
    \abs*{\tr\rbra*{\rho p_1\rbra*{\frac{\rho}{2}} p_2\rbra*{\frac{\sigma}{2}}} 
    - \tr\rbra*{\rho  \frac{\delta_1^{1-\alpha}}{2}\rbra*{\frac{\rho}{2}}^{\alpha-1}p_2\rbra*{\frac{\sigma}{2}}} } 
    \le \rbra*{r_{\sigma}\varepsilon_2+ 2^{\alpha-2} r_{\sigma}^{\alpha}} \rbra*{4\delta_1+ \varepsilon_1},
    \]
    where $r_{\sigma}\ge \rank(\sigma)$, $p_1\coloneqq p_{1-\alpha, \varepsilon_1, \delta_1, -}$ is the polynomial specified in \cref{lem:poly_approx_neg}, and $p_2\coloneqq p_{0, 1-\alpha, \varepsilon_2 , +}$ is the polynomial specified in \cref{lem:poly_approx_pos}.
\end{proposition}

If the rank of $\rho$ can not be bounded, we still have the following slightly weaker bound.

\begin{proposition}\label{prop:sample-error-second-rank-sigma}
   Let $\alpha\in \interval[open]{0}{1}$ be a constant. 
   For any density operators $\rho, \sigma\in \mathcal{D}(\mathcal{H})$, 
   positive real numbers $\varepsilon_1, \delta_1, \varepsilon_2 \in 
   \interval[open]{0}{1}$,
   we have
    \[
    \abs*{  \tr\rbra*{ \delta_1^{1-\alpha}\rbra*{\frac{\rho}{2}}^{\alpha}p_2\rbra*{\frac{\sigma}{2}}}- \tr \rbra*{\frac{\delta_1^{1-\alpha}}{4}\rho^{\alpha}\sigma^{1-\alpha}} } 
    \le \frac{\delta_1^{1-\alpha}}{2^{\alpha}} r_{\sigma} \varepsilon_2,
    \]
    where $r_{\sigma}\ge  \rank(\sigma)$, $p_1\coloneqq p_{1-\alpha, \varepsilon_1, \delta_1, -}$ is the polynomial specified in \cref{lem:poly_approx_neg}, and $p_2\coloneqq p_{0, 1-\alpha, \varepsilon_2 , +}$ is the polynomial specified in \cref{lem:poly_approx_pos}.
\end{proposition}

\begin{proof}
    First, it is direct to see that
    \[
        \Abs*{\rho^{\alpha}} \le 1.
    \]
    By our choice of $p_2$, we have
    \[
        \Abs*{p_2\rbra*{\frac{\sigma}{2}} -  \frac{1}{2}\rbra*{\frac{\sigma}{2}}^{1-\alpha}}_1 \le r_{\sigma}\varepsilon_2.
    \]
    Therefore, we deduce
    \[
    \begin{aligned}
       \abs*{  \tr\rbra*{ \delta_1^{1-\alpha}\rbra*{\frac{\rho}{2}}^{\alpha}p_2\rbra*{\frac{\sigma}{2}}}- \tr \rbra*{\frac{\delta_1^{1-\alpha}}{4}\rho^{\alpha}\sigma^{1-\alpha}} }
    & \le \frac{\delta_1^{1-\alpha}}{2^{\alpha}} \tr \rbra*{\rho^{\alpha} \rbra*{p_2\rbra*{\frac{\sigma}{2}}-\frac{1}{2}\rbra*{\frac{\sigma}{2}}^{1-\alpha}}} \\
    &\le  \frac{\delta_1^{1-\alpha}}{2^{\alpha}} \Abs*{\rho^{\alpha}}_1  
    \Abs*{p_2\rbra*{\frac{\sigma}{2}}-\frac{1}{2}\rbra*{\frac{\sigma}{2}}^{1-\alpha}} \\
      &\le  \frac{\delta_1^{1-\alpha}}{2^{\alpha}} r_{\sigma} \varepsilon_2,
    \end{aligned}
    \]
    where the second line is obtained by the matrix H\"{o}lder inequality (\cref{fact:holder}).
\end{proof}

\begin{proposition}\label{prop:sample-error-total-rank-sigma}
    Let $X$,  $\varepsilon_H$, $\delta$, $\delta_1$,
    $\delta_1'$, $\delta_2'$ be the parameters in \cref{algo:affinity-sample}.
    If $\abs{X - \Prob\sbra{\widetilde{x} = 1}} \leq \varepsilon_H$, then
    \begin{align*}
        \abs*{\frac{16}{\delta_1^{1-\alpha}}\rbra{1-2X} - \dAa{\alpha}{\rho}{\sigma}{}{}} 
        \le {} & \frac{16}{\delta_1^{1-\alpha}} \rbra*{2\rbra*{\varepsilon_H + \delta} + \delta_1' + \delta_2'} + \\
        & \quad \rbra*{r_{\sigma}\varepsilon_2+ 2^{\alpha-2} r_{\sigma}^{\alpha}} \rbra*{16\delta_1^{\alpha}+ \frac{4\varepsilon_1}{\delta_1^{1-\alpha}}} + 2^{2-\alpha}r_{\sigma} \varepsilon_2,
    \end{align*}
     where $r_{\sigma}\ge  \rank(\sigma)$.
\end{proposition}

\begin{proof}
    In the same reasoning as in~\Cref{prop:sample-error-total}, we have
    \[
        \abs*{ 4\rbra{1-2X} - \tr\rbra*{\rho p_1\rbra*{\frac{\rho}{2}} p_2\rbra*{\frac{\sigma}{2}}} } \leq 8\rbra*{\varepsilon_H + \delta} + 4\rbra*{\delta_1' + \delta_2'}.
    \]

    By \cref{prop:sample-error-first-rank-sigma,prop:sample-error-second-rank-sigma}, we have
    \[
    \abs*{\tr\rbra*{\rho p_1\rbra*{\frac{\rho}{2}} p_2\rbra*{\frac{\sigma}{2}}} 
    - \tr \rbra*{\frac{\delta_1^{1-\alpha}}{4}\rho^{\alpha}\sigma^{1-\alpha}} }
    \le \rbra*{r_{\sigma}\varepsilon_2+ 2^{\alpha-2} r_{\sigma}^{\alpha}} \rbra*{4\delta_1+ \varepsilon_1} + \frac{\delta_1^{1-\alpha}}{2^{\alpha}} r_{\sigma} \varepsilon_2.
    \]
    Therefore, the result follows from the triangle inequality.
\end{proof}

\begin{theorem}[Sample upper bound for estimating quantum Tsallis relative with rank bounds on $\sigma$]\label{thm:sample-complexity-of-general-affinity-rank-sigma}
    Let $\alpha \in \interval[open]{0}{1}$ be a constant.
    There is a quantum algorithm that, for any $\varepsilon \in \interval[open]{0}{1}$, given sample access to quantum states $\rho, \sigma \in \mathcal{D}(\mathcal{H})$ with the rank of $\sigma$ upper bounded by $r$, with probability at least $2/3$, estimating $\dTsa{\alpha}{\rho}{\sigma}{}$ to within additive error $\varepsilon$, using 
    \[
 O\rbra*{\frac{r^{5-3\alpha}}{\varepsilon^{\frac{5}{\alpha}}}\log^{4}\rbra*{\frac{r}{\varepsilon}} + \frac{r^{\frac{2}{1-\alpha}+ 3(1-\alpha)}}{\varepsilon^{\frac{3}{\alpha} +\frac{2}{1-\alpha}  }}\log^{2}\rbra*{\frac{r}{\varepsilon}}}
    \]
    samples of $\rho$ and $\sigma$. 
\end{theorem}

\begin{proof}
   We notice that
    $\dTsa{\alpha}{\rho}{\sigma}{} = \frac{1}{1-\alpha} \rbra{1-\dAa{\alpha}{\rho}{\sigma}{}}$.
    Therefore, 
    to obtain an estimate of $\dTsa{\alpha}{\rho}{\sigma}{}$ within additive error
    $\varepsilon$, 
    it suffices to estimate $\dAa{\alpha}{\rho}{\sigma}{}$ to within
    $\rbra{1-\alpha}\eps$ error. 
    We now show how to 
 estimate $\dAa{\alpha}{\rho}{\sigma}{}$ to within
    $\eps$ error.

    For any $\alpha\in \interval[open]{0}{1}$, setting 
    \[
        \eps_1 =\delta_1 =\frac{\eps^{\frac{1}{\alpha}}}{40^{\frac{1}{\alpha}}r}, \quad
        \eps_2 = \frac{\eps}{32r}, \quad
        \delta = \eps_H = \frac{\eps^{\frac{1}{\alpha}}}{256 \cdot 40^{\frac{1}{\alpha}}r^{1-\alpha}}, \quad
        \delta_1' = \delta_2' = \frac{\eps^{\frac{1}{\alpha}}}{128 \cdot 40^{\frac{1}{\alpha}}r^{1-\alpha}}    
    \]
    in \cref{prop:sample-error-total-rank-sigma}, 
    we have
    \[
     \abs*{\frac{16}{\delta_1^{1-\alpha}}\rbra{1-2X} - \dAa{\alpha}{\rho}{\sigma}{}{}} \le \varepsilon.
    \]

    Now we consider the sample complexity of the algorithm. 
    By our choice of parameters, we have
    \[
        d_1 = O\rbra*{\frac{1}{\delta_1}\log\rbra*{\frac{1}{\varepsilon_1}}} = O\rbra*{\frac{r}{\varepsilon^{1/\alpha}}\log \rbra*{\frac{r}{\varepsilon}}},
    \quad 
        d_2 = O\rbra*{\frac{1}{\eps_2^{1/(1-\alpha)}}} =  O\rbra*{\frac{r^{1/(1-\alpha)}}{\varepsilon^{1/(1-\alpha)}}}.
    \]

    Then, the samplizer uses
    \[
        O\rbra*{\frac{\rbra{d_1+d_2}^2}{\delta} \log^2\rbra*{\frac{d_1+d_2}{\delta}}} = O\rbra*{\frac{r^{3-\alpha}}{\varepsilon^{3/\alpha}}\log^{4}\rbra*{\frac{r}{\varepsilon}} + \frac{r^{2/(1-\alpha)+ (1-\alpha)}}{\varepsilon^{2/(1-\alpha)+ 1/\alpha}}\log^{2}\rbra*{\frac{r}{\varepsilon}} }
    \]
    samples. Since we need to repeat $O(1/\varepsilon_H^2)$ times, the total sample complexity is 
    \[
        O\rbra*{\frac{r^{5-3\alpha}}{\varepsilon^{5/\alpha}}\log^{4}\rbra*{\frac{r}{\varepsilon}} + \frac{r^{2/(1-\alpha)+ 3(1-\alpha)}}{\varepsilon^{2/(1-\alpha)+ 3/\alpha}}\log^{2}\rbra*{\frac{r}{\varepsilon}}}.
    \]
\end{proof}

\subsubsection{Given the smaller rank of \texorpdfstring{$\rho$}{} and \texorpdfstring{$\sigma$}{}}

In the following, we consider the case when the minimum of rank of $\rho$ and rank of $\sigma$ has an upper bound $r$.
Note that~\Cref{prop:sample-error-single} still holds in this case.

\begin{proposition}\label{prop:error-operator-norm-sample}
      Let $\alpha \in \interval[open]{0}{1}$ be a constant.
    For any density operators $\rho, \sigma \in \calD\rbra{\calH}$, positive real numbers $\eps_1, \delta_1, \eps_2 \in \interval[open]{0}{1}$, we have
\begin{equation*}
 \Abs*{\rho p_1\rbra*{\frac{\rho}{2}} p_2\rbra*{\frac{\sigma}{2}}
    - \frac{\delta_1^{1-\alpha}}{4}\rho^{\alpha}\sigma^{1-\alpha} }\le 4\delta_1+ \eps_1 + \eps_2,
\end{equation*}
    where $p_1\coloneqq p_{1-\alpha, \varepsilon_1, \delta_1, -}$ is the polynomial specified in \cref{lem:poly_approx_neg}, and $p_2\coloneqq p_{0, 1-\alpha, \varepsilon_2 , +}$ is the polynomial specified in \cref{lem:poly_approx_pos}.
\end{proposition}

\begin{proof}
    By~\Cref{prop:sample-error-single}, we have
    \[
      \Abs*{\rho p_1\rbra*{\frac{\rho}{2}} - \delta_1^{1-\alpha}\rbra*{\frac{\rho}{2}}^{\alpha} } \le 4\delta_1+ \varepsilon_1,
    \]
     By our choice of $p_2$, we have
    \[
       \Abs*{p_2\rbra*{\frac{\sigma}{2}}-\frac{1}{2}\rbra*{\frac{\sigma}{2}}^{1-\alpha}} \le \varepsilon_2.
    \]
   Note that 
    $ \Abs*{p_2\rbra*{\frac{\sigma}{2}}} \leq \varepsilon_2 + \frac{1}{2^{2-\alpha}} \le 1$.
   Therefore, we have
   \[
   \begin{aligned}
         &\Abs*{\rho p_1\rbra*{\frac{\rho}{2}} p_2\rbra*{\frac{\sigma}{2}} - \frac{\delta_1^{1-\alpha}}{4}\rho^{\alpha}\sigma^{1-\alpha} } \\
         & \le 
 \Abs*{\rho p_1\rbra*{\frac{\rho}{2}} p_2\rbra*{\frac{\sigma}{2}}  - \frac{1}{2}\rho p_1\rbra*{\frac{\rho}{2}}\rbra*{ \frac{\sigma}{2}}^{1-\alpha} } + \Abs*{\frac{1}{2}\rho p_1\rbra*{\frac{\rho}{2}}\rbra*{ \frac{\sigma}{2}}^{1-\alpha}  - \frac{\delta_1^{1-\alpha}}{4}\rho^{\alpha}\sigma^{1-\alpha} }\\
 &\le 
 \Abs*{\rho p_1\rbra*{\frac{\rho}{2}}}\Abs*{p_2\rbra*{\frac{\sigma}{2}}-\frac{1}{2}\rbra*{\frac{\sigma}{2}}^{1-\alpha} } + \Abs*{\frac{1}{2^{2-\alpha}}\sigma^{1-\alpha}}\Abs*{\rho p_1\rbra*{\frac{\rho}{2}} - \delta_1^{1-\alpha}\rbra*{\frac{\rho}{2}}^{\alpha}} \\
 &\le 4\delta_1+ \eps_1 + \eps_2.
   \end{aligned}
   \]
\end{proof}

\begin{proposition}\label{prop:sample-error-total-min-rank}
    Let $X$,  $\varepsilon_H$, $\delta$, $\delta_1$,
    $\delta_1'$, $\delta_2'$ be the parameters in \cref{algo:affinity-sample}.
    If $\abs{X - \Prob\sbra{\widetilde{x} = 1}} \leq \varepsilon_H$, then
    \begin{align*}
        \abs*{\frac{16}{\delta_1^{1-\alpha}}\rbra{1-2X} - \dAa{\alpha}{\rho}{\sigma}{}{}} 
        \le {} & \frac{4}{\delta_1^{1-\alpha}} \rbra*{8\varepsilon_H + 8\delta + 4\delta_1' + 4\delta_2'+4r\delta_1+ r\eps_1 + r\eps_2},
    \end{align*}
     where $r\ge \min\{\rank(\rho), \rank(\sigma)\}$.
\end{proposition}

\begin{proof}
    In the same reasoning as in~\Cref{prop:sample-error-total}, we have
    \[
        \abs*{ 4\rbra{1-2X} - \tr\rbra*{\rho p_1\rbra*{\frac{\rho}{2}} p_2\rbra*{\frac{\sigma}{2}}} } \leq 8\rbra*{\varepsilon_H + \delta} + 4\rbra*{\delta_1' + \delta_2'}.
    \]

    By \cref{lem:tr-dif-bound,prop:error-operator-norm-sample}, we have
    \[
    \abs*{\tr\rbra*{\rho p_1\rbra*{\frac{\rho}{2}} p_2\rbra*{\frac{\sigma}{2}}} 
    - \tr \rbra*{\frac{\delta_1^{1-\alpha}}{4}\rho^{\alpha}\sigma^{1-\alpha}} }
    \le 4r\delta_1+ r\eps_1 + r\eps_2.
    \]
    Therefore, the result follows from the triangle inequality.
\end{proof}

\begin{theorem}[Sample upper bound for estimating quantum affinity with minimum rank bounded]\label{thm:sample-complexity-of-general-affinity-min-rank}
   Let $\alpha \in \interval[open]{0}{1}$ be a constant.
    There is a quantum algorithm that, for any $\varepsilon \in \interval[open]{0}{1}$, given sample access to quantum states $\rho, \sigma \in \mathcal{D}(\mathcal{H})$ with $r\ge \min\{\rank(\rho), \rank(\sigma)\}$, with probability at least $2/3$, estimating $\dTsa{\alpha}{\rho}{\sigma}{}$ to within additive error $\varepsilon$, using 
    \[
   \begin{cases}
               O\rbra*{\frac{r^{\frac{2}{\alpha}+\frac{5}{1-\alpha}-3}}{\varepsilon^{\frac{2}{\alpha}+\frac{5}{1-\alpha}}}\log^{2}\rbra*{\frac{r}{\varepsilon}} }, \text{ if } 0< \alpha \le \frac{1}{2}, \\
                O\rbra*{\frac{r^{\frac{5}{\alpha}+\frac{2}{1-\alpha}-3}}{\varepsilon^{\frac{5}{\alpha}+\frac{2}{1-\alpha}}}\log^{2}\rbra*{\frac{r}{\varepsilon}} },  \text{ if } \frac{1}{2}< \alpha < 1
         \end{cases}
    \]
    samples of $\rho$ and $\sigma$. 
\end{theorem}
\begin{proof}
   We notice that
    $\dTsa{\alpha}{\rho}{\sigma}{} = \frac{1}{1-\alpha} \rbra{1-\dAa{\alpha}{\rho}{\sigma}{}}$.
    Therefore, 
    to obtain an estimate of $\dTsa{\alpha}{\rho}{\sigma}{}$ within additive error
    $\varepsilon$, 
    it suffices to estimate $\dAa{\alpha}{\rho}{\sigma}{}$ to within
    $\rbra{1-\alpha}\eps$ error. 
    We now show how to 
 estimate $\dAa{\alpha}{\rho}{\sigma}{}$ to within
    $\eps$ error.

      For any $\alpha\in \interval[open]{0}{1}$, setting 
  setting 
    \[
        \varepsilon_1 = \varepsilon_2= \delta_1 = \frac{\eps^{1/\alpha}}{200^{1/\alpha}r^{1/\alpha}}, \quad \delta =\eps_H = \frac{\eps^{1/\alpha}}{8 \cdot 200^{(1-\alpha)/\alpha}r^{(1-\alpha)/\alpha}},
        \quad \delta_1' = \delta_2' = \frac{\eps^{1/\alpha}}{16 \cdot 200^{(1-\alpha)/\alpha}r^{(1-\alpha)/\alpha}}
    \]
    in \cref{prop:sample-error-total-min-rank}, 
    we have
    \[
     \abs*{\frac{16}{\delta_1^{1-\alpha}}\rbra{1-2X} - \dAa{\alpha}{\rho}{\sigma}{}{}} \le \varepsilon.
    \]

    Now we consider the sample complexity of the algorithm. 
    By our choice of parameters, we have
    \[
        d_1 = O\rbra*{\frac{1}{\delta_1}\log\rbra*{\frac{1}{\varepsilon_1}}} = O\rbra*{\frac{r^{1/\alpha}}{\varepsilon^{1/\alpha}}\log \rbra*{\frac{r}{\varepsilon}}},
    \quad 
        d_2 = O\rbra*{\frac{1}{\eps_2^{1/(1-\alpha)}}} =  O\rbra*{\frac{r^{1/\alpha(1-\alpha)}}{\varepsilon^{1/\alpha(1-\alpha)}}}.
    \]
    Note that $d_1= O\rbra{d_2}$.
    Then, the samplizer uses
    \[
        O\rbra*{\frac{\rbra{d_1+d_2}^2}{\delta} \log^2\rbra*{\frac{d_1+d_2}{\delta}}} = O\rbra*{\frac{r^{2/\alpha(1-\alpha) + (1-\alpha)/\alpha}}{\varepsilon^{2/\alpha(1-\alpha)+1/\alpha}}\log^{2}\rbra*{\frac{r}{\varepsilon}} } =  O\rbra*{\frac{r^{\frac{3}{\alpha}+\frac{2}{1-\alpha}-1}}{\varepsilon^{\frac{3}{\alpha}+\frac{2}{1-\alpha}}}\log^{2}\rbra*{\frac{r}{\varepsilon}} }
    \]
    samples. Since we need to repeat $O(1/\varepsilon_H^2)$ times,
    the total sample complexity is 
    \[
       O\rbra*{\frac{r^{\frac{5}{\alpha}+\frac{2}{1-\alpha}-3}}{\varepsilon^{\frac{5}{\alpha}+\frac{2}{1-\alpha}}}\log^{2}\rbra*{\frac{r}{\varepsilon}} }.
    \]
    Note that $\dAa{\alpha}{\rho}{\sigma}{} = \dAa{1-\alpha}{\sigma}{\rho}{}$, this yields an algorithm with sample complexity
    \[
       O\rbra*{\frac{r^{\frac{2}{\alpha}+\frac{5}{1-\alpha}-3}}{\varepsilon^{\frac{2}{\alpha}+\frac{5}{1-\alpha}}}\log^{2}\rbra*{\frac{r}{\varepsilon}} }.
    \]
    which is no larger if $0< \alpha \le \frac{1}{2}$.
\end{proof}
   
\subsection{Application: Tolerant quantum state certification in Hellinger distance} \label{sec:app}

As an application, our algorithm can be used to estimate the Hellinger distance between quantum states, and thus is useful in the tolerant quantum state certification with respect to the Hellinger distance.

\begin{corollary}[Tolerant quantum state certification in Hellinger distance]\label{thm:tolerant-state-certification-Hellinger-distance}
    For
    any real numbers $0 \le  \varepsilon_1 < \varepsilon_2$, given two unknown quantum states $\rho$ and $\sigma$:
    \begin{itemize}
        \item If $r = \max\cbra{\rank\rbra{\rho}, \rank\rbra{\sigma}}$ is known, then we can distinguishes the case $\dH{\rho}{\sigma}{}{} \le \varepsilon_1$ from the case $\dH{\rho}{\sigma}{}{} \ge \varepsilon_2$ using
            \begin{itemize}
               \item  
               $O\rbra*{\frac{r^{3.5}}{\rbra*{\varepsilon_2-\varepsilon_1}^{10}}\log^{4}\rbra*{\frac{r}{{\varepsilon_2-\varepsilon_1}}}}$ samples of $\rho$ and $\sigma$ 
                (by \cref{thm:Tsallis-relative-sample}).

                 \item  
                 $O\rbra*{\frac{r^{1.5}}{\rbra*{\varepsilon_2-\varepsilon_1}^{4}}\log\rbra*{\frac{r}{{\varepsilon_2-\varepsilon_1}}}}$ queries to the state-preparation circuits of $\rho$ and $\sigma$ 
                 (by~\cref{thm:Tsallis-relative-query}).

            \end{itemize} 
      
        \item If $r = \rank\rbra{\sigma}$ or $r=\rank\rbra{\sigma}$ is known, then we can distinguishes the case $\dH{\rho}{\sigma}{}{} \le \varepsilon_1$ from the case $\dH{\rho}{\sigma}{}{} \ge \varepsilon_2$ using
            \begin{itemize}
                \item $O\rbra*{ \frac{r^{5.5}}{\rbra*{\varepsilon_2-\varepsilon_1}^{10 }}\log^{2}\rbra*{\frac{r}{{\varepsilon_2-\varepsilon_1}}}}$ samples of $\rho$ and $\sigma$ (by \cref{thm:sample-complexity-of-general-affinity-rank-sigma});
                \item $O\rbra*{ \frac{r^{2.5}}{\rbra*{\varepsilon_2-\varepsilon_1}^{4}}}$ queries to the state-preparation circuits of $\rho$ and $\sigma$ (by \cref{thm:query-complexity-of-general-affinity-rank-sigma}).
            \end{itemize} 

        \item If $r = \min\cbra{\rank\rbra{\rho}, \rank\rbra{\sigma}}$ is known, then we can distinguishes the case $\dH{\rho}{\sigma}{}{} \le \varepsilon_1$ from the case $\dH{\rho}{\sigma}{}{} \ge \varepsilon_2$ using 
            \begin{itemize}
                  \item $O\rbra*{\frac{r^{11}}{\rbra*{\varepsilon_2-\varepsilon_1}^{14}}\log^{2}\rbra*{\frac{r}{{\varepsilon_2-\varepsilon_1}}} }$ samples of $\rho$ and $\sigma$  (by \cref{thm:sample-complexity-of-general-affinity-min-rank});
                   \item $O\rbra*{\frac{r^{5}}{\rbra*{\varepsilon_2-\varepsilon_1}^{6}}}$  queries to the state-preparation circuits of $\rho$ and $\sigma$ (by \cref{thm:query-complexity-of-general-affinity-min-rank}).
            \end{itemize} 
    \end{itemize}
\end{corollary}

\section{Lower Bounds} \label{sec:lb}

In this section, we investigate the query and sample complexity lower bounds for estimating the quantum Tsallis relative entropy.
The lower bounds obtained in this section are summarized in the following theorem. 

\begin{theorem}[\Cref{thm:lower-query-affinity,thm:sample-lower-bound-affinity,thm:sample-lower-bound-Hellinger-distance} combined]
Let $\alpha \in \rbra{0, 1}$ be a constant. Given two unknown quantum states $\rho$ and $\sigma$ of rank at most $r$, for any sufficiently small $\varepsilon > 0$, 
    \begin{itemize}
        \item Estimating $\dTsa{\alpha}{\rho}{\sigma}{}$ or $\dH{\rho}{\sigma}{}$ to with additive error $\varepsilon$ requires query complexity $\Omega\rbra{r^{1/3}+1/\varepsilon}$. 
        \item Estimating $\dTsa{\alpha}{\rho}{\sigma}{}$ to with additive error $\varepsilon$ requires sample complexity $\Omega\rbra{r/\varepsilon+1/\varepsilon^2}$. 
        \item Estimating $\dH{\rho}{\sigma}{}$ to with additive error $\varepsilon$ requires sample complexity $\Omega\rbra{r/\varepsilon^2}$. 
    \end{itemize}
\end{theorem}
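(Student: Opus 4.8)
The plan is to prove the three lower bounds separately by reductions, each producing two instances of rank at most $r$ on which the relevant quantity differs by more than $2\varepsilon$ while being hard to tell apart; since an estimator of additive error $\varepsilon$ distinguishes such instances, its query (resp.\ sample) complexity is at least that of the associated decision problem. We combine a collision-problem reduction for the $\Omega(r^{1/3})$ query term, a pure two-state construction for the $\Omega(1/\varepsilon)$ query and $\Omega(1/\varepsilon^2)$ sample terms, and a reduction from quantum identity testing against the maximally mixed state for the $\Omega(r/\varepsilon)$ and $\Omega(r/\varepsilon^2)$ sample terms.

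\emph{Collision reduction ($\Omega(r^{1/3})$ queries).} Fix the reference $\sigma\coloneqq I_r/r$, whose purified-access oracle is a fixed circuit costing no queries. Given an oracle for $f\colon[r]\to[r]$ promised to be a permutation or $2$-to-$1$, the state $\rho_f\coloneqq\tfrac1r\sum_x\ketbra{f(x)}{f(x)}$ has rank at most $r$ and admits a purified-access oracle preparing $\tfrac1{\sqrt r}\sum_x\ket{f(x)}\ket x$ with $O(1)$ calls to $f$. If $f$ is a permutation then $\rho_f=\sigma$, so $\dTsa{\alpha}{\rho_f}{\sigma}{}=\dH{\rho_f}{\sigma}{}=0$; if $f$ is $2$-to-$1$ then $\rho_f$ has $r/2$ eigenvalues equal to $2/r$ (the rest $0$) in the same eigenbasis as $\sigma$, so $\tr\rbra*{\rho_f^\alpha\sigma^{1-\alpha}}=\tfrac r2\rbra*{\tfrac2r}^\alpha\rbra*{\tfrac1r}^{1-\alpha}=2^{\alpha-1}$, whence $\dTsa{\alpha}{\rho_f}{\sigma}{}=\tfrac{1-2^{\alpha-1}}{1-\alpha}$ and $\dH{\rho_f}{\sigma}{}=\sqrt{1-2^{-1/2}}$, positive constants for our fixed $\alpha$. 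Hence for all sufficiently small $\varepsilon$ an estimator decides the collision problem, so its query complexity is $\Omega(r^{1/3})$ by the collision lower bound.

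\emph{Pure two-state construction ($\Omega(1/\varepsilon)$ queries, $\Omega(1/\varepsilon^2)$ samples).} Fix $\sigma\coloneqq\ketbra00$ (a free oracle, rank $1$) and $\rho_\theta\coloneqq\ketbra{\psi_\theta}{\psi_\theta}$ with $\ket{\psi_\theta}=\cos\theta\ket0+\sin\theta\ket1$. Since $\rho_\theta$ and $\sigma$ are rank-$1$ projectors, $\rho_\theta^\alpha=\rho_\theta$ and $\sigma^{1-\alpha}=\sigma$, so $\dTsa{\alpha}{\rho_\theta}{\sigma}{}=\tfrac{\sin^2\theta}{1-\alpha}$ and $\dH{\rho_\theta}{\sigma}{}=\abs{\sin\theta}$, both bi-Lipschitz functions of $\theta$ near $\theta=\pi/4$; estimating either to additive error $\varepsilon$ therefore forces distinguishing $\rho_\theta$ from $\rho_{\theta'}$ with $\abs{\theta-\theta'}=\Theta(\varepsilon)$. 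Taking $U_\theta$ to be the single-qubit rotation with $U_\theta\ket0=\ket{\psi_\theta}$, we have $\Abs{U_\theta-U_{\theta'}}=2\abs{\sin\tfrac{\theta-\theta'}2}\le\abs{\theta-\theta'}$, so a standard hybrid argument gives the $\Omega(1/\varepsilon)$ query bound, while the $n$-copy overlap $\cos^n(\theta-\theta')$ together with the pure-state discrimination bound gives the $\Omega(1/\varepsilon^2)$ sample bound (copies of, and oracle calls to, the known state $\sigma$ being free).

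\emph{Identity-testing reduction ($\Omega(r/\varepsilon)$ and $\Omega(r/\varepsilon^2)$ samples), and the main obstacle.} Fix $\sigma\coloneqq I_r/r$ and let $\rho$ be either $I_r/r$ or $U\Lambda U^\dagger$, where $U$ is Haar-random on $\C^r$ and $\Lambda$ has $r/2$ eigenvalues $(1+\delta)/r$ and $r/2$ eigenvalues $(1-\delta)/r$; this is the hard instance for certifying identity to $I_r/r$ at trace distance $\Theta(\delta)$, for which distinguishing the two cases requires $\Omega(r/\delta^2)$ copies of $\rho$ \cite{BOW19}. Because $\sigma$ is unitarily invariant, $\tr\rbra*{\rho^\alpha\sigma^{1-\alpha}}=r^{\alpha-1}\tr\rbra*{\Lambda^\alpha}=\tfrac12\bigl((1+\delta)^\alpha+(1-\delta)^\alpha\bigr)$ depends only on $\spec(\rho)$, so by a Taylor expansion $\dTsa{\alpha}{\rho}{\sigma}{}$ equals $0$ in the first case and $\tfrac\alpha2\delta^2+O(\delta^4)$ in the second, and $\dH{\rho}{\sigma}{}$ equals $0$ and $\Theta(\delta)$ respectively. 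Choosing $\delta=\Theta(\sqrt\varepsilon)$ makes the two Tsallis values differ by more than $2\varepsilon$, yielding $\Omega(r/\delta^2)=\Omega(r/\varepsilon)$ samples; choosing $\delta=\Theta(\varepsilon)$ does the same for the Hellinger distance, yielding $\Omega(r/\varepsilon^2)$ samples (copies of the known $\sigma$ again being free, so these bound the total count). The delicate point, which I would check most carefully, is exactly this: the identity-testing instance is hard because the $\pm\delta$ eigenvalue perturbation sits in an unknown Haar-random eigenbasis, and one must verify that this randomness does not alter the quantity being estimated --- which holds precisely because the reference $\sigma=I_r/r$ is unitarily invariant, making $\dTsa{\alpha}{\rho}{\sigma}{}$ (resp.\ $\dH{\rho}{\sigma}{}$) a function of $\spec(\rho)$ alone and hence constant over the ensemble, so that estimating it genuinely decides the testing problem. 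The remaining ingredients --- the Taylor expansions, the bi-Lipschitz estimates, and the routing of oracle and sample access through the fixed references --- are routine. Combining the three reductions gives $\Omega(r^{1/3}+1/\varepsilon)$ queries, $\Omega(r/\varepsilon+1/\varepsilon^2)$ samples for $\dTsa{\alpha}{\rho}{\sigma}{}$, and $\Omega(r/\varepsilon^2)$ samples for $\dH{\rho}{\sigma}{}$.
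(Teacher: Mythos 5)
Your proposal is correct, and its overall shape matches the paper's: all three bullets are proved by reductions from distinguishing problems, and your sample-complexity arguments are essentially identical to the paper's proofs of \cref{thm:sample-lower-bound-affinity,thm:sample-lower-bound-Hellinger-distance} (reduction from mixedness certification against $I/r$ with spectrum $(1\pm\delta)/r$, exploiting unitary invariance of the reference, via \cref{thm:lower-bound-state-certification}; plus a two-close-pure-states instance for the $\Omega(1/\varepsilon^2)$ term, which the paper gets by citing \cref{thm:lower-bound-inner-product}). Where you genuinely diverge is the query bound (\cref{thm:lower-query-affinity}). For the $\Omega(r^{1/3})$ term the paper reduces from uniformity testing with purified access (\cref{thm:uniform_test_lb}) and then needs the two-sided Pinsker-type relation of \cref{lem:Tsa_vs_tr} to translate ``far in total variation'' into ``large $\dTsa{\alpha}{\rho}{\sigma}{}$''; you instead reduce directly from the collision problem, and because your hard states $\rho_f$ commute with $\sigma=I_r/r$ you can compute $\dTsa{\alpha}{\rho_f}{\sigma}{}=\frac{1-2^{\alpha-1}}{1-\alpha}$ and $\dH{\rho_f}{\sigma}{}=\sqrt{1-2^{-1/2}}$ exactly, bypassing the inequality entirely (at the mild cost of having to note that the Aaronson--Shi-type collision bound tolerates controlled oracle calls, which is standard). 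For the $\Omega(1/\varepsilon)$ term the paper cites the known pure-state fidelity/trace-distance estimation lower bound (\cref{thm:lb-sqr-fid}), whereas you give a self-contained hybrid argument on the rotation family $\ket{\psi_\theta}$, together with the $\cos^n$-overlap argument for the matching $\Omega(1/\varepsilon^2)$ sample term; both are sound, yours is more elementary and self-contained, the paper's is shorter by black-boxing existing results. Your flagged ``delicate point'' (that the Haar-random eigenbasis in the certification instance does not change the estimated quantity because $\sigma=I_r/r$ is unitarily invariant) is exactly the fact the paper uses implicitly when it computes the affinity by direct spectral calculation, so no gap there.
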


\subsection{Query complexity lower bound}\label{sec:query_lb}

To show a query complexity lower bound, we need the following result in \cite{CFMdW10}, which was recently used to show the quantum query lower bounds for estimating the Tsallis entropy \cite{LW25a} and fidelity \cite{UNWT25}. 

\begin{theorem}[Adapted from {\cite[Theorem 4.1 in the full version]{CFMdW10}}] \label{thm:uniform_test_lb}
    Let 
    \[
        \rho = \sum_{i=0}^{d-1} p_i \ketbra{i}{i}
    \]
    be a diagonal mixed quantum state with $p = \rbra{p_0, p_1, \dots, p_{d-1}}$ forming a discrete probability distribution. 
    Given purified quantum query access to $\rho$, for any $\varepsilon \in (0, 1/2]$, determining whether the distribution $p$ is uniform or $\varepsilon$-far from being uniform in the total variation distance requires query complexity $\Omega\rbra{d^{1/3}}$. 
\end{theorem}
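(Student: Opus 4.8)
The plan is to prove the $\Omega\rbra{d^{1/3}}$ bound by reducing from the collision problem, whose quantum query complexity is known to be $\Omega\rbra{d^{1/3}}$ (Aaronson--Shi, with the optimal bound for the two-to-one-versus-bijection case established by Kutin and by Ambainis). Recall the collision problem: given oracle access to a function $f \colon \cbra{0,\dots,d-1} \to \cbra{0,\dots,d-1}$ (with $d$ even) through a standard query oracle $O_f \colon \ket{i}\ket{0} \mapsto \ket{i}\ket{f\rbra{i}}$, and promised that $f$ is either a bijection or two-to-one, decide which; any quantum algorithm needs $\Omega\rbra{d^{1/3}}$ queries to $O_f$. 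I would show that a uniformity tester with purified query access solves this problem with only a constant-factor blowup in queries.

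First I would turn a collision instance into a uniformity-testing instance by taking the pushforward distribution $q = \rbra{q_0,\dots,q_{d-1}}$ with $q_j = \abs{f^{-1}\rbra{j}}/d$, i.e.\ the law of $f\rbra{i}$ for $i$ uniform. Purified quantum query access to $\rho_q = \sum_j q_j \ketbra{j}{j}$ is obtained from a single query to $O_f$: prepare a uniform superposition on a domain register and apply $O_f$ into an output register to get
\[
    \frac{1}{\sqrt{d}} \sum_{i=0}^{d-1} \ket{f\rbra{i}}_{\textup{out}} \ket{i}_{\textup{dom}}.
\]
Tracing out the domain register yields exactly $\rho_q$, so this state is a legitimate purification and the corresponding ancilla-padded unitary $U_{\rho_q}$ is a valid purified-query-access oracle in the sense of the paper's definition. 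Each use of $U_{\rho_q}$, its inverse, or a controlled version costs only $O\rbra{1}$ queries to $O_f$ or $O_f^\dagger$, so a $T$-query tester invokes $O_f$ at most $O\rbra{T}$ times.

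Next I would match the two promise cases. If $f$ is a bijection then $q_j = 1/d$ for all $j$, so $q$ is exactly uniform. If $f$ is two-to-one then $q_j = 2/d$ on the $d/2$ image points and $q_j = 0$ elsewhere, whence
\[
    \dtv{q}{u} = \frac{1}{2} \sum_{j=0}^{d-1} \abs*{q_j - \frac{1}{d}} = \frac{1}{2}\rbra*{\frac{d}{2}\cdot\frac{1}{d} + \frac{d}{2}\cdot\frac{1}{d}} = \frac{1}{2},
\]
where $u$ is the uniform distribution on $\cbra{0,\dots,d-1}$. Thus $q$ is $\tfrac12$-far from uniform, hence $\varepsilon$-far for every $\varepsilon \in \interval[open left]{0}{1/2}$. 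Any tester distinguishing the uniform case from the $\varepsilon$-far case therefore decides the collision promise, so its query count satisfies $T = \Omega\rbra{d^{1/3}}$, and this holds uniformly over all $\varepsilon \in \interval[open left]{0}{1/2}$ since the hard instances are the same for every such $\varepsilon$.

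I expect the main point requiring care to be faithfulness to the oracle model rather than any hard analytic estimate: one must verify that the unitary built from $O_f$ genuinely meets the definition of purified quantum query access (a unitary sending $\ket{0}\ket{0}$ to a purification of $\rho_q$), and that the controlled queries and inverses a general tester may use are each simulated by $O\rbra{1}$ calls to $O_f$ and $O_f^\dagger$, so that query complexity is preserved up to a constant. The remaining issues are minor bookkeeping: the image index $j$ already ranges over $\cbra{0,\dots,d-1}$, so $q$ is automatically a distribution on the correct support of size $d$ with no relabeling needed, and odd $d$ is handled by padding $f$ with one fixed point, which perturbs the distances by $O\rbra{1/d}$ and does not affect the asymptotics.
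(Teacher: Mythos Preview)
The paper does not supply its own proof of this statement; it is quoted as a known result adapted from \cite{CFMdW10} and used as a black box in the lower-bound section. Your reduction from the collision problem is correct and is precisely the argument underlying the cited reference, so your proposal matches the intended proof.
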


We also need a lower bound for estimating the fidelity between two pure quantum states in the precision $\varepsilon$, which was shown in \cite{BBC+01,NW99}. 
Here, we use the version in \cite{Wan24}. 

\begin{theorem}[Adapted from {\cite[Theorems V.2 and V.3]{Wan24}}] \label{thm:lb-sqr-fid}
    Given purified quantum query access to two unknown pure quantum states $\ket{\varphi}$ and $\ket{\psi}$, for $\varepsilon \in \rbra{0, 1/2}$, any quantum query algorithm that estimates $\mathrm{F}^2\rbra{\ketbra{\varphi}{\varphi},\ketbra{\psi}{\psi}} = \abs{\braket{\varphi}{\psi}}^2$ or $\dtr{\ketbra{\varphi}{\varphi}}{\ketbra{\psi}{\psi}}{} = \sqrt{1 - \abs{\braket{\varphi}{\psi}}^2}$ to within additive error $\varepsilon$ requires query complexity $\Omega\rbra{1/\varepsilon}$. 
\end{theorem}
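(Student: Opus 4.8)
The plan is to prove the lower bound by exhibiting a hard one-parameter family of instances and applying the polynomial method, which is essentially the classical phase-estimation/approximate-counting argument of~\cite{NW99,BBC+01} specialized to this setting. First I would fix $\ket{\varphi} = \ket{0}$ on a single qubit, whose purified query access oracle is trivial (the identity), and take $\ket{\psi_\theta} = \cos\theta\ket{0} + \sin\theta\ket{1}$, prepared from $\ket{0}$ by a single-qubit $y$-rotation $U_{\psi_\theta}$ whose matrix entries are the real numbers $\cos\theta$ and $\pm\sin\theta$; this is a legitimate purified quantum query access oracle. For this family, $\mathrm{F}^2\rbra{\ketbra{\varphi}{\varphi}, \ketbra{\psi_\theta}{\psi_\theta}} = \abs{\braket{0}{\psi_\theta}}^2 = \cos^2\theta$ and $\dtr{\ketbra{\varphi}{\varphi}}{\ketbra{\psi_\theta}{\psi_\theta}}{} = \sqrt{1-\cos^2\theta} = \abs{\sin\theta}$. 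Since $\ket{\varphi}$ is fixed and known, every nontrivial query is to $U_{\psi_\theta}$ or $U_{\psi_\theta}^\dagger$, and a query lower bound on this subfamily is automatically a lower bound for the general (both-unknown) problem.

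Next I would invoke the polynomial method: after $T$ queries to $U_{\psi_\theta}$ and $U_{\psi_\theta}^\dagger$, interleaved with $\theta$-independent unitaries, every amplitude of the algorithm's final state is a polynomial of degree at most $T$ in the oracle entries $\cos\theta$ and $\sin\theta$, hence a trigonometric polynomial in $\theta$ of degree at most $T$. Consequently, for any fixed post-measurement output event the probability $q(\theta)$ that it occurs is a trigonometric polynomial of degree at most $2T$ satisfying $0 \le q(\theta) \le 1$ for all $\theta$.

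I would then localize around $\theta = \pi/4$, where $\frac{d}{d\theta}\cos^2\theta = -\sin 2\theta$ and $\frac{d}{d\theta}\sin\theta = \cos\theta$ both have magnitude $\Theta(1)$, so that as $\theta$ moves by $\Theta(\varepsilon)$ about $\pi/4$ each target quantity changes by $\Theta(\varepsilon)$. Letting $\wh{v}$ denote the estimator's output and setting $q(\theta) \coloneqq \Prob\sbra*{\wh{v} \le 1/2}$, a valid $\varepsilon$-accurate estimator (correct with probability at least $2/3$) forces $q(\pi/4 - c\varepsilon) \le 1/3$ and $q(\pi/4 + c\varepsilon) \ge 2/3$ for a suitable constant $c$, using the linear approximations above (valid for sufficiently small $\varepsilon$, as in the theorem's hypothesis $\varepsilon \in \rbra{0,1/2}$). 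Hence $q$ varies by at least $1/3$ over an interval of length $O(\varepsilon)$, so $\abs{q'(\theta^*)} = \Omega\rbra{1/\varepsilon}$ at some $\theta^*$. Finally I would apply Bernstein's inequality for trigonometric polynomials, $\Abs*{q'}_\infty \le (2T)\Abs*{q}_\infty \le 2T$, and combine it with $\abs{q'(\theta^*)} = \Omega\rbra{1/\varepsilon}$ to conclude $T = \Omega\rbra{1/\varepsilon}$, and the identical localization handles $\dtr{\cdot}{\cdot}{}=\abs{\sin\theta}$ verbatim. The main obstacle is the degree bookkeeping: carefully justifying that the \emph{continuous} rotation oracle (rather than a Boolean oracle) still produces degree-$2T$ trigonometric polynomials once controlled and adjoint queries are allowed, and verifying that the estimator-to-distinguisher reduction near $\theta = \pi/4$ is uniform for both the squared fidelity and the trace distance.
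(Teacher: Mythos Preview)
The paper does not prove this statement; it is quoted as a known result adapted from \cite{Wan24} (with roots in \cite{BBC+01,NW99}) and used as a black box in the proof of \cref{thm:lower-query-affinity}. There is therefore no ``paper's own proof'' to compare against.

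Your argument is correct and is precisely the polynomial-method/Bernstein-inequality proof that underlies the cited results. The key steps---reducing to the one-parameter rotation family $\ket{\psi_\theta}$, observing that amplitudes after $T$ queries are degree-$T$ trigonometric polynomials in~$\theta$ (and hence acceptance probabilities are degree-$2T$), localizing around $\theta=\pi/4$ where both $\cos^2\theta$ and $\sin\theta$ have $\Theta(1)$ derivatives, and combining the mean-value theorem with Bernstein's inequality $\Abs{q'}_\infty \le 2T\Abs{q}_\infty$---are all sound. The caveats you flag (degree bookkeeping under controlled/adjoint queries, and uniformity of the localization for both target quantities) are genuine but routine: controlled-$U_{\psi_\theta}$ and $U_{\psi_\theta}^\dagger$ have entries in $\{0,1,\pm\cos\theta,\pm\sin\theta\}$, so each query still raises trigonometric degree by at most one; and for $\varepsilon$ bounded away from~$0$ the bound $\Omega(1/\varepsilon)=\Omega(1)$ is trivial, so one may assume $\varepsilon$ small enough that the linearization near $\pi/4$ is valid.
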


\begin{theorem}[Query lower bound for estimating quantum Tsallis relative entropy and quantum Hellinger distance] \label{thm:lower-query-affinity}
    Let $\alpha \in \rbra{0, 1}$ be a constant. 
    Given purified quantum query access to two unknown quantum states $\rho$ and $\sigma$ of rank $r$, 
    \begin{itemize}
        \item For $\varepsilon \in \rbra{0, \min\cbra{\frac{1-\alpha}{2}, \frac{\alpha}{4}}}$, any quantum query algorithm that estimates $\dTsa{\alpha}{\rho}{\sigma}{}$ to within additive error $\varepsilon$ requires query complexity $\Omega\rbra{r^{1/3} + 1/\varepsilon}$.
        \item For $\varepsilon \in \rbra{0, \frac{\sqrt{2}}{4}}$, any quantum query algorithm that estimates $\dH{\rho}{\sigma}{}$ to within additive error $\varepsilon$ requires query complexity $\Omega\rbra{r^{1/3} + 1/\varepsilon}$.
    \end{itemize} 
\end{theorem}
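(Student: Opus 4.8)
The plan is to prove the two summands $\Omega(r^{1/3})$ and $\Omega(1/\varepsilon)$ by separate reductions and then combine them: both are lower bounds for the same estimation task (on possibly different hard instances), so their maximum, and hence their sum up to a constant factor, is also a lower bound. The $r^{1/3}$ term will come from quantum uniformity testing (\cref{thm:uniform_test_lb}) and the $1/\varepsilon$ term from estimating the overlap of two pure states (\cref{thm:lb-sqr-fid}).

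For the $\Omega(r^{1/3})$ bound I would reduce from \cref{thm:uniform_test_lb} on $d=r$ points. Its instance is a diagonal state $\rho=\sum_i p_i\ketbra{i}{i}$ given by a purified query oracle; set $\sigma=I_r/r$, whose purified-access oracle is prepared from the maximally entangled state with no queries to the $\rho$-oracle. Since $\rho$ and $\sigma$ are simultaneously diagonal, $\dtr{\rho}{\sigma}{}=\tfrac12\Abs{\rho-\sigma}_1=\dtv{p}{u}$ for the uniform distribution $u$, and $\dH{\rho}{\sigma}{}$ equals the classical Hellinger distance between $p$ and $u$. If $p=u$ then $\rho=\sigma$, so $\dTsa{\alpha}{\rho}{\sigma}{}=0$ and $\dH{\rho}{\sigma}{}=0$; if $p$ is $\tfrac12$-far from $u$ then $\dtr{\rho}{\sigma}{}\ge\tfrac12$, so \cref{lem:Tsa_vs_tr} gives $\dTsa{\alpha}{\rho}{\sigma}{}\ge 2\alpha\,\dtr{\rho}{\sigma}{2}\ge\alpha/2$ and \cref{fact:dh_vs_dtr} gives $\dH{\rho}{\sigma}{}\ge\dtr{\rho}{\sigma}{}/\sqrt2\ge\tfrac1{2\sqrt2}$. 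Hence an additive-$\varepsilon$ estimator with $\varepsilon<\alpha/4$ (resp.\ $\varepsilon<\tfrac1{4\sqrt2}$) decides the uniformity test, forcing $\Omega(r^{1/3})$ queries to the $\rho$-oracle. To push the Hellinger threshold up to the claimed $\sqrt2/4$ one instead takes the far-instance to be uniform on a random sub-domain of size $r/4$, for which $\dH{\rho}{\sigma}{}=1/\sqrt2$ exactly.

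For the $\Omega(1/\varepsilon)$ bound I would reduce from \cref{thm:lb-sqr-fid} with $\rho=\ketbra{\varphi}{\varphi}$ and $\sigma=\ketbra{\psi}{\psi}$ pure. Every nonzero eigenvalue of a pure state equals $1$, so $\rho^\alpha=\rho$ and $\sigma^{1-\alpha}=\sigma$, whence $\tr\rbra{\rho^\alpha\sigma^{1-\alpha}}=\abs{\braket{\varphi}{\psi}}^2$; therefore $\dTsa{\alpha}{\rho}{\sigma}{}=\tfrac{1}{1-\alpha}\rbra*{1-\abs{\braket{\varphi}{\psi}}^2}$ and $\dH{\rho}{\sigma}{}=\sqrt{1-\abs{\braket{\varphi}{\psi}}^2}=\dtr{\rho}{\sigma}{}$. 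Thus an additive-$\varepsilon$ estimate of $\dTsa{\alpha}{\rho}{\sigma}{}$ yields an additive-$(1-\alpha)\varepsilon$ estimate of $\abs{\braket{\varphi}{\psi}}^2$, and an additive-$\varepsilon$ estimate of $\dH{\rho}{\sigma}{}$ is already one of the trace distance; in the stated $\varepsilon$-ranges the induced precision stays below $1/2$, so \cref{thm:lb-sqr-fid} forces $\Omega\rbra{1/((1-\alpha)\varepsilon)}=\Omega(1/\varepsilon)$ (resp.\ $\Omega(1/\varepsilon)$) queries, using that $\alpha$ is constant. If one insists the hard instances have rank exactly $r$, replace the pure states by $(1-\gamma)\ketbra{\varphi}{\varphi}\oplus\tfrac{\gamma}{r-1}I_{r-1}$ and the analogue for $\psi$ with a tiny constant $\gamma$; this only rescales both distances by $1-\gamma$ and perturbs them by $O(\gamma)$, leaving the bound intact.

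Taking the two together gives $\Omega(r^{1/3}+1/\varepsilon)$ for both quantities. I expect the only real friction to be bookkeeping of constants — choosing the far-parameter of the uniformity test (and, in the Hellinger case, a sub-domain that realizes the $\Omega(r^{1/3})$ hardness with a gap large enough to cover the full $\varepsilon$-interval), and checking that the rescaled pure-state instances keep the induced fidelity/trace-distance precision below $1/2$. All the substance is carried by the two cited lower bounds together with the elementary identities $\tr\rbra{\rho^\alpha\sigma^{1-\alpha}}=\abs{\braket{\varphi}{\psi}}^2$ for pure $\rho,\sigma$ and $\dtr{\rho}{\sigma}{}=\dtv{p}{u}$ for diagonal $\rho,\sigma$.
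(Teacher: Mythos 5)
Your proposal is correct and takes essentially the same route as the paper: the $\Omega(r^{1/3})$ term via a reduction from uniformity testing (\cref{thm:uniform_test_lb}) using \cref{lem:Tsa_vs_tr} and \cref{fact:dh_vs_dtr}, and the $\Omega(1/\varepsilon)$ term via \cref{thm:lb-sqr-fid} together with the identity $\tr\rbra{\rho^\alpha\sigma^{1-\alpha}}=\abs{\braket{\varphi}{\psi}}^2$ (resp.\ $\dH{\rho}{\sigma}{}=\dtr{\rho}{\sigma}{}$) for pure states, combined by taking the maximum. The only divergences are constant bookkeeping: you fix the farness parameter at $1/2$ while the paper scales it with $\varepsilon$, and your sub-domain trick for pushing the Hellinger threshold to $\sqrt{2}/4$ implicitly uses the specific hard instances behind \cref{thm:uniform_test_lb} rather than its black-box statement --- a level of care comparable to (indeed slightly beyond) the paper's own handling of that constant.
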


\begin{proof}
    Let
    \[
        \rho_{\mathsf{m}} = \sum_{i=0}^{r-1} \frac{1}{r} \ketbra{i}{i}, \qquad \rho = \sum_{i=0}^{r-1} p_i \ketbra{i}{i}.
    \]
    Let $\mu$ be the uniform distribution over $r$ elements. 

    By~\Cref{lem:Tsa_vs_tr}, noting that $ \dtv{p}{\mu} = \dtr{\rho}{\rho_\mathsf{m}}{}$, we have
      \[
        \begin{aligned}
            \dtv{p}{\mu} = 0 & \quad \Longrightarrow \quad \dTsa{\alpha}{\rho}{\rho_{\mathsf{m}}}{} = 0, \\
            \dtv{p}{\mu} \geq  \sqrt{\varepsilon/\alpha} & \quad \Longrightarrow \quad \dTsa{\alpha}{\rho}{\rho_{\mathsf{m}}}{} \ge 2\varepsilon. 
        \end{aligned}
    \]
    Therefore, any quantum algorithm that estimates $\dTsa{\alpha}{\rho}{\sigma}{}$ to within additive error $\varepsilon$ can be used to distinguish whether $p$ is uniform or $\sqrt{\varepsilon/\alpha}$-far from being uniform in the total variation distance. By \cref{thm:uniform_test_lb}, for $\varepsilon\in \interval[open]{0}{\alpha/4}$, 
    it requires query complexity $\Omega\rbra{r^{1/3}}$.
    Therefore, any quantum algorithm that estimates $\dTsa{\alpha}{\rho}{\sigma}{}$ to within additive error $\varepsilon$ requires query complexity $\Omega\rbra{r^{1/3}}$. 
    On the other hand, for any $\varepsilon\in \interval[open]{0}{(1-\alpha)/2}$, note that $\dAa{\alpha}{\rho}{\sigma}{} = \mathrm{F}^2\rbra{\rho, \sigma}$ when both $\rho$ and $\sigma$ are pure. 
    By \cref{thm:lb-sqr-fid}, for $\varepsilon \in \interval[open]{0}{1/2}$, any quantum query algorithm that estimates $\dAa{\alpha}{\rho}{\sigma}{}$ to within additive error $\varepsilon$ requires query complexity $\Omega\rbra{1/\varepsilon}$.
    Note that $\dTsa{\alpha}{\rho}{\sigma}{} =\frac{1}{1-\alpha}(1-\dAa{\alpha}{\rho}{\sigma}{})$.
    Combining both cases yields the proof.

    For the special case when $\alpha = 1/2$, by \cref{fact:dh_vs_dtr}, we have that
    \[
        \begin{aligned}
            \dtv{p}{\mu} = 0 & \quad \Longrightarrow \quad \dH{\rho}{\rho_\mathsf{m}}{} = 0, \\
            \dtv{p}{\mu} \geq  2\sqrt{2}\varepsilon & \quad \Longrightarrow \quad \dH{\rho}{\rho_\mathsf{m}}{} \ge 2\varepsilon. 
        \end{aligned}
    \]
    Therefore, any quantum algorithm that estimates $\dH{\rho}{\sigma}{}$ to within additive error $\varepsilon$ can be used to distinguish whether $p$ is uniform or $\sqrt{2}\varepsilon$-far from being uniform in the total variation distance. 
    By \cref{thm:uniform_test_lb}, for $\varepsilon\in \interval[open]{0}{\sqrt{2}/4}$, 
    it requires query complexity $\Omega\rbra{r^{1/3}}$.
    Therefore, any quantum algorithm that estimates $\dH{\rho}{\sigma}{}$ to within additive error $\varepsilon$ requires query complexity $\Omega\rbra{r^{1/3}}$. 
    On the other hand, when both $\rho$ and $\sigma$ are pure, $\dH{\rho}{\sigma}{} = \dtr{\rho}{\sigma}{}$. 
    By \cref{thm:lb-sqr-fid}, given purified quantum query access to two pure states $\rho$ and $\sigma$, for $\varepsilon \in \interval[open]{0}{1/2}$, estimating $\dtr{\rho}{\sigma}{} = \dH{\rho}{\sigma}{}$ to within additive error $\varepsilon$ requires query complexity $\Omega\rbra{1/\varepsilon}$.
    Combining both cases yields the proof.
\end{proof}

\subsection{Sample complexity lower bound}\label{sec:sample_lb}

We first recall a sample complexity lower bound for quantum state certification~\cite{OW21, BOW19}.
\begin{theorem}[{\cite[Corollary 4.3]{OW21}}]\label{thm:lower-bound-state-certification}
    Suppose $d$ is an even integer and $\eps$ is a positive real with $\eps \in \interval[open left]{0}{1/2}$.
    Let $\sigma = I/d$, and $\mathcal{C}_{\eps}$ denote the set of density operators with $d/2$ eigenvalues being $(1-2\eps)/d$ and $d/2$ eigenvalues being $(1+2\eps)/d$.
    Then, any measurement strategy that can distinguish the case $\rho = \sigma$ from the case $\rho\in \mathcal{C}_{\eps}$ with probability at least $1/3$ must use at least $0.15d/\eps^2$ samples.
\end{theorem}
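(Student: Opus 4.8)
The plan is, in essence, to invoke \cite[Corollary 4.3]{OW21}: the statement above is a restatement of that corollary (which rests on the quantum spectrum-estimation and state-certification machinery of \cite{OW16,BOW19}), so strictly speaking nothing new needs to be proved. For completeness I sketch how the bound arises and where its technical weight lies.

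First I would replace the adversarial ``far'' instance by a Bayesian one. Since $\mathcal{C}_\eps$ consists of \emph{every} density operator with the prescribed two-valued spectrum, it contains $\rho_U \coloneqq U D U^\dagger$ for $D = \diag\rbra*{\tfrac{1-2\eps}{d},\ldots,\tfrac{1-2\eps}{d},\tfrac{1+2\eps}{d},\ldots,\tfrac{1+2\eps}{d}}$ (each value with multiplicity $d/2$) and $U$ Haar-distributed --- and in fact these $\rho_U$ exhaust $\mathcal{C}_\eps$. Any strategy on $n$ copies that distinguishes $\rho=\sigma$ from $\rho\in\mathcal C_\eps$ in particular distinguishes $\sigma^{\otimes n}$ from $\rho_U^{\otimes n}$ for every $U$; averaging the success guarantee over $U$ and applying the Holevo--Helstrom bound, it then forces $\Abs*{\sigma^{\otimes n}-\bar\rho_n}_1 = \Omega\rbra*{1}$, where $\bar\rho_n \coloneqq \E_U\sbra*{\rho_U^{\otimes n}}$. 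Hence it suffices to show $\Abs*{\sigma^{\otimes n}-\bar\rho_n}_1 = o\rbra*{1}$ whenever $n \le 0.15\,d/\eps^2$ (up to tracking the constant).

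Next I would use Schur--Weyl duality to turn this into a classical statement. Both $\sigma^{\otimes n}=(I/d)^{\otimes n}$ and $\bar\rho_n$ commute with $V^{\otimes n}$ for all unitary $V$ --- trivially for the former, by invariance of Haar measure under $U\mapsto VU$ for the latter --- so each acts as a scalar on every Schur--Weyl isotypic block $\mathcal P_\lambda\otimes\mathcal Q^d_\lambda$ with $\lambda\vdash n$, $\ell(\lambda)\le d$. These scalars are $d^{-n}$ for $\sigma^{\otimes n}$ and $s_\lambda\rbra*{\spec(D)}/\dim\mathcal Q^d_\lambda$ for $\bar\rho_n$ ($s_\lambda$ the Schur polynomial), so
\[
    \Abs*{\sigma^{\otimes n}-\bar\rho_n}_1 = \sum_{\lambda}\abs*{\dim\mathcal P_\lambda\cdot\dim\mathcal Q^d_\lambda\cdot d^{-n} - \dim\mathcal P_\lambda\cdot s_\lambda\rbra*{\spec(D)}} = 2\,\dtv{P}{Q},
\]
where $P$ and $Q$ are the weak-Schur-sampling distributions of $n$ copies of $\sigma$ and of a state with the $\eps$-perturbed spectrum, respectively. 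The question thus reduces to: how many copies must be Schur-sampled before the random Young diagram can separate the maximally mixed spectrum from the $\eps$-perturbed one?

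The main obstacle is precisely this last estimate. One has to control the fluctuations of the sorted empirical spectrum $\lambda/n$ --- for $\sigma=I/d$ it concentrates around $1/d$ on a scale of order $\sqrt d/n$ --- and show that a spectral perturbation of size $\eps/d$ cannot be resolved at that scale unless $n = \Omega\rbra*{d/\eps^2}$, carrying the constants carefully enough to reach the explicit threshold $0.15\,d/\eps^2$. This representation-theoretic and probabilistic analysis is the substance of \cite{OW21} (and is close in spirit to the lower bounds of \cite{BOW19}), so instead of reproducing it I would cite it.
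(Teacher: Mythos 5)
Your proposal matches the paper exactly: the paper states this result as an imported fact, citing \cite[Corollary 4.3]{OW21} without reproducing its proof, and deferring to that citation is precisely what is intended here. Your sketch of the underlying Haar-averaging/Schur--Weyl reduction is a reasonable outline of where the technical weight in \cite{OW21} lies, but it is not required by the paper.
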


We also need an $\Omega(1/\eps^2)$ lower bound for inner product estimation given in~\cite{ALL22}.

\begin{theorem}[{\cite[Lemma 13 in the full version]{ALL22}}]\label{thm:lower-bound-inner-product}
    Suppose $\varepsilon \in \interval{0}{1/2}$.
    Denote $\ket{\phi_0} = \sqrt{\frac{1}{2}-\varepsilon}\ket{0}
    +\sqrt{\frac{1}{2}+\varepsilon}\ket{1}$,
    and $\ket{\phi_1} = \sqrt{\frac{1}{2}+\varepsilon}\ket{0}
    +\sqrt{\frac{1}{2}-\varepsilon}\ket{1}$.
    Let $\rho$ be a
    density operator on a $d$-dimensional space,
    and $\sigma = \ket{0}\bra{0}$ be a
    density operator on a $d$-dimensional space.
    If there is an algorithm that,
    on input $\rho^{\otimes k}\otimes \sigma^{\otimes k}$,
    successfully distinguishes the case $\rho =\ket{\phi_0}\bra{\phi_0}$
    from $\rho = \ketbra{\phi_1}{\phi_1}$,
    with probability at least $2/3$,
    then $k = \Omega(1/\eps^2)$.
\end{theorem}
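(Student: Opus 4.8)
The plan is to reduce this distinguishing task to the Holevo--Helstrom problem for two pure states and to control the optimal success probability through the single-copy overlap $\braket{\phi_0}{\phi_1}$. First I would discard the copies of $\sigma$: since $\sigma^{\otimes k} = \ketbra{0}{0}^{\otimes k}$ is identical under both hypotheses, it carries no information. By multiplicativity of the trace norm under tensoring with a common state, $\Abs{M\otimes\tau}_1 = \Abs{M}_1\Abs{\tau}_1 = \Abs{M}_1$ for any density operator $\tau$, so the trace distance between the two full inputs equals $\dtr{\rbra{\ketbra{\phi_0}{\phi_0}}^{\otimes k}}{\rbra{\ketbra{\phi_1}{\phi_1}}^{\otimes k}}{}$. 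Hence it suffices to lower bound $k$ for distinguishing the two $k$-copy pure states $\ket{\phi_0}^{\otimes k}$ and $\ket{\phi_1}^{\otimes k}$.

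Next I would compute the overlap. A direct calculation gives $\braket{\phi_0}{\phi_1} = 2\sqrt{\rbra{1/2-\varepsilon}\rbra{1/2+\varepsilon}} = \sqrt{1-4\varepsilon^2}$, so the $k$-fold overlap is $\abs{\braket{\phi_0^{\otimes k}}{\phi_1^{\otimes k}}} = \rbra{1-4\varepsilon^2}^{k/2}$. Using the standard identity that the trace distance between pure states $\ket{\psi_0},\ket{\psi_1}$ equals $\sqrt{1-\abs{\braket{\psi_0}{\psi_1}}^2}$, the relevant trace distance is $\sqrt{1-\rbra{1-4\varepsilon^2}^k}$. By the Holevo--Helstrom theorem, the optimal probability of correctly identifying an unknown equiprobable state is $\frac{1}{2}+\frac{1}{2}\sqrt{1-\rbra{1-4\varepsilon^2}^k}$; requiring this to be at least $2/3$ forces $\rbra{1-4\varepsilon^2}^k \le 8/9$. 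Finally, taking logarithms and applying the elementary bound $-\log\rbra{1-x} \le 2x$ (valid on $\interval{0}{1/2}$, applied with $x=4\varepsilon^2$ in the regime of small $\varepsilon$) yields $k \ge \log\rbra{9/8}/\rbra{8\varepsilon^2} = \Omega\rbra{1/\varepsilon^2}$, which is exactly the claimed bound.

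The argument is essentially routine, so I do not expect a serious obstacle; the two points requiring care are (i) justifying that the shared factor $\sigma^{\otimes k}$ can be dropped, which is the trace-norm multiplicativity fact above, and (ii) phrasing the success criterion so that the Holevo--Helstrom bound cleanly converts "success probability at least $2/3$" into "trace distance at least $1/3$" (for instance, by noting that per-case success of $2/3$ implies the uniform-prior average success is at least $2/3$). A minor additional check is ensuring the logarithmic inequality is invoked in the small-$\varepsilon$ regime where the $\Omega\rbra{1/\varepsilon^2}$ scaling is the substantive content; for $\varepsilon$ bounded away from $0$ the two states are already at constant trace distance and the bound is vacuous.
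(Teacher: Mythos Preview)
Your argument is correct and is the standard route: discard the uninformative $\sigma^{\otimes k}$ via trace-norm multiplicativity, compute the single-copy overlap $\braket{\phi_0}{\phi_1}=\sqrt{1-4\varepsilon^2}$, apply Holevo--Helstrom to the $k$-fold pure states, and extract $k=\Omega(1/\varepsilon^2)$ from $(1-4\varepsilon^2)^k\le 8/9$. The paper itself does not prove this statement at all---it simply imports it as \cite[Lemma~13]{ALL22}---so there is no paper-side argument to compare against; your proof is precisely the kind of elementary trace-distance calculation that underlies such cited lower bounds. The caveats you flag (that the Holevo--Helstrom bound applies to the average success probability under a uniform prior, and that the logarithmic inequality needs $4\varepsilon^2\le 1/2$ while the boundary region is handled by the trivial $k\ge 1$) are exactly the right points to note and are easily dispatched.
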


Given the above theorems, we can show the following sample complexity lower bounds for computing the quantum affinity and the quantum Hellinger distance.

\begin{theorem}[Sample lower bound for estimating quantum Tsallis relative entropy]\label{thm:sample-lower-bound-affinity}
    Let $\alpha \in \rbra{0, 1}$ be a constant. 
    Let $r$ be an integer and
    $\varepsilon\in \interval[open]{0}{\alpha(1-\alpha)/4}$ be a positive real number.
    Given a known quantum state $\sigma$
    of rank $r$ 
    and copies of an unknown quantum state $\rho$
    which is promised to have rank at most $r$, for any constant $\alpha \in \rbra{0, 1}$, estimating $\dTsa{\alpha}{\rho}{\sigma}{}$ to within additive error $\varepsilon$ 
    requires $\Omega(r/\varepsilon+ 1/\varepsilon^2)$ samples of $\rho$.
\end{theorem}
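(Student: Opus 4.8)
The plan is to establish the two summands $\Omega(r/\varepsilon)$ and $\Omega(1/\varepsilon^2)$ by two independent reductions: the first from the quantum state-certification lower bound \cref{thm:lower-bound-state-certification}, and the second from the inner-product lower bound \cref{thm:lower-bound-inner-product}. Since $\max\{a,b\}=\Theta(a+b)$, combining them gives the claimed $\Omega(r/\varepsilon+1/\varepsilon^2)$. In both hard instances the reference state $\sigma$ (namely $I/d$ and $\ketbra{0}{0}$) is itself a fixed, efficiently preparable state, so any algorithm in our model (``$\sigma$ known, copies of $\rho$'') can be simulated by an algorithm in the model of those theorems, and the lower bounds on the number of copies of $\rho$ transfer.

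For the $\Omega(r/\varepsilon)$ term I would take $d=r$, $\sigma=I/r$, and $\varepsilon'\coloneqq\sqrt{\varepsilon/\alpha}$, which lies in $(0,1/2]$ because $\varepsilon<\alpha(1-\alpha)/4<\alpha/4$. If $\rho=\sigma$ then $\dTsa{\alpha}{\rho}{\sigma}{}=0$, while if $\rho\in\mathcal{C}_{\varepsilon'}$ (the family of \cref{thm:lower-bound-state-certification}) then $\rho-I/r$ has eigenvalues $\pm 2\varepsilon'/r$ each with multiplicity $r/2$, so $\dtr{\rho}{\sigma}{}=\varepsilon'$, and \cref{lem:Tsa_vs_tr} gives
\[
    \dTsa{\alpha}{\rho}{\sigma}{}\;\ge\;2\alpha\,\dtr{\rho}{\sigma}{2}+\tfrac{2}{9}\alpha(\alpha+1)(2-\alpha)\,\dtr{\rho}{\sigma}{4}\;>\;2\varepsilon .
\]
Hence an $\varepsilon$-additive estimator for $\dTsa{\alpha}{\rho}{\sigma}{}$ succeeding with probability $\ge 2/3$ distinguishes $\rho=\sigma$ from $\rho\in\mathcal{C}_{\varepsilon'}$ with probability $\ge 2/3$ (thresholding the output at $\varepsilon$), which by \cref{thm:lower-bound-state-certification} requires at least $0.15\,r/\varepsilon'^2=0.15\,\alpha r/\varepsilon=\Omega(r/\varepsilon)$ copies of $\rho$.

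For the $\Omega(1/\varepsilon^2)$ term I would instantiate \cref{thm:lower-bound-inner-product} with its parameter set to $\varepsilon'\coloneqq 2(1-\alpha)\varepsilon$, which lies in $(0,1/2]$ since $\varepsilon<\alpha(1-\alpha)/4\le 1/\bigl(4(1-\alpha)\bigr)$. Take $\sigma=\ketbra{0}{0}$ and $\rho=\ketbra{\phi_b}{\phi_b}$ for $b\in\{0,1\}$. As all operators here are rank-one projectors, $\rho^\alpha=\rho$ and $\sigma^{1-\alpha}=\sigma$, so $\dAa{\alpha}{\rho}{\sigma}{}=\abs{\braket{\phi_b}{0}}^2=\tfrac12\mp\varepsilon'$, whence $\dTsa{\alpha}{\rho}{\sigma}{}=\tfrac{1}{1-\alpha}\bigl(\tfrac12\pm\varepsilon'\bigr)$; the two possible values differ by $\tfrac{2\varepsilon'}{1-\alpha}=4\varepsilon>2\varepsilon$, so an $\varepsilon$-additive estimate of $\dTsa{\alpha}{\rho}{\sigma}{}$ determines $b$, and by \cref{thm:lower-bound-inner-product} this needs $\Omega(1/\varepsilon'^2)=\Omega(1/\varepsilon^2)$ copies of $\rho$. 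To make $\sigma$ have rank exactly $r$, replace $\sigma$ and $\rho$ by $\sigma\otimes(I/r)$ and $\rho\otimes(I/r)$ with $I/r$ the maximally mixed state on an $r$-dimensional space: the affinity is multiplicative and the extra tensor factor contributes $\tr(I/r)=1$, so $\dTsa{\alpha}{\rho}{\sigma}{}$ is unchanged while both states now have rank $r$; moreover copies of $\rho\otimes(I/r)$ can be produced from copies of $\rho$ by appending a self-prepared maximally mixed state, so the reduction is unaffected.

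Each step is routine; the only delicate points are (i) picking $\varepsilon'$ in each reduction so that the induced gap in $\dTsa{\alpha}{\rho}{\sigma}{}$ strictly exceeds $2\varepsilon$ while keeping $\varepsilon'\le 1/2$ — which is exactly what the hypothesis $0<\varepsilon<\alpha(1-\alpha)/4$ is arranged to guarantee for both cases — and (ii) the minor rank bookkeeping in the second reduction, handled by the tensor-product padding above.
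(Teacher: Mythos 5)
Your proposal is correct and follows essentially the same route as the paper: the $\Omega(r/\varepsilon)$ term via a reduction from quantum state certification (\cref{thm:lower-bound-state-certification}) with $\sigma = I/r$, and the $\Omega(1/\varepsilon^2)$ term via a reduction from inner-product estimation (\cref{thm:lower-bound-inner-product}). The only cosmetic differences are that for the first term you go through the trace distance and the Pinsker-type bound of \cref{lem:Tsa_vs_tr}, where the paper instead bounds the affinity $\tfrac{(1+2\varepsilon')^{\alpha}+(1-2\varepsilon')^{\alpha}}{2}$ directly, and that your explicit parameter rescaling and rank-$r$ tensor padding in the second reduction tidy up bookkeeping the paper leaves implicit.
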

\begin{proof}
    Without loss of generality, we assume that $r$ is even.
    In the following, we show a reduction from the quantum state
    certification problem to our affinity estimation problem.
    Given any instance of the quantum state certification problem,
    with $\sigma = I/r$ and
    $\mathcal{C}_{\varepsilon'}$ being a set of density operators
    on an $r$-dimensional Hilbert space
    and $\varepsilon' \in \interval[open]{0}{1/2}$.
    We can regard $\rho$ and $\sigma$
    as density operators 
    on a $d$-dimensional Hilbert space
    for any $d\ge r$.
    If $\rho\in \mathcal{C}_{\varepsilon'}$,
    by direct computation,
    one have
    \[
    \dAa{\alpha}{\rho}{\sigma}{} = 
    \tr(\rho^{\alpha}\sigma^{1-\alpha}) = \frac{\rbra{1+2\varepsilon'}^{\alpha}+\rbra{1-2\varepsilon'}^{\alpha}}{2} \le 1 -2\alpha(1-\alpha)\varepsilon'^2,
    \]
    If $\rho = \sigma$, the affinity is $1$.

    Then, 
    using the quantum algorithm
    for estimating the affinity
    with precision $\varepsilon = \alpha(1-\alpha)\varepsilon'^2$, where
    $\varepsilon\in \interval[open]{0}{\alpha(1-\alpha)/4}$,
    we could distinguish either $\rho = \sigma$
    if the estimate value is more than $1- \alpha(1-\alpha)\varepsilon'^2$,
    and 
    $\rho\in \mathcal{C}_{\varepsilon'}$ otherwise.
    Therefore,
    by \cref{thm:lower-bound-state-certification},
    the number of samples 
    should be at least $\Omega(r/\varepsilon'^2) = 
    \Omega(r/\varepsilon)$.

    On the other hand, we also give a reduction from the inner product estimation problem to our affinity estimation problem.
    Given any instance of the inner product estimation problem with $\rho $ either being $\ketbra{\phi_0}{\phi_0}$ or $\ketbra{\phi_1}{\phi_1}$, and $\sigma = \ketbra{0}{0}$.
    By direct computation, we know if $\rho = \ketbra{\phi_0}{\phi_0}$, then
    \[
    \tr(\rho^{\alpha}\sigma^{1-\alpha}) = 
    \abs{\braket{\phi_0}{0}}^2 = \frac{1}{2}-\varepsilon,
    \]
    and similarly if $\rho = \ketbra{\phi_1}{\phi_1}$, then
    \[
    \tr(\rho^{\alpha}\sigma^{1-\alpha}) = 
    \abs{\braket{\phi_1}{0}}^2 = \frac{1}{2}+\varepsilon.
    \]
    Therefore, estimating $\dAa{\alpha}{\rho}{\sigma}{}$ within additive error $\varepsilon$ suffices to distinguish 
    the case $\rho = \ketbra{\phi_0}{\phi_0}$ from 
    $\rho = \ketbra{\phi_1}{\phi_1}$.
    By \Cref{thm:lower-bound-inner-product},
    we know that this must require $\Omega(1/\varepsilon^2)$ copies of $\rho$ and $\sigma$.
    Note that $\dTsa{\alpha}{\rho}{\sigma}{} =\frac{1}{1-\alpha}(1-\dAa{\alpha}{\rho}{\sigma}{})$, we require $\varepsilon\in \interval[open]{0}{(1-\alpha)/2}$.
    Combining both cases yields the proof.
\end{proof}

\begin{theorem}[Sample lower bound for estimating quantum Hellinger distance]\label{thm:sample-lower-bound-Hellinger-distance}
    Let $r$ be an integer and
    $\varepsilon\in \interval[open]{0}{\sqrt{2}/12}$ be a positive real number.
    Given a known quantum state $\sigma$
    of rank $r$ 
    and copies of an unknown quantum state $\rho$
    which is promised to have rank at most $r$, estimating $\dH{\rho}{\sigma}{}$ to within additive error $\varepsilon$ 
    requires $\Omega(r/\varepsilon^2)$ samples of $\rho$.
\end{theorem}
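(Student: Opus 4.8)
The plan is to mimic the reduction from quantum state certification already carried out in \cref{thm:sample-lower-bound-affinity}, specializing to $\alpha = 1/2$ and tracking the effect of the square root in the identity $\dH{\rho}{\sigma}{} = \sqrt{1 - \dAa{1/2}{\rho}{\sigma}{}}$. Assume without loss of generality that $r$ is even, and invoke \cref{thm:lower-bound-state-certification} with $d = r$: take $\sigma = I/r$ (known, of rank $r$), and let the unknown $\rho$ be either $\sigma$ itself or an element of $\mathcal{C}_{\varepsilon'}$ for a parameter $\varepsilon' \in \interval[open]{0}{1/2}$ to be chosen; in every case $\rho$ has rank exactly $r$, so both the ``known $\sigma$ of rank $r$'' and ``unknown $\rho$ of rank at most $r$'' hypotheses are met.

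Next I would compute the Hellinger distance in the two cases. If $\rho = \sigma$, then $\dAa{1/2}{\rho}{\sigma}{} = 1$ and hence $\dH{\rho}{\sigma}{} = 0$. If $\rho \in \mathcal{C}_{\varepsilon'}$, then the affinity computation in the proof of \cref{thm:sample-lower-bound-affinity} with $\alpha = 1/2$ gives
\[
    \dAa{1/2}{\rho}{\sigma}{} = \frac{\sqrt{1-2\varepsilon'} + \sqrt{1+2\varepsilon'}}{2} \le 1 - \frac{1}{2}\varepsilon'^2,
\]
so that $\dH{\rho}{\sigma}{} = \sqrt{1 - \dAa{1/2}{\rho}{\sigma}{}} \ge \varepsilon'/\sqrt{2}$. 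Now set $\varepsilon' \coloneqq 3\sqrt{2}\,\varepsilon$; the hypothesis $\varepsilon \in \interval[open]{0}{\sqrt{2}/12}$ guarantees $\varepsilon' \in \interval[open]{0}{1/2}$, so \cref{thm:lower-bound-state-certification} is applicable, and moreover $\varepsilon'/\sqrt{2} = 3\varepsilon > 2\varepsilon$, so the intervals $\interval{0}{\varepsilon}$ and $\interval{2\varepsilon}{\infty}$ are disjoint. Consequently, any algorithm that estimates $\dH{\rho}{\sigma}{}$ to within additive error $\varepsilon$ distinguishes $\rho = \sigma$ from $\rho \in \mathcal{C}_{\varepsilon'}$, and therefore by \cref{thm:lower-bound-state-certification} must use at least $0.15\, r/\varepsilon'^2 = \Omega(r/\varepsilon^2)$ samples of $\rho$, which is the claimed bound.

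I do not expect a genuine analytic obstacle here: the affinity estimate $\dAa{1/2}{\rho}{\sigma}{} \le 1 - \varepsilon'^2/2$ is essentially established in \cref{thm:sample-lower-bound-affinity}, and the only real care is the constant bookkeeping needed so that the single parameter $\varepsilon'$ simultaneously lies in $\interval[open]{0}{1/2}$ and produces a Hellinger-distance gap strictly larger than $2\varepsilon$ — which is exactly what pins down the admissible range $\varepsilon < \sqrt{2}/12$. Unlike \cref{thm:sample-lower-bound-affinity}, there is no separate additive $\Omega(1/\varepsilon^2)$ term to prove, since for $r = \Omega(1)$ the bound $\Omega(r/\varepsilon^2)$ already subsumes it.
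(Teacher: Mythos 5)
Your proposal is correct and follows essentially the same route as the paper: the same reduction from quantum state certification with $\sigma = I/r$, the same bound $\tr(\sqrt{\rho}\sqrt{\sigma}) \le 1 - \varepsilon'^2/2$ giving $\dH{\rho}{\sigma}{} \ge \varepsilon'/\sqrt{2}$, and the same parameter choice (your $\varepsilon' = 3\sqrt{2}\,\varepsilon$ is exactly the paper's $\varepsilon = \sqrt{2}\varepsilon'/6$). The only difference is presentational: you make the threshold/range bookkeeping explicit, which the paper leaves implicit.
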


\begin{proof}
    We give a reduction from the quantum state
    certification problem to our Hellinger 
    distance estimation problem.
    Given any instance of the quantum 
    state certification problem,
    with $\sigma = I/r$ and
    $\mathcal{C}_{\varepsilon'}$ being a set of density operators
    on an $r$-dimensional Hilbert space
    and $\varepsilon' \in \interval[open]{0}{1/2}$.
    Without loss of generality,
    we can regard $\sigma$ and $\rho$
    as density operators 
    on $d$-dimensional Hilbert space
    for $d\ge r$.
    If $\rho\in \mathcal{C}_{\varepsilon'}$,
    by direct computation,
    one have
    \[
        \tr(\sqrt{\rho}\sqrt{\sigma}) = \frac{\sqrt{1+2\varepsilon'}+\sqrt{1-2\varepsilon'}}{2} \le 1 -\frac{\varepsilon'^2}{2},
    \]
    meaning that
    \[
    \dH{\rho}{\sigma}{}{}
    = \sqrt{1-\tr(\sqrt{\rho}\sqrt{\sigma})}
    \ge \frac{\varepsilon'}{\sqrt{2}}.
    \]
    
  Then, by applying the quantum algorithm for estimating the Hellinger distance with precision $\varepsilon = \sqrt{2}\varepsilon'/6$, we can distinguish between the cases $\rho = \sigma$ and $\rho \in \mathcal{C}_{\varepsilon'}$, depending on whether the estimate is below $\varepsilon$ or not.
        Therefore,
    by \cref{thm:lower-bound-state-certification}
    the number of samples 
    should be at least $\Omega(r/\varepsilon'^2) = 
    \Omega(r/\varepsilon^2)$.
\end{proof}

\section{Computational Hardness}

In this section, we show the $\QSZK$-completeness of estimating the quantum Tsallis relative entropy and quantum Hellinger distance between general quantum states in~\cref{sec:full-rank_complexity}, and the $\BQP$-completeness of estimating the Quantum Tsallis relative entropy and quantum Hellinger distance between low-rank quantum states in~\cref{sec:low-rank_complexity}.

We first introduce a generalization of the $\QSD$ problem from~\cite{Wat02}, where the trace distance is replaced by quantum $\alpha$-Tsallis relative entropy.

\begin{definition}[Quantum state distinguishability problem with respect to the quantum $\alpha$-Tsallis relative entropy, $\TsaQSD_{\alpha}$]
    Let $\alpha \in \rbra{0, 1}$ be a constant. 
    Let $Q_\rho$ and $Q_\sigma$ be two quantum circuits with $m\rbra{n}$-qubit input and $n$-qubit output, where $m\rbra{n}$ is a polynomial in $n$. Let $\rho$ and $\sigma$ be $n$-qubit quantum states obtained by performing $Q_\rho$ and $Q_\sigma$ on input state $\ket{0}^{\otimes m\rbra{n}}$. 
     Let $a\rbra{n}$ and $b\rbra{n}$ be efficiently computable functions such that $0 \leq b\rbra{n} < a\rbra{n} \leq 1$. The problem $\TsaQSD_{\alpha}\sbra{a, b}$ is to decide whether:
    \begin{itemize}
        \item \textup{(Yes)} $\dTsa{\alpha}{\rho}{\sigma}{} \geq a\rbra{n}$, or
        \item \textup{(No)} $\dTsa{\alpha}{\rho}{\sigma}{} \leq b\rbra{n}$.
    \end{itemize}   
\end{definition}

Since the quantum Hellinger distance is a special case of the quantum $\alpha$-Tsallis relative entropy when $\alpha$ is set to $1/2$, we also define the quantum state distinguishability problem in terms of the quantum Hellinger distance.

\begin{definition}[Quantum state distinguishability problem with respect to the quantum Hellinger distance, $\HellQSD$]
     Let $Q_\rho$ and $Q_\sigma$ be two quantum circuits with $m\rbra{n}$-qubit input and $n$-qubit output, where $m\rbra{n}$ is a polynomial in $n$. Let $\rho$ and $\sigma$ be $n$-qubit quantum states obtained by performing $Q_\rho$ and $Q_\sigma$ on input state $\ket{0}^{\otimes m\rbra{n}}$.
     Let $a\rbra{n}$ and $b\rbra{n}$ be efficiently computable functions such that $0 \leq b\rbra{n} < a\rbra{n} \leq 1$. The problem $\HellQSD\sbra{a, b}$ is to decide whether:
    \begin{itemize}
        \item \textup{(Yes)} $\dH{\rho}{\sigma}{} \geq a\rbra{n}$, or
        \item \textup{(No)} $\dH{\rho}{\sigma}{} \leq b\rbra{n}$.
    \end{itemize}   
\end{definition}

When restricted to low-rank quantum states, we also define the quantum state distinguishability problem for them.

\begin{definition}[Low-rank quantum state distinguishability problem with respect to the quantum $\alpha$-Tsallis relative entropy and the quantum Hellinger distance, $\TsaLowQSD_{\alpha}$ and $\HellLowQSD$]
    Let $\alpha \in \rbra{0, 1}$ be a constant. 
    Let $Q_\rho$ and $Q_\sigma$ be two quantum circuits with $m\rbra{n}$-qubit input and $n$-qubit output, where $m\rbra{n}$ is a polynomial in $n$. Let $\rho$ and $\sigma$ be $n$-qubit quantum states of rank at most $r\rbra{n}$, obtained by performing $Q_\rho$ and $Q_\sigma$ on input state $\ket{0}^{\otimes m\rbra{n}}$, where $r\rbra{n}$ is a polynomial in $n$.
    Let $a\rbra{n}$ and $b\rbra{n}$ be efficiently computable functions such that $0 \leq b\rbra{n} < a\rbra{n} \leq 1$. 
     
    \begin{enumerate}
    \item The problem $\TsaLowQSD_{\alpha}\sbra{a, b}$ is to decide whether:
    \begin{itemize}
    \item \textup{(Yes)} $\dTsa{\alpha}{\rho}{\sigma}{} \geq a\rbra{n}$, or
    \item \textup{(No)} $\dTsa{\alpha}{\rho}{\sigma}{} \leq b\rbra{n}$.
    \end{itemize} 
    \item The problem $\HellLowQSD\sbra{a, b}$ is to decide whether:
    \begin{itemize}
    \item \textup{(Yes)} $\dH{\rho}{\sigma}{} \geq a\rbra{n}$, or
    \item \textup{(No)} $\dH{\rho}{\sigma}{} \leq b\rbra{n}$.
    \end{itemize}   
    \end{enumerate}
\end{definition}

Our theorem is stated as follows.

\begin{theorem}[\Cref{thm:Tsa_QSZK_hard,thm:Tsa_QSZK_containment,thm:Tsa_pure_BQP_hard,lemma:bqp-contain-TsaLowQSD,} combined] \label{thm:hardness-combined}
Let $\alpha \in \rbra{0, 1}$ be a constant. 
Let $a\rbra{n}$ and $b\rbra{n}$ be efficiently computable functions such that $0 \leq b\rbra{n} < a\rbra{n} \leq 1$. 
\begin{enumerate}
    \item For every constant $\tau \in \interval[open]{0}{1/2}$, $\TsaQSD_{\alpha}\sbra{a, b}$ is $\QSZK$-complete if $\rbra{1-\alpha}^2a\rbra{n}^2-\sqrt{\frac{b\rbra{n}}{2\alpha}} \geq 1/O\rbra{\log n}$, and $a\rbra{n} \leq 2\alpha\rbra{1-2^{-n^\tau}}$ and $b\rbra{n} \geq \frac{2^{-n^\tau}}{1-\alpha}$ for sufficiently large $n$. \label{thm-hard-item:1}
    \item For every constant $\tau \in \interval[open]{0}{1/2}$, $\HellQSD\sbra{a, b}$ is $\QSZK$-complete if $a\rbra{n}^4-\sqrt{2}b(n) \geq 1/O\rbra{\log n}$, and $a\rbra{n} \leq \sqrt{\frac{1-2^{-n^\tau}}{2}}$ and $b\rbra{n} \geq  2^{-\frac{n^\tau}{2}}$ for sufficiently large $n$. \label{thm-hard-item:2}
    \item $\TsaLowQSD_{\alpha}\sbra{a, b}$ is $\BQP$-complete if $a\rbra{n} - b\rbra{n} \geq \frac{1}{\poly\rbra{n}}$, and $a(n) \leq \frac{\rbra{1-2^{-n-1}}^{2}}{1-\alpha}$ and $b(n) \geq \frac{2^{-2n-2}}{1-\alpha}$  for sufficiently large $n$. \label{thm-hard-item:3}
    \item $\HellLowQSD\sbra{a, b}$ is $\BQP$-complete if $a\rbra{n} - b\rbra{n} \geq \frac{1}{\poly\rbra{n}}$, and $a(n) \leq 1-2^{-n-1}$ and $b(n) \geq 2^{-n-1}$ for sufficiently large $n$. \label{thm-hard-item:4}
\end{enumerate}
\end{theorem}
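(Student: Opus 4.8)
The statement combines four claims: $\QSZK$-hardness and $\QSZK$-containment of the general problems $\TsaQSD_\alpha$ and $\HellQSD$, and $\BQP$-hardness and $\BQP$-containment of the low-rank problems $\TsaLowQSD_\alpha$ and $\HellLowQSD$. The plan is to obtain the hardness parts by reductions from $\QSD$ (resp.\ $\PureQSD$) using the two-sided inequalities between the trace distance and the quantum Tsallis relative entropy (\Cref{lem:Tsa_vs_tr}), together with the sharper Hellinger special case (\Cref{fact:dh_vs_dtr}); and to obtain the containment parts by a reduction back to $\QSD$ (for $\QSZK$) and by running the estimator of \Cref{cor:Tsallis-relative-query} (for $\BQP$).

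\textbf{$\QSZK$-hardness.} I would reduce from $\QSD\sbra{a',b'}$, which is $\QSZK$-hard when $a' \le 1-2^{-n^{\tau}}$ and $b' \ge 2^{-n^{\tau}}$ by \Cref{lem:qszk-hard}, passing the circuits $Q_\rho,Q_\sigma$ through unchanged (so the reduction is trivially polynomial time). By \Cref{lem:Tsa_vs_tr}, a Yes instance ($\dtr{\rho}{\sigma}{} \ge a'$) satisfies $\dTsa{\alpha}{\rho}{\sigma}{} \ge 2\alpha (a')^2$ and a No instance ($\dtr{\rho}{\sigma}{} \le b'$) satisfies $\dTsa{\alpha}{\rho}{\sigma}{} \le b'/(1-\alpha)$; choosing $a' = \sqrt{a/(2\alpha)}$ and $b' = (1-\alpha)b$ turns this into an instance of $\TsaQSD_\alpha\sbra{a,b}$ whose parameters land in the regime of item~1 (a slightly larger exponent $\tau' \in (\tau,1/2)$ absorbs the gap between $1-2^{-n^{\tau}}$ and $(1-2^{-n^{\tau}})^2$). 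For $\HellQSD$ the same argument with $\alpha=1/2$ and the bounds $\dH{\rho}{\sigma}{2} \le \dtr{\rho}{\sigma}{} \le \sqrt2\,\dH{\rho}{\sigma}{}$ of \Cref{fact:dh_vs_dtr} yields $\QSD\sbra{a',b'} \to \HellQSD\sbra{a'/\sqrt2,\sqrt{b'}}$, matching item~2.

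\textbf{Containment.} For $\QSZK$-containment I would run the reductions in the opposite direction: \Cref{lem:Tsa_vs_tr} also gives $\dtr{\rho}{\sigma}{} \ge (1-\alpha)\dTsa{\alpha}{\rho}{\sigma}{}$ and $\dtr{\rho}{\sigma}{} \le \sqrt{\dTsa{\alpha}{\rho}{\sigma}{}/(2\alpha)}$, so $\TsaQSD_\alpha\sbra{a,b}$ reduces to $\QSD\sbra{(1-\alpha)a,\sqrt{b/(2\alpha)}}$, which lies in $\QSZK$ by \Cref{lem:qszk-hard} exactly when $(1-\alpha)^2 a^2 - \sqrt{b/(2\alpha)} \ge 1/O(\log n)$ --- the hypothesis of item~1; likewise $\HellQSD\sbra{a,b} \to \QSD\sbra{a^2,\sqrt2 b}$, in $\QSZK$ iff $a^4 - \sqrt2 b \ge 1/O(\log n)$, which is item~2 (one checks $0 \le b' < a' \le 1$ along the way, using $a \le 1$). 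For $\BQP$-containment, since the low-rank states have rank $r = \poly(n)$ and are prepared by poly-size circuits, I would estimate $\dTsa{\alpha}{\rho}{\sigma}{}$ (resp.\ $\dH{\rho}{\sigma}{}$ via the affinity) to additive error $(a-b)/3$ using \Cref{cor:Tsallis-relative-query}; this costs $\poly(r,1/(a-b)) = \poly(n)$ queries to $Q_\rho,Q_\sigma$, hence is a $\BQP$ procedure, and comparing the output to $(a+b)/2$ decides the instance.

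\textbf{$\BQP$-hardness and main obstacle.} For $\BQP$-hardness I would reduce from $\PureQSD\sbra{a',b'}$, which is $\BQP$-hard for $a' \le 1-2^{-n-1}$, $b' \ge 2^{-n-1}$, $a'-b' \ge 1/\poly(n)$ by \Cref{lem:BQP-hard}. The key identity is that for pure states $\rho = \ketbra{\phi}{\phi}$, $\sigma = \ketbra{\psi}{\psi}$ one has $\rho^\alpha = \rho$, so $\tr(\rho^\alpha\sigma^{1-\alpha}) = \abs{\braket{\phi}{\psi}}^2$ and hence $\dTsa{\alpha}{\rho}{\sigma}{} = \frac{1}{1-\alpha}\bigl(1-\abs{\braket{\phi}{\psi}}^2\bigr) = \frac{1}{1-\alpha}\dtr{\ketbra{\phi}{\phi}}{\ketbra{\psi}{\psi}}{2}$; in particular $\dH{\rho}{\sigma}{} = \dtr{\ketbra{\phi}{\phi}}{\ketbra{\psi}{\psi}}{}$ when $\alpha=1/2$. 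Thus $\PureQSD\sbra{a',b'}$ reduces to $\TsaLowQSD_\alpha\sbra{(a')^2/(1-\alpha),(b')^2/(1-\alpha)}$ (rank-$1$ states, so low-rank) and directly to $\HellLowQSD\sbra{a',b'}$, and the gap $a-b \ge 1/\poly(n)$ survives the quadratic substitution since $\sqrt a - \sqrt b = (a-b)/(\sqrt a+\sqrt b) \ge (a-b)/2$. The main obstacle is not conceptual but the parameter bookkeeping: after pushing the Yes/No promises through \Cref{lem:Tsa_vs_tr}/\Cref{fact:dh_vs_dtr} and the pure-state identity, one must verify that the resulting $\QSD$/$\PureQSD$ instances still meet the hardness hypotheses and --- for the delicate $\QSZK$ case --- that $\QSD$ still lies in the narrow containment window $(a')^2-b' \ge 1/O(\log n)$, which the maps $a \mapsto (1-\alpha)^2 a^2$, $b \mapsto \sqrt{b/(2\alpha)}$ shrink; this is precisely what pins down the exact conditions on $a(n),b(n)$ in the four items.
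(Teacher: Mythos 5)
Your proposal is correct and takes essentially the same route as the paper: $\QSZK$-hardness and containment via the two-sided inequalities of \cref{lem:Tsa_vs_tr} (resp.\ \cref{fact:dh_vs_dtr}) reducing from/to $\QSD$ with \cref{lem:qszk-hard}, $\BQP$-hardness via the pure-state identity $\dTsa{\alpha}{\ketbra{\phi}{\phi}}{\ketbra{\psi}{\psi}}{} = \frac{1}{1-\alpha}\dtr{\ketbra{\phi}{\phi}}{\ketbra{\psi}{\psi}}{2}$ reducing from $\PureQSD$ (\cref{lem:BQP-hard}), and $\BQP$-containment by running the estimator of \cref{cor:Tsallis-relative-query} with error $\Theta(a-b)$. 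The only cosmetic differences are that the paper obtains the Hellinger items as the $\alpha = 1/2$ specialization via $2\dH{\rho}{\sigma}{2} = \dTsa{1/2}{\rho}{\sigma}{}$ rather than invoking \cref{fact:dh_vs_dtr} directly, and in the hardness step the paper asserts $\dtr{\rho}{\sigma}{} \geq 1-2^{-n^\tau} \Rightarrow \dTsa{\alpha}{\rho}{\sigma}{} \geq 2\alpha\rbra{1-2^{-n^\tau}}$ for large $n$ (implicitly using the quartic term of \cref{lem:Tsa_vs_tr}), whereas you use only the quadratic term and absorb the loss with a slightly larger exponent $\tau'$ --- both handle the parameter bookkeeping correctly.
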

\begin{proof}
    \cref{thm-hard-item:1} combines \cref{thm:Tsa_QSZK_hard,thm:Tsa_QSZK_containment}. 
    \cref{thm-hard-item:3} combines \cref{thm:Tsa_pure_BQP_hard,lemma:bqp-contain-TsaLowQSD}.
    \cref{thm-hard-item:2,thm-hard-item:4} are respectively the special cases of \cref{thm-hard-item:1,thm-hard-item:3} when $\alpha = 1/2$ using the fact that $2\dH{\rho}{\sigma}{2}{} = \dTsa{1/2}{\rho}{\sigma}{}$. 
\end{proof}

\subsection{Estimating of quantum Tsallis relative entropy in general}\label{sec:full-rank_complexity}

Now we prove the $\QSZK$-hardness and $\QSZK$-containment of $\TsaQSD_\alpha$.
To prove the $\QSZK$-hardness, we reduce from the $\QSD$ problem by \cref{lem:qszk-hard}.

\begin{lemma}[$\QSZK$-hardness of $\TsaQSD_\alpha$]\label{thm:Tsa_QSZK_hard}
    Let $\alpha \in \interval[open]{0}{1}$ be a constant. 
    Let $a\rbra{n}$ and $b\rbra{n}$ be efficiently computable functions such that $0 \leq b\rbra{n} < a\rbra{n} \leq 1$. For every constant $\tau \in \interval[open]{0}{1/2}$, $\TsaQSD_\alpha\sbra{a,b}$ is $\QSZK$-hard if $a\rbra{n} \leq 2\alpha\rbra{1-2^{-n^\tau}}$ and $b\rbra{n} \geq  \frac{2^{-n^\tau}}{1-\alpha}$ for sufficiently large $n$.
\end{lemma}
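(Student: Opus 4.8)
The plan is to give a polynomial-time many-one reduction from the quantum state distinguishability problem $\QSD$ with respect to the trace distance, which is $\QSZK$-hard in the parameter regime recorded in \cref{lem:qszk-hard}, to $\TsaQSD_\alpha\sbra{a,b}$. The reduction will be the identity on instances: given quantum circuits $Q_\rho$ and $Q_\sigma$, output the very same circuits, changing only which distinguishability measure is tested. All the content then lies in checking that the two-sided trace-distance bound of \cref{lem:Tsa_vs_tr} carries the promise of the source problem to the promise of the target problem, for a suitable choice of $\QSD$ parameters.

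Concretely, fix the constant $\tau\in\interval[open]{0}{1/2}$ from the statement and consider $\QSD\sbra{a',b'}$ with $a'\rbra{n}=1-2^{-n^\tau}$ and $b'\rbra{n}=2^{-n^\tau}$; by \cref{lem:qszk-hard} this problem is $\QSZK$-hard. In the case $\dtr{\rho}{\sigma}{}\ge a'\rbra{n}$, the lower bound in \cref{lem:Tsa_vs_tr} gives
\[
    \dTsa{\alpha}{\rho}{\sigma}{}
    \ \ge\ 2\alpha\,\dtr{\rho}{\sigma}{2}+\tfrac{2}{9}\alpha\rbra{\alpha+1}\rbra{2-\alpha}\,\dtr{\rho}{\sigma}{4}
    \ \ge\ 2\alpha\rbra*{1-2^{-n^\tau}}^2+\tfrac{2}{9}\alpha\rbra{\alpha+1}\rbra{2-\alpha}\rbra*{1-2^{-n^\tau}}^4,
\]
and using $\rbra{1-x}^2\ge 1-2x$ and $\rbra{1-x}^4\ge 1-4x$ for $x\in\sbra{0,1}$, the right-hand side is at least $2\alpha\rbra{1-2^{-n^\tau}}\ge a\rbra{n}$ once $n$ is large, since $\tfrac{2}{9}\alpha\rbra{\alpha+1}\rbra{2-\alpha}$ is a positive constant while the slack lost in the estimates is only $O\rbra{2^{-n^\tau}}$. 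In the case $\dtr{\rho}{\sigma}{}\le b'\rbra{n}$, the upper bound in \cref{lem:Tsa_vs_tr} yields $\dTsa{\alpha}{\rho}{\sigma}{}\le \dtr{\rho}{\sigma}{}/\rbra{1-\alpha}\le 2^{-n^\tau}/\rbra{1-\alpha}\le b\rbra{n}$. Hence a yes-instance of $\QSD\sbra{a',b'}$ maps to a yes-instance of $\TsaQSD_\alpha\sbra{a,b}$ and a no-instance to a no-instance; the reduction is plainly polynomial time, so $\TsaQSD_\alpha\sbra{a,b}$ is $\QSZK$-hard. (The finitely many small $n$ for which the asymptotic inequalities fail are decided outright and absorbed into the reduction, as usual.)

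The only point needing care is the yes-direction: the trace distance enters the lower bound on $\dTsa{\alpha}{\rho}{\sigma}{}$ only through $\dtr{\rho}{\sigma}{2}$ and $\dtr{\rho}{\sigma}{4}$, so forcing $\dtr{\rho}{\sigma}{}=1-2^{-n^\tau}$ only certifies that $\dTsa{\alpha}{\rho}{\sigma}{}$ is roughly $2\alpha\rbra{1-2^{-n^\tau}}^2$, which falls short of the target $2\alpha\rbra{1-2^{-n^\tau}}$ by a lower-order $O\rbra{2^{-n^\tau}}$ term; the quartic correction term in \cref{lem:Tsa_vs_tr} is precisely what supplies the constant cushion absorbing this deficit. (Equivalently, one may discard the quartic term and instead reduce from $\QSD\sbra{1-2^{-n^{\tau'}},2^{-n^{\tau'}}}$ for any fixed $\tau'\in\rbra{\tau,1/2}$, which is still $\QSZK$-hard by \cref{lem:qszk-hard} and for which $\rbra*{1-2^{-n^{\tau'}}}^2\ge 1-2^{-n^\tau}$ holds for all large $n$.) Everything else is routine bookkeeping: the hypotheses $a\rbra{n}\le 2\alpha\rbra{1-2^{-n^\tau}}$, $b\rbra{n}\ge 2^{-n^\tau}/\rbra{1-\alpha}$, and $0\le b\rbra{n}<a\rbra{n}\le 1$ are exactly what is needed to line the two cases up.
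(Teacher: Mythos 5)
Your proof is correct and takes essentially the same route as the paper: an identity reduction from $\QSD\sbra{1-2^{-n^\tau},\,2^{-n^\tau}}$ (which is $\QSZK$-hard by \cref{lem:qszk-hard}), using the upper bound of \cref{lem:Tsa_vs_tr} for the no-case and the lower bound for the yes-case. You additionally spell out, via the quartic term (or the alternative $\tau'>\tau$ trick), why the yes-case conclusion $\dTsa{\alpha}{\rho}{\sigma}{} \geq 2\alpha\rbra{1-2^{-n^\tau}}$ rather than merely $2\alpha\rbra{1-2^{-n^\tau}}^2$ holds for sufficiently large $n$ — a step the paper's proof asserts without detail.
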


\begin{proof}
    By~\cref{lem:qszk-hard}, as $\QSD\sbra{1-2^{-n^\tau},2^{-n^\tau}}$ is $\QSZK$-hard for any $\tau \in \interval[open]{0}{1/2}$ and any $n \in \N$, we reduce $\QSD\sbra{1-2^{-n^\tau},2^{-n^\tau}}$ to $\TsaQSD_\alpha$. We have the following implications.
\[
\begin{aligned}
    \dtr{\rho}{\sigma}{} &\geq 1-2^{-n^\tau} &\implies \dTsa{\alpha}{\rho}{\sigma}{} &\geq 2\alpha \rbra{1-2^{-n^\tau}} &\eqqcolon a'\rbra{n},\\
    \dtr{\rho}{\sigma}{} &\leq 2^{-n^\tau} &\implies \dTsa{\alpha}{\rho}{\sigma}{} &\leq  \frac{2^{-n^\tau}}{1-\alpha} &\eqqcolon b'\rbra{n}.
\end{aligned}
\]
The gap between 
\[
a'\rbra{n}-b'\rbra{n}=2\alpha \rbra{1-2^{-n^\tau}} - \frac{2^{-n^\tau}}{1-\alpha} = \frac{\rbra{2\alpha-2\alpha^2}-\rbra{2\alpha-2\alpha^2+1}2^{-n^\tau}}{1-\alpha} \eqqcolon g\rbra{n}
\]
Obviously $g\rbra{n}$ is an increasing function. To obtain $g\rbra{n}>0$, it suffices to choose
\[
    n \geq \ceil*{\rbra*{\log\rbra*{\frac{2\alpha-2\alpha^2+1}{2\alpha-2\alpha^2}}}^{1/\tau}}.
\]
Therefore, we have $a'\rbra{n} \geq b'\rbra{n}$ for sufficiently large $n$.
\end{proof}

Now we show the regime where $\TsaQSD_\alpha$ is contained in $\QSZK$.

\begin{lemma}[$\QSZK$-containment of $\TsaQSD_\alpha$]\label{thm:Tsa_QSZK_containment}
    Let $\alpha \in \rbra{0, 1}$ be a constant. 
    Let $a\rbra{n}$ and $b\rbra{n}$ be efficiently computable functions such that $0 \leq b\rbra{n} < a\rbra{n} \leq 1$. If $\rbra{1-\alpha}^2a\rbra{n}^2-\sqrt{\frac{b\rbra{n}}{2\alpha}} \geq 1/O\rbra{\log n}$, then $\TsaQSD_\alpha\sbra{a,b}$ is in $\QSZK$.
\end{lemma}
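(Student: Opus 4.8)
The plan is to reduce $\TsaQSD_\alpha\sbra{a,b}$ to the ordinary quantum state distinguishability problem $\QSD$ and then invoke the containment clause of \cref{lem:qszk-hard}. The key observation is that an instance of $\TsaQSD_\alpha$ consists of exactly the same data as an instance of $\QSD$ — a pair of circuits $Q_\rho, Q_\sigma$ — so the reduction acts as the identity on instances and only the threshold functions need to be rescaled.

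First I would translate the two-sided estimates of \cref{lem:Tsa_vs_tr} into a translation of promises. For a Yes instance, $\dTsa{\alpha}{\rho}{\sigma}{} \ge a\rbra{n}$; combined with the upper bound $\dTsa{\alpha}{\rho}{\sigma}{} \le \dtr{\rho}{\sigma}{}/\rbra{1-\alpha}$, this forces $\dtr{\rho}{\sigma}{} \ge \rbra{1-\alpha}a\rbra{n} =: a'\rbra{n}$. For a No instance, $\dTsa{\alpha}{\rho}{\sigma}{} \le b\rbra{n}$; combined with the loosened lower bound $2\alpha\,\dtr{\rho}{\sigma}{2} \le \dTsa{\alpha}{\rho}{\sigma}{}$ — keeping only the quadratic term of \cref{lem:Tsa_vs_tr} and discarding the nonnegative quartic term — this forces $\dtr{\rho}{\sigma}{} \le \sqrt{b\rbra{n}/\rbra{2\alpha}} =: b'\rbra{n}$. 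Hence the same pair of circuits is a Yes (resp.\ No) instance of $\QSD\sbra{a',b'}$, and $a', b'$ are efficiently computable since $\alpha$ is constant and $a, b$ are.

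It then remains to check $\QSD\sbra{a',b'} \in \QSZK$. By \cref{lem:qszk-hard} this holds as soon as $a'\rbra{n}^2 - b'\rbra{n} \ge 1/O\rbra{\log n}$, that is, $\rbra{1-\alpha}^2 a\rbra{n}^2 - \sqrt{b\rbra{n}/\rbra{2\alpha}} \ge 1/O\rbra{\log n}$, which is precisely the hypothesis of the lemma; this inequality in particular guarantees $b'\rbra{n} < a'\rbra{n}$, so $\rbra{a',b'}$ is a legal $\QSD$ threshold pair. Since $\QSZK$ is closed under this (trivial, polynomial-time) reduction, we conclude $\TsaQSD_\alpha\sbra{a,b} \in \QSZK$.

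I do not expect a genuine obstacle here: the only points requiring care are (i) using the weakened lower bound $2\alpha\,\dtr{\rho}{\sigma}{2} \le \dTsa{\alpha}{\rho}{\sigma}{}$ in place of the full inequality of \cref{lem:Tsa_vs_tr} (dropping the quartic term only weakens the conclusion, which is harmless for containment), and (ii) verifying that the rescaled thresholds $a' = \rbra{1-\alpha}a$ and $b' = \sqrt{b/\rbra{2\alpha}}$ still meet the promise-gap requirement of \cref{lem:qszk-hard}, which is immediate from the stated hypothesis.
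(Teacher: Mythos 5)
Your proposal is correct and follows essentially the same route as the paper: it maps each $\TsaQSD_\alpha\sbra{a,b}$ instance unchanged to a $\QSD\sbra{\rbra{1-\alpha}a,\sqrt{b/\rbra{2\alpha}}}$ instance via the two-sided bounds of \cref{lem:Tsa_vs_tr} (using only the quadratic part of the lower bound, just as the paper implicitly does) and then applies the $\QSZK$-containment clause of \cref{lem:qszk-hard}, with the hypothesis $\rbra{1-\alpha}^2a\rbra{n}^2-\sqrt{b\rbra{n}/\rbra{2\alpha}} \geq 1/O\rbra{\log n}$ supplying exactly the required promise gap. Your added remarks on discarding the quartic term and on the legality of the rescaled threshold pair are minor elaborations of the same argument, not a different approach.
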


\begin{proof}
To prove $\QSZK$-containment, we reduce $\TsaQSD_\alpha$ to $\QSD$ by \cref{lem:qszk-hard}. Specifically, by \cref{lem:Tsa_vs_tr}, we have the following implications.
\[
\begin{aligned}
    \dTsa{\alpha}{\rho}{\sigma}{} &\geq a\rbra{n} &\implies \dtr{\rho}{\sigma}{} &\geq \rbra{1-\alpha}a\rbra{n},\\    
    \dTsa{\alpha}{\rho}{\sigma}{} &\leq b\rbra{n} &\implies \dtr{\rho}{\sigma}{} &\leq \sqrt{\frac{b\rbra{n}}{2\alpha}}.
\end{aligned}
\]
Thus, $\TsaQSD_\alpha\sbra{a,b}$ can be reduced to $\QSD\sbra{\rbra{1-\alpha}a, \sqrt{\frac{b}{2\alpha}}}$. 
To make $\QSD\sbra{\rbra{1-\alpha}a, \sqrt{\frac{b}{2\alpha}}}$ in $\QSZK$, 
it is sufficient to have $\rbra{1-\alpha}^2a\rbra{n}^2-\sqrt{\frac{b\rbra{n}}{2\alpha}} \geq \frac{1}{O\rbra{\log n}}$. 
Therefore, $\TsaQSD_\alpha\sbra{a,b}$ is in $\QSZK$.
\end{proof}

\subsection{Low-rank estimating of quantum Tsallis relative entropy}\label{sec:low-rank_complexity}

To prove the $\BQP$-hardness of $\TsaLowQSD_\alpha$ and $\HellLowQSD$, we introduce the following quantum state distinguishability problems restricted to pure states, which are the generalizations of $\PureQSD$, where the trace distance is replaced by either quantum Tsallis relative entropy or quantum Hellinger distance.

\begin{definition}[$\TsaPureQSD_{\alpha}$ and $\HellPureQSD$]
    Let $\alpha \in \rbra{0, 1}$ be a constant. 
    Let $Q_\phi$ and $Q_\psi$ be two quantum circuits with $m\rbra{n}$-qubit input and $n$-qubit ouput, where $m\rbra{n}$ is a polynomial in $n$. 
    Let $\ket{\phi}$ and $\ket{\psi}$ be $n$-qubit quantum states obtained by performing $Q_\phi$ and $Q_\psi$ on input state $\ket{0}^{\otimes m\rbra{n}}$.
    Let $a\rbra{n}$ and $b\rbra{n}$ be efficiently computable functions such that $0 \leq b\rbra{n} < a\rbra{n} \leq 1$. 
    \begin{enumerate}
        \item The problem $\TsaPureQSD_{\alpha}\sbra{a,b}$ is to decide whether:
    \begin{itemize}
        \item \textup{(Yes)} $\dTsa{\alpha}{\ketbra{\phi}{\phi}}{\ketbra{\psi}{\psi}}{} \geq a\rbra{n}$;
        \item \textup{(No)} $\dTsa{\alpha}{\ketbra{\phi}{\phi}}{\ketbra{\psi}{\psi}}{} \leq b\rbra{n}$.
    \end{itemize}     
        \item The problem $\HellPureQSD\sbra{a,b}$ is to decide whether:
    \begin{itemize}
        \item \textup{(Yes)} $\dH{\ketbra{\phi}{\phi}}{\ketbra{\psi}{\psi}}{} \geq a\rbra{n}$;
        \item \textup{(No)} $\dH{\ketbra{\phi}{\phi}}{\ketbra{\psi}{\psi}}{} \leq b\rbra{n}$.
    \end{itemize}     
    \end{enumerate}
\end{definition}

Now we prove the $\BQP$-hardness of $\TsaPureQSD_\alpha$ and $\HellPureQSD$.

\begin{lemma}[$\BQP$-hardness of $\TsaPureQSD_{\alpha}$]\label{thm:Tsa_pure_BQP_hard}
    Let $\alpha \in \rbra{0, 1}$ be a constant. 
    Let $a\rbra{n}$ and $b\rbra{n}$ be efficiently computable functions such that $a(n) \leq \frac{\rbra{1-2^{-n-1}}^{2}}{1-\alpha}$ and $b(n) \geq  \frac{2^{-2n-2}}{1-\alpha}$ for sufficiently large $n$.
    Then, $\TsaPureQSD_{\alpha}[a,b]$ is $\BQP$-hard.
\end{lemma}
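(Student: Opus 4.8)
The plan is to reduce from $\PureQSD$, which is $\BQP$-hard by \cref{lem:BQP-hard}, using the fact that on rank-one states the Tsallis relative entropy collapses to a rescaling of the squared trace distance. Concretely, for $\rho = \ketbra{\phi}{\phi}$ and $\sigma = \ketbra{\psi}{\psi}$ and any constant $\alpha \in \rbra{0,1}$, the only nonzero eigenvalue of each projector is $1$, so $\rho^\alpha = \rho$ and $\sigma^{1-\alpha} = \sigma$, whence $\tr\rbra*{\rho^\alpha\sigma^{1-\alpha}} = \tr\rbra*{\rho\sigma} = \abs{\braket{\phi}{\psi}}^2$. Combining this with $\dtr{\ketbra{\phi}{\phi}}{\ketbra{\psi}{\psi}}{} = \sqrt{1 - \abs{\braket{\phi}{\psi}}^2}$ yields the exact identity
\[
    \dTsa{\alpha}{\ketbra{\phi}{\phi}}{\ketbra{\psi}{\psi}}{} \;=\; \frac{1 - \abs{\braket{\phi}{\psi}}^2}{1-\alpha} \;=\; \frac{\dtr{\ketbra{\phi}{\phi}}{\ketbra{\psi}{\psi}}{}^{2}}{1-\alpha}.
\]

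Given this identity, I would take the reduction to be the identity map on circuit pairs. Start from an instance of $\PureQSD\sbra{1-2^{-n-1},\,2^{-n-1}}$, which is $\BQP$-hard by \cref{lem:BQP-hard} since $a = 1-2^{-n-1}$ and $b = 2^{-n-1}$ satisfy $a - b = 1 - 2^{-n} \geq 1/\poly\rbra{n}$, $a \leq 1-2^{-n-1}$, and $b \geq 2^{-n-1}$. Pass the same circuits $Q_\phi,Q_\psi$ to $\TsaPureQSD_\alpha\sbra{a,b}$. In the Yes case, $\dtr{\ketbra{\phi}{\phi}}{\ketbra{\psi}{\psi}}{} \geq 1-2^{-n-1}$, so the displayed identity gives $\dTsa{\alpha}{\ketbra{\phi}{\phi}}{\ketbra{\psi}{\psi}}{} \geq \frac{\rbra{1-2^{-n-1}}^{2}}{1-\alpha} \geq a\rbra{n}$ by hypothesis; in the No case, $\dtr{\ketbra{\phi}{\phi}}{\ketbra{\psi}{\psi}}{} \leq 2^{-n-1}$, so $\dTsa{\alpha}{\ketbra{\phi}{\phi}}{\ketbra{\psi}{\psi}}{} \leq \frac{2^{-2n-2}}{1-\alpha} \leq b\rbra{n}$ by hypothesis. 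Thus Yes-instances map to Yes-instances and No-instances to No-instances, and transitivity of reductions gives the claimed $\BQP$-hardness of $\TsaPureQSD_\alpha\sbra{a,b}$.

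I do not expect a genuine obstacle here: the argument is bookkeeping once the rank-one collapse identity is in hand. The only point that needs care is matching the trace-distance thresholds $1-2^{-n-1}$ and $2^{-n-1}$ of the $\PureQSD$ instance against the stated bounds on $a\rbra{n},b\rbra{n}$ through the quadratic relation above, and noting that $\dTsa{\alpha}{\cdot}{\cdot}{}$ for pure states ranges over $\sbra{0,\tfrac{1}{1-\alpha}}$ rather than $\sbra{0,1}$, which is why the natural completeness threshold $\frac{\rbra{1-2^{-n-1}}^{2}}{1-\alpha}$ can exceed $1$ when $\alpha$ is close to $1$.
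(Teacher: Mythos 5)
Your proposal is correct and matches the paper's proof: both use the identity $\dTsa{\alpha}{\ketbra{\phi}{\phi}}{\ketbra{\psi}{\psi}}{} = \dtr{\ketbra{\phi}{\phi}}{\ketbra{\psi}{\psi}}{2}/\rbra{1-\alpha}$ for pure states and then reduce from $\PureQSD\sbra{1-2^{-n-1},2^{-n-1}}$ via \cref{lem:BQP-hard}. Your write-up merely spells out the rank-one collapse justifying the identity, which the paper states without derivation.
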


\begin{proof}
    Given pure states $\ket{\phi}$ and $\ket{\psi}$, 
    we have
    \[
    \dTsa{\alpha}{\ketbra{\psi}{\psi}}{\ketbra{\phi}{\phi}}{}
    = \frac{\dtr{\ketbra{\psi}{\psi}}{\ketbra{\phi}{\phi}}{2}  }{1-\alpha}.
    \]
    By~\cref{lem:BQP-hard}, we know that $\PureQSD\sbra{1-2^{-n-1},2^{-n-1}}$ is $\BQP$-hard. Therefore, we obtain
    $\TsaPureQSD_{\alpha}[a,b]$ is $\BQP$-hard
    if $a(n) \leq \frac{\rbra{1-2^{-n-1}}^{2}}{1-\alpha}$, $b(n) \geq  \frac{2^{-2n-2}}{1-\alpha}$. 
\end{proof}

Now we can prove the $\BQP$-completeness of $\TsaLowQSD_{\alpha}$. 

\begin{lemma}[$\BQP$-containment of $\TsaLowQSD_{\alpha}$] \label{lemma:bqp-contain-TsaLowQSD}
    Let $\alpha \in \rbra{0, 1}$ be a constant. 
    Let $a\rbra{n}$ and $b\rbra{n}$ be efficiently computable functions such that $0 \leq b\rbra{n} < a\rbra{n} \leq 1$ and $a\rbra{n} - b\rbra{n} \geq \frac{1}{\poly\rbra{n}}$. 
    Then, $\TsaLowQSD_{\alpha}[a,b]$ is in $\BQP$.
\end{lemma}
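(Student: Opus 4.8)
The plan is to invoke the query-complexity estimator of \Cref{cor:Tsallis-relative-query}, using the fact that the given state-preparation circuits are themselves purified quantum query access oracles. Indeed, $Q_\rho$ is an $m\rbra{n}$-qubit unitary that sends $\ket{0}^{\otimes m\rbra{n}}$ to a purification of $\rho$ whose reduced state on the $n$-qubit output register equals $\rho$, so $Q_\rho$ is exactly a purified quantum query access oracle $\mathcal{O}_\rho$ for $\rho$, and likewise $Q_\sigma$ gives $\mathcal{O}_\sigma$; one query to either oracle costs $\poly\rbra{n}$ one- and two-qubit gates.

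I would set $\varepsilon \coloneqq \rbra{a\rbra{n}-b\rbra{n}}/3$, which is efficiently computable, satisfies $\varepsilon \ge 1/\poly\rbra{n}$ by the promise $a\rbra{n}-b\rbra{n} \ge 1/\poly\rbra{n}$, and lies in $\interval[open]{0}{1}$. Running the algorithm of \Cref{cor:Tsallis-relative-query} on $\mathcal{O}_\rho, \mathcal{O}_\sigma$ with rank bound $r = r\rbra{n} = \poly\rbra{n}$ and additive error $\varepsilon$ yields an estimate $\widehat{v}$ with $\abs*{\widehat{v} - \dTsa{\alpha}{\rho}{\sigma}{}} \le \varepsilon$ with probability at least $2/3$. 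The algorithm then accepts iff $\widehat{v} \ge \rbra{a\rbra{n}+b\rbra{n}}/2$: in a yes-instance $\widehat{v} \ge a\rbra{n}-\varepsilon > \rbra{a\rbra{n}+b\rbra{n}}/2$, while in a no-instance $\widehat{v} \le b\rbra{n}+\varepsilon < \rbra{a\rbra{n}+b\rbra{n}}/2$, so the verdict is correct with probability at least $2/3$ (amplifiable to $1-2^{-\poly\rbra{n}}$ by repetition and majority vote).

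Finally I would verify that the whole procedure runs in $\poly\rbra{n}$ \emph{time}, not merely with $\poly\rbra{n}$ queries. Since $\alpha$ is constant, $r = \poly\rbra{n}$, and $1/\varepsilon = \poly\rbra{n}$, the query count of \Cref{cor:Tsallis-relative-query} is $\poly\rbra{n}$, and each query costs $\poly\rbra{n}$ gates. Within \Cref{algo:affinity-query}, the block-encodings of $\rho$ and $\sigma$ use $O\rbra{1}$ oracle calls (\Cref{lem:be_dp}); the polynomials $p_1 = p_{1-\alpha,\varepsilon_1,\delta_1,-}$ and $p_2 = p_{0,1-\alpha,\varepsilon_2,+}$ have degree $\poly\rbra{n}$ with coefficients computable classically in $\poly\rbra{n}$ time (\Cref{lem:poly_approx_neg,lem:poly_approx_pos}), so QSVT (\Cref{lem:qsvt}) produces block-encodings implemented by $\poly\rbra{n}$ gates with classically computed descriptions; the block-encoding product (\Cref{lem:product_BE_matrices}), the Hadamard test (\Cref{lem:hadamard_test}), and amplitude estimation (\Cref{lem:amp_est}) each add only $\poly\rbra{n}$ overhead. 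Hence the algorithm is a uniform polynomial-time quantum algorithm, and $\TsaLowQSD_\alpha\sbra{a,b} \in \BQP$. The only genuine obstacle is this last piece of bookkeeping --- confirming that, beyond the polynomial \emph{query} bound, both the gate count and the classical work of laying out the QSVT circuits stay polynomial in $n$ --- which is precisely what the explicit degree bounds and classical-polynomial-time coefficient computations in \Cref{lem:poly_approx_neg,lem:poly_approx_pos,lem:qsvt} supply.
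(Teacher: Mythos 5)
Your proposal is correct and follows essentially the same route as the paper's proof: invoke the query-complexity estimator of \Cref{cor:Tsallis-relative-query} with additive error set to a constant fraction of the gap $a\rbra{n}-b\rbra{n}$ (the paper uses a quarter of the gap, you use a third, which is immaterial), and threshold the estimate at $\rbra{a\rbra{n}+b\rbra{n}}/2$. Your extra bookkeeping that the gate count and classical computation of the QSVT circuit descriptions stay polynomial is a welcome elaboration of what the paper asserts in one line.
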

\begin{proof}
    Let $\varepsilon = \rbra{a\rbra{n}-b\rbra{n}}/4$. 
    Let $x$ be an estimate of $\dTsa{\alpha}{\rho}{\sigma}{}$ within additive error $\varepsilon$ obtained by the algorithm specified in \cref{thm:Tsallis-relative-query}. 
    Then, with probability at least $2/3$, $\abs{x - \dTsa{\alpha}{\rho}{\sigma}{}} \leq \varepsilon$. 
    It can be seen that $x$ can be obtained in quantum time $\widetilde{O}\rbra{r^{1.5}/\varepsilon^4} = \poly\rbra{n}$. 
    To decide whether $\dTsa{\alpha}{\rho}{\sigma}{} \geq a\rbra{n}$ or $\dTsa{\alpha}{\rho}{\sigma}{} \leq b\rbra{n}$ is as follows. 
    \begin{itemize}
        \item If $x > \rbra{a\rbra{n}+b\rbra{n}}/2$, then return the case of $\dTsa{\alpha}{\rho}{\sigma}{} \geq a\rbra{n}$. 
        \item Otherwise, return the case of $\dTsa{\alpha}{\rho}{\sigma}{} \leq b\rbra{n}$.
    \end{itemize}
    It can be seen that this polynomial-time quantum algorithm solves $\TsaLowQSD_{\alpha}[a,b]$ and thus it is in $\BQP$. 
\end{proof}

% \section*{Acknowledgment}
% The work of M.G.\ was supported by the National Key R\&D Program of China under Grant No.\ 2023YFA1009403.
% The work of J.B.\ and Q.W.\ was supported by the Engineering and Physical Sciences Research Council under Grant EP/X026167/1.

\addcontentsline{toc}{section}{References}

\bibliographystyle{alphaurl}
\bibliography{main}

\appendix

\newpage

\section{Quantum Multi-Samplizer}
\label{sec:multi-samplizer}

To provide an implementation of the quantum multi-samplizer, we need the following lemma. 

\begin{lemma}[{\cite[Lemma 2.21]{WZ25}}] \label{lemma:BE-states}
    For every $\delta \in \rbra{0, 1}$, we can approximately implement (the controlled version of) a unitary operator $U$ and its inverse $U^\dag$ in diamond norm distance $\delta$ using $O\rbra{\frac{1}{\delta}\log^2\rbra{\frac{1}{\delta}}}$ samples of an $n$-qubit quantum state $\rho$ and $O\rbra{\frac{n}{\delta}\log^2\rbra{\frac{1}{\delta}}}$ two-qubit gates such that $U$ is a $\rbra{2, 4, 0}$-block-encoding of $\rho$.
\end{lemma}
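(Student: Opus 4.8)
The plan is to reduce the lemma to two standard primitives. Unlike \cref{lem:be_dp}, which needs a purification oracle for $\rho$, here we are only given i.i.d.\ samples, and no fixed unitary can be extracted \emph{exactly} from samples (this would contradict the sampling lower bounds), so the implementation must be approximate. The route I would take is to pass through \emph{density matrix exponentiation}: first realise the time evolution $e^{\pm i\rho}$ approximately from samples, and then convert it into a block-encoding of $\rho$ by a two-term linear combination of unitaries (LCU) together with a quantum singular value transformation (QSVT) implementing an $\arcsin$-type correction.

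\emph{Step 1 (Hamiltonian simulation of $\rho$).} Using density matrix exponentiation \cite{LMR14,KLL+17,GKP+24} --- iterating the partial-swap identity $\tr_{2}\!\bigl(e^{-iS\tau}(\xi\otimes\rho)e^{iS\tau}\bigr)=e^{-i\rho\tau}\,\xi\,e^{i\rho\tau}+O(\tau^{2})$ over $O(1/\eta)$ steps, each consuming one fresh sample of $\rho$ --- one obtains a channel that is $\eta$-close in diamond norm to conjugation by $e^{-i\rho}$, using $O(1/\eta)$ samples and $O(n/\eta)$ two-qubit gates; replacing $S$ by a controlled-$S$ gives the controlled version, and evolving for time $+1$ yields $e^{+i\rho}=(e^{i\rho})^{\dagger}$, so inverses come for free.

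\emph{Step 2 (block-encoding of $\rho$ from $e^{\pm i\rho}$).} From the ideal unitaries $e^{\pm i\rho}$, a two-term LCU \cite{GSLW19} with coefficients $\tfrac{1}{2i},-\tfrac{1}{2i}$ yields an \emph{exact} $(1,1,0)$-block-encoding $V$ of the Hermitian operator $\sin\rho=\tfrac{1}{2i}(e^{i\rho}-e^{-i\rho})$, whose spectrum lies in $[0,\sin 1]\subseteq[-1,1]$. Let $q$ be an odd polynomial of degree $d=O(\log(1/\delta))$ with $\abs{q(x)}\le 1/2$ on $[-1,1]$ and $\abs{q(x)-\tfrac12\arcsin(x)}\le\delta/4$ on $[-\sin1,\sin1]$; such a $q$ exists because $\arcsin$ is analytic on an interval bounded away from $\pm1$, so a logarithmic-degree truncated Chebyshev expansion, suitably normalised to stay in $[-\tfrac12,\tfrac12]$, works. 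Applying \cref{lem:qsvt} to $V$ with this $q$ gives a unitary $\widetilde U$ on $n+O(1)$ qubits that --- because $V$ is exact, so the $4d\sqrt{\varepsilon}$ error term of \cref{lem:qsvt} vanishes --- is a $(1,3,0)$-block-encoding of $q(\sin\rho)$ up to a negligible phase-factor error; and $q(\sin\rho)=\tfrac12\arcsin(\sin\rho)+O(\delta)=\tfrac12\rho+O(\delta)$ since $\spec(\rho)\subseteq[0,1]$. Equivalently $\widetilde U$ is a $(2,4,O(\delta))$-block-encoding of $\rho$ (the extra flag qubit coming from a routine rounding of the approximate block-encoding to an exact target $U$), and $U$ is the ideal unitary of the statement.

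\emph{Step 3 (complexity and the main obstacle).} The QSVT circuit for $\widetilde U$ makes $O(d)=O(\log(1/\delta))$ calls to $V$ and $V^{\dagger}$, each using one (controlled) $e^{i\rho}$ and one (controlled) $e^{-i\rho}$; realising each evolution via Step 1 with accuracy $\eta=\Theta(\delta/\log(1/\delta))$, the diamond-norm errors of the $O(\log(1/\delta))$ calls add up to $O(\delta)$, which together with the polynomial-approximation error $\delta/4$ keeps the whole channel $O(\delta)$-close to conjugation by $U$; after rescaling constants this is $\delta$. The sample cost is $O(\log(1/\delta))\cdot O(1/\eta)=O\!\bigl(\tfrac1\delta\log^{2}\tfrac1\delta\bigr)$ and the gate cost is $O(\log(1/\delta))\cdot O(n/\eta)+O(d)=O\!\bigl(\tfrac n\delta\log^{2}\tfrac1\delta\bigr)$, as claimed. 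The delicate point --- and the main obstacle --- is precisely this error budgeting: one must keep QSVT's contribution \emph{additive} in the $O(\log(1/\delta))$ imperfect evolutions rather than multiplicatively amplified, which is possible only because the LCU block-encoding $V$ of $\sin\rho$ is \emph{exact} (the approximation being confined to how $e^{\pm i\rho}$ is realised); this is exactly what makes the sample complexity $\widetilde{O}(1/\delta)$ instead of $\widetilde{O}(1/\delta^{2})$.
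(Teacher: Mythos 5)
First, a point of reference: this paper does not prove \cref{lemma:BE-states} at all --- it is imported verbatim from [WZ25, Lemma 2.21] and used as a black box in the proof of \cref{thm:multi-samplizer} --- so your proposal can only be measured against the standard construction underlying that citation. Your route (density matrix exponentiation giving controlled-$e^{\pm i\rho}$ to diamond-norm precision $\eta$ at $O(1/\eta)$ samples and $O(n/\eta)$ gates per call; a two-term LCU giving an exact $(1,1,0)$-block-encoding of $\sin\rho$; a degree-$O(\log(1/\delta))$ QSVT correction approximating $\tfrac{1}{2}\arcsin$ on $[-\sin 1,\sin 1]$; and the budget $\eta=\Theta(\delta/\log(1/\delta))$ yielding $O(\delta^{-1}\log^2(1/\delta))$ samples and $O(n\delta^{-1}\log^2(1/\delta))$ gates) is indeed the standard LMR-plus-``block-encoding-from-Hamiltonian-simulation'' construction behind this lemma, and your complexity bookkeeping and additive diamond-norm accounting are correct.

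The genuine gap is the step you wave off as ``a routine rounding.'' What your construction delivers is a channel $O(\delta)$-close in diamond norm to conjugation by the ideal QSVT unitary $\widetilde U$, and $\widetilde U$ exactly block-encodes $q(\sin\rho)$, i.e.\ it is only a $(2,4,O(\delta))$-block-encoding of $\rho$. The lemma asserts something of a different type: closeness in diamond norm to conjugation by a unitary $U$ that is an \emph{exact} $(2,4,0)$-block-encoding of $\rho$. Since no finite circuit built from $e^{\pm i\rho}$ can have $\rho/2$ exactly as its corner block, $U$ must be obtained by perturbing $\widetilde U$, and one has to prove that a unitary $U$ exists with corner block exactly $\rho/2$ and $\Abs{U-\widetilde U\otimes I}=O(\delta)$ in operator norm. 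The flag-qubit/LCU-style fix you gesture at does not achieve this: it corrects the corner block, but the resulting unitary differs from $\widetilde U\otimes I$ by $\Theta(1)$ on the subspace where the flag is $1$, so it is not close in diamond norm. What is actually needed is a unitary-completion/perturbation argument: because $\Abs{\rho/2}\leq 1/2$ is bounded away from $1$, the off-diagonal block of $\widetilde U$ has singular values bounded below by a constant, and one can solve for a correction $R$ with $\Abs{R-I}=O(\delta)$ so that $\widetilde U R$ has corner block exactly $\rho/2$. This is also precisely where the subnormalization factor $2$ (and hence the constants $(2,4,0)$) earns its keep --- without it the perturbation generically costs $O(\sqrt{\delta})$, which would degrade the sample complexity to $\widetilde{O}(1/\delta^{2})$. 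Relatedly, your closing diagnosis is slightly misplaced: additivity of diamond-norm errors holds whether or not the LCU block-encoding of $\sin\rho$ is exact; the real pressure point for keeping $\widetilde{O}(1/\delta)$ is the exact-block rounding just described.
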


We prove \cref{thm:multi-samplizer} as follows. 

\begin{proof}[Proof of \cref{thm:multi-samplizer}]
    The construction generalizes those in \cite{WZ25,WZ24b,WZ24a}.
    Suppose that
    \[
        \mathcal{A}^{U_1, U_2, \dots, U_k} = G_Q V_Q \dots G_2 V_2 G_1 V_1 G_0,
    \]
    where each of $V_1, V_2, \dots, V_Q$ is either (controlled-)$U_j$ or (controlled-)$U_j^\dag$ for some $1 \leq j \leq k$, and each of $G_0, G_1, \dots, G_Q$ is a unitary operator independent of $U_1, U_2, \dots, U_k$. 

    Let $\varepsilon = \delta/Q$. 
    By \cref{lemma:BE-states}, for each $1 \leq j \leq k$, we can approximately implement (the controlled version) of a unitary $U_{\rho_j}$ and its inverse in diamond norm distance $\varepsilon$, using $O\rbra{\frac{1}{\varepsilon}\log^2\rbra{\frac{1}{\varepsilon}}}$ samples of $\rho_j$ and $O\rbra{\frac{n}{\varepsilon}\log^2\rbra{\frac{1}{\varepsilon}}}$ two-qubit gates, such that $U_{\rho_j}$ is a $\rbra{2,4,0}$-block-encoding of $\rho_j$. 
    Therefore, for $1 \leq q \leq Q$, if $V_q$ is (controlled-)$U_j$ or (controlled-)$U_j^\dag$ for some $1 \leq j \leq k$, we can implement a quantum channel $\mathcal{E}_q$ such that $\Abs{\mathcal{E}_q - V_q \rbra{\cdot} V_q^\dag}_\diamond \leq \varepsilon$, using $O\rbra{\frac{1}{\varepsilon}\log^2\rbra{\frac{1}{\varepsilon}}}$ samples of $\rho_j$ and $O\rbra{\frac{n}{\varepsilon}\log^2\rbra{\frac{1}{\varepsilon}}}$ two-qubit gates.
    Then, consider the quantum channel:
    \[
        \mathcal{F} = \mathcal{G}_Q \circ \mathcal{E}_Q \circ \dots \circ \mathcal{G}_2 \circ \mathcal{E}_2 \circ \mathcal{G}_1 \circ \mathcal{E}_1 \circ \mathcal{G}_0,
    \]
    where $\mathcal{G}_q \colon \sigma \mapsto G_q \sigma G_q^\dag$ for each $0 \leq q \leq Q$. 
    Then, it can be verified that
    \[
        \Abs*{\mathcal{F} - \mathcal{A}^{U_1, U_2, \dots, U_k}}_\diamond \leq \sum_{q=1}^Q \Abs*{\mathcal{E}_q - V_q\rbra{\cdot}V_q^\dag}_\diamond \leq Q\varepsilon = \delta. 
    \]
    Therefore, the construction of $\mathcal{F}$ is a $k$-samplizer. 
    
    Moreover, if there are $Q_j$ queries to $U_j$ among $V_1, V_2, \dots, V_Q$, then the implementation of $\mathcal{F}$ uses
    \[
        Q_j \cdot O\rbra*{\frac{1}{\varepsilon}\log^2\rbra*{\frac{1}{\varepsilon}}} = O\rbra*{\frac{Q_jQ}{\delta}\log^2\rbra*{\frac{Q}{\delta}}}
    \]
    samples of $\rho_j$ and 
    \[
        Q_j \cdot O\rbra*{\frac{n}{\varepsilon}\log^2\rbra*{\frac{1}{\varepsilon}}} = O\rbra*{\frac{Q_jQn}{\delta}\log^2\rbra*{\frac{Q}{\delta}}}
    \]
    additional two-qubit gates for each $j$.
    In summary, $\mathcal{F}$ can be implemented using 
    \[
        \sum_{j=1}^k O\rbra*{\frac{Q_jQn}{\delta}\log^2\rbra*{\frac{Q}{\delta}}} = O\rbra*{\frac{Q^2n}{\delta}\log^2\rbra*{\frac{Q}{\delta}}}
    \]
    additional one- and two-qubit gates.
\end{proof}

\end{document}